%% file: spiderstate.tex
\documentclass[aps,pra,nofootinbib,superscriptaddress,longbibliography,eprint]{revtex4-2}

\usepackage{bookmark}
\usepackage{xcolor}
\definecolor{citegreen}{RGB}{34,110,60}
\hypersetup{colorlinks, citecolor=citegreen}
\usepackage{csquotes}
\usepackage{multirow}
\usepackage{booktabs}
\usepackage{makecell}
\usepackage{ragged2e}
\usepackage{subcaption}

\usepackage[normalem]{ulem}

\usepackage{amsmath}
\usepackage{stmaryrd}
\SetSymbolFont{stmry}{bold}{U}{stmry}{m}{n}

\usepackage{amsthm,thmtools}
\theoremstyle{plain}
\newtheorem{theorem}{Theorem}[section]

\newtheorem{lemma}[theorem]{Lemma}

\newtheorem{proposition}[theorem]{Proposition}

\theoremstyle{definition}
\newtheorem{definition}[theorem]{Definition}
\newtheorem{example}[theorem]{Example}

\theoremstyle{remark}
\newtheorem{remark}[theorem]{Remark}
\usepackage{xspace}

\usepackage{algorithm}
\usepackage{algpseudocodex}

\usepackage[noabbrev,nameinlink]{cleveref}

\makeatletter
\renewcommand\paragraph{\@startsection{paragraph}{4}{\parindent}{\z@}{-1em}{\normalfont\normalsize\bfseries}}
\makeatother

\usepackage{geometry}
\usepackage{todonotes}
\usepackage{tcolorbox}
\tcbuselibrary{breakable,skins}
\makeatletter
\patchcmd{\tcb@handle@nobreak}
  {\tcb@h@page+3.5\baselineskip+\parskip+\lineskip}
  {\tcb@h@page+3.5\baselineskip+\parskip+\lineskip+.5\textheight}
  {}{\PackageWarning{spiderstate}{Could not patch tcolorbox nobreak overfill}}
\patchcmd{\tcb@handle@nobreak}
  {\vspace{\tcb@h@page}{\parskip\z@\nointerlineskip\noindent}\vspace{-\tcb@h@page}\par}
  {\vskip\tcb@h@page\nointerlineskip\hbox{}\vskip-\tcb@h@page}
  {}{\PackageWarning{spiderstate}{Could not patch tcolorbox nobreak paragraph}}
\makeatother

\AddToHook{cmd/appendix/before}{\crefalias{section}{appendix}}

\usepackage{macros/tikzfig}
\input{macros/floquet.tikzstyles}
\input{macros/floquet.tikzdefs}

\newcommand{\code}[1]{\left\llbracket#1\right\rrbracket}

\newcommand{\pMem}{p_{\mathrm{mem}}}
\newcommand{\pTwo}{p_{\mathrm{2}}}
\newcommand{\pOne}{p_{\mathrm{1}}}
\newcommand{\pSPAM}{p_{\mathrm{SPAM}}}
\newcommand{\CAT}{\textsf{CAT}\xspace}

\usepackage{ulem}

\newcounter{boldicomment}

\newcounter{andreycomment}

\newcounter{johncomment}

\tcbset{combox/.style={
    enhanced, breakable, shrink break goal=4pt, parbox=false, coltext=black,
    fonttitle=\bfseries, boxrule=0.6pt, arc=2pt,
    left=4pt, right=4pt, top=4pt, bottom=4pt,
}}
\newtcolorbox{comrichie}{combox, colback=blue!4,   colframe=blue!60!black,   title={Richie:}}
\newtcolorbox{comsarah}{combox,  colback=orange!6, colframe=orange!80!black, title={Sarah:}}

\begin{document}

\author{Boldizsár Poór}
\affiliation{University of Oxford, Oxford, United Kingdom}

\author{Andrey Boris Khesin}
\affiliation{University of Oxford, Oxford, United Kingdom}

\author{Sarah Meng Li}
\affiliation{University of Amsterdam, Amsterdam, Netherlands}
\affiliation{QuSoft, Amsterdam, Netherlands}

\author{Benjamin Rodatz}
\affiliation{University of Oxford, Oxford, United Kingdom}

\author{John van de Wetering}
\affiliation{University of Amsterdam, Amsterdam, Netherlands}
\affiliation{QuSoft, Amsterdam, Netherlands}

\author{Richie Yeung}
\affiliation{University of Oxford, Oxford, United Kingdom}

\title{SpiderCSS: Scalable Fault-Tolerant CSS State Preparation}

\begin{abstract}
  Fault-tolerant preparation of logical states is a critical primitive for the realization of large-scale fault-tolerant quantum computing.
  This paper introduces SpiderCSS, a scalable compilation pipeline that generates highly optimized, fault-tolerant state preparation circuits for arbitrary Calderbank-Shor-Steane (CSS) codes.
  Starting from an idealized specification of a CSS state, we use the ZX-calculus and fault-equivalent rewrites to construct a diagram that prepares the given state while preserving the fault tolerance of the initial ideal ZX-diagram.
  From this diagram, \protect{SpiderCSS} constructs a fault-tolerant CSS-state preparation circuit using a modular construction that is CNOT-minimal both within each component and also in the way these components are connected together.
  By only using fault-equivalent rewrites, the final circuit is provably fault-tolerant without needing expensive verification.
  Monte Carlo simulations across a variety of CSS codes up to distance $15$ show that SpiderCSS scales efficiently and constructs circuits with substantially reduced depth and qubit footprint compared to existing scalable and heuristic methods, while simultaneously achieving an average decrease of $33.7\%$ in logical error rate and an increase $9.5\%$ in acceptance rate.
\end{abstract}

\keywords{QEC, FTQC, CSS, fault tolerance, state preparation, ZX-calculus, fault equivalence}

\maketitle

\section{Introduction}

Fault-tolerant quantum computation requires reliable preparation of encoded states~\cite{gottesman2024surviving}. To ensure fault tolerance we require controlling fault propagation, since a single circuit fault can affect several output qubits. The challenge is to achieve fault tolerance with low quantum resource overhead while keeping circuit construction tractable as code size and distance increase. In this paper, we will focus on preparing states in CSS codes.

\emph{Calderbank--Shor--Steane (CSS) codes}~\cite{CalderbankShor1996,Steane1996} are stabilizer codes with purely $X$- and $Z$-type generators, whose logical states $\ket{\overline{0}}$ and $\ket{\overline{+}}$ are examples of \emph{CSS states}. These states are used as logical inputs to a computation and as ancillas for syndrome extraction in Steane-style error correction~\cite{Steane1997ActiveStabilization,Steane1998SpaceTime,Steane2003Overhead,Steane2002Filtering,SteaneIbinson2005}.

The current best methods for CSS state preparation either use hand-optimized circuits, SAT solvers, or heuristic modular constructions. Hand-optimized circuits can achieve low overhead but rely on code-specific human insight that is difficult to scale systematically~\cite{Goto2016,ChaoReichardt2018,Butt2024CodeSwitching}. Automated synthesis using satisfiability (SAT) solvers or Integer Linear Programming (ILP) can produce highly-optimized circuits for small codes, but the search becomes expensive as the code size and required level of fault tolerance increase~\cite{Peham2025StatePreparation,Schmid2025Deterministic,criger2026automatedflagbasedfaulttolerantstate}.

The current best modular construction, Flag at Origin (FaO)~\cite{forlivesi2025flag}, improves the scalability of circuit construction through reusable flag gadgets and its resource overhead depends on how these components are assembled and scheduled. FaO works by first constructing a library of fault-tolerant flagged gadgets which makes it possible to fault-tolerantly apply a sequence of control (or target) operations of a CNOT gate on a single qubit.
Utilizing the fact that any CSS state can be constructed in a way that only control or target operations are applied to each qubit, FaO composes these flagged gadgets to obtain a fully fault-tolerant CSS-state preparation circuit.
While this method scales much better than methods relying on SAT solvers or ILP, it is still computationally limited by the (offline) generation of the flagged gadgets and the resulting circuits can have a large depth due to the depth of the flagged gadgets.

In this paper, we present \emph{SpiderCSS}, a new modular construction that significantly improves upon the state-of-the-art FaO in terms of scalability, resource usage, depth, acceptance rate, and logical error rate. First, each component of our pipeline runs in polynomial time so that our method is scalable beyond what was possible before. Second, our method requires fewer entangling gates and achieves lower depth. This is achieved by using a better approach to construct fault-tolerant \CAT states based on the SpiderCat~\cite{spidercat2026} and Fault Tolerance by Construction~\cite{rodatz2025fault} frameworks, then finding novel ways to route the qubit lines to minimize the cost of connecting the states together to form the target CSS state. Third, our method has an improved acceptance rate (meaning the average number of shots needed to get one that passes all the checks is lower) and a lower logical error probability (meaning the chance we accept a run that actually contains a non-trivial logical error is lower). We achieve this by minimizing the number of flags in the \CAT states in our construction and by letting our qubit routing algorithm try to minimize qubit lifetime and maximize qubit reuse.

In total, we benchmarked our pipeline on codes ranging from $7$ to $144$ qubits and from distance $3$ to distance $15$. The results are shown in \Cref{fig:comparison}. In all directly comparable cases, we match or improve on the CNOT count, with reductions of up to $53.8\%$. When prioritizing depth, our circuits are consistently shallower than those of FaO, with nearly $10\times$ reduction in circuit depth in the best case.
Estimated logical error rates are lower for most codes with reported results, with the largest reduction being approximately $90\%$, while acceptance rates are higher across all benchmarks.
\begin{figure}[htb]
    \centering
    \begin{subfigure}[t]{0.49\linewidth}
    \includegraphics[width=\linewidth]{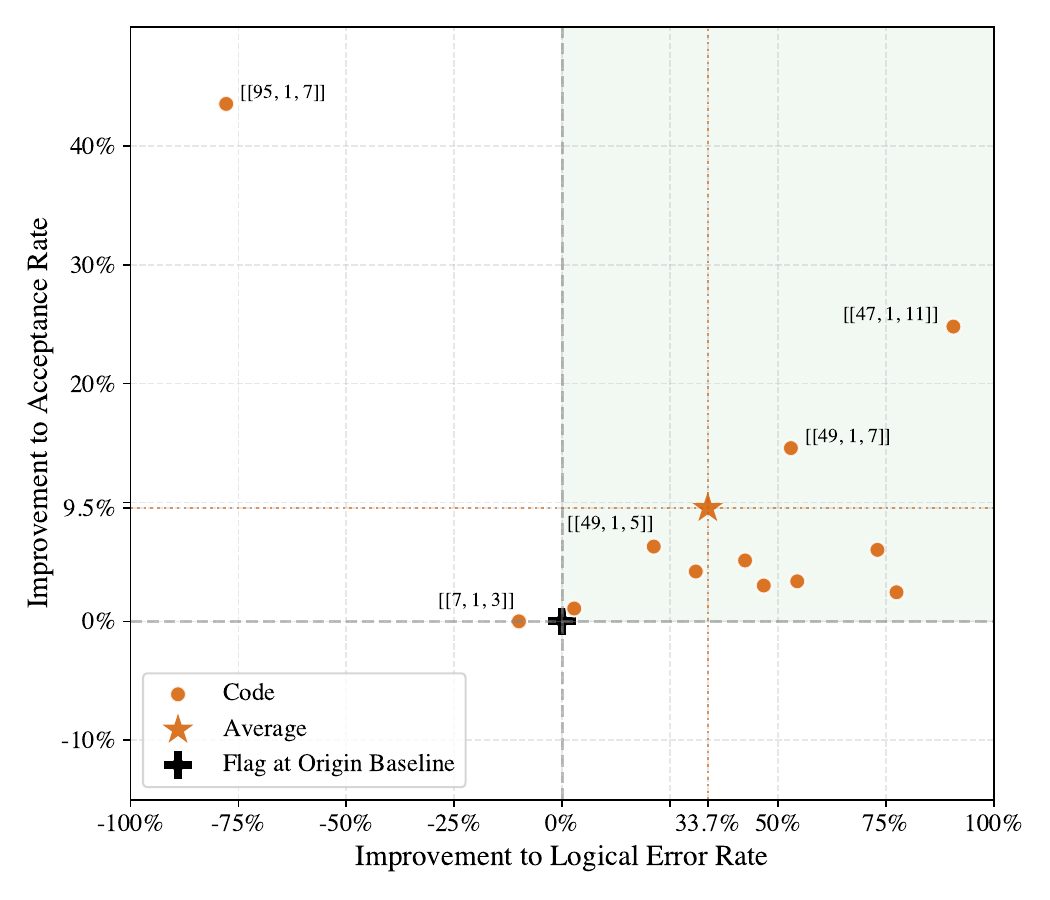}
    \caption{Compare the logical error rate and acceptance rate of SpiderCSS and FaO.}
    \label{fig:noise-improvements}
    \end{subfigure}
    \hfill
    \begin{subfigure}[t]{0.49\linewidth}
    \includegraphics[width=\linewidth]{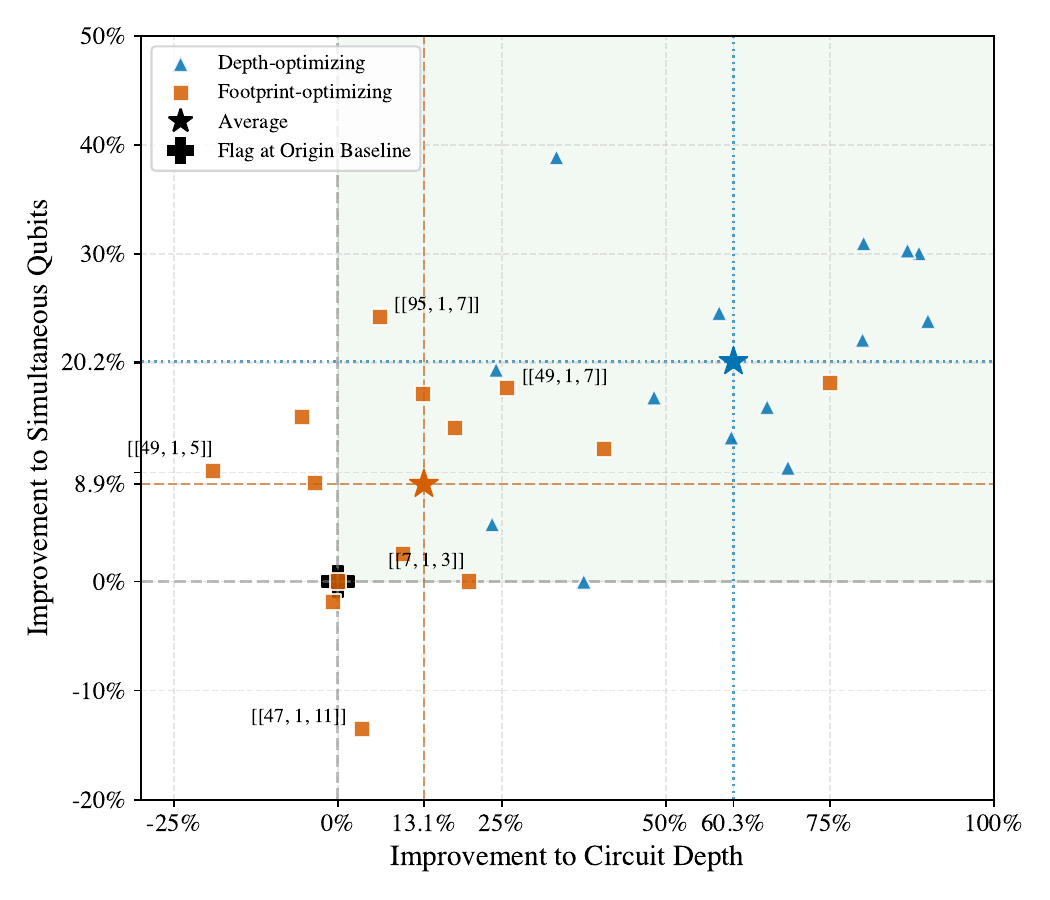}
    \caption{Circuit depth and peak simultaneous qubit count. Triangles compare SpiderCSS and FaO when both use depth-optimizing qubit reuse, while squares compare them when both use footprint-optimizing reuse.}
    \label{fig:size-improvements}
    \end{subfigure}
    \caption{Percentage improvements of SpiderCSS over Flag at Origin (FaO) across the benchmarked CSS states. Each point represents a CSS state, and the cross at the origin is the FaO baseline. Positive values mean SpiderCSS outperforms FaO (lower logical error rate, depth or qubit count; higher acceptance rate), so points in the shaded top-right quadrant improve on both metrics. Stars mark the mean improvements, with their values indicated on the axes.}
    \label{fig:comparison}
\end{figure}

\subsection{The SpiderCSS Pipeline}
\label{subsec:pipeline}
Our pipeline is inspired by Flag at Origin~\cite{forlivesi2025flag}: to construct an arbitrary CSS state, it suffices to fault-tolerantly prepare \CAT states $\left(\ket{0}^{\otimes n}+\ket{1}^{\otimes n}\right)/\sqrt{2}$ and connect them together in a particular configuration as efficiently as possible. We augment their results by using the ZX-calculus~\cite{coeckeInteractingQuantumObservables2008} to represent the intermediate steps. This allows us to produce a circuit in a less constrained way, and use the framework of Fault Tolerance by Construction~\cite{rodatz2025fault} to rewrite our diagrams in a way that provably preserves fault tolerance.

The SpiderCSS pipeline takes as input a code distance $d$ and a description of a logically encoded state in terms of the stabilizers of the code and the stabilizers of the logically encoded state. Then, it proceeds as follows.

\begin{enumerate}
    \item Begin by representing the target state as a ZX-diagram in the normal form of~\cite{kissinger2022phasefreezxdiagramscss}. Then, rewrite the diagram until we have a bipartite graph of $Z$- and $X$-spiders. This diagram provides an idealized description of the state we aim to prepare fault-tolerantly.
    
    \begin{equation}
    \scalebox{.8}{\input{figures/representation.tikz}}
    \label{eq:representation}
    \end{equation}
    
    \item To obtain a circuit that can be executed on a quantum device, we transform \eqref{eq:representation} into a circuit-like ZX-diagram as follows:

    \begin{enumerate}
        \item Unidealize the internal edges using the fault-equivalent rewrite rules.
        \begin{equation}
            \scalebox{.8}{\input{figures/rewrites.tikz}}
            \label{eq:rewrites}
        \end{equation}
        \item Decompose each $Z$- and $X$-spider in \eqref{eq:rewrites}, which correspond to \CAT states in the $Z$- and $X$-basis, into a network of 3-ary $Z$- and $X$-spiders using a modified procedure from the SpiderCat framework~\cite{spidercat2026}.
        \begin{equation}
            \scalebox{.8}{\input{figures/decompose.tikz}}
            \label{eq:decompose}
        \end{equation}
    \end{enumerate}
    \item Generate a layout for \eqref{eq:decompose} that assigns each spider to a qubit and a timestep.
    \begin{equation}
    \scalebox{.5}{\input{figures/layout.tikz}}
    \label{eq:layout}
    \end{equation}
    \item The diagram together with the layout in \eqref{eq:layout} is interpreted as a circuit containing ancillas, CNOT gates, and measurements.
    \[  
        \large
        \hspace*{-1.75cm}\scalebox{.45}{\input{figures/sarah-figure.tikz}}
    \]
\end{enumerate}

Each of the steps 1--3 can be performed in many different ways, which result in vastly different performance characteristics. Step 1 corresponds to performing Gaussian elimination on a non-square Boolean matrix where we try to maximize the sparsity of the matrix. Step 2 corresponds to constructing 3-regular graphs satisfying certain expander-like properties. Step 3 is a scheduling problem where we try to find a minimum diameter spanning forest covering the ZX-diagram, while minimizing several other relevant properties.

The ZX-calculus brings these distinct optimization problems together within a single framework of fault-equivalent rewrites. This unified perspective also allows us to make additional observations that would otherwise be highly non-trivial. Building on these observations, we develop a new compilation pipeline for constructing highly-optimized, resource-efficient circuits that outperform other polynomial-time methods.

Our paper is organized as follows. In \Cref{sec:preliminaries}, we review CSS states, the ZX-calculus, fault-equivalent rewrites, and fault-tolerant \CAT-state constructions. In \Cref{sec:the-theory}, we develop the ZX transformations and circuit extraction procedures underlying our construction. \Cref{sec:the-pipeline} describes the SpiderCSS compilation pipeline and its optimization strategies. In \Cref{sec:simulation}, we benchmark the resulting circuits against FaO in terms of resource overhead, logical error rates, and acceptance rates. Further details on safe hook errors are provided in \Cref{apdx:safe-hook-errors}.

\section{Preliminaries}
\label{sec:preliminaries}
\subsection{Fault-Tolerant CSS State Preparation}
\label{sec:CSS-states}

This work focuses on the fault-tolerant preparation of CSS states.
We assume familiarity with the stabilizer formalism and the Pauli group. We refer the reader to~\cite{gottesman2024surviving} for a comprehensive introduction.
\begin{definition}[CSS States]
  \label{def:css-code-state}
  A \emph{Calderbank--Shor--Steane (CSS) state} is a stabilizer state whose stabilizers can be generated by a set of purely $X$-type and purely $Z$-type Pauli strings: tensor products containing only $I$ and $X$, or only $I$ and $Z$, respectively, apart from an overall sign.
\end{definition}
CSS states are precisely those states that can be built from $\ket{0}$ and $\ket{+}$ state preparations, single-qubit Pauli $X$ and $Z$ rotations, CNOT gates, and $Z$- and $X$-basis measurements.
Importantly, CSS states encoded in CSS quantum error correction codes remain CSS states (which is not the case for CSS states encoded in non-CSS stabilizer codes).
As such, this class of states includes, for example, logical $\ket{0}^k$ for any CSS code, a key component for Steane-style syndrome extraction.

We will be interested in the behaviour of CSS states under noise, so we define noisy circuits.
We consider Clifford circuits under Pauli noise, where faults can occur at the time steps in between the gates:
\begin{definition}[Faults]
  Let $C$ be a Clifford circuit and let $\mathcal{L}$ be the spacetime locations of all the time steps in between the gates.
  Then a \emph{fault} $F$ on $C$ assigns each qubit at a timestep an element in the Pauli group $\mathcal{P} = \{I, X, Y, Z\}$.
  Formally, we can express faults as elements in $\mathcal{P}^{|\mathcal{L}|}$.
  We write $C^F$ for the circuit $C$ where the Paulis indicated by $F$ have been added to the corresponding spacetime locations.
  We say a fault $F$ is \emph{trivial} if $C^F = C$ and \emph{detectable} if $C^F = 0$.
\end{definition}

We need to specify not just what the possible faults in a circuit are, but also how likely each fault is.
\begin{definition}[Noise Models and Circuit-Level Noise]
  Let $C$ be a Clifford circuit with spacetime locations $\mathcal{L}$.
  Then a \emph{noise model} on $C$ is a function $\mathcal{N} : \mathcal{P}^{|\mathcal{L}|} \to \mathbb{N} \cup \{\infty\}$ that assigns each fault a \emph{weight}, where we require $\mathcal{N}(I) = 0$ (the trivial fault has weight 0).
  \emph{Circuit-level noise} is the noise model that assigns each idling fault, gate fault, preparation fault and measurement fault weight 1.
  The weight of all other faults is the minimum number of the previously mentioned basic faults required to generate them.
\end{definition}

Now, we can define the main problem we are interested in:
\begin{definition}[Fault-Tolerant State Preparation]
  \label{def:ft-state-prep}
  Let $C$ be a Clifford circuit that prepares a state. Let $w \in \mathbb{N} \cup \{\infty\}$. We say $C$ is \emph{$w$-fault-tolerant} under some noise model $\mathcal{N}$, if, for any undetectable fault $F$ on $C$ with $\mathcal{N}(F) < w$, there exists a fault $F'$ that only acts on the outputs of $C$ such that $C^F = C^{F'}$ and $\mathcal{N}(F') \leq \mathcal{N}(F)$. We refer to $\infty$-fault-tolerant state preparations as \emph{fault-tolerant state preparation}s.
\end{definition}

\begin{remark}
  \label{rem:ft-indexing}
  Here, $w$ should be interpreted as a cutoff on fault weight, in the same spirit as code distance: a code has distance at least $d$ when every Pauli of weight less than $d$ either acts trivially on the codespace or is detected.
\end{remark}


Informally, a circuit is fault-tolerant if any fault occurring somewhere in the circuit is either detectable or can be propagated to the outputs of the circuit in such a way that its weight does not increase.


As we focus on CSS states, a common assumption~\cite{forlivesi2025flag, strikis2026highperformancesyndromeextractioncircuits,weilandt2026synthesisfaulttolerantstatepreparation,Poor2026ultralow}, is to restrict the analysis to the CSS noise model.
\begin{definition}[CSS Submodel]
  Let $C$ be a Clifford circuit with a noise model $\mathcal{N}$.
  Then the \emph{CSS submodel} of $\mathcal{N}$ is defined as $\mathcal{N}_{CSS}(F) = \max(\mathcal{N}(F|_X), \mathcal{N}(F|_Z))$ where $F|_X$ is the restriction of $F$ to its $X$ component, i.e.\@ dropping all the $Z$ faults and making $Y$ faults into $X$ faults and where $F|_Z$ is the analogous restriction to $Z$ faults.
\end{definition}

Analysing the fault tolerance of circuits under the CSS noise submodel significantly simplifies the reasoning and does not affect the circuit distance for the circuits we consider in this paper.

\begin{definition}[Circuit Distance]
  Let $C$ be a Clifford circuit with a noise model $\mathcal{N}$.
  Then the distance of $C$ under $\mathcal{N}$, $d_{\mathcal{N}}(C)$, is defined as the minimum weight of all non-trivial, undetectable faults.
\end{definition}

From this, we can motivate the restriction to CSS submodels:
\begin{proposition}\label{prop:CSS-distance-equivalence}
  Let $C$ be a circuit built from $\ket{0}, \ket{+}$, single-qubit Pauli $X$ and $Z$ rotations, CNOTs, and $Z$- and $X$-basis measurements. Let $\mathcal{N}$ be a noise model on $C$ such that
  $\mathcal{N}(F|_X), \mathcal{N}(F|_Z) \leq \mathcal{N}(F)$ for every fault $F$.
  Then the distance of $C$ under $\mathcal{N}$ is the same as the distance of $C$ under $\mathcal{N}_{CSS}$, i.e.\@ $d_{\mathcal{N}}(C) = d_{\mathcal{N}_{CSS}}(C)$.
\end{proposition}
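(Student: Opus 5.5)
We prove the two inequalities $d_{\mathcal{N}_{CSS}}(C) \leq d_{\mathcal{N}}(C)$ and $d_{\mathcal{N}}(C) \leq d_{\mathcal{N}_{CSS}}(C)$ separately. The structural fact we rely on is that every gate in $C$ is CSS. Preparations, Pauli rotations, CNOTs and $X$/$Z$ measurements all map $X$-type Paulis to $X$-type Paulis and $Z$-type Paulis to $Z$-type Paulis under conjugation, up to phase. Hence the effect of a fault $F$ factorizes into the effects of $F|_X$ and $F|_Z$: we have $C^F = \pm C^{F|_X F|_Z}$, and it suffices to treat the $X$- and $Z$-components independently.

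\textbf{Step 1 (splitting lemma).} We first show that $F$ is undetectable and non-trivial if and only if its components behave suitably. Propagating a Pauli fault through the Clifford circuit, it either hits a preparation or measurement, where it may flip the outcome or annihilate ($C^F = 0$), or it reaches the outputs as some Pauli $P$. An $X$-type fault can only be detected at $Z$-type checks, i.e.\ $\ket{0}$ preparations and $Z$-measurements, and $Z$-type faults only at $X$-type checks. Detection is therefore decided componentwise: $F$ is detectable if and only if $F|_X$ or $F|_Z$ is detectable. The output Pauli of $F$ is, up to phase, the product of an $X$-part coming from $F|_X$ and a $Z$-part coming from $F|_Z$. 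Triviality means this output Pauli lies in the stabilizer group of the prepared state, up to the sign absorbed by postselection. Because the state is CSS, its stabilizer group is generated by pure $X$ and pure $Z$ strings, so $P_XP_Z$ is a stabilizer exactly when $P_X$ and $P_Z$ both are. Consequently $F$ is non-trivial and undetectable if and only if both components are undetectable and at least one of them is non-trivial. Making this lemma precise is the main obstacle. We must handle global phases and signs, and in particular check that a $Z$ fault anticommuting with an $X$ fault only contributes a sign, not a detection. Exact equality $C^F = C$ versus equality up to scalar also needs care. The cleanest route is to work with the CSS gate identities and treat $C^F$ as the linear map, with trivial meaning equal up to the sign convention of the definition.

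\textbf{Step 2 ($d_{\mathcal{N}_{CSS}} \leq d_{\mathcal{N}}$).} Take a minimum-weight non-trivial undetectable $F$ under $\mathcal{N}$. Since $F$ is itself a fault, $d_{\mathcal{N}_{CSS}}(C) \leq \mathcal{N}_{CSS}(F) = \max(\mathcal{N}(F|_X),\mathcal{N}(F|_Z))$. The hypothesis bounds this maximum by $\mathcal{N}(F) = d_{\mathcal{N}}(C)$.

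\textbf{Step 3 ($d_{\mathcal{N}} \leq d_{\mathcal{N}_{CSS}}$).} Take $F$ non-trivial and undetectable minimizing $\mathcal{N}_{CSS}$. By Step 1, one of its components, say $G = F|_X$, is non-trivial and undetectable as a fault on its own. Since $G$ is pure $X$, we have $G|_X = G$ and $G|_Z = I$, so $\mathcal{N}_{CSS}(G) = \max(\mathcal{N}(G), 0) = \mathcal{N}(G)$. Therefore $d_{\mathcal{N}}(C) \leq \mathcal{N}(G) \leq \mathcal{N}_{CSS}(F) = d_{\mathcal{N}_{CSS}}(C)$. If no non-trivial undetectable fault exists, both distances are $\infty$ by Step 1, and the equality holds trivially.
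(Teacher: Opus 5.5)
Your proposal is correct and follows the paper's proof essentially line for line. The inequality $d_{\mathcal{N}_{CSS}}(C)\le d_{\mathcal{N}}(C)$ comes directly from the hypothesis $\mathcal{N}(F|_X),\mathcal{N}(F|_Z)\le\mathcal{N}(F)$, and the reverse inequality comes from passing from a minimum-$\mathcal{N}_{CSS}$ fault to a non-trivial undetectable pure component $F|_P$ with $\mathcal{N}(F|_P)\le\mathcal{N}_{CSS}(F)$. Your Step 1 spells out, with the phase and sign caveats, the componentwise detectability and non-triviality facts that the paper states in a single sentence.
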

\begin{proof}
  $d_{\mathcal{N}}(C) \geq d_{\mathcal{N}_{CSS}}(C)$: By assumption,
  $\mathcal{N}(F) \geq \max(\mathcal{N}(F|_X), \mathcal{N}(F|_Z)) = \mathcal{N}_{CSS}(F)$
  for every fault $F$. 
  Minimizing over non-trivial undetectable faults gives
  $d_{\mathcal{N}}(C) \geq d_{\mathcal{N}_{CSS}}(C)$.

  $d_{\mathcal{N}}(C)\leq d_{\mathcal{N}_{CSS}}(C)$: Let $F$ be a non-trivial undetectable fault with
  $\mathcal{N}_{CSS}(F) = d_{\mathcal{N}_{CSS}}(C)$. 
  As $C$ is built from $\ket{0}, \ket{+}$, single-qubit Pauli $X$ and $Z$ rotations, CNOTs, and $Z$- and $X$-basis measurements, the detectability of $X$ and $Z$ faults is independent, meaning that $F|_X$ and $F|_Z$ are undetectable.
  Furthermore, at least one of them, say $F|_P$,
  is non-trivial. Hence
  \[
    d_{\mathcal{N}}(C) \leq \mathcal{N}(F|_P)
    \leq \max(\mathcal{N}(F|_X), \mathcal{N}(F|_Z))
    = \mathcal{N}_{CSS}(F) = d_{\mathcal{N}_{CSS}}(C).
  \]
  \\[-20pt]\qedhere
\end{proof}
This means that if CSS states are fault-tolerant under CSS noise, in the right context, they can be regarded as generally fault-tolerant.
If used in a larger protocol with non-CSS gates, the protocol's non-CSS distance may be lower than the CSS distance.
However, when handled carefully, such as by adding syndrome extraction, analysis under CSS noise is sufficient, even in the presence of non-CSS gates.
Therefore, throughout the rest of this work, we focus on fault-tolerant CSS state preparation under the CSS restriction of circuit-level noise.




\subsection{CSS States in the ZX-Calculus}
\label{subsec:zx-representation}

For the task of circuit synthesis, we use the ZX-calculus, a graphical language for representing and reasoning about quantum computations. For the CNOT-based CSS-state preparation circuits considered here, the Pauli fragment of the ZX-calculus suffices to represent both the ideal operations and Pauli faults. For a complete introduction, see~\cite{KissingerWetering2024Book}.

\begin{definition}[Pauli and Phase-Free ZX-Diagrams]
  \label{def:zx-diagram}
  A \emph{Pauli ZX-diagram} is a graph consisting of $Z$- and $X$-spiders with a phase $k\pi$, with $k\in\{0,1\}$, defined as:
  \begin{align*}
    \textit{$Z$-spider:} \qquad
    \input{z-spider.tikz}
    \quad &\coloneqq \quad
    \ket{0}^{\otimes n}\! \bra{0}^{\otimes m} + e^{i k\pi} \ket{1}^{\otimes n}\! \bra{1}^{\otimes m}\\[8pt]
    \textit{$X$-spider:} \qquad
    \input{x-spider.tikz}
    \quad &\coloneqq \quad
    \ket{+}^{\otimes n}\! \bra{+}^{\otimes m} + e^{i k\pi} \ket{-}^{\otimes n}\! \bra{-}^{\otimes m}
  \end{align*}
  Sequential composition corresponds to composition of linear maps, and parallel composition corresponds to their tensor product.
  A diagram is \emph{phase-free} if every spider has phase zero.
\end{definition}

We will often refer to spiders as vertices and treat ZX-diagrams as undirected graphs.
Spiders that have an edge not connected to another spider, but directly to an input or output are called \emph{boundary vertices}, and the corresponding edges are called \emph{boundary edges}. Hence, a \emph{boundary spider} is incident to a boundary edge while an \emph{internal spider} is not.
The \emph{arity} (or degree) of a spider is its total number of incident legs.
If a diagram has no inputs it represents a quantum state and we call it a \emph{state diagram}.

The $Z$-spiders and $X$-spiders with $0$ or $\pi$ phases and their compositions represent circuits consisting of $\ket{0}$ and $\ket{+}$ preparations, single-qubit Pauli $X$ and $Z$ rotations, CNOT gates, and $X$- and $Z$-basis measurements; see \Cref{fig:translation-guide}.
Beyond representing computations, a key feature of ZX-diagrams is that we can manipulate them with a small set of elementary rewrites while preserving the linear map they represent.
We identify diagrams up to non-zero scalars.
Some relevant rewrites are shown in \Cref{fig:zx-axioms}.

\begin{figure}[hbt]
  \captionsetup[subfigure]{justification=Justified,singlelinecheck=false,margin=0pt,aboveskip=0pt}
  \begin{subfigure}[b]{0.44\textwidth}
    \centering
    \[\input{translation-guide.tikz}\]
    \caption{}\label{fig:translation-guide}
  \end{subfigure}
  \hfill
  \begin{subfigure}[b]{0.55\textwidth}
    \centering
    \[\input{zx-axioms.tikz}\]
    \caption{}\label{fig:zx-axioms}
  \end{subfigure}
  \caption{ZX overview. (a) Translation between quantum circuits and ZX-diagrams. The $k$ parameter in the representation of $Z$- and $X$-measurements corresponds to the measurement outcome. Note that a $Z$-spider is used to represent an $X$-type measurement and state preparation, and vice versa for an $X$-spider. (b) A selection of ZX-rewrites. $k, k_1,k_2\in\{0,1\}$.
  }
  \label{fig:circuit-to-zx}
\end{figure}

With the translation table of \Cref{fig:translation-guide} we can represent any CNOT circuit involving ancillas and measurements as a phase-free ZX-diagram and hence represent any circuit implementing a CSS state as a ZX-diagram. There is also a canonical way to represent any CSS stabilizer tableau directly as a phase-free ZX-diagram.

\begin{definition}[$X$- and $Z$-Generator Forms~\cite{kissinger2022phasefreezxdiagramscss}]
  \label{def:generator-form}
  \label{def:XZ-nf}
  Let $\ket{\psi}$ be a CSS state on $n$ qubits with a generating set of stabilizers $S^X_1, \dots, S^X_{n_X}, S^Z_1, \dots, S^Z_{n_Z}$.
  The \emph{$X$-generator form} of $\psi$ consists of a column of $n_X$ internal phase-free $Z$-spiders and a column of $n$ boundary phase-free $X$-spiders.
  Each of the $X$-spiders is connected to precisely one output.
  Furthermore, the $i$-th internal spider is connected to its $j$-th output spider if and only if $S^X_i$ acts non-trivially on the $j$-th qubit.
  The \emph{$Z$-generator form} is defined analogously, with the roles of $Z$- and $X$-spiders interchanged.
\end{definition}

We can, for example, represent the $\ket{\overline{0}}$ state of the Steane code in the $X$- and $Z$-generator forms:
\begin{example}[Steane-Code Diagrams]
  \label{ex:steane-generator-form}
  The stabilizers of the $\ket{\overline{0}}$ state in the $\llbracket 7, 1, 3 \rrbracket$ Steane code can be generated by the following Paulis: 
  \[
    X_1X_2X_3X_4 \qquad X_2X_3X_5X_6 \qquad X_3X_4X_6X_7 \qquad
    Z_1Z_2Z_3Z_4 \qquad Z_2Z_3Z_5Z_6 \qquad Z_3Z_4Z_6Z_7 \qquad Z_5Z_6Z_7
  \]
  where the first six stabilizers are the stabilizers of the code and the last stabilizer is specific to the logical state. 
  In terms of the $X$- and $Z$-generator forms this corresponds to:
  \begin{equation}
    \ket{\overline{0}}
    \overset{X\text{-generator diagram}}{\longmapsto} \scalebox{.75}{\input{figures/steane-zero-faulty.tikz}}
    \qquad \qquad \qquad
    \ket{\overline{0}}
    \overset{Z\text{-generator diagram}}{\longmapsto} \scalebox{.75}{\input{figures/steane-zero-dual.tikz}}
  \end{equation}
\end{example}

\begin{remark}[Terminology]
  The $X$-generator and $Z$-generator forms are called the $Z$-$X$ and $X$-$Z$ normal forms in~\cite[Definition~3.1]{kissinger2022phasefreezxdiagramscss}, respectively. We use an alternative name to prevent confusion with the term ZX-calculus.
\end{remark}

Any phase-free state diagram can be rewritten into either generator form~\cite{kissinger2022phasefreezxdiagramscss} and so they act as a normal form.
The diagrams of the $X$- and $Z$-generator forms of some CSS stabilizer state $\psi$ represent exactly the state $\psi$ as a ZX-diagram~\cite{kissinger2022phasefreezxdiagramscss}.
As the $X$- and $Z$-generator forms are dependent on a choice of stabilizer generators, this representation is non-unique.
However, for any choice of generators the resulting diagram encodes the same state and we can rewrite any two such diagrams into each other using ZX-rewrites.

\subsection{Fault-Equivalent ZX Rewrites}
\label{subsec:fault-equivalence}

The key methodology underlying this work builds on~\cite{rodatz2025fault}: the idea of starting with an idealized description of the computation one would like to implement and transforming it into an implementable circuit while preserving the specification's behaviour under noise.
As we use the ZX-calculus for this, we first have to introduce noise and noise models for ZX-diagrams:
\begin{definition}[Faults on ZX-diagrams~\cite{rodatz2025fault}]
  Let $D$ be a Pauli ZX-diagram. The spacetime locations $\mathcal{L}$ of $D$ are the edges $E$ between the spiders.
  Then a fault $F \in \mathcal{P}^{|E|}$ on $D$ assigns each edge an element in the single-qubit Pauli group (modulo phase).
  We write $D^F$ as the noisy diagram $D$ under $F$ obtained by adding the Pauli rotations indicated by $F$ to $D$. 
  We say a fault $F$ is \emph{trivial} if $D^F = D$ and \emph{detectable} if $D^F = 0$. 
\end{definition}

Now that we have a notion of faults, we can define a noise model.
As ZX-diagrams do not have an inherent time ordering, we define a noise model that is slightly different from the circuit-level noise model. We also introduce the notion of \emph{idealization}, where we state that certain edges in a diagram are not allowed to produce a fault. This is not possible in practice as we cannot ensure that no faults occur on certain qubits.
However, idealization serves as a useful mathematical tool to specify the behaviour of a circuit we would like to have.

\begin{definition}[Edge-flip Noise, Idealized Edges]
  Let $D$ be a ZX-diagram and $F \in \mathcal{P}^{|E|}$ be a fault on $D$.
  Then under the \emph{edge-flip noise model} $\mathcal{N}(F) = |F|$ is the Pauli weight of $F$.
  Furthermore, we allow the \emph{idealization of edges} as fault-free (meaning no faults allowed to occur on the edge) which we represent in the diagram by drawing those edges in purple.
  If a fault $F$ acts non-trivially on an idealized edge, then we set $\mathcal{N}(F) = \infty$, so that any finite-weight fault only has support on unidealized edges. We will also refer to unidealized edges as \emph{noisy} edges.
\end{definition}

Idealization allows us to define an ``idealized specification'' of a circuit or state where every internal edge is idealized.
A circuit is then fault-tolerant if and only if each fault on it corresponds to a fault on its idealization.
To make this precise, we define the notion of fault equivalence.

The key perspective shift proposed by~\cite{rodatz2025fault} is not to ask whether a particular circuit is fault-tolerant but rather whether two circuits are ``equally fault-tolerant'':
\begin{definition}[Fault Equivalence]
  \label{def:fault-equivalence}
  Let $D_1$ and $D_2$ be non-zero ZX-diagrams under respective noise models $\mathcal{N}_1, \mathcal{N}_2$.
  Let $w \in \mathbb{N} \cup \{\infty\}$.
  We say $D_1$ under $\mathcal{N}_1$ is $w$-fault-equivalent to $D_2$ under $\mathcal{N}_2$, if for every undetectable fault $F_1$ on $D_1$ with $\mathcal{N}_1(F_1) < w$, there exists an $F_2$ on $D_2$ such that $D_1^{F_1} = D_2^{F_2}$ and $\mathcal{N}_2(F_2) \leq \mathcal{N}_1(F_1)$ and vice versa every undetectable fault $F_2$ on $D_2$ has a corresponding fault on $D_1$.
  When this is the case, we write $D_1\wFaultEq{w}D_2$.
  When $w = \infty$, we say that the two diagrams are \emph{fault-equivalent}, written $D_1 \FaultEq D_2$.
\end{definition}
Just like our definition of fault tolerance, fault equivalence is defined with respect to a noise model. 
Therefore, we can consider the fault equivalence of ZX-diagrams under edge-flip noise but also, for example, under CSS edge-flip noise.

Fault equivalence has nice properties: 
\begin{proposition}[Compositionality and Transitivity~\cite{rodatz2025fault}]
  \label{prop:css-fe-compositionality}
  We note $w$-fault equivalence is transitive and is preserved by sequential and parallel composition of diagrams with compatible boundaries and noise specifications.
  Consequently, a $w$-fault-equivalent rewrite may be performed inside a larger diagram, and a sequence of such rewrites preserves $w$-fault equivalence.
\end{proposition}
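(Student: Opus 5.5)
Transitivity and compositionality follow from manipulating the quantifiers in \Cref{def:fault-equivalence}. Throughout I use the additivity property of the edge-flip noise model, and of its CSS restriction: a fault on a composite diagram splits into its restrictions to the component diagrams, and its weight is the sum of the component weights. Idealized edges are respected because any fault touching one has weight $\infty$.

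\emph{Transitivity.} Suppose $D_1 \wFaultEq{w} D_2$ and $D_2 \wFaultEq{w} D_3$. Take an undetectable fault $F_1$ on $D_1$ with $\mathcal{N}_1(F_1)<w$. The first equivalence gives $F_2$ with $D_2^{F_2}=D_1^{F_1}$ and $\mathcal{N}_2(F_2)\le\mathcal{N}_1(F_1)<w$. Since $D_1^{F_1}\neq 0$, the fault $F_2$ is undetectable, so the second equivalence applies. It yields $F_3$ with $D_3^{F_3}=D_2^{F_2}=D_1^{F_1}$ and $\mathcal{N}_3(F_3)\le\mathcal{N}_1(F_1)$. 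The converse direction is symmetric.

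\emph{Composition.} Consider $D_1\wFaultEq{w}D_2$ and $E_1\wFaultEq{w}E_2$, and the composite $D_1\circ E_1$ (or $D_1\otimes E_1$). Its edges are the edges of $D_1$, the edges of $E_1$, and the gluing edges, which carry noise specifications that are the same on both sides. First I push any Pauli on a gluing edge into one chosen component, treating it as a boundary Pauli of $D_1$. For this to be legitimate, the boundary must be part of the noise specification: the "compatible noise specifications" hypothesis should mean that a boundary Pauli counts with the same weight as the corresponding gluing edge. I would make this convention explicit, following~\cite{rodatz2025fault}.

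Now write a fault $F$ as $F_D\cup F_E$, so that $\mathcal{N}(F)=\mathcal{N}(F_D)+\mathcal{N}(F_E)<w$ and hence each part has weight $<w$. If $(D_1\circ E_1)^F\neq0$, then each part is undetectable on its own component. This holds because detectability of a component implies the composite is zero. I then replace $F_D$ and $F_E$ by their equivalents in $D_2$ and $E_2$, which gives an equal noisy diagram with no larger weight.

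\emph{Main obstacle.} The step I expect to be hardest is exactly this substitution. Replacing $D_1^{F_D}$ by $D_2^{F'_D}$ is an equality of linear maps, so it survives composition. The difficulty is that, under the chosen splitting, the parts must lie below $w$ and be undetectable as standalone diagrams. This is where fault equivalence must quantify over all boundary Paulis; otherwise, for finite $w$, one must argue via the splitting that minimizes weight.

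With transitivity and composition in hand, the final claim is immediate. Performing a rewrite inside a larger diagram is composition with identity contexts, which are trivially self-fault-equivalent. A sequence of such rewrites then preserves $w$-fault equivalence by transitivity.
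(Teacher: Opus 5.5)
\paragraph{Where the proof breaks.} Your transitivity argument is correct. So is the composition argument for the plain edge-flip model. There $\mathcal{N}(F)=|F|$ is additive once a flip on a glued edge is assigned to one side, and re-merging only lowers the count. Each part is then automatically below $w$, and undetectability of the parts follows from $D_1^{F_D}=0\Rightarrow(D_1\circ E_1)^F=0$. The step you single out as the main obstacle is therefore not one.

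The real gap is your assertion that the CSS restriction is additive as well; it is not. Write $x(F)=\mathcal{N}(F|_X)$ and $z(F)=\mathcal{N}(F|_Z)$. Over a splitting $F=F_D\cup F_E$, the CSS weight $\max(x(F),z(F))$ is only monotone and subadditive. Monotonicity still gives $\mathcal{N}_{CSS}(F_D)<w$, but your recombination step breaks. CSS equivalence of $D_1$ and $D_2$ only bounds $\max(x(G_D),z(G_D))$ by $\max(x(F_D),z(F_D))$. So the fault $G_D$ it returns may carry a heavier $X$-part than $F_D$, paid for by slack in its $Z$-part. For example, if $x(F_D)=1$ and $z(F_D)=2$, nothing in the definition forbids a match with $x(G_D)=z(G_D)=2$. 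Take the context fault to be two $X$-flips, matched by itself. Then $\mathcal{N}_{CSS}(F)=\max(3,2)=3$, while the recombined fault has CSS weight $\max(4,2)=4$. This case cannot be skipped: the proposition is exactly what licenses applying the CSS-only rewrite of \Cref{prop:unidealize} inside the larger bipartite diagram.

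\paragraph{The missing ingredient.} You need the independence of $X$- and $Z$-faults in nonzero Pauli ZX-diagrams, the same fact used in \Cref{prop:CSS-distance-equivalence}. It has two parts. First, if $D^F\neq0$ then $D^{F|_X}\neq0$ and $D^{F|_Z}\neq0$. Second, for undetectable $F$, $D_1^F=D_2^G$ holds if and only if $D_1^{F|_X}=D_2^{G|_X}$ and $D_1^{F|_Z}=D_2^{G|_Z}$.

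With this, apply the CSS equivalence to the pure fault $F_D|_X$. It is undetectable and has CSS weight $x(F_D)\le\mathcal{N}_{CSS}(F)<w$. Keep only the $X$-part of the returned fault. By separation it reproduces $D_1^{F_D|_X}$, and it has at most $x(F_D)$ flips. Do the same for $F_D|_Z$ and for both type-components of $F_E$, and let $G_D$, $G_E$ be the products of the pure pieces.

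For pure-type faults the flip counts are additive across components, and merging on glued edges only cancels flips. So the recombined fault $G=G_D\cup G_E$ has $x(G)\le x(F)$ and $z(G)\le z(F)$, hence $\mathcal{N}_{CSS}(G)\le\mathcal{N}_{CSS}(F)$. Separation in each component gives $D_2^{G_D}=D_1^{F_D}$ and $E_2^{G_E}=E_1^{F_E}$, and therefore equality of the composites.
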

On top of these, the ZX-calculus with only fault-equivalent rewrites is also complete~\cite{rusch2025completeness}, meaning that it can derive any possible equality.

Using fault equivalence, we can recapture common definitions in the literature, such as fault-tolerant state preparation (see~\Cref{def:ft-state-prep}):
\begin{proposition}
  \label{prop:ft-prep-fault-equivalence}
  Let $D$ be a ZX-diagram that prepares some state $\ket{\psi}$ under noise model $\mathcal{N}$.
  $D$ $w$-fault-tolerantly prepares $\ket{\psi}$ according to \Cref{def:ft-state-prep} if and only if it is $w$-fault-equivalent to $D$ under a noise model $\mathcal{N}'$ where the only allowed faults are output qubit flips and the weight of a fault is the number of output qubits that are flipped. 
\end{proposition}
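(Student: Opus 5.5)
The plan is to unfold both definitions and check that each direction of the fault-equivalence condition is either exactly the fault-tolerance condition or is automatic. Write $D'$ for the same diagram $D$ under $\mathcal{N}'$. Under $\mathcal{N}'$ the only finite-weight faults are Pauli flips $F'$ on the output edges, weighted by the number of flipped outputs. This is exactly the class of faults ``acting only on the outputs'' in \Cref{def:ft-state-prep}, with $\mathcal{N}'(F')$ counting output qubits. I would fix this identification first, interpreting the output wires of $D$ as the locations on which $\mathcal{N}'$ is supported. If the output wires are not internal edges, I would view them as noisy boundary edges, possibly by inserting identity spiders.

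\emph{Forward direction ($D \wFaultEq{w} D'$ from $D$ being $w$-fault-tolerant).} Take an undetectable fault $F$ on $D$ with $\mathcal{N}(F)<w$. Fault tolerance gives $F'$ supported on the outputs with $D^F = D^{F'}$ and weight at most $\mathcal{N}(F)$. Read $F'$ as a fault on $D'$; then $\mathcal{N}'(F') \le \mathcal{N}(F)$, which is the first half of \Cref{def:fault-equivalence}.

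For the converse half, take an undetectable output fault $F'$ on $D'$ with $\mathcal{N}'(F')<w$. We need a fault $F$ on $D$ with $D^F = D'^{F'}$ and $\mathcal{N}(F)\le \mathcal{N}'(F')$. The natural choice is the same output Paulis placed on the output locations of $D$. This works provided $\mathcal{N}$ charges at most one unit per flipped output qubit, which holds for circuit-level and edge-flip noise. I would state this as the implicit standing assumption: output flips are elementary weight-one faults in $\mathcal{N}$. This is the one point needing care, and it is the main obstacle: the statement implicitly requires output faults to be cheap in $\mathcal{N}$. If they were idealized, the backward condition could fail, so I would make this assumption explicit.

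\emph{Backward direction (fault tolerance from $D \wFaultEq{w} D'$).} Given an undetectable $F$ on $D$ with $\mathcal{N}(F)<w$, the first half of the fault-equivalence definition directly supplies an output-only $F'$ with $D^F = D'^{F'} = D^{F'}$ and $\mathcal{N}'(F')\le\mathcal{N}(F)$. This is exactly \Cref{def:ft-state-prep}.

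Finally, I would remark that the non-zero hypothesis in \Cref{def:fault-equivalence} is met because $D$ prepares the state $\ket{\psi}\neq 0$. Also, the notion ``undetectable'' coincides in both settings, since it refers to the same underlying linear maps $D^F$.
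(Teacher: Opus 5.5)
Your argument is correct and follows the paper's approach of unfolding the two definitions, but it is more complete. The paper's one-line proof only checks the direction from faults on $D$ to output flips; you also verify the converse half of \Cref{def:fault-equivalence}, and you rightly flag that this needs the output edges to be noisy with unit cost in $\mathcal{N}$. Both your argument and the paper's also silently equate $\mathcal{N}(F')$ with the number of flipped outputs for output-only $F'$. That is exact under edge-flip noise, the intended setting, but would need care under, e.g., circuit-level noise, where a single final two-qubit gate fault can flip two outputs at weight $1$.
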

\begin{proof}
  The fault equivalence of a diagram $D$ to the idealized specification requires that any fault $F$ of weight $w' < w$ on $D$ is either detectable or has a corresponding fault $F'$ of at most the same weight. 
  As the only faults allowed on the specification are the single-qubit output flips, this means that $F$ must either be detectable or propagate to at most $w'$ many output qubit flips.
\end{proof}




This leads to a new way of synthesizing fault-tolerant state preparation circuits: if one starts with the idealized, fault-free state and transforms it using local, ``fault-equivalence-preserving'' transformations, then by construction the final circuit will be fault-equivalent to the idealized specification and hence fault-tolerant by construction.

However, not all rewrites preserve fault equivalence. 
For example, ordinary spider unfusion need not preserve fault equivalence: 
\begin{lemma}
An ordinary unfusion is not fault-equivalent,
\[
\input{figures/non-fe-rewrite.tikz}
\]
\label{lem:non-fe-rewrites}
\end{lemma}
\begin{proof}
An internal $X$-type edge flip will be pushed to the output edges using the $\pi$-copy rule, blowing up the weight count.
\[
\input{figures/non-fe-rewrite-proof.tikz}
\]
  \\[-20pt]\qedhere
\end{proof}

Rewrites that preserve fault equivalence are called \emph{fault-equivalent rewrites}. 
Some elementary rewrites that are fault-equivalent are shown in \Cref{fig:elementary-fe-rewrites}~\cite{rodatz2025fault}.

\begin{figure}[H]
    \centering
    
    \[\scalebox{1}{\input{figures/fe-rewrites1.tikz}}\]
    \caption{The rewrites together with their colour duals are fault-equivalent~\cite{rodatz2025fault}.}
  \label{fig:elementary-fe-rewrites}
\end{figure}

We can now frame the problem of fault-tolerant synthesis of CSS states as follows. We start with the $X$- or $Z$-generator form (\Cref{def:XZ-nf}) of the CSS state we want to synthesize. We treat all the internal wires as idealized, and hence this diagram represents the idealized fault-tolerant implementation. We then rewrite this diagram to something circuit-like using rewrites that are fault-equivalent under edge-flip noise.
While in general the quantum circuits in \Cref{fig:translation-guide} under circuit-level noise are not fault-equivalent to the corresponding ZX-diagrams under edge-flip noise, if we use the CSS submodel of circuit-level noise, they are~\cite[Section~7.2.2]{rodatz2025fault}. This means that once we have rewritten the diagram to something circuit-like, we can interpret it as an actual circuit and we are still guaranteed to have the same degree of fault tolerance.

\subsection{Fault-Tolerant Spider Expansions}
\label{subsec:cat-preliminaries}

To transform the idealized ZX-diagram describing the CSS state from \Cref{def:XZ-nf} into a circuit, the biggest obstacle is decomposing the high-arity spiders into arity-3 spiders that can be readily interpreted as CNOT gates, and to do this in a fault-tolerant way.

As a single spider is a \CAT state, this corresponds to finding fault-tolerant implementations of \CAT states. This problem was tackled in our previous work, SpiderCat~\cite{spidercat2026}. Since we will need to change some parts of the procedure from SpiderCat in this paper, we describe here at a high level how this works.
SpiderCat starts with a single phase-free $Z$-spider $Z_n$ with $n$ output legs, representing the \CAT state.
This specification is fault-equivalently expanded into a diagram of three-legged $Z$-spiders, from which a circuit is extracted.

\begin{definition}[$Z$-Graphs and Fault-Tolerant Expansions]
  \label{def:spider-expansion}
  A \emph{$Z$-graph} is a phase-free state diagram consisting solely of three-legged $Z$-spiders.
  A $Z$-graph $D$ with $n$ labelled outputs is a \emph{$w$-fault-tolerant expansion} of $Z_n$ if all its edges are noisy and $D \wFaultEq{w} Z_n$.
  $X$-graphs and fault-tolerant expansions of $X$-spiders are defined analogously by interchanging the colours.
\end{definition}

By \Cref{prop:css-fe-compositionality}, a fault-tolerant expansion can replace a spider inside a larger diagram, rather than only serving as an isolated \CAT-state preparation.

\begin{definition}[Marked-Graph Representation]
  \label{def:marked-cat-graph}
  A \emph{marked graph} consists of a $3$-regular graph $G$ with marked positions on its edges, drawn as dashes crossing the edges.
  Its \emph{expanded graph} $\widetilde G$ is obtained by replacing every mark by a degree-two vertex. 
  The \emph{corresponding $Z$-graph} of $G$ is obtained by replacing each vertex of $\widetilde G$ by a phase-free $Z$-spider and attaching one output edge to each degree-two vertex, so that each spider has three legs.
\end{definition}

See \Cref{fig:cat-extraction-overview} for the difference between marked graphs, expanded graphs and $Z$-graphs.
SpiderCat derived necessary and sufficient conditions for a marked graph to be a $w$-fault-tolerant expansion. We will just take this as given in this paper and assume that we have access to appropriate marked graphs satisfying the necessary conditions.

\begin{remark}[Convention on Notation for Fault-Tolerant Circuits]
\label{rem:spidercat-convention}
In the SpiderCat paper~\cite{spidercat2026}, the definition of a $t$-fault-tolerant state differs by 1 from the one we adopt in this paper, where we would use the value $t+1$ instead of $t$.
As an example, a circuit that this paper refers to as a 4-fault-tolerant 20-qubit \CAT state preparation would be called a 3-fault-tolerant 20-qubit \CAT state preparation by the convention in SpiderCat.
\end{remark}

Given a $Z$-graph we still need to interpret this as a circuit. To do this, we assign to each spider a space and time location (i.e.~on which qubit it occurs and after which operations it occurs). We do this by finding an appropriate spanning tree or forest on the graph.

\begin{definition}[Spanning Trees and Forests]
  \label{def:spanning-forest}
  A \emph{spanning tree} of a graph $G$ is a connected, acyclic subgraph containing every vertex of $G$. A \emph{spanning forest} satisfies the same conditions except that it may be disconnected, with each connected component forming a tree. A \emph{rooted tree} has a designated root vertex. In a rooted tree we define the \emph{parent} for each non-root to be its unique neighbour on the path to the root. A \emph{rooted forest} has one root in each component.
\end{definition}

Given a $Z$-graph with a rooted spanning tree or forest, we turn it into a circuit (`extract' a circuit) in the following way.
Pick a root and place it at the start of the circuit as an ancilla prepared in the $\ket{+}$ state. Pick one child and place it on the same qubit line. For each other child, place it on a new qubit line. Now do the same with all the children: pick one vertex to stay on the same qubit and let all the other unplaced children be placed on new qubit lines. If a spider is a boundary spider, we instead place its output wire on the same qubit and all its unplaced children on other qubit wires. When all spiders are placed, we have laid out the diagram in a way where we can easily turn it into a circuit; see \Cref{fig:cat-extraction-overview}.

To finish turning it into a circuit, we introduce a $\ket{0}$ preparation and a CNOT whenever a spider is put onto a new qubit line and use a CNOT and $Z$-basis measurement post-selected to $\ket{0}$ when a qubit line ends.
In both cases, the target of the CNOT is on the line that is starting or ending.
See~\cite{spidercat2026} for more details on why this is correct.


\begin{figure}[hbt]
  \centering
  \resizebox{\linewidth}{!}{\input{figures/cat-extraction-overview.tikz}}
  \caption{
    Circuit extraction from a fault-tolerant $Z$-spider expansion.
    (a) A graph with eight marked output positions.
    (b) Its expanded graph with a spanning tree rooted at vertex $0$.
    (c) The trivalent $Z$-graph obtained by attaching an output leg to each marked vertex.
    (d) The same $Z$-graph arranged for extraction.
    (e) The extracted \CAT-state preparation circuit, with three verification ancillas.
    The mark $q$ (vertex $6$, teal) becomes output qubit $q$, while the non-tree edge
    $e_f=(2,7)$ (orange) is implemented by the two CNOTs involving ancilla $f$.
    Numbers identify corresponding vertices and output ports, not spider phases.
    All three measurement outcomes are post-selected to $0$.
  }
  \label{fig:cat-extraction-overview}
\end{figure}

The SpiderCat paper~\cite{spidercat2026} shows that the circuits obtained from this procedure above can achieve provably minimal CNOT count for fault-tolerant \CAT state preparation. This minimality is achieved when the rooted spanning tree is chosen such that its root is an (internal) degree-3 vertex and all the leaves are degree two in $\widetilde G$. In this paper we will need the spanning trees to satisfy additional properties that will be explained later, but these in practice do not influence the optimality of individual \CAT states.

To construct fault-tolerant implementations of $X$-spiders, we do the procedure above but interchange the role of the $Z$- and $X$-basis and the target and control of the CNOTs.

\section{CSS State Synthesis: The Theory}
\label{sec:the-theory}

In this section, we establish the theoretical foundation of our synthesis method.
Following the \emph{Fault Tolerance by Construction} method, we can separate our synthesis into two steps: (1) starting from an idealized description of our CSS state as a ZX-diagram, we use fault-equivalent ZX-rewrites to unidealize the edges and decompose the large-weight spiders, to bring it into a form from which we can (2) extract a fault-equivalent quantum circuit. Each of these steps has a lot of freedom in how we choose to do certain operations, which has a great impact on the final performance. We explain the implementation details in \Cref{sec:the-pipeline}.

\subsection{Step 1: ZX Transformations}\label{sec:ZX-transformations}

\begin{figure}[tb]
  \begin{equation*}
    \scalebox{.8}{\input{figures/steane-zero-full-fig.tikz}}
  \end{equation*}
  \caption{
    Transformation steps towards an extractable ZX-diagram for fault-tolerant state preparation. Step 1.1 brings the diagram into a row-reduced normal form,
    then, Step 1.2 unidealizes the internal edges, and Step 1.3 unfuses the spiders along safe hook errors (Step 1.3.1) and using SpiderCat~\cite{spidercat2026} (Step 1.3.2).
  }
  \label{fig:steane-full}
\end{figure}

Our starting point is an idealized specification of the CSS state we wish to synthesize as a circuit given as a $Z$- or $X$-generator form state diagram, which we transform until we have a ZX-diagram from which we can extract our final circuit.
\Cref{fig:steane-full} showcases the individual steps of this procedure: Step 1.1 brings the diagram into a special normal form, Step 1.2 unidealizes the edges and Step 1.3 decomposes the spiders into spiders of weight at most three. 

\subsubsection{Step 1.1: Bringing the Diagram to Row-Reduced Normal Form}
\label{subsec:bipartite}

To bring our ZX-diagram closer to an implementable circuit, we have to unidealize the edges. 
For this, we first bring our ZX-diagram into a more specific normal form than the standard $X$- and $Z$-generator forms outlined in the preliminaries. 
Recall that the $X$- and $Z$-generator forms of \Cref{def:XZ-nf} are dependent on the choice of basis of stabilizers, so that we have freedom in choosing these generators.
\begin{definition}[$X$- and $Z$-RREFs]
    Let $\ket{\psi}$ be a CSS state on $n$ qubits with a generating set of purely $X$- and purely $Z$-stabilizers $S^X_1, \dots, S^X_{n_X}, S^Z_1, \dots, S^Z_{n_Z}$. We say this choice of generators is in \emph{$X$-generator row-reduced echelon form} ($X$-RREF) if every $X$-stabilizer generator $S^X_i$ has at least one qubit that it acts upon that no other $X$-stabilizer acts upon. We define $Z$-RREF analogously. As these are independent, a set of generators that is both in $X$-RREF and $Z$-RREF is said to be in RREF.
\end{definition}

\begin{proposition}
  For any CSS state, we can find a generating set of stabilizers that are in RREF.
\end{proposition}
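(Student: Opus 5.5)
The plan is to treat the $X$- and $Z$-stabilizers separately, since the RREF condition on one type does not constrain the other. Start from any generating set $S^X_1,\dots,S^X_{n_X},S^Z_1,\dots,S^Z_{n_Z}$ of $\ket{\psi}$ with purely $X$- and purely $Z$-type elements, as guaranteed by \Cref{def:css-code-state}. First discard redundant generators so that the $X$-generators are independent, and likewise the $Z$-generators. Then encode the $X$-generators as the rows of an $n_X\times n$ binary matrix $H_X$, where entry $(i,j)$ is $1$ exactly when $S^X_i$ acts non-trivially on qubit $j$. Products of stabilizers are again stabilizers, up to sign. Hence any invertible row operation over $\mathbb{F}_2$ on $H_X$ yields another generating set of the same $X$-stabilizer subgroup. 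The corresponding $Z$-generators together with the new $X$-generators still generate the full stabilizer group.

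The key step is to run Gaussian elimination over $\mathbb{F}_2$ and bring $H_X$ into reduced row echelon form. The rows are independent, so every row has a pivot column. By the definition of reduced row echelon form, the pivot column of row $i$ contains a $1$ in row $i$ and zeros in all other rows. In stabilizer terms, the new generator $S^X_i$ acts on its pivot qubit while no other $X$-generator does, which is precisely the $X$-RREF condition. Apply the identical procedure to $H_Z$ to obtain $Z$-RREF. These operations never mix $X$- and $Z$-generators, so both conditions hold simultaneously and the resulting set is in RREF.

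The main obstacle is bookkeeping rather than a real difficulty, and it has two parts. First, signs: a product of stabilizers of $\ket{\psi}$ is a stabilizer of $\ket{\psi}$ with the product's sign, and the definition allows an overall sign, so nothing needs adjusting. Second, the independence reduction must come first. If it were skipped, a dependent row would reduce to zero, and a zero row has no pivot and so violates the condition. Once redundant generators are removed, the argument is just the standard fact that every full-row-rank binary matrix is row-equivalent to one in reduced row echelon form.
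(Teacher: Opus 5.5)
Your proposal is correct and is the same argument as the paper's, whose proof simply says to start from any stabilizer tableau and row-reduce it into RREF. You also fill in details the paper leaves implicit: removing dependent generators first so that every row has a pivot, row-reducing the $X$ and $Z$ blocks separately, and checking that signs cause no trouble.
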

\begin{proof}
  Start with any stabilizer tableau and bring it into RREF through row-reduction.
\end{proof}

Just like the $X$- and $Z$-generator forms, the $X$- and $Z$-RREFs are not usually unique. As we will describe later, we will use this lack of uniqueness as an opportunity to reduce the resource count by minimizing the size of the spiders involved.

\subsubsection{Step 1.2: Fault-Equivalent Edge Unidealization}
\label{subsec:edge-elimination}
In order to unidealize edges, we introduce two fault-equivalent rewrites:
\begin{proposition}
  \label{prop:semi-ideal-id}
  If a spider has 1 idealized leg and 1 noisy leg, the spider and its two legs can be replaced by a single noisy leg:
  \begin{equation*}
    \input{figures/semi-idealized-id.tikz}
  \end{equation*}
\end{proposition}
\begin{proof}
  Neither side has any noisy internal edges, and therefore fault equivalence trivially follows from~\cite[Proposition B.4]{rodatz2025fault}.
\end{proof}

If our diagram is in $X$-RREF, then we can use this rewrite to ensure that all spiders are connected to a boundary:
Recall that in $X$-generator form we have an internal $Z$-spider for each $X$-stabilizer. The RREF condition ensures that then for each internal $Z$-spider there is at least one $X$-spider on the boundary that it is connected to that is not connected to any other internal spider. This boundary spider then has degree 2, so that the rewrite above applies.
This removes that boundary spider and makes the previously internal $Z$-spider directly connected to an output.
After doing this with all internal $Z$-spiders, we are left with a diagram that only has boundary spiders, with the only remaining idealized edges connecting the $Z$-spiders to the $X$-spiders, meaning that our diagram is in a bipartite form; see \Cref{fig:steane-full}, Step 1.2.1 for an example of this.

We can now get rid of the remaining idealized edges using the following rewrite: 
\begin{proposition}
  \label{prop:unidealize}

  An idealized edge can be unidealized under CSS-noise if the edge is between a $Z$- and an $X$-spider, and both spiders have at least 1 unidealized edge.
  That is, if $a, c > 0$, the following fault equivalence holds:
  \begin{equation*}
    \input{figures/johns-rule.tikz}
  \end{equation*}
\end{proposition}
\begin{proof}
    The left-hand side only has noisy boundary edges, and so each fault on it trivially corresponds to a fault on the right-hand side. For the other direction, note that for CSS fault equivalence it suffices to take pure $X$- and pure $Z$-type faults from internal edges and find equivalent faults of at most the same weight on noisy boundary edges without changing their type~\cite{rodatz2025fault}.
  The following two equations show this for the two elementary edge flips, $X$ and $Z$:
  \[
    \input{figures/colour-change-int-proof2.tikz}
  \]
  \\[-20pt]\qedhere
\end{proof}

\begin{remark}
  In the ordinary edge-flip model, a single internal $Y$ fault has weight $1$, but its boundary representation requires an $X$ and a $Z$ fault on distinct edges, giving a total weight of $2$.
  Under the CSS edge-flip convention, both fault patterns have CSS weight $1$.
    As such, this rewrite is fault-equivalent under CSS edge-flip noise, but not under normal edge-flip noise. 
    We indicate this by adding ``CSS'' under the fault equivalence sign. As discussed in \Cref{sec:CSS-states}, for CSS state preparation it suffices to consider CSS noise.
\end{remark}

Applying these rewrites in Step 1.2.2, we can now unidealize all idealized edges as demonstrated in \Cref{fig:steane-full}.

\subsubsection{Step 1.3: Spider Decompositions}
\label{subsec:spider-decompositions}

Once the bipartite ZX-diagram is free of idealized edges, the remaining graph consists of physical, but high-degree spiders, which cannot natively be realized using two-qubit gates.
In order to obtain a ZX-diagram from which a circuit can be extracted, each high-degree spider must be fault-equivalently decomposed into a network of degree-3 spiders.

To do this, we use two different families of rewrites: unfusion along a strictly safe hook error, and fully fault-equivalent unfusion using SpiderCat~\cite{spidercat2026}.
As a fully fault-equivalent unfusion is relatively expensive in terms of CNOT gates that will eventually be required, we first try to unfuse the spiders as much as possible by finding and using the presence of strictly safe hook errors.

Strictly safe hook errors are single internal faults that propagate to a multi-qubit Pauli fault on the boundary, but are equivalent up to a stabilizer to a physical fault of weight at most 1.
This equivalence arises because the logical state we are preparing is stabilized not just by the code's stabilizer group, but also by the stabilizers of the logical itself. When we have determined a hook error is strictly safe we can unfuse a spider along that direction:
\[
  \input{safe-hook-split.tikz}
\]
We present more details on the definition of safe hook errors and how we can efficiently determine all possible safe hook errors for splitting a spider in \Cref{apdx:safe-hook-errors}.

For the running example, in Step 1.3.1, we unfuse a weight-4 spider along the only strictly safe hook error admitted by the Steane code.
While a $Z$ fault on this new internal edge would propagate to the boundary as $Z_1 Z_5 (\sim Z_3 Z_7)$, multiplying this fault by the logical operator $Z_L = Z_1 Z_2 Z_5 (\sim Z_2 Z_3 Z_7)$ (which stabilizes $\ket{\overline 0}$) leaves the single-qubit fault $Z_2$.

Finally, as Step 1.3.2, we can apply the SpiderCat~\cite{spidercat2026} decompositions to the remaining high-weight spiders.
These decompositions come with provably optimal $Z$-spider counts, which translate to optimal CNOT counts in the extracted circuit.

\subsection{Step 2: Fault-Equivalent Circuit Extraction}\label{sec:circuit-extraction}

Once a completely unidealized ZX-diagram containing just arity-3 spiders is obtained, the only thing that remains is extracting a quantum circuit fault-equivalently.
However, extracting a circuit that uses the minimal resources to implement a general ZX-diagram is likely to be computationally hard, as the problem is closely related to finding minimal path-covers on the ZX-diagram with additional constraints.

A naive approach of doing circuit extraction is to directly use the \CAT states and implement the connections between them as Bell-state measurements:
\begin{equation}
  \scalebox{.75}{\input{figures/steane-zero-naive-se.tikz}}
\end{equation}
However, this approach requires a large number of ancilla qubits and operations, and it does not use the fact that the legs between a $Z$- and an $X$-spider can often be extracted directly as CNOT gates (see \Cref{fig:translation-guide}).
More precisely, in the approach above each internal edge in the bipartite graph is extracted as 3 CNOT gates with additional state preparations and measurements: one state preparation and CNOT gate per \CAT state output, with an additional CNOT and two measurements for the Bell-state measurement.

In contrast, if each edge between a $Z$- and an $X$-spider were extracted as a single CNOT, our circuit extraction procedure would be optimal for this type of graph:
\begin{proposition}
  \label{prop:min-cnot}
  Let $D$ be a ZX-diagram that $t$-fault-tolerantly implements a bipartite graph state $G$ with vertex and edge sets $V$ and $E$, and let $C(n, t)$ be the minimum number of CNOT gates for an isolated $t$-fault-tolerant \CAT state preparation circuit of size $n$.
  For a modular circuit extraction where each spider $v \in V$ is independently instantiated as a \CAT state and interconnected strictly along the bipartite edges $E$, the minimum number of CNOT gates required is:
  \[
    \sum_{v \in V} C(\deg(v), t) - |E|
  \]
\end{proposition}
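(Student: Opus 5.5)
The argument has two halves: a lower bound from the modular accounting, and a matching construction that merges CNOTs across each bipartite edge. Within the modular model, every spider $v \in V$ is realised by its own isolated $t$-fault-tolerant \CAT state on $\deg(v)$ legs. By definition of $C(n,t)$, each such component uses at least $C(\deg(v),t)$ CNOTs. Summing over vertices gives a baseline of $\sum_{v} C(\deg(v),t)$ CNOTs.

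The only way connections can reduce this count is if a CNOT of one component is shared with, or merged into, a CNOT of a neighbouring component. The count for an edge $e=(u,v)$ then has three pieces.
\begin{itemize}
  \item The \CAT circuit for $u$ spends (at least) one CNOT to deliver the output leg corresponding to $e$; in the extraction of \Cref{subsec:cat-preliminaries} this is the CNOT that starts the qubit line of that output.
  \item The \CAT circuit for $v$ likewise spends one CNOT on its own leg for $e$.
  \item The edge itself must be realised by at least one CNOT, since a $Z$--$X$ edge is exactly a CNOT (\Cref{fig:translation-guide}).
\end{itemize}
The key observation is that these can be fused. The output leg of the $Z$-\CAT (a degree-2 marked vertex, i.e.\ a $Z$-spider with an output wire) and the output leg of the $X$-\CAT meet along the edge $e$. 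Fusing the output CNOT of $u$, the edge, and the output CNOT of $v$ into a single $Z$--$X$ pair of arity-3 spiders leaves one CNOT where two leg-CNOTs were counted. This saves exactly one CNOT per edge relative to the sum of isolated costs, which gives the upper bound $\sum_v C(\deg(v),t) - |E|$.

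To make the construction concrete, I would use the extraction order from \Cref{subsec:cat-preliminaries}. The qubit line carrying the output of the $Z$-\CAT for $u$ is continued, rather than terminated, into the $X$-\CAT for $v$, where it plays the role of the target line. Two-counting the CNOT then gives the $-1$ per edge. Fault equivalence is preserved because the diagram is unchanged: we only reinterpret spider fusion at a boundary between a noisy leg and a noisy leg. By \Cref{prop:css-fe-compositionality} the composite stays $t$-fault-equivalent.

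For the lower bound, I need to show that no more than one CNOT per edge can be saved. The argument is that each edge $e$ corresponds to exactly one leg of $u$ and one leg of $v$. Any CNOT shared between the two components must act on the qubit line carrying that leg. Since the components are instantiated independently and interconnected only along $E$, the only overlap between the circuit of $u$ and that of $v$ is along $e$. The minimal circuit for each \CAT state has exactly one CNOT touching each output leg (SpiderCat minimality, with leaves of degree two in $\widetilde G$). So the overlap along $e$ is at most one CNOT, and inclusion–exclusion over the edge set gives the claimed count.

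The main obstacle is making precise what ``modular'' forbids. The lower-bound argument needs a clear statement that a CNOT can be counted in two components only if it implements the shared leg. I would formalise a modular extraction as a partition of the circuit's CNOTs into per-vertex sets whose only pairwise intersections correspond to edges. With that definition, the count reduces to counting these intersections.
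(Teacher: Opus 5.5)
Your top-level accounting matches the paper's. You start from $\sum_v C(\deg(v),t)$ and save one CNOT per bipartite edge, because the single $Z$--$X$ CNOT the edge needs replaces the CNOTs that created its two legs. The paper argues optimality through path-cover costs (an uncovered $Z$--$X$ edge costs one CNOT, versus two for a $Z$--$Z$ or $X$--$X$ parity check) and only claims the merge ``in the best case''.

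The gap is a structural claim that both of your halves rest on: that a minimal \CAT circuit has exactly one CNOT touching each output leg. This is false, and your own parenthetical shows why. SpiderCat makes every tree leaf a degree-two vertex of $\widetilde G$, so each leaf's second edge is a non-tree edge, i.e.\ a flag. The leaf's output line therefore carries a flag CNOT in addition to the CNOT that created it, and lines along which the tree continues carry more still. In \Cref{fig:cat-extraction-overview}(e), the lines of outputs $11$, $7$ and $8$ carry four to six CNOTs each, and the last CNOT on output $7$ is the \emph{second} CNOT on flag $f$.

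Consequently your construction does not go through as stated. Merging a leg like output $7$ into the neighbouring \CAT state creates an acausal edge that cannot be extracted as one CNOT (see the remark after \Cref{prop:cat-reduction-2}). Repairing it costs an extra ancilla and two CNOTs. Merges made independently per edge can also close causal cycles globally.

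What is missing is what the paper supplies after the proposition:
\begin{enumerate}
  \item Well-ordered \CAT states (\Cref{def:well-ordered-cat}), in which every non-primary leg's last CNOT is either its only CNOT (\Cref{prop:cat-reduction-1}) or the \emph{first} CNOT on a flag (\Cref{prop:cat-reduction-2}). In the flag case the gate saved is the flag's first CNOT and the leg qubit takes over the flag's role, so it is not a fan-out merge.
  \item The primary output sent to the boundary.
  \item An edge ordering that keeps the global dependency graph acyclic (\Cref{alg:dag-construction}, \Cref{lem:dag-preservation}).
\end{enumerate}

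Your lower bound, ``at most one CNOT of overlap per edge'', likewise no longer follows from your premise. You would need to show that at most one \CAT CNOT per leg can be absorbed into the edge's CNOT, or else fall back on the paper's path-cover accounting.
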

\begin{proof}
  Extracting a circuit from a ZX-diagram can be viewed as finding a path cover on the diagram.
  We can exactly quantify the cost of extraction from the cost of implementing uncovered edges: $Z-X$ edges cost 1 CNOT, while $Z-Z$ or $X-X$ edges require a parity measurement costing 2 CNOTs.
  Because $G$ is bipartite, extracting all $|E|$ inter-spider connections directly as 1-cost $Z-X$ CNOTs is the global optimum.

  For each spider $v$, an isolated $t$-fault-tolerant \CAT state requires $C(\deg(v), t)$ CNOTs.
  In the best case, for each edge we can combine the two fan-out CNOTs required to create the qubits on each side of the edge into a single CNOT.
  This results in a saving of $|E|$ CNOTs. This yields the desired bound.
\end{proof}

\begin{remark}
    The above proposition assumes that each spider is implemented purely $t$-fault-tolerantly.
    If we also take unfusions along safe hook errors into account, this proposition does not directly apply as the unfusion is not a rewrite that is $t$-fault-equivalent in an arbitrary context.
    However, by only unfusing along hook errors when it decreases the total cost, the above number provides an upper bound on the required resources.
\end{remark}

\subsubsection{Efficient Circuit Extraction Using Well-Ordered \CAT States}

To optimize the resource overhead of circuit extraction, we want to achieve the CNOT cost of \Cref{prop:min-cnot} instead of the naive cost of extracting all connections between $Z$- and $X$-spiders as Bell measurements.
To achieve this, we rely on a special type of fault-tolerant \CAT states that have additional circuit structure: 
\begin{definition}[Well-Ordered \CAT State]
  \label{def:well-ordered-cat}
  We say a circuit implementing an $n$-qubit \CAT state is \emph{well-ordered} if for $n-1$ of its output qubits, the last operation on a given qubit $q$ is a CNOT gate and this CNOT gate is either:
  \begin{enumerate}
    \item the only CNOT gate applied to $q$, or,
    \item a CNOT between a flag qubit and $q$, provided it is the first CNOT applied to that flag.
  \end{enumerate}
  The single remaining output qubit, which is exempt from these conditions, is referred to as the \emph{primary output qubit}.
\end{definition}


By restricting our compilation pipeline to use well-ordered \CAT states, we can now reduce the CNOT count (we will see in the next subsection how we can construct well-ordered \CAT states).
Thus, in the decomposition of the high-weight spiders in Step 1.3, we use well-ordered \CAT states and ensure that the one edge of each \CAT state that is a boundary leg is the primary output qubit. Then all the other legs of the spiders in the well-ordered \CAT state are connected to other \CAT states.

By definition, the last operation on each of these other legs of the \CAT states is either the only CNOT gate applied to it or is part of a flag. It turns out that in both cases we can efficiently implement a connection to another \CAT state.
First, if the qubit only has a single CNOT on it:
\begin{lemma}[Direct CNOT Condition]
  Suppose we have a \CAT state with an output qubit $q$ which has only one CNOT gate on that wire and that this qubit is part of a Bell measurement. 
  Then we can implement this Bell measurement at no additional cost of entangling gates.
  \label{prop:cat-reduction-1}
\end{lemma}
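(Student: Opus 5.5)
We prove the lemma by a local circuit identity, with the ZX-calculus only as bookkeeping. First, we fix the situation. By the extraction procedure of \Cref{subsec:cat-preliminaries}, an output qubit $q$ of a $Z$-type \CAT state that carries exactly one CNOT is created by a $\ket{0}$ preparation followed by a single CNOT targeting $q$. Its control is some qubit $c$ of the \CAT state. Hence $q$ is just a fan-out copy of $c$.

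The Bell measurement connecting $q$ to a leg $r$ of the neighbouring (say $X$-type) \CAT state consists of three parts: a CNOT between $q$ and $r$, followed by the two single-qubit measurements. This is the naive extraction described before \Cref{prop:min-cnot}. So the subcircuit to simplify is: prepare $\ket{0}$ on $q$, apply $\mathrm{CNOT}_{c\to q}$, then perform the Bell-measurement CNOT on $q$ and $r$ and measure both.

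The plan is to rewrite this subcircuit in the ZX-calculus. The $\ket{0}$ preparation together with the fan-out CNOT gives a $Z$-spider on $c$'s wire with one extra leg; that leg is $q$. The Bell measurement is a cap, i.e.\ an identity wire, joining $q$ to $r$. Concretely:
\begin{itemize}
  \item Spider fusion and the identity rule collapse the $q$ wire. The result is a single edge from the $Z$-spider on $c$ to the $X$-spider on $r$'s line.
  \item By \Cref{fig:translation-guide}, a $Z$-spider and an $X$-spider joined by one edge, each on its own qubit line, is exactly one CNOT from $c$ to $r$.
  \item The resulting circuit therefore has a single $\mathrm{CNOT}_{c\to r}$ in place of the fan-out CNOT, the Bell-measurement CNOT and the extra ancilla $q$.
\end{itemize}
This saves the ancilla, its two measurements and one CNOT. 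Counting the other side's fan-out as well, this gives the saving of one CNOT per edge used in \Cref{prop:min-cnot}. In particular, no entangling gate beyond those already in the \CAT state is needed, which is the claim.

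It remains to check that the rewrite is fault-equivalent, so that fault tolerance is preserved. The removed wire $q$ and the new edge are both plain unidealized edges between a $Z$- and an $X$-spider. I would appeal to the elementary fault-equivalent rewrites of \Cref{fig:elementary-fe-rewrites}, noting that eliminating a phase-free degree-2 identity spider with noisy legs is among them. I would then use compositionality (\Cref{prop:css-fe-compositionality}) to apply the rewrite inside the full diagram.

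The main obstacle is timing. We must be sure that $c$ carries no operation between the fan-out CNOT and the point where the new CNOT to $r$ is placed. Otherwise the merged CNOT might not be schedulable, or a hook fault on $c$ might propagate differently. I would handle this using the hypothesis that the CNOT is the \emph{only} one on $q$. This means the edge $c$--$q$ in the $Z$-graph is a single leg with nothing else on $q$'s line. So in the layout we may place the fan-out CNOT at any time, and the Bell-measurement partner $r$ can simply be moved there. Faults on the merged edge then correspond one-to-one to faults on the former edge of $q$.
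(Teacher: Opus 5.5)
Your rewrite is the paper's proof. $q$ is a $\ket{0}$ preparation plus a single fan-out CNOT from a \CAT qubit $c$, and the Bell-measurement CNOT has its control on the $Z$ side. Fusing the one-legged preparation and measurement spiders collapses $q$'s wire into one $Z$--$X$ edge from $c$'s spider to the target spider on $r$'s line, i.e.\ one CNOT $c\to r$. These fusions are instances of $\textsc{Fuse-1}_{\text{fe}}$, and general fusion is not fault-equivalent; the collapse then finishes with $\textsc{Elim}_{\text{fe}}$.

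Three small corrections to that part:
\begin{itemize}
\item The paper states the orientation of the Bell CNOT as an explicit without-loss-of-generality assumption, whereas you use it silently. It matters: with $q$ as target, the one-sided collapse leaves a $Z$--$Z$ edge instead of a CNOT.
\item Only $q$'s preparation and measurement disappear; $r$ is still measured.
\item The saving of one CNOT per edge in \Cref{prop:min-cnot} needs the colour-dual rewrite (or flag delegation) on the $X$ side as well. This lemma alone only gives ``no additional cost''.
\end{itemize}

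What does not hold is your final paragraph on timing. The hypothesis that $q$ carries one CNOT only frees $q$'s own wire, which the rewrite deletes. It does not let you place the fan-out ``at any time'': its position on $c$'s line is fixed by the \CAT's spanning tree and DAG. Nor does it guarantee that $r$'s spider can be moved to meet it.

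Once $q$ is gone, nothing holds a copy of $c$ over time. So if a later operation on $c$ must precede $r$'s fan-out, through the partner \CAT's ordering or other connections, the merged edge is acausal and is not a single CNOT. This is exactly the cycle problem of \Cref{subsec:ordering-external-edges}.

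The paper does not settle this inside the lemma. Its proof is the local rewrite with the partner line locally free. Schedulability is guaranteed globally by how legs are paired in \Cref{alg:dag-construction} and by \Cref{lem:dag-preservation}.

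Your concern about hook faults is also unnecessary. Fault equivalence is proved for the ZX rewrite under edge-flip noise, independently of any schedule, and it transfers to any valid extraction through the CSS circuit--ZX correspondence. Replace that paragraph with a pointer to the global DAG construction and your argument coincides with the paper's.
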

\begin{proof}
  As the last CNOT on the qubit $q$ is also the only entangling gate, up to single-qubit unitaries, qubit $q$ consists of a state preparation and the CNOT that connects it to some other qubit inside the \CAT state. We may furthermore assume that if we are connecting a $Z$-type \CAT with an $X$-type \CAT, that the CNOT in the Bell measurement has its control on the $Z$-side and the target on the $X$-side. We can then simplify as follows: 
  \begin{equation*}
    \label{eq:cat-part-1}
    \scalebox{.85}{\input{figures/cat-part-1.tikz}}
  \end{equation*}
  \\[-20pt]\qedhere
\end{proof}

Second, if the last CNOT is the first CNOT on a flag qubit (a qubit which is prepared, has two CNOTs applied to it, and is then measured):
\begin{lemma}[Flag-Delegation Condition]
  \label{prop:cat-reduction-2}
  Suppose we have a \CAT state with an output qubit $q$ on which the last CNOT gate is the first CNOT on some flag qubit $f$, and that qubit $q$ is part of a Bell measurement.
  Then we can implement this Bell measurement at no additional cost of entangling gates. 
\end{lemma}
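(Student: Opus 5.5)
We follow the proof of \Cref{prop:cat-reduction-1} and reduce the flag case to a local circuit identity. Without loss of generality, take a $Z$-type \CAT state, so $q$ is joined to an $X$-type \CAT state $q'$. As in the direct case, the Bell measurement is then implemented by a CNOT with control on the $Z$-side $q$ and target on the $X$-side $q'$, followed by measurements. By \Cref{def:well-ordered-cat}, the last gate on $q$ is $\mathrm{CNOT}_{q\to f}$, with control on $q$ and target on the flag $f$. This CNOT is the first gate on $f$ after its $\ket{0}$ preparation. Later, $f$ receives one more CNOT from some internal qubit $r$ of the \CAT state and is measured in the $Z$-basis and post-selected.

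The key steps, in order, are the following.

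First, translate this local fragment into a ZX-diagram using \Cref{fig:translation-guide}. The relevant pieces are the $Z$-spider on $q$ from the last CNOT, the $X$-spider on $f$ that merges the $\ket{0}$ preparation with the first CNOT target, the second CNOT on $f$, the $Z$-measurement of $f$, and the Bell-measurement CNOT plus measurements between $q$ and $q'$.

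Second, observe what the flag's $\ket{0}$ preparation fused with $\mathrm{CNOT}_{q\to f}$ really is. It is a fan-out that copies the $Z$-value of $q$ onto $f$. After this gate, $q$ and $f$ carry the same $Z$-parity information, and $q$ is then only consumed by the Bell measurement. We therefore delegate the role of $q$ to the flag wire: the $q$-$q'$ connection is rerouted so that the fan-out edge goes directly into the $X$-spider of $q'$, i.e.\ the target of $\mathrm{CNOT}_{q\to f}$ becomes the qubit line of $q'$ rather than a fresh flag.

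Third, show the rerouting is an identity using spider fusion and the bialgebra/copy rules of \Cref{fig:zx-axioms}. The $q$-line ends in a $Z$-spider whose only remaining neighbours are the flag edge and the Bell edge. Removing the redundant preparation/measurement pair removes one CNOT. Merging the flag's $X$-spider with the $X$-spider of $q'$ by fusion leaves a diagram in which the second flag CNOT from $r$ now acts on the line of $q'$ before its own \CAT structure. This gives the same entangling-gate count as the original well-ordered \CAT state, i.e.\ zero extra CNOTs.

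Fourth, verify that we are still extracting the same ZX-diagram up to layout, in the sense of \Cref{prop:css-fe-compositionality}. Only the spider-to-qubit/time assignment changes, so the fault equivalence of the construction is untouched.

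The main obstacle is the third step. The original fragment contains two ZX-edges: the flag edge from $q$ and the Bell edge from $q$ to $q'$. The rewritten circuit must use one CNOT for the flag's first interaction and still realise both. We also need a consistent time order in which the second flag CNOT from $r$ can act on the $q'$ line without creating a cycle in the causal order. We would try first to argue that, since the flag CNOT is the \emph{first} on $f$, the flag's $X$-spider can absorb the preparation and become the $X$-spider of $q'$, with the flag measurement postponed to the end of $q'$'s line. If time ordering fails, we would instead invoke the freedom in the choice of spanning forest to place $q'$ after $r$'s interaction.
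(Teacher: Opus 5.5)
\paragraph{Gap in the third step.} Your third step is not a valid rewrite, and the problem is semantic, not only one of time ordering. The flag's $X$-spider and the $X$-spider on $q'$ are not adjacent, since both are neighbours of the $Z$-spider on $q$. They therefore cannot be fused. Retargeting $\mathrm{CNOT}_{q\to f}$ onto $q'$ and letting the second flag CNOT from $r$ act on $q'$ produces a different diagram.

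To see this, fuse all $Z$-spiders of the $Z$-type \CAT (including those on $q$ and $r$), and all $X$-spiders of the $X$-type \CAT. In the original, the flag edge $q$--$r$ becomes a removable self-loop, and exactly one $Z$--$X$ edge survives. In your version there are two parallel edges between these spiders, one through $q$ and one through $r$. Two parallel edges between a $Z$- and an $X$-spider cancel (the Hopf rule), so you disconnect the two \CAT states instead of joining them. Consequently your fourth step, ``only the layout changes'', is false for your construction. Re-choosing the spanning forest cannot repair it.

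\paragraph{The missing idea.} The flag's role should be delegated to $q$, not the other way round. Keep the Bell CNOT with control on $q$ and target on the $X$-side. Then let $q$'s own wire continue as the flag: apply the flag's second CNOT as $\mathrm{CNOT}_{r\to q}$, and measure $q$ in the $Z$-basis with post-selection instead of in the $X$-basis. This removes the flag qubit $f$, its $\ket{0}$ preparation, and $\mathrm{CNOT}_{q\to f}$. The removed CNOT pays exactly for the Bell CNOT.

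Diagrammatically, the two sides differ only by two fusions and a clean-up. You fuse $f$'s $\ket{0}$ preparation into its first target spider. You fuse $q$'s $X$-measurement into the Bell control spider. You then remove the resulting arity-2 spiders. Afterwards the $Z$-spider on $q$ is adjacent to the flag's $X$-spider, and extraction just covers that edge with $q$'s wire. For this rewrite your fourth step is exactly right: only the path cover changes.

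Finally, the hypothesis that $\mathrm{CNOT}_{q\to f}$ is the \emph{first} gate on $f$ is precisely what makes it causally consistent to place $\mathrm{CNOT}_{r\to q}$ after the Bell CNOT on $q$'s line. This is the point of the remark following the lemma. No further appeal to the choice of spanning forest is needed.
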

\begin{proof}
    As the last CNOT on qubit $q$ prior to the Bell measurement connects it to a flag qubit, we can simplify as follows:
  \begin{equation*}
    \label{eq:cat-part-2}
    \scalebox{.85}{\input{figures/cat-part-2-fe.tikz}}
  \end{equation*}
  \\[-20pt]\qedhere
\end{proof}

The results for the $X$-type \CAT states are the same but with the colours interchanged.

\begin{remark}
    In the second condition, we require from the well-ordered \CAT state that the last CNOT from $q$ is the first CNOT on the flag. 
    This is in order to preserve temporal ordering. 
    If the rewrite is performed with the reversed order (so that the last CNOT is also the last CNOT on the flag), the resulting ZX-diagram (shown below) may produce an acausal internal edge connecting spiders that cannot be placed in the same time step.
    Such an edge cannot be extracted as a single physical CNOT gate.
    While it could be implemented using an additional ancilla and two CNOT gates, doing so introduces unnecessary resource overhead and defeats the purpose of the optimization.
  \begin{equation*}
    \scalebox{.85}{\input{figures/cat-part-counter-example-2.tikz}}
  \end{equation*}
\end{remark}

\subsubsection{Constructing Well-Ordered \CAT States}\label{subsec:well-ordered-cat-states}

As the two results above show, by using well-ordered \CAT states we can connect together $Z$-type \CAT states and $X$-type \CAT states with Bell measurements at no additional cost in entangling gates. It remains to show how we can actually obtain such well-ordered \CAT states.

While the methods described in SpiderCat~\cite{spidercat2026} result in fault-tolerant \CAT states that require a provably minimum number of CNOT gates, the methods have no restrictions on the ordering of the CNOTs applied to the flag qubits, and thus can produce \CAT circuits that are not well-ordered.
Here, we show how to use a modified version of SpiderCat to produce CNOT-minimal fault-tolerant \CAT states that are also well-ordered.
In particular, we propose a modification to extraction that enforces the applicability of the flag-delegation condition (i.e.~that the last CNOT on an output qubit corresponds to the \emph{first} CNOT on a flag) whenever the direct CNOT condition does not apply.

As explained in \Cref{subsec:cat-preliminaries}, SpiderCat operates on 3-regular graphs and spanning trees of these graphs; see for example \Cref{fig:cat-extraction-overview}.
We extend this pipeline with a Directed Acyclic Graph (DAG) whose topological generations dictate the order in which nodes are extracted; see \Cref{fig:dag-cat} for the full procedure. 

\begin{figure}[hbt]
\scalebox{0.58}{\input{figures/well_ordered_extraction.tikz}}
  \caption{
    Modified \CAT state extraction using a DAG:
    (1) take the original graph $G$ and spanning tree $F$ specifying the fault-tolerant \CAT state,
    (2) construct the directed graph $D$ corresponding to the \CAT state augmented with artificial CNOT ordering conditions which are marked with dashed red edges,
    (3) resolve a DAG $D'$ by removing a single artificial CNOT ordering edge from $D$,
    (4-5) rearrange the initial diagram $G$ according to the topological generations of $D'$,
    (6) introduce identity spiders and unfuse state preparation and measurements so that the diagram is circuit-like,
    (7) directly extract a well-ordered \CAT state from $G$.
  }
  \label{fig:dag-cat}
\end{figure}

We start with the original graph and spanning tree from SpiderCat. The edges not covered by the tree will correspond to flag qubits in the extracted circuit. We need to decide an order on how these flags are resolved, which is arbitrary in the original SpiderCat procedure. 

We turn the spanning tree into a DAG by starting at the root and directing the edges towards the children. Then if the tree ends in a degree 2 vertex (a boundary spider), it will have an edge not part of the original tree (a flag). We consider a directed edge from this boundary spider to its neighbour and add it to the DAG, which ensures that in the extracted circuit, this boundary spider applies its CNOT to the flag first. This works as long as its neighbour is not also a degree 2 vertex, so in the generation of the spanning tree we make sure this is the case. This then results in the DAG shown in the second diagram in \Cref{fig:dag-cat}, with the newly added directed edges shown in red. Note that if a degree 2 vertex is not a leaf, then its boundary wire has only a single CNOT on it, so that the well-ordered condition is automatically satisfied.

This graph with the artificial directed edges will generally not be a DAG, because the newly inserted edges can form a cycle.
However, most of the time this can be resolved by the removal of exactly one artificial directed edge, which still results in a well-ordered \CAT state since the primary output qubit is exempt from these conditions; see the third diagram of \Cref{fig:dag-cat}. In case there are multiple independent cycles, we generate a different spanning tree candidate and try again. If we still do not succeed after a number of tries (which is very rare), we generate a new candidate marked graph and try again. In practice we can always find a suitable DAG for any marked graph supplied from SpiderCat.

Now that we have a suitable DAG, we can arrange the spiders according to the topological generations of the DAG and turn the diagram into a circuit, as is described in the remaining diagrams of \Cref{fig:dag-cat}. Note that in this example, the second qubit output, corresponding to spider 7 is the primary output qubit, as it ends up connected to the final CNOT of a flag.

In the case where we first split a high-degree spider into multiple parts using a strictly safe hook error, we get two or more spiders that each will independently be decomposed using SpiderCat and which individually can be made well-ordered. However, since only one edge of this composite \CAT state is a boundary of the diagram, all the other primary output qubits must not connect to a boundary so that combined \CAT is still well-ordered.  We accomplish this by using the left-over primary output qubits to cover the internal edges between the decomposed spider components: suppose we have a spider with $n$ legs, which gets decomposed into three parts containing $a+1$, $b+2$ and $c+1$ legs, where $a+b+c=n$. Then we have one leg from $a$ going to $b$, and another leg of $b$ going to $c$. If we suppose the $c$ part is directly connected to an output, then its primary output qubit can go there. We can use $a$'s primary output leg to cover the connection to $b$, and $b$'s primary output leg to cover the connection to $c$. The full $a+b+c$ legged \CAT state then still has one overall primary output qubit, meaning it is well-ordered.

Using this construction for producing well-ordered \CAT states, we are now guaranteed that our circuit extraction achieves the optimal gate count suggested by \Cref{prop:min-cnot}.

\section{CSS State Synthesis: The Implementation}
\label{sec:the-pipeline}

To realize the theoretical procedure outlined in \Cref{sec:the-theory}, we introduce a compilation pipeline we call SpiderCSS\@.
The pipeline is divided into two phases: the offline preparation of well-ordered \CAT states, and an active compiler that synthesizes the final state.
Given a CSS state, the active compiler finds a representation of an $X$-RREF state with optimized resource requirements, queries data structures for the appropriately sized well-ordered \CAT states, and finally orders the spiders for the extraction of a fault-tolerant circuit.
Since the extraction is ordered by the DAGs of the well-ordered \CAT states, no acausal loops are formed between spiders and the final circuit meets the minimum CNOT bound described in \Cref{prop:min-cnot}.
The rest of this section details the exact methods used in our implementation of the pipeline, which has been made available at~\cite{spiderstateGitHub}.

\subsection{Offline Generation of Well-Ordered \CAT States}

To generate a library of well-ordered \CAT states, we start with the marked graphs identified by SpiderCat.
For parameter regimes where SpiderCat provides provably optimal graphs, we use those to guarantee minimum CNOT counts.
In case SpiderCat is unable to provide an optimal \CAT state, we fall back to one that is optimal for a larger size, and remove the appropriate number of outputs.
While this means that for some sizes of $t$-fault-tolerant \CAT states it is impossible to obtain an optimal \CAT state, one can be obtained with a small overhead and the fault tolerance of the method is still guaranteed.

For each \CAT state we then identify a suitable spanning tree.
This needs to satisfy the requirements outlined in \Cref{subsec:cat-preliminaries}:
the root is an internal degree-3 vertex, the leaves are degree 2 in $\widetilde G$.
To construct spiders with certain desirable properties, we require the tree to satisfy some additional properties.

\begin{definition}[Minimum Diameter Spanning Forest]
  \label{def:mdsf}
  The \emph{diameter} of a tree is the maximum number of edges on a path between two vertices, with diameter zero for a single vertex.
  For a fixed number of connected components $k$, a \emph{minimum diameter spanning forest} (MDSF) of $G$ minimizes the maximum component diameter among spanning forests of $G$ with exactly $k$ components. For $k=1$, this gives a \emph{minimum diameter spanning tree} (MDST).
\end{definition}

The diameter of the spanning tree corresponds roughly to how deep the resulting extracted circuit is.
We are of course interested in minimizing depth to reduce environmental noise, but this will also help later in finding a suitable DAG for ensuring the well-ordered condition.
In particular, we would like flag qubits to exist for the shortest amount of time, which means we would like edges not selected for the tree (which correspond to flag qubit lines) to correspond to vertices placed at roughly the same distance from the root.
Having a lower diameter generally makes this happen more frequently, giving us more flexibility to arrange the final diagram into a well-ordered circuit.

In addition to SpiderCat's condition that leaves of the spanning tree are vertices of degree 2, so that each leaf carries a \CAT-state output~\cite[Definition~4.2]{spidercat2026}, we now additionally require tree leaves to be pairwise non-adjacent in $G$ to help with the ordering needed for flag delegation.

While there are polynomial-time algorithms for finding a minimum diameter spanning tree, our additional constraints make the problem challenging to solve algorithmically, and therefore, we use simulated annealing.
The optimization begins with a valid initial spanning forest generated via a greedy heuristic.
The objective function we choose to minimize using simulated annealing is the maximum diameter among all trees in the forest, supplemented by a large numerical penalty if any two leaves within the same tree share an edge in the original graph (enforcing the leaf adjacency constraint).
At each mutation step, we ``swap'' edges: re-inserting a previously removed edge from the forest and taking away another.
We strictly maintain the required number of components and the conditions on the root and leaves.
At each step we probabilistically accept worse states using the standard exponential distribution to escape local minima as the temperature cools.

Having selected a suitable spanning tree, we construct the directed graph as described in \Cref{subsec:well-ordered-cat-states}.
To resolve cycles in the directed graph, we iterate over its edges encoding the CNOT constraint, and check if removing it results in a DAG\@.
In the unlikely case we do not succeed, we rerun the simulated annealing process and try again with a different spanning tree.

\subsection{Optimization of the Generators of the CSS State}

As an input to our CSS state preparation, we have to give a parity matrix describing the stabilizer generators in RREF.
However, picking different pivot columns while doing Gaussian elimination can have an impact on the final CNOT count of the circuit.
For this reason we construct a cost function that estimates the total CNOT cost of the final circuit given a particular RREF matrix, and then we iterate over different possibilities of the RREF matrix to minimize the cost.

To calculate the cost of a particular circuit implementation, we count the number of flag qubits required to $t$-fault-tolerantly prepare a \CAT state of degree $n$ using SpiderCat, which we write as $f(n, t)$ (note the convention in~\Cref{rem:spidercat-convention}).
The CNOT cost to implement a single isolated spider of degree $n$ is then bounded by $C_{\text{spider}}(n, t) = n - 1 + 2f(n, t)$, as each flag requires 2 CNOTs and each of $n-1$ qubits needs a fan-out CNOT.

As explained in \Cref{sec:ZX-transformations}, a set of CSS generators in RREF corresponds to a bipartite phase-free ZX-diagram that we can then turn into a fault-tolerant circuit.
Given a parity matrix $M$ in RREF, we can estimate the cost of implementing this circuit as follows.
Let $c$ and $r$ denote the number of columns and rows of $M$.
The total CNOT cost of the resulting circuit is then:
\begin{equation}
  C_{\text{total}}(M, t) = \sum_{\substack{i=1 \\ w_i > 1}}^c 2f(w_i + 1, t) + \sum_{j=1}^r \left[ w_j - 1 + 2f(w_j, t) \right],
\end{equation}
where $w_i$ and $w_j$ are the column and row Hamming weights.
In the first sum we only count those columns with weight at least 2.
The CNOTs connecting the $Z$- and $X$-spiders of our ZX-diagram only need to be counted as fan-outs by either the $Z$- or $X$-spiders, which is why only one sum includes the fan-out contribution $w-1$.

Since finding the globally optimal basis over $\mathbb{F}_2$ can be computationally hard and to naturally handle rank-deficient parity matrices containing redundant stabilizers, we use a Monte Carlo optimization heuristic based on Gaussian elimination.
In each iteration, we randomly permute the columns of $M$ and compute its RREF over $\mathbb{F}_2$.
This procedure results in a maximal set of linearly independent rows while embedding an identity matrix into the pivot columns, guaranteeing the resulting matrix $M_{\text{new}}$ is of the required standard form.
The columns are then un-permuted to restore their original physical qubit mappings.
We evaluate $C_{\text{total}}(M_{\text{new}}, t)$ for each valid basis generated this way and retain the matrix that minimizes the total logical CNOT cost.

\subsection{Generation of the DAG}\label{subsec:ordering-external-edges}

Once a good parity matrix $M_{\text{best}}$ is selected, the compiler queries the offline library of well-ordered \CAT states to instantiate the necessary $Z$- and $X$-spiders.
Each of these spiders comes with a DAG denoting time-ordering of the spiders.

We could view the procedure now as described in \Cref{sec:the-theory}, where we have now successfully decomposed the ZX-diagram into 3-ary spiders and all that remains is to turn it into a circuit.
However, it is useful to instead view the procedure more dynamically, where we construct the desired diagram step by step by connecting together the initially disjoint \CAT states with the connectivity described by $M_{\text{best}}$.
Because each $Z$-type and $X$-type \CAT state is fully symmetric in its qubits, we have the freedom to choose which of the wires we connect together.

Our initial state is described by the following two data structures:
\begin{enumerate}
  \item The \textbf{Global Graph ($G_{\text{global}}$)}, representing the ZX-diagram we have constructed so far, initialized to be the disjoint union of all individual decomposed \CAT states of appropriate weight and type retrieved from the offline library.
  \item The \textbf{Global Dependency Graph ($D_{\text{global}}$)}, a DAG initially containing only the disjoint internal causal dependencies of each decomposed \CAT state.
\end{enumerate}
What remains then is to add edges to $G_{\text{global}}$ connecting the individual \CAT states according to the bipartite connectivity described in $M_{\text{best}}$ while updating the global dependency graph $D_{\text{global}}$ appropriately.

Since each individual \CAT state is already expanded into a specific internal ordering according to its DAG, arbitrarily connecting edges between \CAT states can introduce acausal cycles into the global dependency graph.
While digraphs with cycles can still be used for extraction, it requires additional ancillas and CNOTs to resolve these cycles, so we only allow DAG-based time orderings to match the lower bound.
To prevent the formation of acausal cycles and ensure a valid final DAG, we define \Cref{alg:dag-construction} based on selecting a strict total ordering of bipartite connections (which we refer to as external edges as they are external to individual spiders).
For the running example, \Cref{fig:building-G-and-D} shows steps of the construction of these data structures.

\begin{algorithm}[H]
Given an arbitrary strict total ordering of the external connections specified in $M_{\text{best}}$, proceed as follows:
\begin{enumerate}
    \item Iterate through each connection $(a, b)$ according to the total ordering, where $a$ and $b$ denote particular \CAT states.
    \item Identify the earliest (according to the DAG) unassigned free legs $u(a)$ and $u(b)$ on the \CAT states $a$ and $b$.
    \item In $G_{\mathrm{global}}$, insert an undirected edge connecting $u(a)$ and $u(b)$.
    \item For a directed graph $D$, let $P_D(v) = \{w \in V(D) \mid (w \to v) \in E(D)\}$ denote the immediate causal predecessors (the parents) of a vertex $v$.
    Then in the global dependency graph $D_{\mathrm{global}}$, insert directed edges from all $w \in P_{D_{\mathrm{global}}}(u(a))$ to $u(b)$, and conversely from all $x \in P_{D_{\mathrm{global}}}(u(b))$ to $u(a)$.
    These connections guarantee that all the predecessors of $u(a)$ and $u(b)$ are placed before the CNOT between $u(a)$ and $u(b)$ can be placed.
    \item Mark the legs $u(a)$ and $u(b)$ as assigned and proceed to the next connection.
\end{enumerate}
\caption{Finalizing $G_{\text{global}}$ and $D_{\text{global}}$}
\label{alg:dag-construction}
\end{algorithm}

\begin{lemma}
    \label{lem:dag-preservation}
    \Cref{alg:dag-construction} guarantees that $D_{\mathrm{global}}$ remains a DAG at every step.
\end{lemma}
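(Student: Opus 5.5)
We argue by induction on the number of processed external connections. The base case is the initial $D_{\mathrm{global}}$, the disjoint union of the individual \CAT-state DAGs, each acyclic by construction in \Cref{subsec:well-ordered-cat-states}; a disjoint union of DAGs is a DAG. For the inductive step we assume $D_{\mathrm{global}}$ is acyclic before processing a connection $(a,b)$ and show that adding the edges $w\to u(b)$ for $w\in P(u(a))$ and $x\to u(a)$ for $x\in P(u(b))$ cannot create a directed cycle.

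Suppose, for contradiction, that a cycle appears. Then it must use at least one new edge, say $w\to u(b)$ with $w\in P(u(a))$. Removing that edge leaves a directed path from $u(b)$ back to $w$. This path lies in the old graph, or also uses the other family of new edges. We treat these two situations as separate cases. In both, the key claim is an invariant we maintain alongside acyclicity.

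\emph{Invariant:} every unassigned free leg $u$ of a \CAT state is a sink of $D_{\mathrm{global}}$, and its predecessors all lie in its own \CAT state or among already-assigned legs. Moreover, no unassigned leg has a directed path to any vertex other than itself. This holds initially, because each free boundary leg is a degree-2 boundary spider. In the well-ordered DAG such a spider is a leaf, or has only the artificial edge pointing to its flag neighbour. Here is the delicate point: the artificial red edges point \emph{out} of boundary spiders. So we refine the invariant so that free legs have no outgoing edges into vertices that can reach another free leg of a different \CAT state. Step 2 of \Cref{alg:dag-construction} picks the earliest such leg, which is what makes this plausible.

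With the invariant, the cases close quickly. If $u(b)$ has no path to anything outside its local region, then there is no path $u(b)\leadsto w$, since $w$ precedes $u(a)$ in a different \CAT state. The same holds for the case that uses both new edge families: a path $u(b)\to\cdots\to x\to u(a)\to\cdots$ would require $u(b)$ to reach $x$. After the step, $u(a)$ and $u(b)$ become assigned and gain only incoming edges. We then verify that the invariant persists for the remaining unassigned legs: the only new edges point into $u(a)$ and $u(b)$, so reachability from any other unassigned leg is unchanged.

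The main obstacle is stating the invariant precisely enough to hold given the artificial flag-ordering edges inside each well-ordered \CAT state. It must also survive the fact that earlier steps added edges into previously assigned legs, which may have outgoing edges to later spiders. I would first attempt the invariant ``free legs are sinks of $D_{\mathrm{global}}$''. If that fails because of the red edges, I would fall back to a topological-order potential argument: assign each vertex its generation index, and show that new edges only go from parents of $u(a)$, $u(b)$, which have strictly smaller potential, to $u(b)$, $u(a)$, after possibly shifting potentials of sinks upward. Such a potential strictly increasing along all edges certifies acyclicity.
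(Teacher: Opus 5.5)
Your plan (induction, with an invariant saying that unassigned legs cannot see outside their own \CAT state) is the right one, and it is the idea behind the paper's short proof. As written, however, there is a genuine gap exactly where the earliest-leg rule of Step~2 has to be used.

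First, the invariant you state is false. A free leg is a degree-2 boundary spider, and in the well-ordered DAG it generally has an outgoing edge: a tree child if it is not a leaf of the spanning tree, or the red flag-ordering edge if it is. So it is neither a sink nor path-free, and your ``refined'' version is never made precise. The invariant that works is: \emph{every unassigned leg $v$ of a \CAT state $c$ reaches, in $D_{\mathrm{global}}$, only vertices of $c$}. It holds trivially at the start, since there are no edges between different \CAT states. You should also record that new edges only ever point into legs at the moment they are assigned. Hence $P_{D_{\mathrm{global}}}(u(a))$ consists of local parents of $u(a)$ and lies in \CAT state $a$.

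Second, your preservation step (``the only new edges point into $u(a)$ and $u(b)$, so reachability from any other unassigned leg is unchanged'') is a non sequitur. An unassigned leg $v$ gains new reachability precisely when it can reach a tail $w\in P(u(a))$ or $x\in P(u(b))$, after which it reaches a vertex of another \CAT state. For $v$ in a third \CAT state the invariant excludes this. For $v$ in $a$ (or $b$), only Step~2 excludes it: if $v$ reached $w$, it would reach $u(a)$, so $u(a)$ would not have been the earliest unassigned leg of $a$. Without this argument the choice of $u(a),u(b)$ never enters your proof, and the lemma genuinely depends on it.

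With the invariant in hand, the cycle argument still needs one case you glossed over. After a new edge into $u(b)$, the cycle runs along old edges to the tail of the next new edge. That tail is either some $w'\in P(u(a))$, which lies in \CAT state $a\neq b$ and is excluded by the invariant, or some $x\in P(u(b))$. The latter lies in $b$ itself, so the invariant does not exclude it. Instead, $x\to u(b)$ is an old edge, so you obtain a cycle in the old acyclic graph.

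Your fallback potential argument does not help as stated. The legs $u(a)$ and $u(b)$ are generally not sinks, so raising their potentials forces raising their descendants, and checking consistency with earlier cross edges is the same reachability question again.

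If you want a different route, show that along any putative cycle the assignment times of the legs entered via cross edges strictly increase. Suppose the cycle enters a \CAT state at a leg $\ell'$ and leaves it through a parent of a leg $m$. Then $\ell'$ locally reaches $m$, and $\ell'\neq m$ by local acyclicity. By Step~2, $\ell'$ was assigned strictly before $m$. The next leg entered is $m$'s partner, which was assigned at the same time as $m$, so the times cannot return to their starting value around a cycle.
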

\begin{proof}
    Suppose that prior to assigning a connection $(a,b)$, $D_{\mathrm{global}}$ is acyclic.
    Then because $u(a)$ and $u(b)$ are strictly the earliest available outputs in their respective local \CAT expansions, any connections going outside of the \CAT state must be from prior outputs that were already placed, which must have occurred before $u(a)$ and $u(b)$, preventing a closed loop.
\end{proof}

Notably, \Cref{lem:dag-preservation} has no restrictions on the ordering of external connections and still produces a valid DAG.
As the order in which external edges are processed changes the particular nodes $u(a)$ and $u(b)$ that are being connected, the structure of $D_{\mathrm{global}}$ also varies with different permutations of edges.
This provides us with a degree of freedom which can result in digraphs that translate to better depth or circuit width.
We exploit this freedom by developing different heuristics for the orderings, aiming to optimize for different properties of the resulting circuits.

We implemented a number of dynamic heuristics that pick an edge based on the state of $D_{\mathrm{global}}$ during the execution of \Cref{alg:dag-construction}:
\begin{itemize}
  \item \emph{Earliest Start} minimizes immediate depth by greedily prioritizing edges between spiders whose available legs appear earliest in the current schedule.
  \item Critical Path (\emph{Crit.\@ Path}) aims to minimize the overall circuit depth by prioritizing connections that lie on the longest remaining paths.
  \item \emph{Active Spider} seeks to minimize the maximum number of simultaneously active spiders (and thereby circuit width) by prioritizing connections that reuse already active spiders over those that would introduce new ones.
\end{itemize}
Finally, we also implemented a static heuristic that aims to optimize over ordering of external edges.
Sequence Distance (\emph{Seq.\@ Dist.\@}) uses simulated annealing to minimize the average distance between consecutive uses of the same spider, aiming to reduce qubit lifetime.

\subsection{Extracting the Final Circuit}

\begin{figure}[!b]
  \[
      \scalebox{0.55}{\input{building-G-and-D.tikz}}
  \]
  \caption{Building the data structures $G_{\text{global}}$ and $D_{\text{global}}$ according to the optimized matrix $M_{\text{best}}$. In the second row, each edge $u\rightarrow v$ annotates a causal dependency between nodes $u$ and $v$. That is, $u$ must be scheduled before $v$. There are three types of arrows: Solid black arrows denote dependencies inherited from the spanning tree produced by SpiderCat~\cite{spidercat2026}. Dashed red arrows indicate additional flag-ordering constraints within a \CAT gadget. Dashed orange arrows record additional constraints introduced when connecting different \CAT gadgets. We start from fault-tolerantly expanded spiders with missing external edges and gradually add edges.
  In this example, we first add an edge between the expanded spiders $Z_3$ and $X_3$ and therefore we pick $13$ and $r_3$ to connect in $G_{\text{global}}$.
  Then we add the edge $r_6 \to r_3$ into $D_{\text{global}}$ as $P_{D_{\mathrm{global}}}(u(r_3)) = \emptyset$ and $P_{D_{\mathrm{global}}}(u(13)) = \{r_6\}$.
  These additional edges are drawn dashed in orange.
  We proceed with $Z_1$ and $X_3$, adding the edge $(1, 7)$ to $G_{\text{global}}$ and also $r_3 \to 1$ and $r_0 \to 7$ to $D_{\text{global}}$, and continue until the graphs are fully populated.
  }
  \label{fig:building-G-and-D}
\end{figure}
\begin{figure}[!b]
    \[
        \scalebox{0.45}{\input{full-extraction.tikz}}
    \]
    \caption{Fault-equivalent circuit extraction for preparing the logical $\ket{0}$ state of the Steane code. We first take $G_{\text{global}}$ and rearrange it according to the topological generation of $D_{\text{global}}$. $G_{\text{global}}$ and $D_{\text{global}}$ are shown in \Cref{fig:building-G-and-D}.
    Then insert identity spiders and unfuse state preparations and measurements so the the diagram is fully circuit like.
    Finally, we extract the circuit.}
    \label{fig:steane-full-extraction}
\end{figure}

The final phase of the pipeline turns the complete global connectivity graph into a fault-tolerant quantum circuit.
To do this, we take the final ZX-diagram $G_{\mathrm{global}}$ implementing our initial state $M$, and we rearrange the spiders on the plane according to $D_{\mathrm{global}}$.
In particular, we fix the primary qubit lines of all expanded spiders as the directed path starting from the roots and ending in the boundaries.
We rearrange $G_{\mathrm{global}}$ according to $D_{\mathrm{global}}$ by setting the X-coordinate of each spider $s$ according to which topological generation of $D_{\mathrm{global}}$ we can find $s$.
We set the Y-coordinate of all spiders on the primary qubit lines as the index of the corresponding qubit.
The remaining spiders get Y-coordinates based on the path they lie on.
After this, we apply the identity rule and the single spider unfusion rule to turn the diagram into a circuit-like diagram.
From there, a circuit can directly and fault-equivalently be extracted using the translation guide of \Cref{fig:translation-guide}.
As an example of this procedure, the extraction of the Steane code is shown in \Cref{fig:steane-full-extraction}.


\begin{remark}
    Our tree-based extraction procedure can be viewed in terms of measurement-circuit flow (MC flow), introduced for distance-preserving circuit extraction~\cite[Section~4]{Rodatz2026floquetifying}. MC flow builds on causal flow, which Danos and Kashefi introduced to ensure deterministic measurement-based quantum computation~\cite{DanosKashefi2006Determinism}. In the path formulation of causal flow, distinct paths cannot share vertices. MC flow relaxes this restriction: paths may share spiders but not edges, with ordering conditions that accommodate these shared vertices. Paths can therefore start or end at spiders on other paths, as in the branching and flag interactions above. In our construction, the spanning forest guides the data-qubit paths, with a choice at each branch of which continuation stays on the same qubit. The dependency DAG orders the interactions between paths. The compiler turns these choices into a circuit using the fault-equivalent extraction rules described above.
\end{remark}

\section{Benchmarking}
\label{sec:simulation}

\begin{figure}[b]
    \centering
    \begin{subfigure}[b]{0.49\linewidth}
        \includegraphics[width=\linewidth]{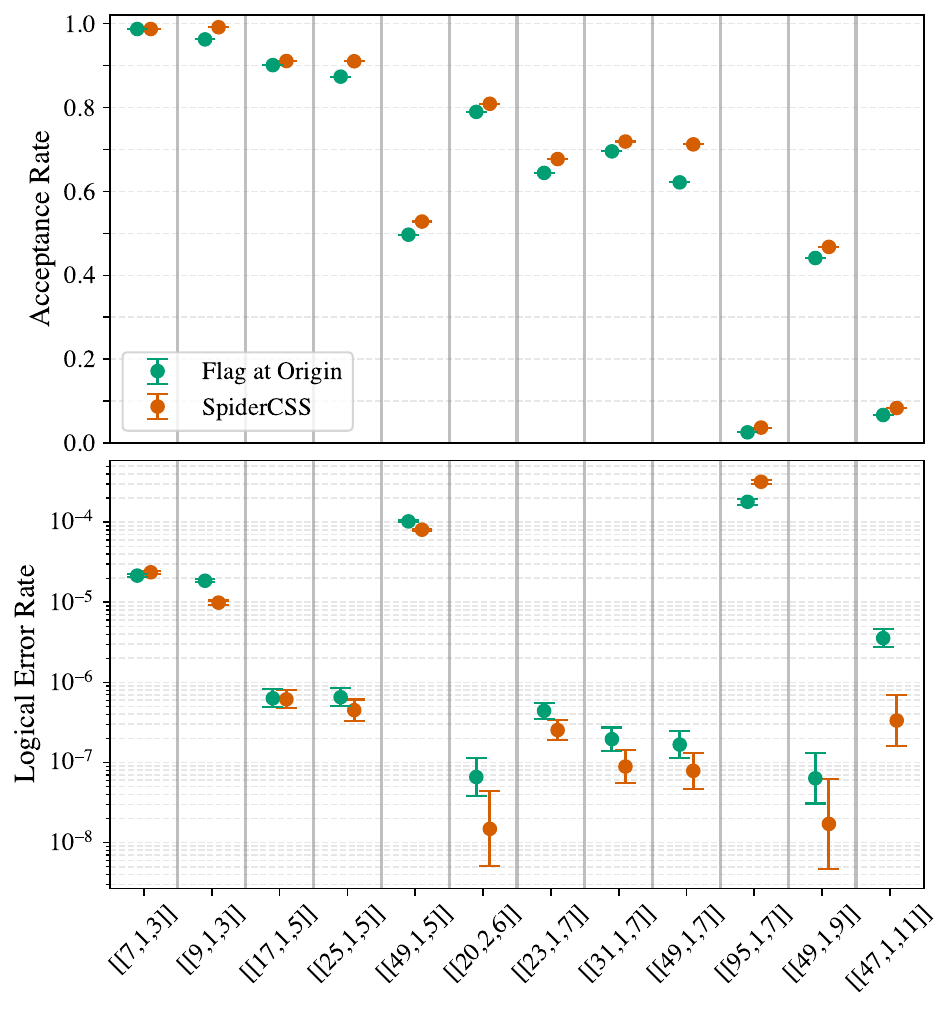}
        \caption{Acceptance rate (higher is better) and logical error rate (lower is better) of fault-tolerant CSS state preparation circuits by SpiderCSS (orange) and FaO (green) for the benchmarked codes. Each point shows the acceptance and logical error rate obtained from Monte Carlo simulation, with an error bar indicating $95\%$ Wilson confidence intervals.}
        \label{fig:LEU}
    \end{subfigure}
    \hfill
    \begin{subfigure}[b]{0.49\linewidth}
        \includegraphics[width=\linewidth]{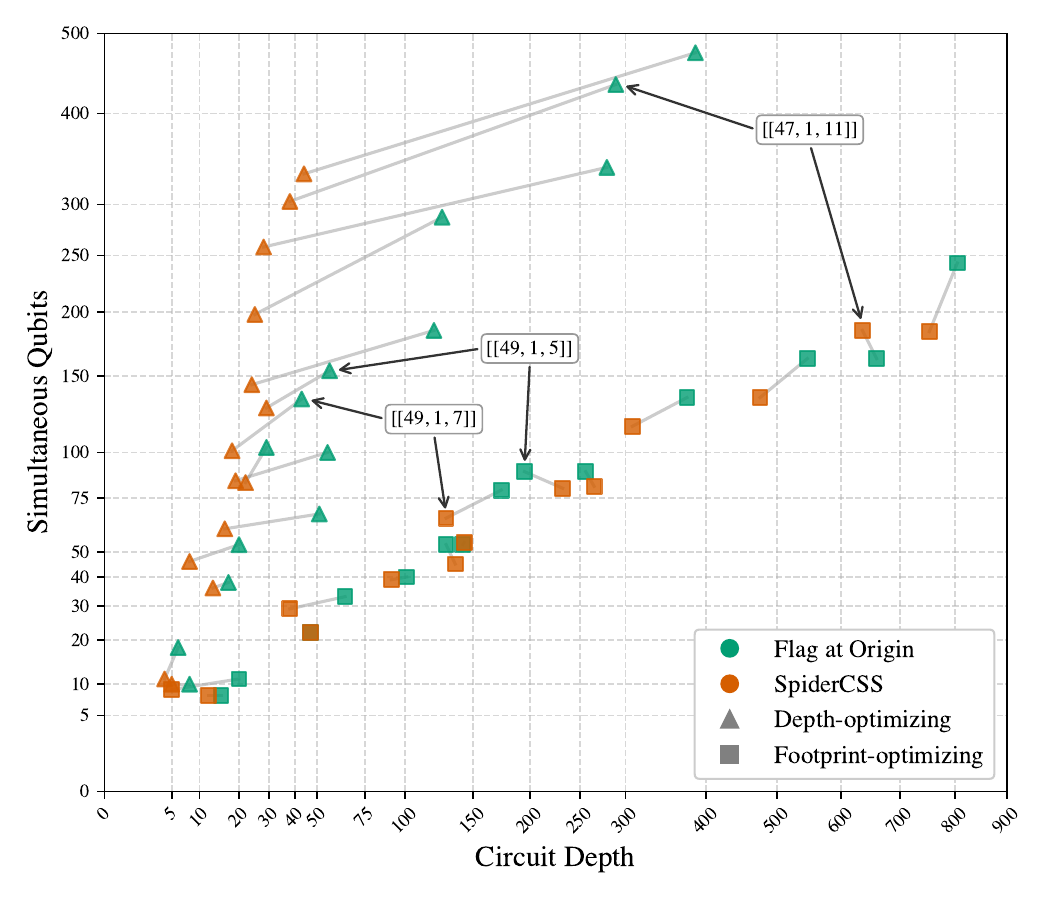}
        \caption{Circuit depth and peak number of simultaneously active qubits (lower is better in both cases) for fault-tolerant CSS-state preparation using SpiderCSS (orange) and FaO (green). A triangle denotes the strategy prioritising depth while a square denotes maximal qubit reuse, which prioritises lowering qubit count. For each code, a gray line connects FaO reference points to corresponding SpiderCSS data points, obtained by prioritising circuit depth or simultaneous qubit count. Several data points corresponding to the same code are marked to show the impact of the different qubit reuse strategies. }
        \label{subfig:depth-qubit-resources}
    \end{subfigure}
    \caption{A comparison of SpiderCSS and Flag-at-Origin (FaO) across all benchmarked CSS states in terms of logical error rates, acceptance rates, and space-time resource overhead.}
    \label{fig:comparison2}
\end{figure}

\input{sim_table}

To validate the performance of the circuits produced by SpiderCSS, we performed Monte Carlo simulations using Stim~\cite{Gidney2021stimfaststabilizer}.
In order to aid comparison, we adopt the noise model of Flag at Origin~\cite{forlivesi2025flag}:
the simulations were done under circuit-level depolarizing noise ($\pTwo = \pOne = \pSPAM = 10^{-3}$ and $\pMem = 10^{-5}$) with depolarizing memory error applied after each CNOT gate and data qubits measured perfectly after the state preparation is fully complete.

For each code we sample from a circuit with no qubit reuse applied, and for fair comparison, we apply an exact scheduler based on CP-SAT that finds the schedule of operations minimizing total qubit lifetime.
The results are averaged over $10^{9}$ samples for codes $d < 6$, and over $5\times10^{9}$ samples for codes $d \geq 6$. For each sample we check whether all the flag measurements got the correct outcome, and we discard the shot if not. The proportion of shots we keep after this post-selection we call the \emph{acceptance rate}.
Each accepted sample was decoded using code-capacity lookup tables and samples were discarded if their syndromes indicated $d / 2$ or more faults.
We re-benchmarked the circuits of FaO as well for a fair comparison under the exact same implementation of the noise model, sampling, and decoding.

For each code, we construct two circuits with different qubit reuse targets.
When the reuse target is simultaneous number of qubits, we apply the greedy qubit reuse algorithm of \citet{deCorss2023QubitReuse}, which can increase the depth of the circuits.
When the reuse target is depth, we perform a minimal greedy qubit reuse where we reuse free, measured qubits, and the depth remains unchanged.

The benchmark results are presented for each code in \Cref{tab:sim_results}.
\Cref{fig:LEU} shows the comparison between SpiderCSS and Flag at Origin for acceptance rate and logical error rate (the proportion of wrongly decoded faults amongst all shots that were accepted) while \Cref{subfig:depth-qubit-resources} shows for each code its circuit depth and maximum number of simultaneously active qubits during the circuit (i.e.\@ the width of the circuit) under the two reuse modes.
\Cref{fig:comparison} summarizes this data in a different way.

The results demonstrate that the SpiderCSS pipeline scales across a range of CSS codes, from small distance-$3$ examples to larger distance-$15$ codes.
On average, we reduce logical error rate by $33.7\%$, increase acceptance rate by $9.5\%$, and depending on the chosen qubit reuse strategy reduce depth by $13.1\%$ or $60.3\%$, and the number of simultaneous qubits by $8.9\%$ or $20.2\%$ compared to FaO circuits.

A clear pattern is the trade-off between qubit count and depth which is determined by the reuse strategy:
aggressive reuse reduces the peak number of simultaneous qubits but significantly increases depth, with no clear winner in error rates under the noise model used.
Compared to Flag at Origin~\cite{forlivesi2025flag}, the most striking difference is our reduced depth, and the slightly increased qubit count.
For most codes, the logical error rate of the SpiderCSS circuit is lower, while the acceptance rates of many codes are better.

\section{Conclusion}
\label{sec:conclusion}
We presented a new pipeline for constructing highly-optimized fault-tolerant state preparations of codewords in CSS codes.
This required combining together optimizations in finding suitable stabilizer generators, optimal fault-tolerant \CAT state gadgets, and improvements in routing the resulting ZX-diagram to a locally optimal circuit.
Compared to the previous state of the art for scalable fault-tolerant state preparation, Flag at Origin~\cite{forlivesi2025flag}, our method scales to larger codes and produces states using lower resource requirements (CNOT count, qubit count and depth) while producing better logical error rates and post-selection acceptance rates.

Importantly, these results are achieved without any expensive, exponentially scaling computation.
All parts of the pipeline run in polynomial time.
This is made possible by the Fault Tolerance by Construction framework~\cite{rodatz2025fault}, which guarantees fault tolerance without any expensive checking of fault propagation.

As our approach is fully modular, certain aspects of our method may be reused elsewhere.
For example, we can use well-ordered \CAT states to get a circuit for Shor-style syndrome measurement, which then can be optimized using \Cref{prop:cat-reduction-1,prop:cat-reduction-2} to obtain more efficient syndrome measurement circuits.
Another part of our work that may be reused in contexts other than CSS state preparation is our polynomial-time characterization of strictly safe hook errors in \Cref{apdx:safe-hook-errors}.

Additionally, improvements to different modular aspects of this work can also be applied directly.
Naturally, if larger and more fault-tolerant \CAT states can be generated by SpiderCat, they can be used directly in our framework to synthesize even larger states.
However, we have seen that as the size of the state increases, so does the rejection rate, requiring the development of decoding based on observed flags, which we leave for future work.
Furthermore, our current pipeline uses only strictly safe hook errors due to their relatively simple characterization and compositionality.
It is straightforward to also make use of safe hook errors that fall outside of this category.
However, finding a maximum set of safe hook errors is likely computationally hard.

The circuits produced by our pipeline can be used for Steane error correction~\cite{Steane1997ActiveStabilization} if the prepared state is the logical $\ket{0}$ or $\ket{+}$.
When the prepared state is a logical Bell state, our circuits can also be used for Knill error correction~\cite{knillQuantumComputing2005} .
These circuits may then be further optimized using fault-equivalent ZX-rewrites that may produce lower-overhead syndrome measurement protocols~\cite{rodatz2025fault,Poor2026ultralow}.

In upcoming work we generalize the results to work for arbitrary stabilizer codes.
This means the assumption of CSS noise is no longer valid, which invalidates the edge unidealization rule we used here.
Instead we use more involved rules which result in circuits that are larger but which work for arbitrary noise and thus support the preparation of general stabilizer states.
We leave extensions beyond the Clifford regime to future work.

Ultimately, we have demonstrated a modular approach to building fault-tolerant CSS-state preparation circuits at scale.
The generated circuits outperform other polynomial-time methods in logical error rate, acceptance rate, depth, and qubit count.
We believe that scalable methods that generate efficient circuits will be a key component in the realization of large-scale fault-tolerant quantum computing.

\paragraph*{Acknowledgements}
We appreciate helpful discussions with David Amaro.
The diagrams in this paper were typeset using TikZiT~\cite{tikz}.
JvdW wishes to acknowledge support from an NWO Veni grant.
ABK and BP are supported by the Engineering and Physical Sciences Research Council grant number EP/Z002230/1, ``(De)constructing quantum software (DeQS)''.
BR and RY acknowledge support from Simon Harrison via the Wolfson Harrison UK Research Council Quantum Foundation Scholarship.

\paragraph*{AI Disclosure}
All novel theoretical results in this paper were developed entirely by the authors, without AI assistance. The manuscript was written entirely by the authors, where AI assistance is limited to the activities described below. 
OpenAI GPT-5.6 and GPT-6, accessed through ChatGPT since June 2026, were used for editing suggestions. Claude Opus-5.5 was used for typo, grammar, and consistency checking. Gemini 3.1 Pro and 3.8 Flash accessed through Antigravity were used for assistance with the implementation of the codebase and the simulations~\cite{spiderstateGitHub}.
The authors take full responsibility for the correctness of the results.

\clearpage

\bibliography{spiderstate}

\appendix
\renewcommand{\theHequation}{\theequation}

\section{Safe Hook Errors}
\label{apdx:safe-hook-errors}

In fault-tolerant quantum computing, a hook error occurs when a single physical fault during circuit execution cascades into a multi-qubit error on the data block.
While arbitrary hook errors reduce fault tolerance by lowering the weight of uncorrectable errors, a subset of these, that we call \emph{strictly safe hook errors}, do not damage the fault tolerance.
While such hook errors spread to more qubits, we can multiply these by a stabilizer of the state to turn them back into single-qubit faults.

To make this more formal, suppose some operation in our circuit produces a multi-qubit fault $E$ on the output qubits. Let $\mathcal{S}$ be the stabilizer group of the stabilizer state we are preparing which contains both the stabilizers of the CSS code as well as the stabilizers of the prepared state.
We then say an error $E$ is \emph{strictly safe} if there exists a stabilizer $S\in \mathcal{S}$ such that $E\oplus S$ (denoting the product of the two Paulis) has support on just 1 qubit. 
Since Pauli errors compose linearly, combinations of strictly safe hook errors sum to an effective fault whose weight is bounded by the number of original faults, preserving fault tolerance under composition.

\begin{remark}
  Note that there are safe hook errors beyond the ones we describe here.
  In particular, certain hook errors can have a minimum weight representative of weight 2 on the boundary, but under perfect decoding it is sufficient to count this as weight 1.
  For example, diagonal hook errors of the surface code are well-known to be harmless~\cite{dennis2002TopologicalQuantumMemory}.
  The problem with this type of safe hook errors is that they are not composable: if we wanted a maximal set of safe hook errors every combination of these hook errors would individually need to be verified against a perfect decoder.
  We use the term ``strictly safe'' to emphasize the fact that the types of errors we work with trivially remain safe under composition.
\end{remark}

We want to use strictly safe hook errors in order to further optimize the decomposition of high-arity spiders into arity-3 spiders.
We could just decompose the whole high-arity spider using SpiderCat, which will guarantee that there are no hook errors, not even strictly safe ones.
Instead, we can check if unfusing the spider along some edge, which groups the outputs of the spider into two sets, would give a strictly safe hook error.
If we identify such a choice of subsets of the outputs, then we decompose the spider into these subsets first before applying SpiderCat to decompose the remaining high-arity spiders.

To see the benefit of this, observe in \Cref{fig:elementary-fe-rewrites} that to decompose an arity-4 spider fault-tolerantly we would usually require 4 arity-3 spiders that make a loop.
The naive decomposition in \Cref{lem:non-fe-rewrites} into just two arity-3 spiders is not fault-tolerant, because a single $X$-fault on the internal edge can spread to two faults on the output.
But suppose that these 2 resulting $X$-faults propagate to output wires where we have a weight-3 $X$-stabilizer.
We could then combine the fault with this stabilizer to reduce the hook error to just a weight 1 $X$-fault.
In that case we do not have to do the expensive cycle decomposition and just use the decomposition with two spiders instead.

\subsection{Finding Strictly Safe Hook Errors}
It remains to show how we find strictly safe hook errors.
Since our spiders always produce hook errors that are purely $X$- or $Z$-type we can just consider one type of fault.
For concreteness, let us assume we are dealing with a $Z$-spider, and hence $X$-type hook errors.
Let $Q$ be the set of output qubits of the state, let $G\subseteq Q$ denote the qubits of the spider that we want to decompose, and let $E\subseteq G$ be a candidate hook error produced by splitting the spider into subsets with wires acting on $E$ and $G\setminus E$, so $E$ denotes a set of $X$-faults on all the qubits of $E$.

The requirement for $E$ to be strictly safe 
implies that there exists some $X$-stabilizer $S \in \mathcal{S}$ and a physical $X$ error $e$ of weight $\le 1$ such that
\begin{equation}
    E = S \oplus e.
\end{equation}
Note that as $E\subseteq G$, we must have that $S\oplus e$ has no support on the qubits outside of $G$.
Let $I=Q\setminus G$ denote the remaining qubits. Then $E_I$, the fault restricted to the qubits in $I$, is trivial.
Hence we must have $S_I = e_I$.
That is, either the support of $S$ outside of $G$ is trivial, or it is equal to a weight 1 fault corresponding to $e$ if $e$ happens to have support outside of $G$.

Since $E$ was an arbitrary subset of $G$ we see that there is no further restriction on $S\oplus e$ as long as the condition $S_I = e_I$ is satisfied.
That is, once we have found a stabilizer $S$ and a weight-at-most-1 fault $e$ such that $S_I=e_I$, then $S\oplus e$ defines a subset $E\subseteq G$ that is a strictly safe hook error.

We can describe finding such $S$ and $e$ as a linear system of equations.
Let $M$ denote the binary generator matrix of the $X$-stabilizers and let $M_I$ be its restriction to the columns in $I$.
Any $X$-stabilizer in $\mathcal{S}$ can then be described as a linear combination of rows of $M$, which we represent by $x^TM$ for some vector $x$ describing the linear combination.
The condition $S_I=e_I$ then becomes the linear equation $x^T M_I = e_I$.

We can then first solve the homogeneous case where $e_I=0$ (meaning that the possible weight 1 fault has overlap with the qubits of the spider $G$).
This means finding those $x$ belonging to the left null-space of $M_I$, which requires Gaussian elimination of $M_I$, taking $\mathcal{O}(n^3)$ time.
For the $x$ in the null-space the space $V_0 = \{ x^T M \mid x^T M_I = 0 \}$ then describes the possible strictly safe hook errors on $G$ that exactly correspond to stabilizers.
For every $e\in G$, we can also take any $S\in V_0$ and consider $S\oplus e$ as a valid strictly safe hook error.

Other potential strictly safe hook errors correspond to where $e_I = 1$, i.e.~where the hook error has support 1 outside of $G$.
For this we consider for each qubit $q\in I$ the weight 1 fault $e$ having support at $q$, and we solve the system $x^T M_I = e_I$.
Since $M_I$ has already been reduced via Gaussian elimination, solving for a specific $e_I$ can be done cheaply via back-substitution taking $\mathcal{O}(n^2)$ time.
For a specific $e_I$, this gives a single non-homogeneous solution $x_n$ which can be combined with any $x_h$ that is a solution to the homogeneous system to get $S=(x_n+x_h)^T M$.
Then $E=S\oplus e$ is a strictly safe hook error.

Checking all $|I|$ possible outside faults takes $|I| \times \mathcal{O}(n^2) = \mathcal{O}(n^3)$ total time.
This covers all possible ways we can construct strictly safe hook errors, and we see that this can be done in time $\mathcal{O}(n^3)$.

\subsection{Universal vs. Partial Safety}

When a spider is split into two components along a strictly safe hook error, we refer to it as a partition. We will for instance denote decomposing a degree-6 spider into two pieces of size 3 as a $[3,3]$ partition.
When we are investigating which partitions correspond to strictly safe hook errors, there is a useful distinction that we can make.
We say a partition is \textbf{partially safe} when only particular settings are safe (i.e.\@ some legs must connect to particular spiders to be safe).
This property means that \Cref{alg:dag-construction} and \Cref{lem:dag-preservation} only hold if the absolute ordering of the external edges respects the order demanded by the partition.
For example in a $[3,3]$ partition, the first 3 edges must precede the second 3, or vice versa.
While this only places easily resolvable additional constraints on the absolute ordering, we still prefer when we have full freedom for optimization.
This motivates defining \textbf{universally safe} partitions, in which case any partition of the spider legs yields a safe hook error.

A spider admits universally safe $[k, n-k]$ partitions, if it admits all possible $\binom{|G|}{k}$ partitions of $k$ legs, which can be generated from the above linear equations.
If the spider is not universally safe, we generate all partially safe non-trivial hook errors.
From this, we also compute if a partition contains nested splits (e.g.\@ splitting a degree-8 spider into $[3, 2, 3]$) and save them.
From all possible partially safe hook errors, we select one that minimizes the smallest partition and then maximizes the number of partitions.

\end{document}








\title{CSSCat: Scalable Fault-Tolerant CSS State Preparation\\[0.4em]
  \normalfont\large Supporting definitions and proofs}

\keywords{QEC, FTQC, Fault Tolerance, State Preparation, Fault-tolerant State Preparation}

\maketitle

In what follows, we use $\semantics{C}$ to denote the matrix representation of the linear map implemented by a circuit or ZX diagram $C$. Pauli operators and semantic equalities are understood modulo global phase. For ZX diagrams, semantic equalities are understood up to a
nonzero scalar.

\section{Circuit-level noise}

\begin{definition}[Circuit-level Pauli noise model]
\label{def:notes-circuit-noise}
Let $C$ be an ideal circuit with a finite set $L$ of circuit locations: gates, state preparations, measurements, and specified idle steps. At each location $\ell$, specify a set $\mathcal E_\ell$ of allowed local errors, including $I$ for no fault, with the following insertion rules, as illustrated in \Cref{fig:notes-fault-locations}.
\begin{itemize}
  \item As shown in \Cref{fig:gate-fault}, at an $n$-qubit gate, allow any element of $\mathcal P_n$, immediately after the ideal gate. Note that a multi-qubit gate results in one fault location. 
  \item As shown in \Cref{fig:state-fault}, at a state preparation or idle step, allow Pauli errors on its outputs.
  \item As shown in \Cref{fig:measurement-fault}, at a nondestructive measurement, allow a Pauli error placed both immediately before and after the measurement, with the same Pauli on both sides. This Pauli pair flips the recorded outcome and is counted as one fault.
  \item As shown in \Cref{fig:measurement-fault2}, at a destructive measurement, allow an outcome flip. This is represented by an anticommuting Pauli immediately before the measurement.
\end{itemize}

A fault pattern chooses one allowed error at each location, where $\prod$ denotes the cartesian product of sets of allowed errors at all fault locations. Write $C^F$ for the resulting faulty circuit.
\[
  F=(F_\ell)_{\ell\in L}\in\mathcal F_C
  :=\prod_{\ell\in L}\mathcal E_\ell.
\]
\end{definition}

\begin{figure}[hbt]
  \captionsetup[subfigure]{justification=raggedright,singlelinecheck=false,margin=0pt,aboveskip=0pt}
  \begin{subfigure}[b]{0.24\textwidth}
    \centering
    \[\tikzfig{figures/gate-fault}\]
    \caption{Fault at a multi-qubit gate.}\label{fig:gate-fault}
  \end{subfigure}\hfill
  \begin{subfigure}[b]{0.24\textwidth}
    \centering
    \[\tikzfig{figures/state-fault}\]
    \caption{Fault at a state preparation.}\label{fig:state-fault}
  \end{subfigure}\hfill
  \begin{subfigure}[b]{0.24\textwidth}
    \centering
    \[\tikzfig{figures/measurement-fault}\]
    \caption{Fault at a nondestructive measurement.}\label{fig:measurement-fault}
  \end{subfigure}\hfill
  \begin{subfigure}[b]{0.2\textwidth}
    \centering
    \[\tikzfig{figures/measurement-fault2}\]
    \caption{Fault at a destructive measurement.}\label{fig:measurement-fault2}
  \end{subfigure}
  \caption{
    Fault locations in the circuit-level noise model. Fault-free components are drawn in purple and the inserted Pauli error $\epsilon$ in black.}
  \label{fig:notes-fault-locations}
\end{figure}

\begin{definition}[Circuit fault weight]
\label{def:notes-circuit-weight}
The number of faulty locations in $F$ is
\[
  n(F):=\bigl|\{\ell\in L:F_\ell\neq I\}\bigr|.
\]
Using the minimum-representative convention, define
\[
  \operatorname{wt}_C(F)
  :=\min_{\substack{G\in\mathcal F_C\\\semantics{C^G}=\semantics{C^F}}}n(G).
\]
\end{definition}

\begin{remark}[Locations and wires]
A wire identifies a qubit, whereas a circuit location identifies an operation occurrence. One location can involve several qubits, and one wire can participate in several locations. 
\end{remark}

\section{Fault equivalence}

For a postselected circuit $C$ with $\semantics{C}\neq0$, a fault $F$ is \emph{detectable} if $\semantics{C^F}=0$, and \emph{undetectable} otherwise. The circuit comparisons below assume the circuit-level Pauli noise model specified in \Cref{def:notes-circuit-noise}.

\begin{definition}[Pauli fault equivalence]
\label{def:pauli-circuit-fault-equivalence}
Let $C$ and $D$ be circuits with the same ideal semantics, each equipped
with a circuit-level Pauli noise model. They are \emph{Pauli
fault-equivalent} if every allowed undetectable fault $F$ on $C$ has an
allowed fault $G$ on $D$ satisfying
\begin{equation}
  \semantics{C^F}=\semantics{D^G},
  \qquad
  \operatorname{wt}_D(G)\leq\operatorname{wt}_C(F)
  \label{eq:matching-condition}
\end{equation}
and the same condition holds with $C$ and $D$ interchanged. We call \eqref{eq:matching-condition} the \emph{matching condition} from $C$ to $D$.
\end{definition}

\begin{definition}[CSS fault weight and fault equivalence]
\label{def:css-circuit-fault-equivalence}
For a Pauli fault pattern $F$ on a circuit $C$, let $F_X$ and $F_Z$ be its
pure $X$- and $Z$-type components at the original locations, so that
$F=F_XF_Z$. For $P\in\{X,Z\}$, let
$\operatorname{wt}_{C,P}(F_P)$ be the minimum number of faulty locations
among pure $P$-type fault patterns on $C$ with the same faulty semantics
as $F_P$. Define
\[
  \operatorname{wt}_{C,\mathrm{CSS}}(F)
  :=\max\bigl\{\operatorname{wt}_{C,X}(F_X),
               \operatorname{wt}_{C,Z}(F_Z)\bigr\}.
\]
Circuits $C$ and $D$ with the same ideal semantics are \emph{CSS
fault-equivalent} if every allowed undetectable fault $F$ on $C$ has an
allowed fault $G$ on $D$ such that
\begin{equation}
  \semantics{C^F}=\semantics{D^G},
  \qquad
  \operatorname{wt}_{D,\mathrm{CSS}}(G)
  \leq\operatorname{wt}_{C,\mathrm{CSS}}(F),
  \label{eq:css-matching-condition}
\end{equation}
and the same condition holds with $C$ and $D$ interchanged.
\end{definition}

When the circuit is clear from the context, we omit its subscript and write $\operatorname{wt}(F_P)$ for $\operatorname{wt}_{C,P}(F_P)$ and $\operatorname{wt}_{\mathrm{CSS}}(F)$ for $\operatorname{wt}_{C,\mathrm{CSS}}(F)$.

\begin{lemma}[Minimum representatives suffice]
\label{lem:minimum-representatives-suffice}
To verify Pauli fault equivalence of two circuits or diagrams with the same ideal
semantics, it suffices to consider the minimum-weight representative of each
class of undetectable faults, in both directions. Two faults belong to the
same class when they produce the same faulty semantics.
\end{lemma}

\begin{proof}
Let $F$ be an undetectable fault on $C$, and let $G$ be the chosen minimum
representative of its class. Then $\semantics{C^G}=\semantics{C^F}$ and
$\operatorname{wt}_C(G)=\operatorname{wt}_C(F)$. By assumption, $G$ has a
matching fault $H$ on $D$ with
\[
  \semantics{D^H}=\semantics{C^G}=\semantics{C^F},
  \qquad
  \operatorname{wt}_D(H)\leq\operatorname{wt}_C(G)
  =\operatorname{wt}_C(F).
\]
Thus, the matching condition also works for $F$. Interchanging $C$ and $D$ proves the converse.
\end{proof}

\begin{remark}[Separate component bounds]
CSS noise analysis is useful when separate bounds on $X$ and $Z$ errors suffice for the task, particularly when the output is protected by a CSS code. What matters is then whether an error is correctable, not whether its total Pauli weight is small. If a CSS code corrects up to $t$ errors of each type, ideal CSS recovery corrects $X^aZ^b$ whenever $|a|\leq t$ and $|b|\leq t$. Here $a,b$ are binary strings, $|a|$ denotes the Hamming weight of $a$, and $X^a=\bigotimes_jX^{a_j}$, with $Z^b$ defined similarly. This holds even when the two components act on different qubits, so that the total Pauli weight can reach $2t$: the code's $Z$-type checks detect $X^a$ and its $X$-type checks detect $Z^b$, independently of each other. For example, given a distance $d=3$ Steane code, $X_1Z_2$ is correctable: although its total weight is two, each component has weight one, which is bounded by $t=\lfloor \frac{d-1}{2}\rfloor =1$.
\end{remark}

\begin{lemma}[Pauli fault equivalence of non-entangling Clifford circuits]
\label{lem:non-entangling-pauli-fault-equivalence}
Let $C$ and $D$ be non-entangling Clifford circuits, that is, circuits composed of single-qubit Clifford gates, idle steps, state preparations $\lvert 0\rangle$ and $\lvert +\rangle$, and destructive postselected measurements $\langle 0\rvert$ and $\langle +\rvert$, without classical feed-forward, such that every output wire is the output of some location. Under the circuit-level Pauli noise model, if $C$ and $D$ have the same nonzero ideal semantics, then they are Pauli fault-equivalent.
\end{lemma}

\begin{proof}
Take an undetectable fault $F$ on $C$. By \Cref{lem:minimum-representatives-suffice}, we may assume that $F$ is a minimum representative of its class, so that $n(F)=\operatorname{wt}_C(F)$. Push its errors to the input or output boundary. Since propagating a Pauli through a single-qubit Clifford gate does not increase the number of insertions, neither does pushing $F$ to the boundary and combining errors there. On a closed preparation–measurement path, a Pauli fault that passes postselection leaves the semantics unchanged.

Hence, there exists a boundary fault pattern $B$ such that
\[
  \semantics{C^B}=\semantics{C^F},
  \qquad
  n(B)\leq n(F)=\operatorname{wt}_C(F).
\]
Realise the same boundary errors on $D$ as a fault $G$, moving each error along its wire to the nearest location, which exists by assumption. This is needed because, in the circuit-level Pauli noise model of \Cref{def:notes-circuit-noise}, errors can only be inserted at specific circuit locations. Since $\semantics{C}=\semantics{D}$,
\[
  \semantics{D^G}=\semantics{C^B}=\semantics{C^F},
  \qquad
  \operatorname{wt}_D(G)\leq n(G)\leq n(B)\leq\operatorname{wt}_C(F).
\]
This satisfies the matching condition from $C$ to $D$. Interchanging $C$ and
$D$ proves the converse.
\end{proof}

\begin{example}[Pauli equivalence need not imply CSS equivalence]
\label{ex:pauli-not-css-fault-equivalence}
Consider the two semantically equivalent circuits $C$ and $D$, as shown in \Cref{fig:pauli-not-css-circuits}. They are Pauli fault-equivalent by \Cref{lem:non-entangling-pauli-fault-equivalence}.

\begin{figure}[H]
  \captionsetup[subfigure]{justification=raggedright,singlelinecheck=false,margin=0pt,aboveskip=0pt}
  \begin{subfigure}[b]{0.38\textwidth}
    \centering
    \[\tikzfig{figures/fault-equivalence-example}\]
    \caption{The circuits $C$ and $D$, and $\semantics{C}=\semantics{D}$.}
    \label{fig:pauli-not-css-circuits}
  \end{subfigure}\hfill
  \begin{subfigure}[b]{0.6\textwidth}
    \centering
    \[\tikzfig{figures/fault-equivalence-fault-location}\]
    \caption{Fault locations of $C$ and $D$, $\epsilon_i\in \mathcal{P}_1$.}
    \label{fig:pauli-not-css-fault-locations}
  \end{subfigure}
  \caption{$\epsilon_i$ denotes the Pauli error inserted at one fault location. The dashed lines separating the time steps of $C$ and $D$.}
  \label{fig:pauli-not-css-fault-equivalence}
\end{figure}

In \Cref{fig:pauli-not-css-fault-locations}, we show that they are not CSS fault-equivalent by giving a counterexample. On $D$, take $\epsilon_3 = X$, $\epsilon_4 = Z$, and $\epsilon_5 = \epsilon_6 = I$ as the inserted faults. This pattern has CSS fault weight one, whereas producing the same behaviour on $C$ requires inserting a fault where $\epsilon_1 = \epsilon_2 = X$, which has CSS fault weight two. Since $\operatorname{wt}_{\mathrm{CSS}}(F) > \operatorname{wt}_{\mathrm{CSS}}(G)$, $C$ and $D$ are not CSS fault-equivalent.

\[\tikzfig{figures/fault-equivalence-fault-inserted}\]
\end{example}

\begin{example}
\label{ex:pauli-not-css-hadamard-pairs}
The same distinction holds if $D$ has a Hadamard pair on each wire, as shown in \Cref{fig:pauli-not-css-circuits2}. By \Cref{lem:non-entangling-pauli-fault-equivalence}, $C$ and $D$ are Pauli fault-equivalent.

\begin{figure}[H]
  \captionsetup[subfigure]{justification=raggedright,singlelinecheck=false,margin=0pt,aboveskip=0pt}
  \begin{subfigure}[b]{0.38\textwidth}
    \centering
    \[\tikzfig{figures/fault-equivalence-example2}\]
    \caption{The circuits $C$ and $D$, and $\semantics{C}=\semantics{D}$.}
    \label{fig:pauli-not-css-circuits2}
  \end{subfigure}\hfill
  \begin{subfigure}[b]{0.6\textwidth}
    \centering
    \[\tikzfig{figures/fault-equivalence-fault-location2}\]
    \caption{Fault locations of $C$ and $D$, $\epsilon_i\in \mathcal{P}_1$.}
    \label{fig:pauli-not-css-fault-locations2}
  \end{subfigure}
  \caption{$\epsilon_i$ denotes the Pauli error inserted at one fault location. The dashed lines separating the time steps of $C$ and $D$ are shown.}
  \label{fig:pauli-not-css-hadamard-pairs}
\end{figure}

In \Cref{fig:pauli-not-css-fault-locations2}, we show that they are not CSS fault-equivalent by giving a counterexample. On $D$, take $\epsilon_5 = X$, $\epsilon_4 = Z$, and $\epsilon_3 = \epsilon_6 = I$ as the inserted faults. This pattern has CSS fault weight one, whereas producing the same behaviour on $C$ requires inserting a fault where $\epsilon_1 = \epsilon_2 = X$, which has CSS fault weight two. Since $\operatorname{wt}_{\mathrm{CSS}}(F) > \operatorname{wt}_{\mathrm{CSS}}(G)$, $C$ and $D$ are not CSS fault-equivalent.

\[\tikzfig{figures/fault-equivalence-fault-inserted2}\]
\end{example}

\subsection{CNOT-based circuits}

Here, a \emph{CNOT-based circuit} uses CNOT gates, idle steps, $\lvert0\rangle$ and $\lvert+\rangle$ preparations, and destructive postselected measurements $\langle0\rvert$ and $\langle+\rvert$, without classical feed-forward. A \emph{circuit-like phase-free ZX diagram} is the ZX diagram of such a circuit, as in \Cref{prop:circuit-zx-css-equivalent}. Its faults are the edge flips discussed in \Cref{sec:edge-flip}. For a circuit $C$ with $n$ inputs, define its \emph{circuit state}
\begin{equation}
  \lvert C\rangle
  :=\sum_{x\in\{0,1\}^n}\lvert x\rangle\otimes C\lvert x\rangle.
  \label{eq:circuit-state}
\end{equation}

\begin{equation}
\tikzfig{figures/circuit-state}
  \label{eq:circuit-state-vis}
\end{equation}
As shown in \Cref{eq:circuit-state-vis}, $\lvert C\rangle$ is obtained by bending the $n$ input wires of $C$ into outputs. Each bent wire is the unnormalised Bell state $\lvert00\rangle+\lvert11\rangle$. Thus, to read $\lvert C\rangle$, start by preparing $n$ copies of $\lvert00\rangle+\lvert11\rangle$, and then apply $I\otimes C$, where $I$ acts on the first qubit of each Bell pair and $C$ acts on the second. Up to a permutation of qubits, $(\lvert00\rangle+\lvert11\rangle)^{\otimes n}=\sum_{x}\lvert x\rangle\otimes\lvert x\rangle$, so this recovers \Cref{eq:circuit-state}.

\begin{lemma}[CSS circuit states]
\label{lem:css-circuit-state}
If $C$ is a CNOT-based circuit with nonzero semantics, then
$\lvert C\rangle$ has a stabiliser generating set consisting of purely $X$- and $Z$-type operators.
\end{lemma}
\begin{proof}
To find stabiliser generators of $\lvert C\rangle$, it suffices to track how the stabiliser generators of the initial state propagate through $I\otimes C$: if $S$ stabilises $\lvert\psi\rangle$, then $USU^\dagger$ stabilises $U\lvert\psi\rangle$. By \Cref{eq:circuit-state-vis}, up to a permutation of qubit wires, the initial state is a tensor product of $n$ Bell pairs $\lvert00\rangle+\lvert11\rangle$, each with the pure-type stabiliser generators $X\otimes X$ and $Z\otimes Z$.

Recall that a CNOT with control $c$ and target $t$ propagates Paulis as
\begin{equation}
  X_c\mapsto X_cX_t,\qquad X_t\mapsto X_t,\qquad
  Z_c\mapsto Z_c,\qquad Z_t\mapsto Z_cZ_t.
  \label{eq:CNOT-prop}
\end{equation}
By \Cref{eq:CNOT-prop}, CNOT gates map $X$-type Paulis to $X$-type Paulis and $Z$-type Paulis to $Z$-type Paulis, so the type of each generator remains unchanged. Likewise, preparations and postselections also keep the stabiliser types unchanged.

Hence, after propagation through $I\otimes C$, every stabiliser generator is still of purely $X$- or $Z$-type. Since the resulting state is nonzero, it has the claimed generating set.
\end{proof}

\begin{lemma}[Map-state correspondence]
\label{lem:circuit-state-correspondence}
Let $C$ and $D$ be linear maps with the same input and output spaces.
Then $\semantics{C}=\semantics{D}$ if and only if $\lvert C\rangle=\lvert D\rangle$.
\end{lemma}

\begin{proof}
The forward implication follows from the definition given by \eqref{eq:circuit-state}. Conversely, suppose $\ket{C} = \ket{D}$. Applying $\langle x\rvert\otimes I$ extracts $C\lvert x\rangle$ from $\lvert C\rangle$, since
\[
\left(\bra{x}\otimes I\right)\ket{C}= \left(\bra{x}\otimes I\right)\left(\sum_{x\in\{0,1\}^n}\lvert x\rangle\otimes C\lvert x\rangle\right) = C\ket{x}.
\]

Equality of the circuit states therefore gives $C\lvert x\rangle=D\lvert x\rangle$ for every computational basis state, and hence $\semantics{C}=\semantics{D}$.
\end{proof}

The next lemma lets us analyse the $X$- and $Z$-components of a fault separately. Suppose $\semantics{C}=\semantics{D}$, and let $F=F_XF_Z$ be an undetectable fault on $C$. First, detectability is preserved when passing to components: the $X$- and $Z$-components of $F$ are undetectable, and a fault on $D$ is undetectable exactly when both of its components are. Second, faulty semantics separate into components. If we can find a pure $X$-type fault $G_X$ and a pure $Z$-type fault $G_Z$ on $D$ such that $\semantics{D^{G_X}}=\semantics{C^{F_X}}$ and $\semantics{D^{G_Z}}=\semantics{C^{F_Z}}$, then the fault $G=G_XG_Z$ satisfies $\semantics{D^G}=\semantics{C^F}$. Conversely, if a fault $G$ on $D$ has the same faulty semantics as $F$, then their $X$-components have the same faulty semantics, and so do their $Z$-components.

\begin{lemma}[Separation of the $X$ and $Z$ components]
\label{lem:css-component-matching}
Let $C$ and $D$ each be a CNOT-based circuit or a circuit-like phase-free
ZX diagram, with the same nonzero ideal semantics. Let $F$ be an undetectable Pauli fault pattern on $C$ and $G$ a Pauli fault pattern on $D$. Write $F=F_XF_Z$ and $G=G_XG_Z$, where the pure $X$- and $Z$-type components are taken at the original circuit locations or diagram edges, respectively. Then:
\begin{enumerate}
  \item $F_X$ and $F_Z$ are undetectable. Moreover, $G$ is undetectable
  if and only if both $G_X$ and $G_Z$ are undetectable.
  \item $\semantics{C^F}=\semantics{D^G}$ if and only if
  $\semantics{C^{F_X}}=\semantics{D^{G_X}}$ and
  $\semantics{C^{F_Z}}=\semantics{D^{G_Z}}$.
\end{enumerate}
\end{lemma}

\begin{proof}
By \Cref{eq:CNOT-prop}, propagation through a CNOT gate preserves Pauli type.
The identities in \Cref{fig:cnot-zx-pauli-propagation} show that propagation in a circuit-like phase-free ZX diagram also preserves Pauli type. In particular, a flip on the middle edge of a CNOT diagram equals a Pauli of the same type on its outputs. Hence, we may treat a circuit-like phase-free ZX diagram as its circuit representation.

\begin{figure}[H]
  \centering
  \tikzfig{figures/cnot-zx-pauli-propagation}
  \caption{Pauli propagation in the ZX representation of CNOT preserves
  $X$- and $Z$-type Pauli faults.}
  \label{fig:cnot-zx-pauli-propagation}
\end{figure}

To prove the first claim, we only need the postselections already present in $C$ and $D$. We call a deterministic postselection, that is, one that the ideal circuit passes with certainty, an \emph{acceptance check}. A $Z$-type acceptance check is a deterministic postselection on $\langle0\rvert$ and an $X$-type acceptance check is a deterministic postselection on $\langle+\rvert$. A Pauli fault is detectable exactly when it flips the outcome of an acceptance check. Since propagation preserves Pauli type, an $X$-type fault can flip only $Z$-type acceptance checks, and a $Z$-type fault can flip only $X$-type acceptance checks.

In particular, adding a $Z$-type component to a an $X$-type fault does not change the outcomes of the $Z$-type acceptance checks. Suppose, towards a contradiction that $F_X$ is detectable. Then so is $F$. Therefore, when $F$ is undetectable, $F_X$ must also be undetectable. The same argument applies to $F_Z$ and to the components of $G$. Conversely, suppose $G_X$ and $G_Z$ are both undetectable. Then $G_X$ flips no $Z$-type acceptance check and $G_Z$ flips no $X$-type acceptance check. Since adding the other component does not change these outcomes, $G=G_XG_Z$ flips no acceptance check and is undetectable.

To prove the second claim, we turn circuits into states as in \Cref{eq:circuit-state-vis}. By \Cref{lem:circuit-state-correspondence}, comparing the faulty circuit maps is equivalent to comparing their circuit states. By \Cref{lem:css-circuit-state} and the assumption that $\semantics{C}=\semantics{D}$, $\ket{C}=\ket{D}$ is a CSS state. Let $S_X$ and $S_Z$ be its purely $X$- and $Z$-type stabiliser generators. Track these generators through a faulty circuit as in the proof of \Cref{lem:css-circuit-state}. Since propagation preserves Pauli type, $X$-type faults flip only the signs of $S_Z$, and $Z$-type faults only those of $S_X$. Thus, the faulty circuit state of an undetectable fault is stabilised by $S_X$ and $S_Z$ up to signs. Moreover, an undetectable Pauli fault does not change the norm of the circuit state. Hence, two states are equal exactly when all their generator signs match.

\begin{description}
  \item[$(\Rightarrow)$] Suppose $\semantics{C^F}=\semantics{D^G}$. Since $F$ is undetectable, both sides are nonzero. This implies that $G$ and all four components $F_X$, $F_Z$, $G_X$, and $G_Z$ are undetectable. By \Cref{lem:circuit-state-correspondence}, $\ket{C^F}=\ket{D^G}$. Thus, they have the same signs on $S_X$ and $S_Z$. Removing $F_Z$ and $G_Z$ preserves the $Z$-generator signs, so $\ket{C^{F_X}}$ and $\ket{D^{G_X}}$ have identical $Z$-generator signs. They also have identical $X$-generator signs, since $X$-type faults commute with $S_X$. This implies that $\ket{C^{F_X}} = \ket{D^{G_X}}$. By \Cref{lem:circuit-state-correspondence}, $\semantics{C^{F_X}} = \semantics{D^{G_X}}$. Interchanging $X$ and $Z$ gives the corresponding equality for the $Z$ components.
  \item[$(\Leftarrow)$] Suppose $\semantics{C^{F_X}}=\semantics{D^{G_X}}$ and $\semantics{C^{F_Z}}=\semantics{D^{G_Z}}$. Since $F_X$ and $F_Z$ are undetectable, so are $G_X$ and $G_Z$. This implies that $G=G_XG_Z$ is undetectable. By \Cref{lem:circuit-state-correspondence}, $\ket{C^{F_X}} = \ket{D^{G_X}}$. This implies that the two states have matching $Z$-generator signs. Since adding the respective $Z$ components preserves these signs, $\ket{C^F}$ and $\ket{D^G}$ have matching $Z$-generator signs. Likewise, $\ket{C^F}$ and $\ket{D^G}$ have matching $X$-generator signs. This implies that $\ket{C^F} = \ket{D^G}$. By \Cref{lem:circuit-state-correspondence}, $\semantics{C^F} = \semantics{D^G}$.
\end{description}
\end{proof}

\begin{lemma}[Comparison of Pauli and CSS fault weights]
\label{lem:css-pauli-weight-comparison}
For a CNOT-based circuit $C$ with nonzero ideal semantics and an
undetectable Pauli fault pattern $F$,
\[
  \operatorname{wt}_{C,\mathrm{CSS}}(F)
  \leq\operatorname{wt}_C(F)
  \leq2\operatorname{wt}_{C,\mathrm{CSS}}(F).
\]
\end{lemma}

\begin{proof}
Choose a minimum-weight Pauli representative $G$ of $F$.
Applying \Cref{lem:css-component-matching} with $D=C$, we find that $G_X$ and $G_Z$ represent $F_X$ and $F_Z$, respectively.
Thus, for $P\in\{X,Z\}$,
\[
  \operatorname{wt}_{C,P}(F_P)
  \leq n(G_P)\leq n(G)=\operatorname{wt}_C(F).
\]
By \cref{def:css-circuit-fault-equivalence}, $\operatorname{wt}_{C,\mathrm{CSS}}(F) =\max\bigl\{\operatorname{wt}_{C,P}(F_P);P\in \{X,Z\}\bigr\}\leq \operatorname{wt}_C(F)$, which proves the first inequality.

Conversely, choose minimum-weight pure representatives $G_X$ and $G_Z$ of $F_X$ and $F_Z$, respectively. Thus, $\semantics{C^{G_X}} = \semantics{C^{F_X}}$ and $\semantics{C^{G_Z}} = \semantics{C^{F_Z}}$. Moreover, $n(G_X) = \operatorname{wt}(F_X)$ and $n(G_Z) = \operatorname{wt}(F_Z)$. By \Cref{lem:css-component-matching}, their product
$G=G_XG_Z$ has the same faulty semantics as $F$. By \cref{def:css-circuit-fault-equivalence},

\begin{align*}
  \operatorname{wt}_C(F)
  &\leq n(G_X)+n(G_Z)=\operatorname{wt}(F_X)+\operatorname{wt}(F_Z)\leq 2\max\{\operatorname{wt}(F_X),\operatorname{wt}(F_Z)\} =2\operatorname{wt}_{C,\mathrm{CSS}}(F).
\end{align*}
\end{proof}


\begin{proposition}[Pauli fault equivalence implies CSS fault equivalence]
\label{prop:pauli-implies-css-fault-equivalence}
Let $C$ and $D$ be circuits composed of CNOT gates, idle steps,
preparations of $\lvert0\rangle$ and $\lvert+\rangle$, and destructive
postselected measurements $\langle0\rvert$ and $\langle+\rvert$, without
classical feed-forward. Nondestructive measurements of purely $X$- or
$Z$-type Pauli operators are allowed through implementations using
ancillas and these operations. Under the circuit-level noise model, if
$C$ and $D$ have the same nonzero ideal semantics and are Pauli
fault-equivalent, then they are CSS fault-equivalent.
\end{proposition}

\begin{proof}
We first prove the matching condition from $C$ to $D$. Let $F$ be an
undetectable Pauli fault on $C$, and write $F=F_XF_Z$, with
the components taken at the original fault locations. By
\Cref{lem:css-component-matching}, both components are undetectable.
Pauli fault equivalence means that there exists a Pauli fault $H$ on $D$ such that
\[
  \semantics{D^H}=\semantics{C^{F_X}},
  \qquad \operatorname{wt}(H)\leq\operatorname{wt}(F_X).
\]

Since $F_X$ is undetectable, $H$ is undetectable. Applying the forward direction of the second part of \Cref{lem:css-component-matching} to this equality gives

\[
  \semantics{D^{H_X}}
  =\semantics{C^{(F_X)_X}}
  =\semantics{C^{F_X}}.
\]

Let $G_X=H_X$. Since
$\operatorname{wt}_{\mathrm{CSS}}(H)=\max\{\operatorname{wt}(G_X),\operatorname{wt}(H_Z)\}$
and $H$ is undetectable, \Cref{lem:css-pauli-weight-comparison} gives
\[
  \operatorname{wt}(G_X)
  \leq\operatorname{wt}_{\mathrm{CSS}}(H)
  \leq\operatorname{wt}(H)
  \leq\operatorname{wt}(F_X).
\]

Therefore, there exists a purely $X$-type fault $G_X$ on $D$ satisfying $\semantics{D^{G_X}}=\semantics{C^{F_X}}$ and $\operatorname{wt}(G_X)\leq\operatorname{wt}(F_X)$. Similarly, there is a purely $Z$-type fault $G_Z$ on $D$ satisfying $\semantics{D^{G_Z}}=\semantics{C^{F_Z}}$ and $\operatorname{wt}(G_Z)\leq\operatorname{wt}(F_Z)$.

Let $G=G_XG_Z$. The reverse direction of the second part of \Cref{lem:css-component-matching} yields $\semantics{D^G}=\semantics{C^F}$, and
\begin{align*}
  \operatorname{wt}_{\mathrm{CSS}}(G)=\max\{\operatorname{wt}(G_X),\operatorname{wt}(G_Z)\}\leq\max\{\operatorname{wt}(F_X),\operatorname{wt}(F_Z)\}=\operatorname{wt}_{\mathrm{CSS}}(F).
\end{align*}
Interchanging $C$ and $D$ proves the converse matching condition.
\end{proof}

\section{Edge-flip noise on ZX diagrams}
\label{sec:edge-flip}

\begin{definition}[Pauli edge-flip noise model]
\label{def:pauli-edge-flip-noise}
Each edge of a ZX diagram $D$ is a fault location. A fault pattern $F$ assigns
a Pauli operator $F_e\in\{I,X,Y,Z\}$ to each edge $e$, where $I$ means no
fault. The faulty diagram $D^F$ is obtained by inserting these operators on
the corresponding edges. The number of flipped edges is
\[
  n(F):=\bigl|\{e\in E(D):F_e\neq I\}\bigr|.
\]
Using the minimum-representative convention, define the fault weight by
\[
  \operatorname{wt}_D(F)
  :=\min_{\semantics{D^G}=\semantics{D^F}} n(G),
\]
where $G$ ranges over Pauli fault patterns on $D$. Thus, $\operatorname{wt}_D(F)$ is the minimum number of
edge flips needed to produce the same faulty semantics as $F$.
\end{definition}

\begin{definition}[CSS edge-flip noise model]
\label{def:css-edge-flip-noise}
Each edge of a ZX diagram $D$ is a fault location. For a fixed
$P\in\{X,Z\}$, a pure $P$-type fault pattern $F$ assigns $F_e\in\{I,P\}$
to each edge $e$, where $I$ means no fault. Its number of flipped edges
$n(F)$ is as in \Cref{def:pauli-edge-flip-noise}. Define its fault weight by
\[
  \operatorname{wt}_{D,P}(F)
  :=\min_{\substack{G_e\in\{I,P\}\ \text{for all }e\in E(D)\\
                   \semantics{D^G}=\semantics{D^F}}} n(G).
\]
Since $P$ is fixed throughout the minimisation, one type of fault is counted at a time: for
$P=X$, $n(G)$ counts only $X$-type edge flips.

For a general Pauli fault pattern $F$, let $F_X$ and $F_Z$ be its pure $X$-
and pure $Z$-type components on the original edges, obtained using $Y=iXZ$.
Thus, $F=F_XF_Z$. Define its \emph{CSS fault weight} by
\[
  \operatorname{wt}_{D,\mathrm{CSS}}(F)
  :=\max\bigl\{\operatorname{wt}_{D,X}(F_X),
               \operatorname{wt}_{D,Z}(F_Z)\bigr\}.
\]
\end{definition}

Pauli and CSS fault equivalence between a circuit and a ZX diagram are
defined as in \Cref{def:pauli-circuit-fault-equivalence} and
\Cref{def:css-circuit-fault-equivalence}, using these noise models on the
diagram.

\begin{lemma}
\label{lem:cnot-zx-not-pauli-equivalent}
Let $C$ be a two-qubit circuit consisting of a single CNOT gate and let $D$ be its
ZX diagram. Consider $C$ under the circuit-level Pauli noise model and $D$
under the Pauli edge-flip noise model. Then $C$ and $D$ are not fault-equivalent under their respective Pauli noise models.
\[
\tikzfig{figures/CNOT-map-not-FE}
\]
\end{lemma}

\begin{proof}
As shown in \Cref{fig:cnot-fault-locations}, there is only one fault location in $C$, where a fault $\epsilon_1\in\mathcal P_2$
is inserted after the CNOT gate. There are five fault locations in $D$, each corresponds to one of its five edges. For $2\leq j\leq6$, each edge flip $\epsilon_j\in\mathcal P_1$. 

\begin{figure}[H]
  \centering
  \tikzfig{figures/cnot-fault-locations}
  \caption{Fault locations of the CNOT circuit $C$ and its ZX diagram $D$.
  The fault $\epsilon_1\in\mathcal P_2$ acts after the CNOT, and each
  $\epsilon_j\in\mathcal P_1$, for $j=2,\dots,6$, acts on one edge of $D$.}
  \label{fig:cnot-fault-locations}
\end{figure}

Since there are no measurements in $C$ nor $D$, every fault on $C$ or $D$ is undetectable. To show $C$ is not fault-bounded by $D$, it suffices to find a fault on $C$ that cannot be matched by any fault on $D$ with the same or lower weight. Let $F = X_1Z_2$. Then $\operatorname{wt}(F) = 1$. We show that realising the same faulty semantics in $D$ requires a fault of weight greater than one.

To start, consider all fault patterns $G$ on $D$ such that $n(G)= 1$. Its effect on the outputs can be determined by propagating the single edge flip through the ZX diagram. By the identities in \Cref{fig:cnot-zx-pauli-propagation},

\begin{alignat*}{3}
  \text{control input}\;\; (\epsilon_2):\quad X &\mapsto X_1X_2, &\qquad Z &\mapsto Z_1, &\qquad Y &\mapsto Y_1X_2,\\
  \text{target input} \;\;(\epsilon_3):\quad X &\mapsto X_2, &\qquad Z &\mapsto Z_1Z_2, &\qquad Y &\mapsto Z_1Y_2,\\
  \text{middle} \;\;(\epsilon_4):\quad X &\mapsto X_2, &\qquad Z &\mapsto Z_1, &\qquad Y &\mapsto Z_1X_2,\\
  \text{control output} \;\;(\epsilon_5):\quad X &\mapsto X_1, &\qquad Z &\mapsto Z_1, &\qquad Y &\mapsto Y_1,\\
  \text{target output} \;\;(\epsilon_6):\quad X &\mapsto X_2, &\qquad Z &\mapsto Z_2, &\qquad Y &\mapsto Y_2.
\end{alignat*}

Since $X_1Z_2$ does not appear in the list above, no fault $G$ on $D$ with $n(G)\leq1$ satisfies $\semantics{D^G}=\semantics{C^F}$, so the matching condition from $C$ to $D$ fails. Therefore, $C$ and $D$ are not fault-equivalent under their respective Pauli noise models.
\end{proof}

\begin{remark}
$D$ is fault-bounded by $C$. Take any fault $G$ on $D$. Using the identities
  above, push all inserted Paulis to the outputs and combine them into a
  two-qubit Pauli $P$. The circuit-level noise model allows a faulty CNOT to
  introduce any two-qubit Pauli on its outputs, and this is counted as a
  weight-one fault. Hence, when $P\neq I$, the fault $F=P$ on $C$ satisfies
  $\semantics{D^G}=\semantics{C^F}$ and
  $\operatorname{wt}(F)=1\leq\operatorname{wt}(G)$, since $G$ acts
  nontrivially. When $P=I$, choose $F=I$; then
  $\semantics{D^G}=\semantics{C^F}$ and
  $\operatorname{wt}(F)=0=\operatorname{wt}(G)$.
\end{remark}  

\begin{figure}[H]
  \centering
  \tikzfig{figures/circuit-zx-map}
  \caption{A local translation between circuit-like Pauli ZX diagrams
  and CNOT-based circuits.}
  \label{fig:circuit-zx-map}
\end{figure}

\begin{proposition}
\label{prop:circuit-zx-css-equivalent}
Let $D$ be a circuit-like phase-free ZX diagram, and let $C$ be its circuit
representation, obtained by mapping each CNOT diagram to a CNOT, each
one-legged $Z$- and $X$-spider on an input to $\lvert+\rangle$ and
$\lvert0\rangle$, and each one-legged $Z$- and $X$-spider on an output to
$\langle+\rvert$ and $\langle0\rvert$, respectively, as shown in
\Cref{fig:circuit-zx-map}.
Consider $C$ under the circuit-level CSS noise model and $D$ under the CSS
edge-flip noise model. Then $C$ and $D$ are CSS fault-equivalent. Here, a
bare circuit wire is counted as one noisy location, and the measurements in
$C$ are destructive.
\end{proposition}

\begin{remark}[Intuition]
In each direction, we take minimum-weight representatives of the $X$- and
$Z$-components of a fault, translate each component separately by case
distinction on preparations, destructive measurements, idle wires, and
CNOTs, and preserve its faulty semantics without increasing the fault count.
Recombining the components with \Cref{lem:css-component-matching} then
preserves the faulty semantics without increasing the CSS weight. In the
direction from $D$ to $C$, the edge flips at a CNOT diagram are pushed to its
outputs and combined into at most one gate fault.
\figtodo{notes p.~25: the local pieces of both case distinctions, including
$C^F$ for a CNOT and $D^G$ for its ZX diagram.}
\end{remark}

\begin{proof}
\Cref{fig:fault-map} shows how single faults at preparations, measurements,
and idle wires translate into edge flips; these are Cases~1--3 below.

\begin{figure}[H]
  \centering
  \tikzfig{figures/fault-map}
  \caption{Local translation of single $X$ and $Z$ faults at preparations,
  destructive measurements, and idle wires into edge flips. Each fault $F$
  with $n(F)=1$ corresponds to an edge flip $G$ with $n(G)=1$.}
  \label{fig:fault-map}
\end{figure}

\begin{description}
  \item[(From $C$ to $D$)] Let $F$ be an undetectable fault on $C$. Write
  $F=F_XF_Z$ at the original fault locations. Replace $F_X$ and $F_Z$ by
  minimum-weight representatives such that
  $n(F_P)=\operatorname{wt}_{C,P}(F_P)$ for $P\in\{X,Z\}$. Translate each
  component separately into edge flips on $D$, by case distinction on the
  local circuit-to-ZX translation.
  \begin{description}
    \item[Case 1 (preparation)] For a preparation fault $F\in\mathcal P_1$,
    let $G=F$ on the corresponding output edge in $D$. This preserves the
    faulty semantics and uses one edge flip per preparation fault, so
    $n(G)=n(F)$.
    \item[Case 2 (measurement)] Represent a measurement outcome flip by
    $F=Z$ for an $X$-basis measurement, or $F=X$ for a $Z$-basis
    measurement. Let $G=F$ on the corresponding incoming edge in $D$. This
    preserves the faulty semantics and uses one edge flip per measurement
    fault, so $n(G)=n(F)$.
    \item[Case 3 (idle wire)] For an idling fault $F\in\mathcal P_1$, let
    $G=F$ on the corresponding wire edge in $D$. This preserves the faulty
    semantics and uses one edge flip per idling fault, so $n(G)=n(F)$.
    \item[Case 4 (CNOT)] Let $F\in\mathcal P_2$ be a CNOT fault. If $F$ acts
    nontrivially on one qubit, let $G=F$ on the corresponding output edge in
    $D$. If $F$ acts nontrivially on both qubits, then $F$ is $X_1X_2$ or
    $Z_1Z_2$, which is represented by $X$ on the control input edge or $Z$
    on the target input edge, respectively. In both cases,
    $\semantics{C^F}=\semantics{D^G}$ and $n(G)=n(F)$.
    \figtodo{notes p.~23: the four single-qubit CNOT faults on output edges,
    and $X_1X_2$, $Z_1Z_2$ on input edges.}
  \end{description}
  Therefore, each faulty circuit location requires at most one edge flip per
  $X$- and $Z$-component. When several locations map to the same edge, their
  flips combine, which does not increase the count. Hence, for each
  $P\in\{X,Z\}$, $\semantics{D^{G_P}}=\semantics{C^{F_P}}$ and
  \[
    \operatorname{wt}_{D,P}(G_P)\leq n(G_P)\leq n(F_P)
    =\operatorname{wt}_{C,P}(F_P).
  \]
  Take $G=G_XG_Z$. By \Cref{lem:css-component-matching},
  $\semantics{D^G}=\semantics{C^F}$. By \Cref{def:css-edge-flip-noise},
  \[
    \operatorname{wt}_{D,\mathrm{CSS}}(G)
    =\max_{P\in\{X,Z\}}\operatorname{wt}_{D,P}(G_P)
    \leq\max_{P\in\{X,Z\}}\operatorname{wt}_{C,P}(F_P)
    =\operatorname{wt}_{C,\mathrm{CSS}}(F).
  \]
  \item[(From $D$ to $C$)] We show similarly that each edge flip requires at
  most one circuit location per $X$- and $Z$-component. Let $G$ be an
  undetectable fault on $D$. Write $G=G_XG_Z$ at the original edges. Replace
  $G_X$ and $G_Z$ by minimum-weight representatives such that
  $n(G_P)=\operatorname{wt}_{D,P}(G_P)$ for $P\in\{X,Z\}$. Translate each
  component separately into a circuit fault on $C$, by case distinction on
  the local ZX-to-circuit translation.
  \begin{description}
    \item[Cases 1--3] When a faulty edge lies immediately after a state
    preparation, immediately before a destructive measurement, or along an
    idle wire, there is a circuit fault producing the same faulty semantics
    with no greater fault count.
    \item[Case 4 (CNOT)] By the propagation identities in the proof of
    \Cref{lem:cnot-zx-not-pauli-equivalent}, any $P$-type flips on the five
    edges of a CNOT diagram can be pushed to its outputs. If the resulting
    $P$-type Pauli is nontrivial, it is a weight-one CNOT fault.
    \figtodo{notes p.~24: $D^G$ for a CNOT diagram with flips
    $\mathcal E_2,\dots,\mathcal E_6$ on its five edges.}
  \end{description}
  We translate each flipped edge only once. Hence, for each
  $P\in\{X,Z\}$, the translated fault $F_P$ satisfies
  $\semantics{C^{F_P}}=\semantics{D^{G_P}}$ and
  \[
    \operatorname{wt}_{C,P}(F_P)\leq n(F_P)\leq n(G_P)
    =\operatorname{wt}_{D,P}(G_P).
  \]
  Take $F=F_XF_Z$. By \Cref{lem:css-component-matching}, with the roles of
  $C$ and $D$ interchanged, $\semantics{C^F}=\semantics{D^G}$. By
  \Cref{def:css-circuit-fault-equivalence},
  \[
    \operatorname{wt}_{C,\mathrm{CSS}}(F)
    =\max_{P\in\{X,Z\}}\operatorname{wt}_{C,P}(F_P)
    \leq\max_{P\in\{X,Z\}}\operatorname{wt}_{D,P}(G_P)
    =\operatorname{wt}_{D,\mathrm{CSS}}(G).
  \]
\end{description}
\end{proof}

\begin{example}
Inside a larger circuit, subsequent operations may make even a
state-changing fault harmless. Consider an $X$ fault after $\lvert0\rangle$,
followed by an $X$-basis measurement postselected on $\langle+\rvert$. The postselection gives the same amplitude for both states, so the fault
$F=X$ is harmless for the whole circuit. Hence, $\operatorname{wt}(F)=0$
even though $n(F)=1$.


\[
\tikzfig{figures/edge-case}
\]

\end{example}

%% file: macros/floquet.tikzstyles
\tikzstyle{Z bold phase dot}=[minimum size=5mm, font={\small\boldmath}, shape=rectangle, rounded corners=2mm, inner sep=0.1mm, outer sep=-2mm, scale=0.8, tikzit shape=circle, draw=black, fill=zxGreen, tikzit draw=blue, tikzit fill={rgb,255: red,216; green,248; blue,216}, line width=1.2pt]
\tikzstyle{Z dot}=[inner sep=0mm, minimum size=2mm, shape=circle, draw=black, fill=zxGreen, tikzit fill={rgb,255: red,216; green,248; blue,216}, outer sep=-0.5mm]
\tikzstyle{Z phase dot}=[draw=black, fill=zxGreen, shape=rectangle, minimum size=4.5mm, rounded corners=1.8mm, inner sep=0.5mm, outer sep=-0.5mm, scale=0.8, tikzit shape=circle, tikzit fill={rgb,255: red,216; green,248; blue,216}, font={\footnotesize\boldmath}]
\tikzstyle{X bold phase dot}=[minimum size=5mm, font={\small\boldmath}, shape=rectangle, rounded corners=2mm, inner sep=0.1mm, outer sep=-2mm, scale=0.8, tikzit shape=circle, draw=black, fill=zxRed, tikzit draw=blue, tikzit fill={rgb,255: red,221; green,165; blue,165}, line width=1.2pt]
\tikzstyle{X dot}=[shape=circle, draw=black, fill=zxRed, tikzit fill={rgb,255: red,221; green,165; blue,165}, inner sep=0 mm, minimum size=2 mm, outer sep=-0.5mm]
\tikzstyle{X phase dot}=[Z phase dot, draw=black, fill=zxRed, tikzit fill={rgb,255: red,221; green,165; blue,165}]
\tikzstyle{hadamard}=[fill=zxHad, draw=black, shape=rectangle, inner sep=0.6mm, minimum height=1.5mm, minimum width=1.5mm, tikzit fill=yellow, font={\scriptsize}]
\tikzstyle{box}=[draw=black, shape=rectangle, fill=white, minimum size=1em, inner sep=0.2em, scale=0.85, font={\scriptsize}, outer sep=-0.5mm]
\tikzstyle{black dot}=[fill=black, draw=black, shape=circle, inner sep=1pt, outer sep=-3pt]
\tikzstyle{white dot}=[fill=white, draw=black, shape=circle, inner sep=1.5pt, outer sep=-3pt]
\tikzstyle{sLabel}=[font={\scriptsize}, tikzit draw=black, auto]
\tikzstyle{floatingLabel}=[rectangle, fill=white, inner sep=0pt, font={\scriptsize}, label distance=1mm, fill opacity=.5, text opacity=1]
\tikzstyle{not}=[draw=black, circle, addcross, minimum size=2mm, outer sep=-3pt, inner sep=0mm]
\tikzstyle{meter}=[draw=black, fill=white, shape=rectangle, addmeter]
\tikzstyle{Z dot thick}=[inner sep=0mm, minimum size=2mm, shape=circle, draw=black, fill=zxGreen, tikzit fill={rgb,255: red,216; green,248; blue,216}, outer sep=-0.5mm, line width=1pt]
\tikzstyle{X dot thick}=[inner sep=0mm, minimum size=2mm, shape=circle, draw=black, fill=zxRed, tikzit fill={rgb,255: red,221; green,165; blue,165}, outer sep=-0.5mm, line width=1pt]
\tikzstyle{fault-location}=[fill=white, draw=black, shape=circle, minimum size=2mm, inner sep=0mm, outer sep=-0.5 mm, regular polygon, regular polygon sides=8]
\tikzstyle{fault-location-faulty}=[fill=white, draw={rgb,255: red,191; green,0; blue,64}, shape=circle, minimum size=2mm, inner sep=0mm, outer sep=-0.5 mm, regular polygon, regular polygon sides=8, minimum size=3mm]
\tikzstyle{new style 0}=[fill=white, draw=black, shape=circle]
\tikzstyle{cnot ctrl}=[fill=black, draw=black, shape=circle, inner sep=0pt, minimum width=1.2mm, tikzit category=circuit]

\tikzstyle{dashed-line}=[-, style=dashed, draw={rgb,255: red,128; green,128; blue,128}, line width=1pt]
\tikzstyle{lightdots}=[-, style=dotted, draw={rgb,255: red,64; green,64; blue,64}, line width=1pt]
\tikzstyle{hadamard edge}=[-, style=dashed, draw=blue]
\tikzstyle{X Web}=[-, preaction={line width=1mm, draw=zxDarkRed}, tikzit draw=red]
\tikzstyle{Z Web}=[-, preaction={line width=1.5mm, draw=zxDarkGreen}, tikzit draw=green]
\tikzstyle{error}=[-, draw=red, thick]
\tikzstyle{XZ Web}=[-, preaction={line width=1.8mm, draw=zxDarkGreen}, preaction={line width=1mm, draw=zxDarkRed}, tikzit draw=blue]
\tikzstyle{braceedge}=[-, decorate, decoration={brace, amplitude=2mm, raise=-1mm}]
\tikzstyle{arrow}=[->]
\tikzstyle{fault-free}=[-, draw={rgb,255: red,177; green,98; blue,255}, line width=1pt]
\tikzstyle{fault-free arrow}=[->, draw={rgb,255: red,177; green,98; blue,255}, line width=1pt]
\tikzstyle{new edge style 0}=[-, fill={rgb,255: red,211; green,211; blue,211}, draw=none]
\tikzstyle{XerrorPropagation}=[-, draw=red, line width=1pt]
\tikzstyle{XerrorPropagationArrow}=[->, draw=red, line width=1pt]
\tikzstyle{ZerrorPropagation}=[-, draw=green, line width=1pt]
\tikzstyle{ZerrorPropagationArrow}=[->, draw=green, line width=1pt]
\tikzstyle{fault-accounting-component}=[-, style=dashed, draw={rgb,255: red,0; green,128; blue,128}]
\tikzstyle{new edge style 2}=[-, style=dashed, draw={rgb,255: red,0; green,128; blue,128}]
\tikzstyle{blue edge}=[-, line width=1.6pt, draw={rgb,255: red,46; green,117; blue,182}]
\tikzstyle{graph arrow}=[->, >=stealth, shorten >=2mm, line width=1pt]
\tikzstyle{graph line}=[-, >=stealth, line width=1pt]
\tikzstyle{red dashed arrow}=[->, >=stealth, style=dashed, shorten >=2mm, draw={rgb,255: red,255; green,0; blue,0}, line width=1pt]
\tikzstyle{cnot_edge_dir}=[->, >=stealth, style=dashed, line width=1.5pt, draw=orange, shorten >=1mm]

%% file: figures/representation.tikz
\begin{tikzpicture}
	\begin{pgfonlayer}{nodelayer}
		\node [style=X dot] (17) at (-8.5, 2) {};
		\node [style=X dot] (18) at (-8.5, 0.5) {};
		\node [style=X dot] (19) at (-8.5, -1) {};
		\node [style=Z dot] (20) at (-6.5, 2.5) {};
		\node [style=Z dot] (21) at (-6.5, 1.5) {};
		\node [style=Z dot] (22) at (-6.5, 0.5) {};
		\node [style=Z dot] (23) at (-6.5, -0.5) {};
		\node [style=Z dot] (24) at (-6.5, -1.5) {};
		\node [style=Z dot] (25) at (-6.5, -2.5) {};
		\node [style=none] (27) at (-5.5, 2.5) {};
		\node [style=none] (28) at (-5.5, 1.5) {};
		\node [style=none] (29) at (-5.5, 0.5) {};
		\node [style=none] (30) at (-5.5, -0.5) {};
		\node [style=none] (31) at (-5.5, -1.5) {};
		\node [style=none] (32) at (-5.5, -2.5) {};
		\node [style=none] (35) at (-3.5, 0) {$\FaultEq$};
		\node [style=X dot] (36) at (-8.5, -2.5) {};
		\node [style=X dot] (37) at (-1.25, 2) {};
		\node [style=X dot] (38) at (-1.25, 0.5) {};
		\node [style=X dot] (39) at (-1.25, -1) {};
		\node [style=Z dot] (40) at (0.75, 2.5) {};
		\node [style=Z dot] (41) at (0.75, 1.5) {};
		\node [style=Z dot] (42) at (0.75, 0.5) {};
		\node [style=Z dot] (43) at (0.75, -0.5) {};
		\node [style=Z dot] (44) at (0.75, -1.5) {};
		\node [style=Z dot] (45) at (0.75, -2.5) {};
		\node [style=none] (46) at (1.75, 2.5) {};
		\node [style=none] (47) at (1.75, 1.5) {};
		\node [style=none] (48) at (1.75, 0.5) {};
		\node [style=none] (49) at (1.75, -0.5) {};
		\node [style=none] (50) at (1.75, -1.5) {};
		\node [style=none] (51) at (1.75, -2.5) {};
		\node [style=X dot] (52) at (-1.25, -2.25) {};
	\end{pgfonlayer}
	\begin{pgfonlayer}{edgelayer}
		\draw [style=fault-free] (17) to (20);
		\draw (32.center) to (25);
		\draw (31.center) to (24);
		\draw (30.center) to (23);
		\draw (29.center) to (22);
		\draw (28.center) to (21);
		\draw (27.center) to (20);
		\draw [style=fault-free] (18) to (20);
		\draw [style=fault-free] (19) to (21);
		\draw [style=fault-free, in=195, out=15] (37) to (40);
		\draw (51.center) to (45);
		\draw (50.center) to (44);
		\draw (49.center) to (43);
		\draw (48.center) to (42);
		\draw (47.center) to (41);
		\draw (46.center) to (40);
		\draw [style=fault-free] (38) to (41);
		\draw [style=fault-free] (52) to (43);
		\draw [style=fault-free] (39) to (42);
		\draw [style=fault-free] (37) to (44);
		\draw [style=fault-free] (38) to (44);
		\draw [style=fault-free] (39) to (44);
		\draw [style=fault-free] (52) to (44);
		\draw [style=fault-free] (37) to (45);
		\draw [style=fault-free] (38) to (45);
		\draw [style=fault-free] (39) to (45);
		\draw [style=fault-free] (45) to (52);
		\draw [style=fault-free] (17) to (24);
		\draw [style=fault-free] (17) to (25);
		\draw [style=fault-free] (18) to (21);
		\draw [style=fault-free] (19) to (22);
		\draw [style=fault-free] (36) to (23);
		\draw [style=fault-free] (22) to (36);
	\end{pgfonlayer}
\end{tikzpicture}

%% file: figures/rewrites.tikz
\begin{tikzpicture}
	\begin{pgfonlayer}{nodelayer}
		\node [style=X dot] (0) at (-15.25, 2) {};
		\node [style=X dot] (1) at (-15.25, 0.5) {};
		\node [style=X dot] (2) at (-15.25, -1) {};
		\node [style=Z dot] (3) at (-13.25, 2.5) {};
		\node [style=Z dot] (4) at (-13.25, 1.5) {};
		\node [style=Z dot] (5) at (-13.25, 0.5) {};
		\node [style=Z dot] (6) at (-13.25, -0.5) {};
		\node [style=Z dot] (7) at (-13.25, -1.5) {};
		\node [style=Z dot] (8) at (-13.25, -2.5) {};
		\node [style=none] (9) at (-12.25, 2.5) {};
		\node [style=none] (10) at (-12.25, 1.5) {};
		\node [style=none] (11) at (-12.25, 0.5) {};
		\node [style=none] (12) at (-12.25, -0.5) {};
		\node [style=none] (13) at (-12.25, -1.5) {};
		\node [style=none] (14) at (-12.25, -2.5) {};
		\node [style=X dot] (15) at (-15.25, -2.25) {};
		\node [style=none] (16) at (3, 0) {$\FaultEq$};
		\node [style=X dot] (17) at (6.25, 2.5) {};
		\node [style=X dot] (18) at (6.25, 1.5) {};
		\node [style=X dot] (19) at (6.25, 0.5) {};
		\node [style=Z dot] (20) at (6.25, -1.5) {};
		\node [style=Z dot] (21) at (6.25, -2.5) {};
		\node [style=none] (22) at (7.25, 2.5) {};
		\node [style=none] (23) at (7.25, 1.5) {};
		\node [style=none] (24) at (7.25, 0.5) {};
		\node [style=none] (25) at (7.25, -0.5) {};
		\node [style=none] (26) at (7.25, -1.5) {};
		\node [style=none] (27) at (7.25, -2.5) {};
		\node [style=X dot] (28) at (6.25, -0.5) {};
		\node [style=none] (29) at (-10.5, 0) {$\FaultEq$};
		\node [style=X dot] (30) at (-8.5, 2) {};
		\node [style=X dot] (31) at (-8.5, 0.5) {};
		\node [style=X dot] (32) at (-8.5, -1) {};
		\node [style=Z dot] (33) at (-6.5, 2.5) {};
		\node [style=Z dot] (34) at (-6.5, 1.5) {};
		\node [style=Z dot] (35) at (-6.5, 0.5) {};
		\node [style=Z dot] (36) at (-6.5, -0.5) {};
		\node [style=Z dot] (37) at (-6.5, -1.5) {};
		\node [style=Z dot] (38) at (-6.5, -2.5) {};
		\node [style=none] (39) at (-5.5, 2.5) {};
		\node [style=none] (40) at (-5.5, 1.5) {};
		\node [style=none] (41) at (-5.5, 0.5) {};
		\node [style=none] (42) at (-5.5, -0.5) {};
		\node [style=none] (43) at (-5.5, -1.5) {};
		\node [style=none] (44) at (-5.5, -2.5) {};
		\node [style=X dot] (45) at (-8.5, -2.25) {};
		\node [style=none] (46) at (-3.75, 0) {$\FaultEq$};
		\node [style=X dot] (47) at (-1.75, 2) {};
		\node [style=X dot] (48) at (-1.75, 0.5) {};
		\node [style=X dot] (49) at (-1.75, -1) {};
		\node [style=none] (50) at (0.25, 2.5) {};
		\node [style=none] (51) at (0.25, 1.5) {};
		\node [style=none] (52) at (0.25, 0.5) {};
		\node [style=none] (53) at (0.25, -0.5) {};
		\node [style=Z dot] (54) at (0.25, -1.5) {};
		\node [style=Z dot] (55) at (0.25, -2.5) {};
		\node [style=none] (56) at (1.25, 2.5) {};
		\node [style=none] (57) at (1.25, 1.5) {};
		\node [style=none] (58) at (1.25, 0.5) {};
		\node [style=none] (59) at (1.25, -0.5) {};
		\node [style=none] (60) at (1.25, -1.5) {};
		\node [style=none] (61) at (1.25, -2.5) {};
		\node [style=X dot] (62) at (-1.75, -2.25) {};
		\node [style=none] (63) at (8.75, 0) {$\FaultEq$};
		\node [style=X dot] (64) at (12, 2.5) {};
		\node [style=X dot] (65) at (12, 1.5) {};
		\node [style=X dot] (66) at (12, 0.5) {};
		\node [style=Z dot] (67) at (12, -1.5) {};
		\node [style=Z dot] (68) at (12, -2.5) {};
		\node [style=none] (69) at (13, 2.5) {};
		\node [style=none] (70) at (13, 1.5) {};
		\node [style=none] (71) at (13, 0.5) {};
		\node [style=none] (72) at (13, -0.5) {};
		\node [style=none] (73) at (13, -1.5) {};
		\node [style=none] (74) at (13, -2.5) {};
		\node [style=X dot] (75) at (12, -0.5) {};
	\end{pgfonlayer}
	\begin{pgfonlayer}{edgelayer}
		\draw [style=fault-free, in=195, out=15] (0) to (3);
		\draw (14.center) to (8);
		\draw (13.center) to (7);
		\draw (12.center) to (6);
		\draw (11.center) to (5);
		\draw (10.center) to (4);
		\draw (9.center) to (3);
		\draw [style=fault-free] (1) to (4);
		\draw [style=fault-free] (15) to (6);
		\draw [style=fault-free] (2) to (5);
		\draw [style=fault-free] (0) to (7);
		\draw [style=fault-free] (1) to (7);
		\draw [style=fault-free] (2) to (7);
		\draw [style=fault-free] (15) to (7);
		\draw [style=fault-free] (0) to (8);
		\draw [style=fault-free] (1) to (8);
		\draw [style=fault-free] (2) to (8);
		\draw [style=fault-free] (8) to (15);
		\draw (27.center) to (21);
		\draw (26.center) to (20);
		\draw [style=fault-free, bend right=60] (17) to (20);
		\draw [style=fault-free, bend right=45, looseness=1.25] (18) to (20);
		\draw [style=fault-free, bend right=45] (19) to (20);
		\draw [style=fault-free] (28) to (20);
		\draw [style=fault-free, bend right=75, looseness=1.25] (17) to (21);
		\draw [style=fault-free, bend right=60, looseness=1.25] (18) to (21);
		\draw [style=fault-free, bend right=45, looseness=1.50] (19) to (21);
		\draw [style=fault-free, bend left=60, looseness=1.25] (21) to (28);
		\draw (22.center) to (17);
		\draw (23.center) to (18);
		\draw (25.center) to (28);
		\draw (24.center) to (19);
		\draw [in=195, out=15] (30) to (33);
		\draw (44.center) to (38);
		\draw (43.center) to (37);
		\draw (42.center) to (36);
		\draw (41.center) to (35);
		\draw (40.center) to (34);
		\draw (39.center) to (33);
		\draw (31) to (34);
		\draw (45) to (36);
		\draw (32) to (35);
		\draw [style=fault-free] (30) to (37);
		\draw [style=fault-free] (31) to (37);
		\draw [style=fault-free] (32) to (37);
		\draw [style=fault-free] (45) to (37);
		\draw [style=fault-free] (30) to (38);
		\draw [style=fault-free] (31) to (38);
		\draw [style=fault-free] (32) to (38);
		\draw [style=fault-free] (38) to (45);
		\draw [in=195, out=15] (47) to (50.center);
		\draw (61.center) to (55);
		\draw (60.center) to (54);
		\draw (59.center) to (53.center);
		\draw (58.center) to (52.center);
		\draw (57.center) to (51.center);
		\draw (56.center) to (50.center);
		\draw (48) to (51.center);
		\draw (62) to (53.center);
		\draw (49) to (52.center);
		\draw [style=fault-free] (47) to (54);
		\draw [style=fault-free] (48) to (54);
		\draw [style=fault-free] (49) to (54);
		\draw [style=fault-free] (62) to (54);
		\draw [style=fault-free] (47) to (55);
		\draw [style=fault-free] (48) to (55);
		\draw [style=fault-free] (49) to (55);
		\draw [style=fault-free] (55) to (62);
		\draw (74.center) to (68);
		\draw (73.center) to (67);
		\draw [bend right=60] (64) to (67);
		\draw [bend right=45, looseness=1.25] (65) to (67);
		\draw [bend right=45] (66) to (67);
		\draw (75) to (67);
		\draw [bend right=75, looseness=1.25] (64) to (68);
		\draw [bend right=60, looseness=1.25] (65) to (68);
		\draw [bend right=45, looseness=1.50] (66) to (68);
		\draw [bend left=60, looseness=1.25] (68) to (75);
		\draw (69.center) to (64);
		\draw (70.center) to (65);
		\draw (72.center) to (75);
		\draw (71.center) to (66);
	\end{pgfonlayer}
\end{tikzpicture}

%% file: figures/decompose.tikz
\begin{tikzpicture}
	\begin{pgfonlayer}{nodelayer}
		\node [style=none] (0) at (-2.5, 0) {$\FaultEq$};
		\node [style=X dot] (1) at (-5.75, 2.5) {};
		\node [style=X dot] (2) at (-5.75, 1.5) {};
		\node [style=X dot] (3) at (-5.75, 0.5) {};
		\node [style=Z dot] (4) at (-5.75, -1.5) {};
		\node [style=Z dot] (5) at (-5.75, -2.5) {};
		\node [style=none] (6) at (-4.75, 2.5) {};
		\node [style=none] (7) at (-4.75, 1.5) {};
		\node [style=none] (8) at (-4.75, 0.5) {};
		\node [style=none] (9) at (-4.75, -0.5) {};
		\node [style=none] (10) at (-4.75, -1.5) {};
		\node [style=none] (11) at (-4.75, -2.5) {};
		\node [style=X dot] (12) at (-5.75, -0.5) {};
		\node [style=X dot] (13) at (2.75, 4) {};
		\node [style=X dot] (14) at (2.75, 3) {};
		\node [style=X dot] (15) at (2.75, 2) {};
		\node [style=none] (18) at (4.25, 4) {};
		\node [style=none] (19) at (4.25, 3) {};
		\node [style=none] (20) at (4.25, 2) {};
		\node [style=none] (21) at (4.25, 1) {};
		\node [style=none] (22) at (4.25, -1.25) {};
		\node [style=none] (23) at (4.25, -3.25) {};
		\node [style=X dot] (24) at (2.75, 1) {};
		\node [style=Z dot] (25) at (1.25, -0.5) {};
		\node [style=Z dot] (26) at (2, 0) {};
		\node [style=Z dot] (27) at (1.5, -1.25) {};
		\node [style=Z dot] (28) at (2.75, -0.5) {};
		\node [style=Z dot] (29) at (2.5, -1.25) {};
		\node [style=Z dot] (30) at (1.25, -2.5) {};
		\node [style=Z dot] (31) at (2, -2) {};
		\node [style=Z dot] (32) at (1.5, -3.25) {};
		\node [style=Z dot] (33) at (2.75, -2.5) {};
		\node [style=Z dot] (34) at (2.5, -3.25) {};
	\end{pgfonlayer}
	\begin{pgfonlayer}{edgelayer}
		\draw (11.center) to (5);
		\draw (10.center) to (4);
		\draw [bend right=60] (1) to (4);
		\draw [bend right=45, looseness=1.25] (2) to (4);
		\draw [bend right=45] (3) to (4);
		\draw (12) to (4);
		\draw [bend right=75, looseness=1.25] (1) to (5);
		\draw [bend right=60, looseness=1.25] (2) to (5);
		\draw [bend right=45, looseness=1.50] (3) to (5);
		\draw [bend left=60, looseness=1.25] (5) to (12);
		\draw (6.center) to (1);
		\draw (7.center) to (2);
		\draw (9.center) to (12);
		\draw (8.center) to (3);
		\draw (18.center) to (13);
		\draw (19.center) to (14);
		\draw (21.center) to (24);
		\draw (20.center) to (15);
		\draw (25) to (27);
		\draw (28) to (26);
		\draw (26) to (25);
		\draw (27) to (29);
		\draw (29) to (28);
		\draw (28) to (24);
		\draw (22.center) to (29);
		\draw [bend left=45] (26) to (15);
		\draw [bend left, looseness=0.75] (25) to (14);
		\draw [bend left=60, looseness=1.25] (27) to (13);
		\draw (30) to (32);
		\draw (33) to (31);
		\draw (31) to (30);
		\draw (32) to (34);
		\draw (34) to (33);
		\draw (34) to (23.center);
		\draw [bend right, looseness=1.25] (33) to (24);
		\draw (31) to (15);
		\draw [bend left=45, looseness=1.25] (30) to (14);
		\draw [bend right=75, looseness=1.25] (13) to (32);
	\end{pgfonlayer}
\end{tikzpicture}

%% file: figures/layout.tikz
\begin{tikzpicture}
	\begin{pgfonlayer}{nodelayer}
		\node [style=Z bold phase dot] (21) at (17, -2) {9};
		\node [style=Z bold phase dot] (22) at (12.5, 0) {7};
		\node [style=Z bold phase dot] (23) at (8.5, 0) {5};
		\node [style=Z bold phase dot] (24) at (12.5, -2) {8};
		\node [style=X bold phase dot] (25) at (17, 8) {1};
		\node [style=X bold phase dot] (26) at (12.5, 4) {3};
		\node [style=X bold phase dot] (27) at (17, 2) {4};
		\node [style=X bold phase dot] (28) at (12.5, 6) {2};
		\node [style=Z bold phase dot] (29) at (6, -2) {$r5$};
		\node [style=X bold phase dot] (30) at (12.5, 8) {$r1$};
		\node [style=X bold phase dot] (31) at (8.5, 4) {$r3$};
		\node [style=X bold phase dot] (32) at (12.5, 2) {$r4$};
		\node [style=X bold phase dot] (33) at (8.5, 6) {$r2$};
		\node [style=Z bold phase dot] (34) at (8.5, -2) {6};
		\node [style=Z bold phase dot] (35) at (12.5, -4) {10};
		\node [style=Z bold phase dot] (36) at (17, -4) {12};
		\node [style=Z bold phase dot] (37) at (21, -6) {14};
		\node [style=Z bold phase dot] (38) at (12.5, -6) {11};
		\node [style=Z bold phase dot] (39) at (17, -6) {13};
		\node [style=Z bold phase dot] (40) at (8.5, -6) {$r6$};
		\node [style=Z bold phase dot] (42) at (-14.75, -2) {9};
		\node [style=Z bold phase dot] (43) at (-13.75, -0.75) {7};
		\node [style=Z bold phase dot] (44) at (-14.75, 0.5) {5};
		\node [style=Z bold phase dot] (45) at (-16.25, -2) {8};
		\node [style=X bold phase dot] (46) at (-10.25, 8) {1};
		\node [style=X bold phase dot] (47) at (-10.25, 4) {3};
		\node [style=X bold phase dot] (48) at (-10.25, 2) {4};
		\node [style=X bold phase dot] (49) at (-10.25, 6) {2};
		\node [style=none] (50) at (-8.75, 8) {};
		\node [style=none] (51) at (-8.75, 6) {};
		\node [style=none] (52) at (-8.75, 4) {};
		\node [style=none] (53) at (-8.75, -2) {};
		\node [style=none] (54) at (-8.75, 2) {};
		\node [style=Z bold phase dot] (55) at (-16.25, 0.5) {$r5$};
		\node [style=X bold phase dot] (56) at (-12.25, 8) {$r1$};
		\node [style=X bold phase dot] (57) at (-12.25, 4) {$r3$};
		\node [style=X bold phase dot] (58) at (-12.25, 2) {$r4$};
		\node [style=X bold phase dot] (59) at (-12.25, 6) {$r2$};
		\node [style=Z bold phase dot] (60) at (-17.25, -0.75) {6};
		\node [style=Z bold phase dot] (61) at (-14.75, -3.5) {10};
		\node [style=Z bold phase dot] (62) at (-13.75, -4.75) {12};
		\node [style=Z bold phase dot] (63) at (-14.75, -6) {14};
		\node [style=Z bold phase dot] (64) at (-17.25, -4.75) {11};
		\node [style=Z bold phase dot] (65) at (-16.25, -6) {13};
		\node [style=Z bold phase dot] (66) at (-16.25, -3.5) {$r6$};
		\node [style=none] (67) at (-8.75, -6) {};
	\end{pgfonlayer}
	\begin{pgfonlayer}{edgelayer}
		\draw [style=graph arrow] (29) to (34);
		\draw [style=graph arrow] (34) to (24);
		\draw [style=graph arrow] (24) to (21);
		\draw [style=graph arrow] (40) to (38);
		\draw [style=graph arrow] (38) to (39);
		\draw [style=graph arrow] (39) to (37);
		\draw [style=graph arrow] (40) to (35);
		\draw [style=graph arrow] (35) to (36);
		\draw [style=graph arrow] (29) to (23);
		\draw [style=graph arrow] (23) to (22);
		\draw [style=red dashed arrow] (22) to (21);
		\draw [style=red dashed arrow] (36) to (37);
		\draw [style=graph arrow] (30) to (25);
		\draw [style=graph arrow] (33) to (28);
		\draw [style=graph arrow] (31) to (26);
		\draw [style=graph arrow] (32) to (27);
		\draw [style={cnot_edge_dir}] (29) to (31);
		\draw [style={cnot_edge_dir}] (40) to (26);
		\draw [style={cnot_edge_dir}] (29) to (33);
		\draw [style={cnot_edge_dir}] (40) to (28);
		\draw [style={cnot_edge_dir}] (33) to (38);
		\draw [style={cnot_edge_dir}] (31) to (35);
		\draw [style={cnot_edge_dir}] (23) to (32);
		\draw [style={cnot_edge_dir}] (32) to (36);
		\draw [style={cnot_edge_dir}] (35) to (27);
		\draw [style={cnot_edge_dir}] (34) to (30);
		\draw [style={cnot_edge_dir}] (30) to (39);
		\draw [style={cnot_edge_dir}] (38) to (25);
		\draw [style=graph line] (48) to (54.center);
		\draw [style=graph line] (53.center) to (42);
		\draw [style=graph line] (42) to (43);
		\draw [style=graph line] (43) to (44);
		\draw [style=graph line] (45) to (42);
		\draw [style=graph line] (47) to (52.center);
		\draw [style=graph line] (51.center) to (49);
		\draw [style=graph line] (46) to (50.center);
		\draw [style=graph line] (57) to (47);
		\draw [style=graph line] (58) to (48);
		\draw [style=graph line] (46) to (56);
		\draw [style=graph line] (55) to (44);
		\draw [style=graph line] (59) to (49);
		\draw [style=graph line] (60) to (45);
		\draw [style=graph line] (55) to (60);
		\draw [style=graph line] (61) to (62);
		\draw [style=graph line] (62) to (63);
		\draw [style=graph line] (65) to (63);
		\draw [style=graph line] (67.center) to (63);
		\draw [style=graph line] (65) to (64);
		\draw [style=graph line] (64) to (66);
		\draw [style=graph line] (66) to (61);
		\draw [style=graph line] (44) to (57);
		\draw [style=graph line, bend right=15] (61) to (47);
		\draw [style=graph line, in=180, out=90] (60) to (59);
		\draw [style=graph line, in=-165, out=135, looseness=1.25] (64) to (49);
		\draw [style=graph line] (43) to (58);
		\draw [style=graph line, bend right=15] (62) to (48);
		\draw [style=graph line, in=180, out=165, looseness=1.25] (45) to (56);
		\draw [style=graph line, in=-165, out=165, looseness=1.50] (65) to (46);
	\end{pgfonlayer}
\end{tikzpicture}

%% file: figures/sarah-figure.tikz
\begin{tikzpicture}
	\begin{pgfonlayer}{nodelayer}
		\node [style=Z bold phase dot] (41) at (1.5, -3) {9};
		\node [style=Z bold phase dot] (42) at (2.5, -1.75) {7};
		\node [style=Z bold phase dot] (43) at (1.5, -0.5) {5};
		\node [style=Z bold phase dot] (44) at (0, -3) {8};
		\node [style=X bold phase dot] (45) at (6, 7) {1};
		\node [style=X bold phase dot] (46) at (6, 3) {3};
		\node [style=X bold phase dot] (47) at (6, 1) {4};
		\node [style=X bold phase dot] (48) at (6, 5) {2};
		\node [style=none] (51) at (7.5, 7) {};
		\node [style=none] (52) at (7.5, 5) {};
		\node [style=none] (53) at (7.5, 3) {};
		\node [style=none] (55) at (7.5, -3) {};
		\node [style=none] (56) at (7.5, 1) {};
		\node [style=Z bold phase dot] (59) at (0, -0.5) {$r5$};
		\node [style=X bold phase dot] (60) at (4, 7) {$r1$};
		\node [style=X bold phase dot] (61) at (4, 3) {$r3$};
		\node [style=X bold phase dot] (62) at (4, 1) {$r4$};
		\node [style=X bold phase dot] (63) at (4, 5) {$r2$};
		\node [style=Z bold phase dot] (88) at (-1, -1.75) {6};
		\node [style=Z bold phase dot] (89) at (1.5, -4.5) {10};
		\node [style=Z bold phase dot] (90) at (2.5, -5.75) {12};
		\node [style=Z bold phase dot] (91) at (1.5, -7) {14};
		\node [style=Z bold phase dot] (92) at (-1, -5.75) {11};
		\node [style=Z bold phase dot] (94) at (0, -7) {13};
		\node [style=Z bold phase dot] (95) at (0, -4.5) {$r6$};
		\node [style=none] (96) at (7.5, -7) {};
		\node [style=Z bold phase dot] (122) at (21.75, -3) {9};
		\node [style=Z bold phase dot] (123) at (19.25, -1) {7};
		\node [style=Z bold phase dot] (124) at (13.75, -1) {5};
		\node [style=Z bold phase dot] (125) at (17.75, -3) {8};
		\node [style=X bold phase dot] (126) at (22.5, 7) {1};
		\node [style=X bold phase dot] (127) at (17, 3) {3};
		\node [style=X bold phase dot] (128) at (23.25, 1) {4};
		\node [style=X bold phase dot] (129) at (18.5, 5) {2};
		\node [style=none] (130) at (27.25, 7) {};
		\node [style=none] (131) at (27.25, 5) {};
		\node [style=none] (132) at (27.25, 3) {};
		\node [style=none] (133) at (27.25, -3) {};
		\node [style=none] (134) at (27.25, 1) {};
		\node [style=Z bold phase dot] (135) at (11.75, -3) {$r5$};
		\node [style=X bold phase dot] (136) at (17.75, 7) {$r1$};
		\node [style=X bold phase dot] (137) at (13.75, 3) {$r3$};
		\node [style=X bold phase dot] (138) at (19.25, 1) {$r4$};
		\node [style=X bold phase dot] (139) at (14.5, 5) {$r2$};
		\node [style=Z bold phase dot] (140) at (14.5, -3) {6};
		\node [style=Z bold phase dot] (141) at (17, -5) {10};
		\node [style=Z bold phase dot] (142) at (23.25, -5) {12};
		\node [style=Z bold phase dot] (143) at (25.75, -7) {14};
		\node [style=Z bold phase dot] (144) at (18.5, -7) {11};
		\node [style=Z bold phase dot] (145) at (22.5, -7) {13};
		\node [style=Z bold phase dot] (146) at (14.5, -7) {$r6$};
		\node [style=none] (147) at (27.25, -7) {};
		\node [style=Z bold phase dot] (148) at (42, -3) {9};
		\node [style=Z bold phase dot] (149) at (39.5, -1) {7};
		\node [style=Z bold phase dot] (150) at (34, -1) {5};
		\node [style=Z bold phase dot] (151) at (38, -3) {8};
		\node [style=X bold phase dot] (152) at (42.75, 7) {1};
		\node [style=X bold phase dot] (153) at (37.25, 3) {3};
		\node [style=X bold phase dot] (154) at (43.5, 1) {4};
		\node [style=X bold phase dot] (155) at (38.75, 5) {2};
		\node [style=none] (156) at (47.5, 7) {};
		\node [style=none] (157) at (47.5, 5) {};
		\node [style=none] (158) at (47.5, 3) {};
		\node [style=none] (159) at (47.5, -3) {};
		\node [style=none] (160) at (47.5, 1) {};
		\node [style=Z bold phase dot] (161) at (32, -3) {$r5$};
		\node [style=X bold phase dot] (162) at (38, 7) {$r1$};
		\node [style=X bold phase dot] (163) at (34, 3) {$r3$};
		\node [style=X bold phase dot] (164) at (39.5, 1) {$r4$};
		\node [style=X bold phase dot] (165) at (34.75, 5) {$r2$};
		\node [style=Z bold phase dot] (166) at (34.75, -3) {6};
		\node [style=Z bold phase dot] (167) at (37.25, -5) {10};
		\node [style=Z bold phase dot] (168) at (43.5, -5) {12};
		\node [style=Z bold phase dot] (169) at (46, -7) {14};
		\node [style=Z bold phase dot] (170) at (38.75, -7) {11};
		\node [style=Z bold phase dot] (171) at (42.75, -7) {13};
		\node [style=Z bold phase dot] (172) at (34.75, -7) {$r6$};
		\node [style=none] (173) at (47.5, -7) {};
		\node [style=X bold phase dot] (174) at (46, -5) {};
		\node [style=X bold phase dot] (175) at (47.5, -5) {};
		\node [style=X bold phase dot] (176) at (34.75, -5) {};
		\node [style=X bold phase dot] (177) at (33.25, -5) {};
		\node [style=X bold phase dot] (178) at (32, -1) {};
		\node [style=X bold phase dot] (179) at (30.5, -1) {};
		\node [style=X bold phase dot] (180) at (36.5, 7) {};
		\node [style=X bold phase dot] (181) at (33.5, 5) {};
		\node [style=X bold phase dot] (182) at (32.5, 3) {};
		\node [style=Z bold phase dot] (183) at (30.5, -3) {};
		\node [style=Z bold phase dot] (184) at (33.25, -7) {};
		\node [style=X bold phase dot] (185) at (36.5, 1) {};
		\node [style=X bold phase dot] (186) at (42, -1) {};
		\node [style=X bold phase dot] (187) at (44.25, -1) {};
		\node [style=none] (192) at (9, 7) {};
		\node [style=none] (193) at (15.5, 7) {};
		\node [style=none, align=center] (195) at (12, 8.75) {Rearrange the diagram accoding to \\ the topological generations of the DAG};
		\node [style=none] (196) at (28.75, 7) {};
		\node [style=none] (197) at (34.25, 7) {};
		\node [style=none] (198) at (31.5, 7.75) {Make circuit--like};
		\node [style=cnot ctrl] (199) at (62.5, -3) {};
		\node [style=cnot ctrl] (200) at (60, -1) {};
		\node [style=cnot ctrl] (201) at (54.5, -1) {};
		\node [style=cnot ctrl] (202) at (58.5, -3) {};
		\node [style=not] (203) at (63.25, 7) {};
		\node [style=not] (204) at (57.75, 3) {};
		\node [style=not] (205) at (64, 1) {};
		\node [style=not] (206) at (59.25, 5) {};
		\node [style=none] (207) at (68, 7) {};
		\node [style=none] (208) at (68, 5) {};
		\node [style=none] (209) at (68, 3) {};
		\node [style=none] (210) at (68, -3) {};
		\node [style=none] (211) at (68, 1) {};
		\node [style=cnot ctrl] (212) at (52.5, -3) {};
		\node [style=not] (213) at (58.5, 7) {};
		\node [style=not] (214) at (54.5, 3) {};
		\node [style=not] (215) at (60, 1) {};
		\node [style=not] (216) at (55.25, 5) {};
		\node [style=cnot ctrl] (217) at (55.25, -3) {};
		\node [style=cnot ctrl] (218) at (57.75, -5) {};
		\node [style=cnot ctrl] (219) at (64, -5) {};
		\node [style=cnot ctrl] (220) at (66.5, -7) {};
		\node [style=cnot ctrl] (221) at (59.25, -7) {};
		\node [style=cnot ctrl] (222) at (63.25, -7) {};
		\node [style=cnot ctrl] (223) at (55.25, -7) {};
		\node [style=none] (224) at (68, -7) {};
		\node [style=not] (225) at (66.5, -5) {};
		\node [style=meter] (226) at (68, -5) {};
		\node [style=not] (227) at (55.25, -5) {};
		\node [style=none] (228) at (53.75, -5) {};
		\node [style=not] (229) at (52.5, -1) {};
		\node [style=none] (230) at (51, -1) {};
		\node [style=none] (231) at (57, 7) {};
		\node [style=none] (232) at (54, 5) {};
		\node [style=none] (233) at (53, 3) {};
		\node [style=none] (234) at (51, -3) {};
		\node [style=none] (235) at (53.75, -7) {};
		\node [style=none] (236) at (57, 1) {};
		\node [style=not] (237) at (62.5, -1) {};
		\node [style=meter] (238) at (64.75, -1) {};
		\node [style=none] (239) at (49, 7) {};
		\node [style=none] (240) at (54.75, 7) {};
		\node [style=none] (241) at (52, 7.75) {Directly extract};
		\node [style=none] (242) at (53.25, -5) {$\ket{0}$};
		\node [style=none] (243) at (50.5, -1) {$\ket{0}$};
		\node [style=none] (244) at (56.5, 7) {$\ket{0}$};
		\node [style=none] (245) at (53.5, 5) {$\ket{0}$};
		\node [style=none] (246) at (52.5, 3) {$\ket{0}$};
		\node [style=none] (247) at (56.5, 1) {$\ket{0}$};
		\node [style=none] (248) at (50.5, -3) {$\ket{+}$};
		\node [style=none] (249) at (53.25, -7) {$\ket{+}$};
		\node [style=none] (250) at (49, 0) {$\FaultEq$};
		\node [style=none] (251) at (28.75, 0) {$\FaultEq$};
		\node [style=none] (252) at (9.75, 0) {$\FaultEq$};
	\end{pgfonlayer}
	\begin{pgfonlayer}{edgelayer}
		\draw [style=graph line] (47) to (56.center);
		\draw [style=graph line] (55.center) to (41);
		\draw [style=graph line] (41) to (42);
		\draw [style=graph line] (42) to (43);
		\draw [style=graph line] (44) to (41);
		\draw [style=graph line] (46) to (53.center);
		\draw [style=graph line] (52.center) to (48);
		\draw [style=graph line] (45) to (51.center);
		\draw [style=graph line] (61) to (46);
		\draw [style=graph line] (62) to (47);
		\draw [style=graph line] (45) to (60);
		\draw [style=graph line] (59) to (43);
		\draw [style=graph line] (63) to (48);
		\draw [style=graph line] (88) to (44);
		\draw [style=graph line] (59) to (88);
		\draw [style=graph line] (89) to (90);
		\draw [style=graph line] (90) to (91);
		\draw [style=graph line] (94) to (91);
		\draw [style=graph line] (96.center) to (91);
		\draw [style=graph line] (94) to (92);
		\draw [style=graph line] (92) to (95);
		\draw [style=graph line] (95) to (89);
		\draw [style=graph line] (43) to (61);
		\draw [style=graph line, bend right=15] (89) to (46);
		\draw [style=graph line, in=180, out=90] (88) to (63);
		\draw [style=graph line, in=-165, out=135, looseness=1.25] (92) to (48);
		\draw [style=graph line] (42) to (62);
		\draw [style=graph line, bend right=15] (90) to (47);
		\draw [style=graph line, in=180, out=165, looseness=1.25] (44) to (60);
		\draw [style=graph line, in=-165, out=165, looseness=1.50] (94) to (45);
		\draw [style=graph line] (128) to (134.center);
		\draw [style=graph line] (133.center) to (122);
		\draw [style=graph line] (122) to (123);
		\draw [style=graph line] (123) to (124);
		\draw [style=graph line] (125) to (122);
		\draw [style=graph line] (127) to (132.center);
		\draw [style=graph line] (131.center) to (129);
		\draw [style=graph line] (126) to (130.center);
		\draw [style=graph line] (137) to (127);
		\draw [style=graph line] (138) to (128);
		\draw [style=graph line] (126) to (136);
		\draw [style=graph line] (135) to (124);
		\draw [style=graph line] (139) to (129);
		\draw [style=graph line] (140) to (125);
		\draw [style=graph line] (135) to (140);
		\draw [style=graph line] (141) to (142);
		\draw [style=graph line] (142) to (143);
		\draw [style=graph line] (145) to (143);
		\draw [style=graph line] (147.center) to (143);
		\draw [style=graph line] (145) to (144);
		\draw [style=graph line] (144) to (146);
		\draw [style=graph line] (146) to (141);
		\draw [style=graph line] (124) to (137);
		\draw [style=graph line] (141) to (127);
		\draw [style=graph line] (140) to (139);
		\draw [style=graph line] (144) to (129);
		\draw [style=graph line] (123) to (138);
		\draw [style=graph line] (142) to (128);
		\draw [style=graph line] (125) to (136);
		\draw [style=graph line] (145) to (126);
		\draw [style=graph line] (154) to (160.center);
		\draw [style=graph line] (159.center) to (148);
		\draw [style=graph line] (149) to (150);
		\draw [style=graph line] (151) to (148);
		\draw [style=graph line] (153) to (158.center);
		\draw [style=graph line] (157.center) to (155);
		\draw [style=graph line] (152) to (156.center);
		\draw [style=graph line] (163) to (153);
		\draw [style=graph line] (164) to (154);
		\draw [style=graph line] (152) to (162);
		\draw [style=graph line] (165) to (155);
		\draw [style=graph line] (166) to (151);
		\draw [style=graph line] (161) to (166);
		\draw [style=graph line] (167) to (168);
		\draw [style=graph line] (171) to (169);
		\draw [style=graph line] (173.center) to (169);
		\draw [style=graph line] (171) to (170);
		\draw [style=graph line] (170) to (172);
		\draw [style=graph line] (150) to (163);
		\draw [style=graph line] (167) to (153);
		\draw [style=graph line] (166) to (165);
		\draw [style=graph line] (170) to (155);
		\draw [style=graph line] (149) to (164);
		\draw [style=graph line] (168) to (154);
		\draw [style=graph line] (151) to (162);
		\draw [style=graph line] (171) to (152);
		\draw [style=graph line] (175) to (174);
		\draw [style=graph line] (174) to (169);
		\draw [style=graph line] (174) to (168);
		\draw [style=graph line] (176) to (172);
		\draw [style=graph line] (176) to (167);
		\draw [style=graph line] (177) to (176);
		\draw [style=graph line] (178) to (161);
		\draw [style=graph line] (178) to (150);
		\draw [style=graph line] (179) to (178);
		\draw [style=graph line] (180) to (162);
		\draw [style=graph line] (181) to (165);
		\draw [style=graph line] (182) to (163);
		\draw [style=graph line] (183) to (161);
		\draw [style=graph line] (184) to (172);
		\draw [style=graph line] (185) to (164);
		\draw [style=graph line] (186) to (148);
		\draw [style=graph line] (186) to (149);
		\draw [style=graph line] (187) to (186);
		\draw [style=arrow] (192.center) to (193.center);
		\draw [style=arrow] (196.center) to (197.center);
		\draw [style=graph line] (205) to (211.center);
		\draw [style=graph line] (210.center) to (199);
		\draw [style=graph line] (200) to (201);
		\draw [style=graph line] (202) to (199);
		\draw [style=graph line] (204) to (209.center);
		\draw [style=graph line] (208.center) to (206);
		\draw [style=graph line] (203) to (207.center);
		\draw [style=graph line] (214) to (204);
		\draw [style=graph line] (215) to (205);
		\draw [style=graph line] (203) to (213);
		\draw [style=graph line] (216) to (206);
		\draw [style=graph line] (217) to (202);
		\draw [style=graph line] (212) to (217);
		\draw [style=graph line] (218) to (219);
		\draw [style=graph line] (222) to (220);
		\draw [style=graph line] (224.center) to (220);
		\draw [style=graph line] (222) to (221);
		\draw [style=graph line] (221) to (223);
		\draw [style=graph line] (201) to (214);
		\draw [style=graph line] (218) to (204);
		\draw [style=graph line] (217) to (216);
		\draw [style=graph line] (221) to (206);
		\draw [style=graph line] (200) to (215);
		\draw [style=graph line] (219) to (205);
		\draw [style=graph line] (202) to (213);
		\draw [style=graph line] (222) to (203);
		\draw [style=graph line] (226) to (225);
		\draw [style=graph line] (225) to (220);
		\draw [style=graph line] (225) to (219);
		\draw [style=graph line] (227) to (223);
		\draw [style=graph line] (227) to (218);
		\draw [style=graph line] (228.center) to (227);
		\draw [style=graph line] (229) to (212);
		\draw [style=graph line] (229) to (201);
		\draw [style=graph line] (230.center) to (229);
		\draw [style=graph line] (231.center) to (213);
		\draw [style=graph line] (232.center) to (216);
		\draw [style=graph line] (233.center) to (214);
		\draw [style=graph line] (234.center) to (212);
		\draw [style=graph line] (235.center) to (223);
		\draw [style=graph line] (236.center) to (215);
		\draw [style=graph line] (237) to (199);
		\draw [style=graph line] (237) to (200);
		\draw [style=graph line] (238) to (237);
		\draw [style=arrow] (239.center) to (240.center);
	\end{pgfonlayer}
\end{tikzpicture}

%% file: figures/z-spider.tikz
\begin{tikzpicture}
	\begin{pgfonlayer}{nodelayer}
		\node [style={Z phase dot}] (0) at (0, 0) {$k\pi$};
		\node [style=none] (1) at (1.5, 0.75) {};
		\node [style=none] (2) at (-1.5, 0.75) {};
		\node [style=none] (3) at (-1.5, -0.75) {};
		\node [style=none] (4) at (1.5, -0.75) {};
		\node [style=none] (9) at (1.5, 0) {{\footnotesize $n$}};
		\node [style=none] (10) at (-1.5, 0) {{\footnotesize $m$}};
		\node [style=none] (13) at (1, 0.25) {$\vdots$};
		\node [style=none] (14) at (-1, 0.25) {$\vdots$};
	\end{pgfonlayer}
	\begin{pgfonlayer}{edgelayer}
		\draw [in=-135, out=0, looseness=0.75] (3.center) to (0);
		\draw [in=180, out=-45, looseness=0.75] (0) to (4.center);
		\draw [in=180, out=45, looseness=0.75] (0) to (1.center);
		\draw [in=135, out=0, looseness=0.75] (2.center) to (0);
	\end{pgfonlayer}
\end{tikzpicture}

%% file: figures/x-spider.tikz
\begin{tikzpicture}
	\begin{pgfonlayer}{nodelayer}
		\node [style=X phase dot] (0) at (0, 0) {$k\pi$};
		\node [style=none] (1) at (1.5, 0.75) {};
		\node [style=none] (2) at (-1.5, 0.75) {};
		\node [style=none] (3) at (-1.5, -0.75) {};
		\node [style=none] (4) at (1.5, -0.75) {};
		\node [style=none] (9) at (1.5, 0) {{\footnotesize $n$}};
		\node [style=none] (10) at (-1.5, 0) {{\footnotesize $m$}};
		\node [style=none] (13) at (1, 0.25) {$\vdots$};
		\node [style=none] (14) at (-1, 0.25) {$\vdots$};
	\end{pgfonlayer}
	\begin{pgfonlayer}{edgelayer}
		\draw [in=-135, out=0, looseness=0.75] (3.center) to (0);
		\draw [in=180, out=-45, looseness=0.75] (0) to (4.center);
		\draw [in=180, out=45, looseness=0.75] (0) to (1.center);
		\draw [in=135, out=0, looseness=0.75] (2.center) to (0);
	\end{pgfonlayer}
\end{tikzpicture}

%% file: figures/translation-guide.tikz
\begin{tikzpicture}
	\begin{pgfonlayer}{nodelayer}
		\node [style=none] (183) at (-5.5, 1.25) {$\ket{0}$};
		\node [style=none] (184) at (-3.5, 1.25) {$\longmapsto$};
		\node [style=X dot] (185) at (-2, 1.25) {};
		\node [style=none] (186) at (-1, 1.25) {};
		\node [style=none] (187) at (-5.5, 0) {$\ket{+}$};
		\node [style=none] (188) at (-3.5, 0) {$\longmapsto$};
		\node [style=Z dot] (189) at (-2, 0) {};
		\node [style=none] (190) at (-1, 0) {};
		\node [style=none] (201) at (-6.25, -1.5) {};
		\node [style=none] (202) at (-3.5, -1.5) {$\longmapsto$};
		\node [style=Z phase dot] (203) at (-1, -1.5) {$k\pi$};
		\node [style=none] (204) at (-2, -1.5) {};
		\node [style=none] (215) at (1.25, -1.5) {};
		\node [style=none] (216) at (4, -1.5) {$\longmapsto$};
		\node [style=X phase dot] (217) at (6.5, -1.5) {$k\pi$};
		\node [style=none] (218) at (5.5, -1.5) {};
		\node [style=none] (222) at (2.5, 1.25) {};
		\node [style=black dot] (223) at (1.75, 1.25) {};
		\node [style=none] (224) at (1, 1.25) {};
		\node [style=none] (225) at (1, 0.25) {};
		\node [style=none] (226) at (2.5, 0.25) {};
		\node [style=none] (229) at (7, 1.25) {};
		\node [style=Z dot] (230) at (6.25, 1.25) {};
		\node [style=none] (231) at (5.5, 1.25) {};
		\node [style=none] (232) at (5.5, 0.25) {};
		\node [style=none] (233) at (7, 0.25) {};
		\node [style=none] (235) at (4, 0.75) {$\longmapsto$};
		\node [style=X dot] (236) at (6.25, 0.25) {};
		\node [style=not] (237) at (1.75, 0.25) {};
		\node [style=meter] (238) at (2.25, -1.5) {};
		\node [style=sLabel] (239) at (2.125, -1.375) {\tiny $Z$};
		\node [style=meter] (240) at (-5.125, -1.5) {};
		\node [style=sLabel] (241) at (-5.25, -1.375) {\tiny $X$};
		\node [style=none] (243) at (1.25, -3) {};
		\node [style=none] (244) at (4, -3) {$\longmapsto$};
		\node [style=Z phase dot] (245) at (6, -3) {$\pi$};
		\node [style=none] (246) at (5.25, -3) {};
		\node [style=box] (247) at (2, -3) {};
		\node [style=sLabel] (248) at (2, -3) {\tiny $Z$};
		\node [style=none] (249) at (6.75, -3) {};
		\node [style=none] (250) at (2.75, -3) {};
		\node [style=none] (251) at (-6.25, -3) {};
		\node [style=none] (252) at (-3.5, -3) {$\longmapsto$};
		\node [style=X phase dot] (253) at (-1.5, -3) {$\pi$};
		\node [style=none] (254) at (-2.25, -3) {};
		\node [style=box] (255) at (-5.5, -3) {};
		\node [style=sLabel] (256) at (-5.5, -3) {\tiny $X$};
		\node [style=none] (257) at (-0.75, -3) {};
		\node [style=none] (258) at (-4.75, -3) {};
	\end{pgfonlayer}
	\begin{pgfonlayer}{edgelayer}
		\draw (186.center) to (185);
		\draw (190.center) to (189);
		\draw (204.center) to (203);
		\draw (218.center) to (217);
		\draw (223) to (224.center);
		\draw (222.center) to (223);
		\draw (230) to (231.center);
		\draw (229.center) to (230);
		\draw (225.center) to (237);
		\draw (237) to (226.center);
		\draw (237) to (223);
		\draw (232.center) to (236);
		\draw (236) to (233.center);
		\draw (236) to (230);
		\draw (238) to (215.center);
		\draw (240) to (201.center);
		\draw (246.center) to (245);
		\draw (247) to (243.center);
		\draw (247) to (250.center);
		\draw (245) to (249.center);
		\draw (254.center) to (253);
		\draw (255) to (251.center);
		\draw (255) to (258.center);
		\draw (253) to (257.center);
	\end{pgfonlayer}
\end{tikzpicture}

%% file: figures/zx-axioms.tikz
\begin{tikzpicture}
	\begin{pgfonlayer}{nodelayer}
		\node [style=Z phase dot] (1) at (-9.25, 1.125) {$k_1 \pi$};
		\node [style=none] (2) at (-10.25, 1.75) {};
		\node [style=none] (3) at (-10.25, 0.5) {};
		\node [style=none] (4) at (-7.75, 1.75) {};
		\node [style=none] (5) at (-7.75, 0.5) {};
		\node [style=none] (6) at (-7.5, 1.25) {$\vdots$};
		\node [style=none] (7) at (-10, 1.25) {$\vdots$};
		\node [style=Z phase dot] (8) at (-8.25, -1.125) {$k_2 \pi$};
		\node [style=none] (9) at (-9.75, -0.5) {};
		\node [style=none] (10) at (-9.75, -1.75) {};
		\node [style=none] (11) at (-7.25, -0.5) {};
		\node [style=none] (12) at (-7.25, -1.75) {};
		\node [style=none] (13) at (-7.5, -1) {$\vdots$};
		\node [style=none] (14) at (-10, -1) {$\vdots$};
		\node [style=Z phase dot] (15) at (-3, 0) {$(k_1 + k_2) \pi$};
		\node [style=none] (16) at (-1.5, 1.75) {};
		\node [style=none] (17) at (-1.5, 0.5) {};
		\node [style=none] (18) at (-1.5, -0.5) {};
		\node [style=none] (19) at (-1.5, -1.75) {};
		\node [style=none] (20) at (-4.5, 1.75) {};
		\node [style=none] (21) at (-4.5, 0.5) {};
		\node [style=none] (22) at (-4.5, -0.5) {};
		\node [style=none] (23) at (-4.5, -1.75) {};
		\node [style=none] (24) at (-1.75, 1.25) {$\vdots$};
		\node [style=none] (25) at (-1.75, -1) {$\vdots$};
		\node [style=none] (26) at (-4.25, 1.25) {$\vdots$};
		\node [style=none] (27) at (-4.25, -1) {$\vdots$};
		\node [style=none] (28) at (-10.25, -0.5) {};
		\node [style=none] (29) at (-10.25, -1.75) {};
		\node [style=none] (30) at (-7.25, 1.75) {};
		\node [style=none] (31) at (-7.25, 0.5) {};
		\node [style=X phase dot] (67) at (0.5, -1.25) {$\pi$};
		\node [style=Z phase dot] (68) at (1.5, -1.25) {$k \pi$};
		\node [style=none] (70) at (2.25, -0.5) {};
		\node [style=none] (71) at (2.25, -2) {};
		\node [style=none] (72) at (-0.25, -1.25) {};
		\node [style=Z phase dot] (73) at (6, -1.25) {$k \pi$};
		\node [style=X phase dot] (74) at (7, -0.5) {$\pi$};
		\node [style=X phase dot] (75) at (7, -2) {$\pi$};
		\node [style=none] (76) at (5, -1.25) {};
		\node [style=none] (77) at (7.75, -0.5) {};
		\node [style=none] (78) at (7.75, -2) {};
		\node [style=none] (88) at (0.75, 1.5) {};
		\node [style=none] (89) at (2.75, 1.5) {};
		\node [style=none] (90) at (4.75, 1.5) {};
		\node [style=none] (91) at (6.75, 1.5) {};
		\node [style=Z dot] (95) at (1.75, 1.5) {};
		\node [style=none] (177) at (2.5, -2) {};
		\node [style=none] (178) at (2.5, -0.5) {};
		\node [style=none] (182) at (3.75, 1.75) {$\xlongequal[]{\scriptsize\textsc{id}}$};
		\node [style=none] (185) at (3.75, -1.25) {$\xlongequal[\textsc{copy}]{\scriptsize\pi}$};
		\node [style=none] (186) at (-5.75, 0.25) {$\xlongequal[]{\scriptsize\textsc{fusion}}$};
	\end{pgfonlayer}
	\begin{pgfonlayer}{edgelayer}
		\draw (1)
			 to [in=180, out=30] (4.center)
			 to (30.center);
		\draw (1)
			 to [in=180, out=-30] (5.center)
			 to (31.center);
		\draw [in=360, out=120] (1) to (2.center);
		\draw [in=0, out=-120] (1) to (3.center);
		\draw [in=180, out=60] (8) to (11.center);
		\draw [in=180, out=-60] (8) to (12.center);
		\draw (28.center)
			 to (9.center)
			 to [in=150, out=360] (8);
		\draw (8)
			 to [in=0, out=-150] (10.center)
			 to (29.center);
		\draw [in=180, out=30] (15) to (17.center);
		\draw [in=180, out=60] (15) to (16.center);
		\draw [in=180, out=-30] (15) to (18.center);
		\draw [in=180, out=-60] (15) to (19.center);
		\draw [in=0, out=120] (15) to (20.center);
		\draw [in=0, out=150] (15) to (21.center);
		\draw [in=0, out=-150] (15) to (22.center);
		\draw [in=0, out=-120] (15) to (23.center);
		\draw [in=135, out=-45] (1) to (8);
		\draw (68) to (67);
		\draw [in=180, out=60] (68) to (70.center);
		\draw [in=180, out=-60] (68) to (71.center);
		\draw (72.center) to (67);
		\draw [in=180, out=60] (73) to (74);
		\draw [in=180, out=-60] (73) to (75);
		\draw (75) to (78.center);
		\draw (77.center) to (74);
		\draw (76.center) to (73);
		\draw (88.center) to (95);
		\draw (95) to (89.center);
		\draw (90.center) to (91.center);
		\draw (177.center) to (71.center);
		\draw (178.center) to (70.center);
	\end{pgfonlayer}
\end{tikzpicture}

%% file: figures/steane-zero-faulty.tikz
\begin{tikzpicture}
	\begin{pgfonlayer}{nodelayer}
		\node [style=Z dot] (0) at (-1, 1.5) {};
		\node [style=Z dot] (1) at (-1, 0) {};
		\node [style=Z dot] (2) at (-1, -1.5) {};
		\node [style=X dot] (3) at (1, 3) {};
		\node [style=X dot] (4) at (1, 2) {};
		\node [style=X dot] (5) at (1, 1) {};
		\node [style=X dot] (6) at (1, 0) {};
		\node [style=X dot] (7) at (1, -1) {};
		\node [style=X dot] (8) at (1, -2) {};
		\node [style=X dot] (9) at (1, -3) {};
		\node [style=none] (10) at (2.25, 3) {};
		\node [style=none] (11) at (2.25, 2) {};
		\node [style=none] (12) at (2.25, 1) {};
		\node [style=none] (13) at (2.25, 0) {};
		\node [style=none] (14) at (2.25, -1) {};
		\node [style=none] (15) at (2.25, -2) {};
		\node [style=none] (16) at (2.25, -3) {};
	\end{pgfonlayer}
	\begin{pgfonlayer}{edgelayer}
		\draw (0) to (3);
		\draw (0) to (4);
		\draw (0) to (5);
		\draw (0) to (6);
		\draw (1) to (4);
		\draw (1) to (5);
		\draw (1) to (7);
		\draw (1) to (8);
		\draw (2) to (5);
		\draw (2) to (6);
		\draw (2) to (8);
		\draw (2) to (9);
		\draw (16.center) to (9);
		\draw (15.center) to (8);
		\draw (14.center) to (7);
		\draw (13.center) to (6);
		\draw (12.center) to (5);
		\draw (11.center) to (4);
		\draw (10.center) to (3);
	\end{pgfonlayer}
\end{tikzpicture}

%% file: figures/steane-zero-dual.tikz
\begin{tikzpicture}
	\begin{pgfonlayer}{nodelayer}
		\node [style=X dot] (17) at (-1, -2.25) {};
		\node [style=X dot] (0) at (-1, 2.25) {};
		\node [style=X dot] (1) at (-1, .75) {};
		\node [style=X dot] (2) at (-1, -.75) {};
		\node [style=Z dot] (3) at (1, 3) {};
		\node [style=Z dot] (4) at (1, 2) {};
		\node [style=Z dot] (5) at (1, 1) {};
		\node [style=Z dot] (6) at (1, 0) {};
		\node [style=Z dot] (7) at (1, -1) {};
		\node [style=Z dot] (8) at (1, -2) {};
		\node [style=Z dot] (9) at (1, -3) {};
		\node [style=none] (10) at (2.25, 3) {};
		\node [style=none] (11) at (2.25, 2) {};
		\node [style=none] (12) at (2.25, 1) {};
		\node [style=none] (13) at (2.25, 0) {};
		\node [style=none] (14) at (2.25, -1) {};
		\node [style=none] (15) at (2.25, -2) {};
		\node [style=none] (16) at (2.25, -3) {};
	\end{pgfonlayer}
	\begin{pgfonlayer}{edgelayer}
		\draw (17) to (7);
		\draw (17) to (8);
		\draw (17) to (9);
		\draw (0) to (3);
		\draw (0) to (4);
		\draw (0) to (5);
		\draw (0) to (6);
		\draw (1) to (4);
		\draw (1) to (5);
		\draw (1) to (7);
		\draw (1) to (8);
		\draw (2) to (5);
		\draw (2) to (6);
		\draw (2) to (8);
		\draw (2) to (9);
		\draw (16.center) to (9);
		\draw (15.center) to (8);
		\draw (14.center) to (7);
		\draw (13.center) to (6);
		\draw (12.center) to (5);
		\draw (11.center) to (4);
		\draw (10.center) to (3);
	\end{pgfonlayer}
\end{tikzpicture}

%% file: figures/non-fe-rewrite.tikz
\begin{tikzpicture}
	\begin{pgfonlayer}{nodelayer}
		\node [style=Z dot] (22) at (-2.75, 0.75) {};
		\node [style=none] (23) at (-1.5, 1.5) {};
		\node [style=none] (24) at (-1.5, 0) {};
		\node [style=none] (26) at (-4, 1.5) {};
		\node [style=none] (27) at (-4, 0) {};
		\node [style=none] (29) at (0, 0.75) {$\NotFaultEq$};
		\node [style=Z dot] (32) at (3.75, 0.75) {};
		\node [style=none] (33) at (4.75, 1.5) {};
		\node [style=none] (34) at (4.75, 0) {};
		\node [style=Z dot] (35) at (2.5, 0.75) {};
		\node [style=none] (36) at (1.5, 1.5) {};
		\node [style=none] (37) at (1.5, 0) {};
	\end{pgfonlayer}
	\begin{pgfonlayer}{edgelayer}
		\draw (22) to (23.center);
		\draw (22) to (24.center);
		\draw (32) to (33.center);
		\draw (32) to (34.center);
		\draw (35) to (32);
		\draw (36.center) to (35);
		\draw (35) to (37.center);
		\draw (22) to (26.center);
		\draw (22) to (27.center);
	\end{pgfonlayer}
\end{tikzpicture}

%% file: figures/non-fe-rewrite-proof.tikz
\begin{tikzpicture}
	\begin{pgfonlayer}{nodelayer}
		\node [style=Z dot] (0) at (-2.75, 0) {};
		\node [style=none] (1) at (-1.75, 1) {};
		\node [style=none] (2) at (-1.75, -1) {};
		\node [style=Z dot] (3) at (-4.75, 0) {};
		\node [style=none] (4) at (-5.75, 1) {};
		\node [style=none] (5) at (-5.75, -1) {};
		\node [style=X dot] (6) at (-3.75, 0) {$\pi$};
		\node [style=none] (7) at (0, 0) {$\xlongequal[\textsc{copy}]{\pi}$};
		\node [style=Z dot] (8) at (3.75, 0) {};
		\node [style=none] (9) at (5.25, 1.25) {};
		\node [style=none] (10) at (5.25, -1.25) {};
		\node [style=Z dot] (11) at (2.75, 0) {};
		\node [style=none] (12) at (1.75, 1) {};
		\node [style=none] (13) at (1.75, -1) {};
		\node [style=X dot] (14) at (4.5, 0.65) {$\pi$};
		\node [style=X dot] (15) at (4.5, -0.7) {$\pi$};
	\end{pgfonlayer}
	\begin{pgfonlayer}{edgelayer}
		\draw (0) to (1.center);
		\draw (0) to (2.center);
		\draw (3) to (0);
		\draw (4.center) to (3);
		\draw (3) to (5.center);
		\draw (8) to (9.center);
		\draw (8) to (10.center);
		\draw (11) to (8);
		\draw (12.center) to (11);
		\draw (11) to (13.center);
	\end{pgfonlayer}
\end{tikzpicture}

%% file: figures/fe-rewrites1.tikz
\begin{tikzpicture}
	\begin{pgfonlayer}{nodelayer}
		\node [style=none] (0) at (-13.75, 0) {};
		\node [style=none] (1) at (-12.25, 0) {};
		\node [style=none] (2) at (-10.75, 0) {$\FaultEq$};
		\node [style=none] (3) at (-9, 0) {};
		\node [style=none] (4) at (-7, 0) {};
		\node [style=Z dot] (5) at (-8, 0) {};
		\node [style=none] (6) at (-10.75, -0.75) {\scriptsize$\textsc{Elim}_{\text{fe}}$};
		\node [style=Z dot] (7) at (-3, 0) {};
		\node [style=none] (8) at (-1.75, 1) {};
		\node [style=none] (9) at (-1.75, 0.5) {};
		\node [style=none] (10) at (-1.75, -1) {};
		\node [style=none, rotate=90] (11) at (-1.75, -0.25) {...};
		\node [style=none] (12) at (0, 0) {$\FaultEq$};
		\node [style=Z dot] (13) at (3, 0) {};
		\node [style=none] (14) at (4.25, 1) {};
		\node [style=none] (15) at (4.25, 0.5) {};
		\node [style=none] (16) at (4.25, -1) {};
		\node [style=none, rotate=90] (17) at (4.25, -0.25) {...};
		\node [style=Z dot] (18) at (1.75, 0) {};
		\node [style=none] (19) at (0, -1) {\scriptsize $\textsc{Fuse-1}_{\text{fe}}$};
		\node [style=Z dot] (20) at (9.25, 0) {};
		\node [style=Z dot] (21) at (14.5, 0.5) {};
		\node [style=Z dot] (22) at (14.5, -0.5) {};
		\node [style=Z dot] (23) at (15.5, 0.5) {};
		\node [style=Z dot] (24) at (15.5, -0.5) {};
		\node [style=none] (25) at (10.25, 1) {};
		\node [style=none] (26) at (8.25, 1) {};
		\node [style=none] (27) at (10.25, -1) {};
		\node [style=none] (28) at (8.25, -1) {};
		\node [style=none] (29) at (13.75, 1.25) {};
		\node [style=none] (30) at (16.25, 1.25) {};
		\node [style=none] (31) at (13.75, -1.25) {};
		\node [style=none] (32) at (16.25, -1.25) {};
		\node [style=none] (33) at (12, 0.25) {$\FaultEq$};
		\node [style=none] (34) at (12, -0.5) {\scriptsize $\textsc{Fuse-4}_{\text{fe}}$};
	\end{pgfonlayer}
	\begin{pgfonlayer}{edgelayer}
		\draw (0.center) to (1.center);
		\draw (3.center) to (5);
		\draw (5) to (4.center);
		\draw (7) to (8.center);
		\draw (7) to (9.center);
		\draw (7) to (10.center);
		\draw (13) to (14.center);
		\draw (13) to (15.center);
		\draw (13) to (16.center);
		\draw (18) to (13);
		\draw (20) to (26.center);
		\draw (25.center) to (20);
		\draw (20) to (27.center);
		\draw (28.center) to (20);
		\draw (21) to (23);
		\draw (23) to (24);
		\draw (24) to (22);
		\draw (22) to (21);
		\draw (21) to (29.center);
		\draw (23) to (30.center);
		\draw (24) to (32.center);
		\draw (22) to (31.center);
	\end{pgfonlayer}
\end{tikzpicture}

%% file: figures/cat-extraction-overview.tikz
\begin{tikzpicture}
	\begin{pgfonlayer}{nodelayer}
		\node [style=sLabel, font={\scriptsize}, font={\footnotesize}] (0) at (2.15, 2.5) {\textbf{(a) Marked graph $G$}};
		\node [style=white dot, minimum size=3.7mm, inner sep=0pt, outer sep=0pt, font={\tiny}] (1) at (2.15, 0) {$0$};
		\node [style=white dot, minimum size=3.7mm, inner sep=0pt, outer sep=0pt, font={\tiny}] (2) at (3.67, 0) {$1$};
		\node [style=white dot, minimum size=3.7mm, inner sep=0pt, outer sep=0pt, font={\tiny}] (3) at (2.15, -1.52) {$2$};
		\node [style=white dot, minimum size=3.7mm, inner sep=0pt, outer sep=0pt, font={\tiny}] (4) at (1.295, 0.95) {$3$};
		\node [style=none, outer sep=0pt] (5) at (1.675, 0.38) {};
		\node [style=none, outer sep=0pt] (6) at (2.815, 0.095) {};
		\node [style=none, outer sep=0pt] (7) at (2.055, -0.665) {};
		\node [style=none, outer sep=0pt] (8) at (3.005, -0.95) {};
		\node [style=none, outer sep=0pt] (9) at (3.005, 0.95) {};
		\node [style=none, outer sep=0pt] (10) at (2.15, 1.52) {};
		\node [style=none, outer sep=0pt] (11) at (1.295, -0.95) {};
		\node [style=none, outer sep=0pt] (12) at (0.63, 0) {};
		\node [style=none, outer sep=0pt] (13) at (1.753, 0.4502) {};
		\node [style=none, outer sep=0pt] (14) at (1.597, 0.3098) {};
		\node [style=sLabel, font={\scriptsize}, font={\tiny}] (15) at (1.515, 0.24) {$4$};
		\node [style=none, outer sep=0pt] (16) at (2.815, -0.01) {};
		\node [style=none, outer sep=0pt] (17) at (2.815, 0.2) {};
		\node [style=sLabel, font={\scriptsize}, font={\tiny}] (18) at (2.855, 0.285) {$5$};
		\node [style=none, outer sep=0pt] (19) at (1.95, -0.665) {};
		\node [style=none, outer sep=0pt] (20) at (2.16, -0.665) {};
		\node [style=sLabel, font={\scriptsize}, font={\tiny}, text={teal!80!black}] (21) at (1.785, -0.685) {$q$};
		\node [style=none, outer sep=0pt] (22) at (2.9308, -0.8758) {};
		\node [style=none, outer sep=0pt] (23) at (3.0792, -1.0242) {};
		\node [style=sLabel, font={\scriptsize}, font={\tiny}] (24) at (3.195, -1.03) {$7$};
		\node [style=none, outer sep=0pt] (25) at (2.15, 1.415) {};
		\node [style=none, outer sep=0pt] (26) at (2.15, 1.625) {};
		\node [style=sLabel, font={\scriptsize}, font={\tiny}] (27) at (2.15, 1.72) {$9$};
		\node [style=none, outer sep=0pt] (28) at (2.9308, 0.8758) {};
		\node [style=none, outer sep=0pt] (29) at (3.0792, 1.0242) {};
		\node [style=sLabel, font={\scriptsize}, font={\tiny}] (30) at (3.175, 1.08) {$8$};
		\node [style=none, outer sep=0pt] (31) at (0.525, 0) {};
		\node [style=none, outer sep=0pt] (32) at (0.735, 0) {};
		\node [style=sLabel, font={\scriptsize}, font={\tiny}] (33) at (0.43, 0) {$11$};
		\node [style=none, outer sep=0pt] (34) at (1.2208, -1.0242) {};
		\node [style=none, outer sep=0pt] (35) at (1.3692, -0.8758) {};
		\node [style=sLabel, font={\scriptsize}, font={\tiny}] (36) at (1.095, -1.07) {$10$};
		\node [style=sLabel, font={\scriptsize}] (37) at (2.15, -2.05) {8 marked output positions};
		\node [style=sLabel, font={\scriptsize}, font={\footnotesize}] (38) at (7.65, 2.5) {\textbf{(b) Expanded graph $\widetilde G$}};
		\node [style=white dot, minimum size=3.7mm, inner sep=0pt, outer sep=0pt, font={\tiny}] (39) at (7.65, 0) {$0$};
		\node [style=white dot, minimum size=3.7mm, inner sep=0pt, outer sep=0pt, font={\tiny}] (40) at (9.17, 0) {$1$};
		\node [style=white dot, minimum size=3.7mm, inner sep=0pt, outer sep=0pt, font={\tiny}] (41) at (7.65, -1.52) {$2$};
		\node [style=white dot, minimum size=3.7mm, inner sep=0pt, outer sep=0pt, font={\tiny}] (42) at (6.795, 0.95) {$3$};
		\node [style=white dot, minimum size=3.7mm, inner sep=0pt, outer sep=0pt, font={\tiny}] (43) at (7.175, 0.38) {$4$};
		\node [style=white dot, minimum size=3.7mm, inner sep=0pt, outer sep=0pt, font={\tiny}] (44) at (8.315, 0.095) {$5$};
		\node [style=white dot, minimum size=3.7mm, inner sep=0pt, outer sep=0pt, font={\tiny}, draw={teal!80!black}, line width=1pt] (45) at (7.555, -0.665) {$6$};
		\node [style=white dot, minimum size=3.7mm, inner sep=0pt, outer sep=0pt, font={\tiny}] (46) at (8.505, -0.95) {$7$};
		\node [style=white dot, minimum size=3.7mm, inner sep=0pt, outer sep=0pt, font={\tiny}] (47) at (8.505, 0.95) {$8$};
		\node [style=white dot, minimum size=3.7mm, inner sep=0pt, outer sep=0pt, font={\tiny}] (48) at (7.65, 1.52) {$9$};
		\node [style=white dot, minimum size=3.7mm, inner sep=0pt, outer sep=0pt, font={\tiny}] (49) at (6.795, -0.95) {$10$};
		\node [style=white dot, minimum size=3.7mm, inner sep=0pt, outer sep=0pt, font={\tiny}] (50) at (6.13, 0) {$11$};
		\node [style=sLabel, font={\scriptsize}, font={\tiny}] (51) at (7.77, 0.39) {root 0};
		\node [style=sLabel, font={\scriptsize}, text={teal!80!black}] (52) at (7.215, -0.665) {$q$};
		\node [style=sLabel, font={\scriptsize}, text={orange!85!black}] (53) at (8.26, -1.54) {$e_f$};
		\node [style=sLabel, font={\scriptsize}] (54) at (7.65, -2.05) {12 vertices; 11 tree edges};
		\node [style=sLabel, font={\scriptsize}, font={\footnotesize}] (55) at (13.25, 2.5) {\textbf{(c) Trivalent $Z$-graph}};
		\node [style=Z dot, minimum size=2.2mm, outer sep=0pt] (56) at (13.25, 0) {};
		\node [style=Z dot, minimum size=2.2mm, outer sep=0pt] (57) at (14.77, 0) {};
		\node [style=Z dot, minimum size=2.2mm, outer sep=0pt] (58) at (13.25, -1.52) {};
		\node [style=Z dot, minimum size=2.2mm, outer sep=0pt] (59) at (12.395, 0.95) {};
		\node [style=Z dot, minimum size=2.2mm, outer sep=0pt] (60) at (12.775, 0.38) {};
		\node [style=Z dot, minimum size=2.2mm, outer sep=0pt] (61) at (13.915, 0.095) {};
		\node [style=Z dot, minimum size=2.2mm, outer sep=0pt, draw={teal!80!black}, line width=1pt] (62) at (13.155, -0.665) {};
		\node [style=Z dot, minimum size=2.2mm, outer sep=0pt] (63) at (14.105, -0.95) {};
		\node [style=Z dot, minimum size=2.2mm, outer sep=0pt] (64) at (14.105, 0.95) {};
		\node [style=Z dot, minimum size=2.2mm, outer sep=0pt] (65) at (13.25, 1.52) {};
		\node [style=Z dot, minimum size=2.2mm, outer sep=0pt] (66) at (12.395, -0.95) {};
		\node [style=Z dot, minimum size=2.2mm, outer sep=0pt] (67) at (11.73, 0) {};
		\node [style=sLabel, font={\scriptsize}, font={\tiny}] (68) at (13.27, 0.18) {$0$};
		\node [style=sLabel, font={\scriptsize}, font={\tiny}] (69) at (14.94, 0.12) {$1$};
		\node [style=sLabel, font={\scriptsize}, font={\tiny}] (70) at (13.41, -1.62) {$2$};
		\node [style=sLabel, font={\scriptsize}, font={\tiny}] (71) at (12.255, 1.11) {$3$};
		\node [style=sLabel, font={\scriptsize}, font={\tiny}] (72) at (12.895, 0.54) {$4$};
		\node [style=sLabel, font={\scriptsize}, font={\tiny}] (73) at (13.955, -0.085) {$5$};
		\node [style=sLabel, font={\scriptsize}, font={\tiny}] (74) at (13.305, -0.695) {$6$};
		\node [style=sLabel, font={\scriptsize}, font={\tiny}] (75) at (13.975, -0.82) {$7$};
		\node [style=sLabel, font={\scriptsize}, font={\tiny}] (76) at (13.975, 0.82) {$8$};
		\node [style=sLabel, font={\scriptsize}, font={\tiny}] (77) at (13.08, 1.55) {$9$};
		\node [style=sLabel, font={\scriptsize}, font={\tiny}] (78) at (12.595, -0.79) {$10$};
		\node [style=sLabel, font={\scriptsize}, font={\tiny}] (79) at (11.75, 0.18) {$11$};
		\node [style=none, outer sep=0pt] (80) at (12.395, 0.06) {};
		\node [style=none, outer sep=0pt] (81) at (13.995, 0.575) {};
		\node [style=none, outer sep=0pt] (82) at (12.655, -0.565) {};
		\node [style=sLabel, font={\scriptsize}, text={teal!80!black}, anchor=east] (83) at (12.535, -0.565) {$q$};
		\node [style=none, outer sep=0pt] (84) at (14.485, -1.23) {};
		\node [style=none, outer sep=0pt] (85) at (14.475, 1.23) {};
		\node [style=none, outer sep=0pt] (86) at (13.25, 1.98) {};
		\node [style=none, outer sep=0pt] (87) at (12.025, -1.22) {};
		\node [style=none, outer sep=0pt] (88) at (11.28, 0) {};
		\node [style=sLabel, font={\scriptsize}, text={orange!85!black}] (89) at (13.87, -1.54) {$e_f$};
		\node [style=sLabel, font={\scriptsize}] (90) at (13.25, -2.05) {One output leg per marked vertex};
		\node [style=none, outer sep=0pt] (91) at (4.35, 0.15) {};
		\node [style=none, outer sep=0pt] (92) at (5.45, 0.15) {};
		\node [style=none, outer sep=0pt] (93) at (9.85, 0.15) {};
		\node [style=none, outer sep=0pt] (94) at (10.8, 0.15) {};
		\node [style=none, outer sep=0pt] (95) at (4.1, -2.5) {};
		\node [style=none, outer sep=0pt] (96) at (4.65, -2.5) {};
		\node [style=sLabel, font={\scriptsize}, anchor=west] (97) at (4.8, -2.5) {spanning tree};
		\node [style=none, outer sep=0pt] (98) at (7.4, -2.5) {};
		\node [style=none, outer sep=0pt] (99) at (7.95, -2.5) {};
		\node [style=sLabel, font={\scriptsize}, anchor=west] (100) at (8.1, -2.5) {non-tree edge};
		\node [style=none, outer sep=0pt] (101) at (10.95, -2.5) {};
		\node [style=none, outer sep=0pt] (102) at (11.5, -2.5) {};
		\node [style=sLabel, font={\scriptsize}, anchor=west, text={orange!85!black}] (103) at (11.65, -2.5) {$e_f\mapsto f$};
		\node [style=sLabel, font={\scriptsize}, font={\footnotesize}] (104) at (3.9, -3.12) {\textbf{(d) Same $Z$-graph arranged for extraction}};
		\node [style=Z dot, minimum size=2.2mm, outer sep=0pt] (105) at (0.55, -4.69) {};
		\node [style=Z dot, minimum size=2.2mm, outer sep=0pt] (106) at (1.47, -5.13) {};
		\node [style=Z dot, minimum size=2.2mm, outer sep=0pt] (107) at (1.93, -4.69) {};
		\node [style=Z dot, minimum size=2.2mm, outer sep=0pt, draw={teal!80!black}, line width=1pt] (108) at (2.39, -4.25) {};
		\node [style=Z dot, minimum size=2.2mm, outer sep=0pt] (109) at (2.85, -4.25) {};
		\node [style=Z dot, minimum size=2.2mm, outer sep=0pt] (110) at (3.31, -4.25) {};
		\node [style=Z dot, minimum size=2.2mm, outer sep=0pt] (111) at (3.77, -4.25) {};
		\node [style=Z dot, minimum size=2.2mm, outer sep=0pt] (112) at (4.23, -5.13) {};
		\node [style=Z dot, minimum size=2.2mm, outer sep=0pt] (113) at (4.69, -5.13) {};
		\node [style=Z dot, minimum size=2.2mm, outer sep=0pt] (114) at (5.15, -5.13) {};
		\node [style=Z dot, minimum size=2.2mm, outer sep=0pt] (115) at (5.61, -4.69) {};
		\node [style=Z dot, minimum size=2.2mm, outer sep=0pt] (116) at (6.07, -4.69) {};
		\node [style=sLabel, font={\scriptsize}, font={\tiny}] (117) at (0.55, -4.51) {$0$};
		\node [style=sLabel, font={\scriptsize}, font={\tiny}] (118) at (1.47, -4.95) {$4$};
		\node [style=sLabel, font={\scriptsize}, font={\tiny}] (119) at (1.93, -4.51) {$5$};
		\node [style=sLabel, font={\scriptsize}, font={\tiny}] (120) at (2.39, -4.07) {$6$};
		\node [style=sLabel, font={\scriptsize}, font={\tiny}] (121) at (2.85, -4.43) {$2$};
		\node [style=sLabel, font={\scriptsize}, font={\tiny}] (122) at (3.31, -4.07) {$10$};
		\node [style=sLabel, font={\scriptsize}, font={\tiny}] (123) at (3.77, -4.07) {$11$};
		\node [style=sLabel, font={\scriptsize}, font={\tiny}] (124) at (4.23, -4.95) {$3$};
		\node [style=sLabel, font={\scriptsize}, font={\tiny}] (125) at (4.69, -4.95) {$9$};
		\node [style=sLabel, font={\scriptsize}, font={\tiny}] (126) at (5.15, -4.95) {$8$};
		\node [style=sLabel, font={\scriptsize}, font={\tiny}] (127) at (5.61, -4.51) {$1$};
		\node [style=sLabel, font={\scriptsize}, font={\tiny}] (128) at (6.07, -4.87) {$7$};
		\node [style=none, outer sep=0pt] (129) at (2.85, -3.7) {};
		\node [style=none, outer sep=0pt] (130) at (6.07, -3.7) {};
		\node [style=sLabel, font={\scriptsize}, text={orange!85!black}] (131) at (4.46, -3.51) {$e_f$};
		\node [style=none, outer sep=0pt] (132) at (6.898, -4.25) {};
		\node [style=sLabel, font={\scriptsize}, anchor=west] (133) at (6.998, -4.25) {$11$};
		\node [style=none, outer sep=0pt] (134) at (6.898, -4.69) {};
		\node [style=sLabel, font={\scriptsize}, anchor=west] (135) at (6.998, -4.69) {$7$};
		\node [style=none, outer sep=0pt] (136) at (6.898, -5.13) {};
		\node [style=sLabel, font={\scriptsize}, anchor=west] (137) at (6.998, -5.13) {$8$};
		\node [style=none, outer sep=0pt] (138) at (6.898, -5.57) {};
		\node [style=none, outer sep=0pt] (139) at (1.47, -5.57) {};
		\node [style=sLabel, font={\scriptsize}, anchor=west] (140) at (6.998, -5.57) {$4$};
		\node [style=none, outer sep=0pt] (141) at (6.898, -6.01) {};
		\node [style=none, outer sep=0pt] (142) at (1.93, -6.01) {};
		\node [style=sLabel, font={\scriptsize}, anchor=west] (143) at (6.998, -6.01) {$5$};
		\node [style=none, outer sep=0pt] (144) at (6.898, -6.45) {};
		\node [style=none, outer sep=0pt] (145) at (2.39, -6.45) {};
		\node [style=sLabel, font={\scriptsize}, anchor=west, text={teal!80!black}] (146) at (6.998, -6.45) {$q$};
		\node [style=none, outer sep=0pt] (147) at (6.898, -6.89) {};
		\node [style=none, outer sep=0pt] (148) at (3.31, -6.89) {};
		\node [style=sLabel, font={\scriptsize}, anchor=west] (149) at (6.998, -6.89) {$10$};
		\node [style=none, outer sep=0pt] (150) at (6.898, -7.33) {};
		\node [style=none, outer sep=0pt] (151) at (4.69, -7.33) {};
		\node [style=sLabel, font={\scriptsize}, anchor=west] (152) at (6.998, -7.33) {$9$};
		\node [style=sLabel, font={\scriptsize}] (153) at (3.9, -7.7) {Numbers track the same output ports.};
		\node [style=none, outer sep=0pt] (154) at (7.55, -5.1) {};
		\node [style=none, outer sep=0pt] (155) at (8.5, -5.1) {};
		\node [style=sLabel, font={\scriptsize}, font={\tiny}] (156) at (8.02, -4.83) {extract};
		\node [style=sLabel, font={\scriptsize}, font={\footnotesize}] (157) at (12.3, -3.12) {\textbf{(e) Extracted CAT-state circuit}};
		\node [style=sLabel, font={\scriptsize}, anchor=east] (158) at (9.05, -4) {$\ket{+}$};
		\node [style=none, outer sep=0pt] (159) at (9.15, -4) {};
		\node [style=sLabel, font={\scriptsize}, anchor=east] (160) at (9.05, -4.36) {$\ket{0}$};
		\node [style=none, outer sep=0pt] (161) at (9.15, -4.36) {};
		\node [style=sLabel, font={\scriptsize}, anchor=east] (162) at (9.05, -4.72) {$\ket{0}$};
		\node [style=none, outer sep=0pt] (163) at (9.15, -4.72) {};
		\node [style=sLabel, font={\scriptsize}, anchor=east] (164) at (9.05, -5.08) {$\ket{0}$};
		\node [style=none, outer sep=0pt] (165) at (9.15, -5.08) {};
		\node [style=sLabel, font={\scriptsize}, anchor=east] (166) at (9.05, -5.44) {$\ket{0}$};
		\node [style=none, outer sep=0pt] (167) at (9.15, -5.44) {};
		\node [style=sLabel, font={\scriptsize}, anchor=east] (168) at (9.05, -5.8) {$\ket{0}$};
		\node [style=none, outer sep=0pt] (169) at (9.15, -5.8) {};
		\node [style=sLabel, font={\scriptsize}, anchor=east] (170) at (9.05, -6.16) {$\ket{0}$};
		\node [style=none, outer sep=0pt] (171) at (9.15, -6.16) {};
		\node [style=sLabel, font={\scriptsize}, anchor=east] (172) at (9.05, -6.52) {$\ket{0}$};
		\node [style=none, outer sep=0pt] (173) at (9.15, -6.52) {};
		\node [style=sLabel, font={\scriptsize}, anchor=east, text={orange!85!black}] (174) at (9.05, -6.88) {$\ket{0}$};
		\node [style=none, outer sep=0pt] (175) at (9.15, -6.88) {};
		\node [style=sLabel, font={\scriptsize}, anchor=east] (176) at (9.05, -7.24) {$\ket{0}$};
		\node [style=none, outer sep=0pt] (177) at (9.15, -7.24) {};
		\node [style=sLabel, font={\scriptsize}, anchor=east] (178) at (9.05, -7.6) {$\ket{0}$};
		\node [style=none, outer sep=0pt] (179) at (9.15, -7.6) {};
		\node [style=cnot ctrl] (180) at (9.449, -4) {};
		\node [style=not] (181) at (9.449, -4.36) {};
		\node [style=cnot ctrl] (182) at (9.909, -4) {};
		\node [style=not] (183) at (9.909, -4.72) {};
		\node [style=cnot ctrl] (184) at (10.369, -4.72) {};
		\node [style=not] (185) at (10.369, -5.08) {};
		\node [style=cnot ctrl] (186) at (10.829, -4.36) {};
		\node [style=not] (187) at (10.829, -5.44) {};
		\node [style=cnot ctrl] (188) at (11.289, -4) {};
		\node [style=not] (189) at (11.289, -5.8) {};
		\node [style=cnot ctrl, draw={orange!85!black}, fill={orange!85!black}] (190) at (11.749, -4) {};
		\node [style=not, draw={orange!85!black}] (191) at (11.749, -6.88) {};
		\node [style=cnot ctrl] (192) at (12.209, -4) {};
		\node [style=not] (193) at (12.209, -6.16) {};
		\node [style=cnot ctrl] (194) at (12.669, -4) {};
		\node [style=not] (195) at (12.669, -7.24) {};
		\node [style=cnot ctrl] (196) at (13.129, -4.72) {};
		\node [style=not] (197) at (13.129, -7.24) {};
		\node [style=cnot ctrl] (198) at (13.589, -4.72) {};
		\node [style=not] (199) at (13.589, -6.52) {};
		\node [style=cnot ctrl] (200) at (14.049, -4.72) {};
		\node [style=not] (201) at (14.049, -7.6) {};
		\node [style=cnot ctrl] (202) at (14.509, -4.36) {};
		\node [style=not] (203) at (14.509, -7.6) {};
		\node [style=cnot ctrl, draw={orange!85!black}, fill={orange!85!black}] (204) at (14.969, -4.36) {};
		\node [style=not, draw={orange!85!black}] (205) at (14.969, -6.88) {};
		\node [style=none, outer sep=0pt] (206) at (15.406, -4) {};
		\node [style=sLabel, font={\scriptsize}, anchor=west] (207) at (15.526, -4) {$11$};
		\node [style=none, outer sep=0pt] (208) at (15.406, -4.36) {};
		\node [style=sLabel, font={\scriptsize}, anchor=west] (209) at (15.526, -4.36) {$7$};
		\node [style=none, outer sep=0pt] (210) at (15.406, -4.72) {};
		\node [style=sLabel, font={\scriptsize}, anchor=west] (211) at (15.526, -4.72) {$8$};
		\node [style=none, outer sep=0pt] (212) at (15.406, -5.08) {};
		\node [style=sLabel, font={\scriptsize}, anchor=west] (213) at (15.526, -5.08) {$4$};
		\node [style=none, outer sep=0pt] (214) at (15.406, -5.44) {};
		\node [style=sLabel, font={\scriptsize}, anchor=west] (215) at (15.526, -5.44) {$5$};
		\node [style=none, outer sep=0pt] (216) at (15.406, -5.8) {};
		\node [style=sLabel, font={\scriptsize}, anchor=west, text={teal!80!black}] (217) at (15.526, -5.8) {$q$};
		\node [style=none, outer sep=0pt] (218) at (15.406, -6.16) {};
		\node [style=sLabel, font={\scriptsize}, anchor=west] (219) at (15.526, -6.16) {$10$};
		\node [style=none, outer sep=0pt] (220) at (15.406, -6.52) {};
		\node [style=sLabel, font={\scriptsize}, anchor=west] (221) at (15.526, -6.52) {$9$};
		\node [style=meter] (222) at (15.406, -6.88) {};
		\node [style=sLabel, font={\scriptsize}, font={\tiny}, anchor=west] (224) at (15.646, -6.88) {$0$};
		\node [style=sLabel, font={\scriptsize}, text={orange!85!black}, anchor=east] (225) at (8.5, -6.88) {$f:$};
		\node [style=meter] (226) at (15.406, -7.24) {};
		\node [style=sLabel, font={\scriptsize}, font={\tiny}, anchor=west] (228) at (15.646, -7.24) {$0$};
		\node [style=meter] (229) at (15.406, -7.6) {};
		\node [style=sLabel, font={\scriptsize}, font={\tiny}, anchor=west] (231) at (15.646, -7.6) {$0$};
		\node [style=sLabel, font={\scriptsize}] (232) at (12.3, -8) {8 outputs; 3 measured ancillas; 13 CNOTs};
		\node [style=sLabel, font={\scriptsize}, font={\tiny}] (233) at (12.3, -8.34) {Accept when all three $Z$ outcomes are $0$.};
	\end{pgfonlayer}
	\begin{pgfonlayer}{edgelayer}
		\draw [line width=0.65pt] (1) to (5.center);
		\draw [line width=0.65pt] (5.center) to (4);
		\draw [line width=1.15pt] (13.center) to (14.center);
		\draw [line width=0.65pt] (1) to (6.center);
		\draw [line width=0.65pt] (6.center) to (2);
		\draw [line width=1.15pt] (16.center) to (17.center);
		\draw [line width=0.65pt] (1) to (7.center);
		\draw [line width=0.65pt] (7.center) to (3);
		\draw [line width=1.15pt, draw={teal!80!black}] (19.center) to (20.center);
		\draw [line width=0.65pt] (2) to (8.center);
		\draw [line width=0.65pt] (8.center) to (3);
		\draw [line width=1.15pt] (22.center) to (23.center);
		\draw [line width=0.65pt] (4) to (10.center);
		\draw [line width=0.65pt] (10.center) to (9.center);
		\draw [line width=0.65pt] (9.center) to (2);
		\draw [line width=1.15pt] (25.center) to (26.center);
		\draw [line width=1.15pt] (28.center) to (29.center);
		\draw [line width=0.65pt] (4) to (12.center);
		\draw [line width=0.65pt] (12.center) to (11.center);
		\draw [line width=0.65pt] (11.center) to (3);
		\draw [line width=1.15pt] (31.center) to (32.center);
		\draw [line width=1.15pt] (34.center) to (35.center);
		\draw [draw={blue!65!black}, line width=1.1pt] (39) to (43);
		\draw [draw={blue!65!black}, line width=1.1pt] (43) to (42);
		\draw [draw={blue!65!black}, line width=1.1pt] (42) to (48);
		\draw [draw={blue!65!black}, line width=1.1pt] (48) to (47);
		\draw [draw={blue!65!black}, line width=1.1pt] (39) to (44);
		\draw [draw={blue!65!black}, line width=1.1pt] (44) to (40);
		\draw [draw={blue!65!black}, line width=1.1pt] (40) to (46);
		\draw [draw={blue!65!black}, line width=1.1pt] (39) to (45);
		\draw [draw={blue!65!black}, line width=1.1pt] (45) to (41);
		\draw [draw={blue!65!black}, line width=1.1pt] (41) to (49);
		\draw [draw={blue!65!black}, line width=1.1pt] (49) to (50);
		\draw [dashed, line width=0.8pt, draw={black!45}] (42) to (50);
		\draw [dashed, line width=0.8pt, draw={black!45}] (40) to (47);
		\draw [dashed, line width=0.8pt, draw={orange!85!black}] (41) to (46);
		\draw [draw={blue!65!black}, line width=0.7pt] (56) to (60);
		\draw [draw={blue!65!black}, line width=0.7pt] (60) to (59);
		\draw [draw={blue!65!black}, line width=0.7pt] (59) to (65);
		\draw [draw={blue!65!black}, line width=0.7pt] (65) to (64);
		\draw [draw={blue!65!black}, line width=0.7pt] (56) to (61);
		\draw [draw={blue!65!black}, line width=0.7pt] (61) to (57);
		\draw [draw={blue!65!black}, line width=0.7pt] (57) to (63);
		\draw [draw={blue!65!black}, line width=0.7pt] (56) to (62);
		\draw [draw={blue!65!black}, line width=0.7pt] (62) to (58);
		\draw [draw={blue!65!black}, line width=0.7pt] (58) to (66);
		\draw [draw={blue!65!black}, line width=0.7pt] (66) to (67);
		\draw [dashed, line width=0.8pt, draw={black!45}] (59) to (67);
		\draw [dashed, line width=0.8pt, draw={black!45}] (57) to (64);
		\draw [dashed, line width=0.8pt, draw={orange!85!black}] (58) to (63);
		\draw [line width=0.6pt] (60) to (80.center);
		\draw [line width=0.6pt] (61) to (81.center);
		\draw [line width=0.6pt, draw={teal!80!black}] (62) to (82.center);
		\draw [line width=0.6pt] (63) to (84.center);
		\draw [line width=0.6pt] (64) to (85.center);
		\draw [line width=0.6pt] (65) to (86.center);
		\draw [line width=0.6pt] (66) to (87.center);
		\draw [line width=0.6pt] (67) to (88.center);
		\draw [style=arrow, line width=0.7pt] (91.center) to (92.center);
		\draw [style=arrow, line width=0.7pt] (93.center) to (94.center);
		\draw [draw={blue!65!black}, line width=1.1pt] (95.center) to (96.center);
		\draw [dashed, draw={black!45}, line width=.8pt] (98.center) to (99.center);
		\draw [dashed, draw={orange!85!black}, line width=.8pt] (101.center) to (102.center);
		\draw [draw={blue!65!black}, line width=.7pt] (105) to (106);
		\draw [draw={blue!65!black}, line width=.7pt] (106) to (112);
		\draw [draw={blue!65!black}, line width=.7pt] (112) to (113);
		\draw [draw={blue!65!black}, line width=.7pt] (113) to (114);
		\draw [draw={blue!65!black}, line width=.7pt] (105) to (107);
		\draw [draw={blue!65!black}, line width=.7pt] (107) to (115);
		\draw [draw={blue!65!black}, line width=.7pt] (115) to (116);
		\draw [draw={blue!65!black}, line width=.7pt] (105) to (108);
		\draw [draw={blue!65!black}, line width=.7pt] (108) to (109);
		\draw [draw={blue!65!black}, line width=.7pt] (109) to (110);
		\draw [draw={blue!65!black}, line width=.7pt] (110) to (111);
		\draw [dashed, draw={orange!85!black}, line width=.8pt] (109) to (129.center);
		\draw [dashed, draw={orange!85!black}, line width=.8pt] (129.center) to (130.center);
		\draw [dashed, draw={orange!85!black}, line width=.8pt] (130.center) to (116);
		\draw [dashed, draw={black!45}, line width=.8pt] (111) to (112);
		\draw [dashed, draw={black!45}, line width=.8pt] (114) to (115);
		\draw [line width=.6pt] (111) to (132.center);
		\draw [line width=.6pt] (116) to (134.center);
		\draw [line width=.6pt] (114) to (136.center);
		\draw [line width=.6pt] (106) to (139.center);
		\draw [line width=.6pt] (139.center) to (138.center);
		\draw [line width=.6pt] (107) to (142.center);
		\draw [line width=.6pt] (142.center) to (141.center);
		\draw [line width=.6pt, draw={teal!80!black}] (108) to (145.center);
		\draw [line width=.6pt, draw={teal!80!black}] (145.center) to (144.center);
		\draw [line width=.6pt] (110) to (148.center);
		\draw [line width=.6pt] (148.center) to (147.center);
		\draw [line width=.6pt] (113) to (151.center);
		\draw [line width=.6pt] (151.center) to (150.center);
		\draw [style=arrow, line width=.75pt] (154.center) to (155.center);
		\draw [line width=.6pt] (180) to (181);
		\draw [line width=.6pt] (182) to (183);
		\draw [line width=.6pt] (184) to (185);
		\draw [line width=.6pt] (186) to (187);
		\draw [line width=.6pt] (188) to (189);
		\draw [draw={orange!85!black}, line width=.85pt] (190) to (191);
		\draw [line width=.6pt] (192) to (193);
		\draw [line width=.6pt] (194) to (195);
		\draw [line width=.6pt] (196) to (197);
		\draw [line width=.6pt] (198) to (199);
		\draw [line width=.6pt] (200) to (201);
		\draw [line width=.6pt] (202) to (203);
		\draw [draw={orange!85!black}, line width=.85pt] (204) to (205);
		\draw [line width=.45pt] (159.center) to (180);
		\draw [line width=.45pt] (180) to (182);
		\draw [line width=.45pt] (182) to (188);
		\draw [line width=.45pt] (188) to (190);
		\draw [line width=.45pt] (190) to (192);
		\draw [line width=.45pt] (192) to (194);
		\draw [line width=.45pt] (194) to (206.center);
		\draw [line width=.45pt] (161.center) to (181);
		\draw [line width=.45pt] (181) to (186);
		\draw [line width=.45pt] (186) to (202);
		\draw [line width=.45pt] (202) to (204);
		\draw [line width=.45pt] (204) to (208.center);
		\draw [line width=.45pt] (163.center) to (183);
		\draw [line width=.45pt] (183) to (184);
		\draw [line width=.45pt] (184) to (196);
		\draw [line width=.45pt] (196) to (198);
		\draw [line width=.45pt] (198) to (200);
		\draw [line width=.45pt] (200) to (210.center);
		\draw [line width=.45pt] (165.center) to (185);
		\draw [line width=.45pt] (185) to (212.center);
		\draw [line width=.45pt] (167.center) to (187);
		\draw [line width=.45pt] (187) to (214.center);
		\draw [line width=.45pt, draw={teal!80!black}] (169.center) to (189);
		\draw [line width=.45pt, draw={teal!80!black}] (189) to (216.center);
		\draw [line width=.45pt] (171.center) to (193);
		\draw [line width=.45pt] (193) to (218.center);
		\draw [line width=.45pt] (173.center) to (199);
		\draw [line width=.45pt] (199) to (220.center);
		\draw [line width=.45pt, draw={orange!85!black}] (175.center) to (191);
		\draw [line width=.45pt, draw={orange!85!black}] (191) to (205);
		\draw [line width=.45pt, draw={orange!85!black}] (205) to (222);
		\draw [line width=.45pt] (177.center) to (195);
		\draw [line width=.45pt] (195) to (197);
		\draw [line width=.45pt] (197) to (226);
		\draw [line width=.45pt] (179.center) to (201);
		\draw [line width=.45pt] (201) to (203);
		\draw [line width=.45pt] (203) to (229);
	\end{pgfonlayer}
\end{tikzpicture}

%% file: figures/steane-zero-full-fig.tikz
\begin{tikzpicture}
	\begin{pgfonlayer}{nodelayer}
		\node [style=Z dot] (0) at (-12, 1.5) {};
		\node [style=Z dot] (1) at (-12, 0) {};
		\node [style=Z dot] (2) at (-12, -1.5) {};
		\node [style=X dot] (3) at (-10, 3) {};
		\node [style=X dot] (4) at (-10, 2) {};
		\node [style=X dot] (5) at (-10, 1) {};
		\node [style=X dot] (6) at (-10, 0) {};
		\node [style=X dot] (7) at (-10, -1) {};
		\node [style=X dot] (8) at (-10, -2) {};
		\node [style=X dot] (9) at (-10, -3) {};
		\node [style=none] (10) at (-9, 3) {};
		\node [style=none] (11) at (-9, 2) {};
		\node [style=none] (12) at (-9, 1) {};
		\node [style=none] (13) at (-9, 0) {};
		\node [style=none] (14) at (-9, -1) {};
		\node [style=none] (15) at (-9, -2) {};
		\node [style=none] (16) at (-9, -3) {};
		\node [style=none] (17) at (0, 0) {$\FaultEq$};
		\node [style=none] (46) at (7, 0) {$\FaultEq$};
		\node [style=sLabel] (47) at (7, -0.5) {CSS};
		\node [style=none] (49) at (21, 0) {$\FaultEq$};
		\node [style=X dot] (50) at (25.75, 2) {};
		\node [style=X dot] (51) at (25.75, 0) {};
		\node [style=X dot] (52) at (25.25, -2) {};
		\node [style=none] (53) at (27, 3) {};
		\node [style=none] (54) at (27, 2) {};
		\node [style=none] (55) at (27, 1) {};
		\node [style=none] (56) at (27, 0) {};
		\node [style=none] (57) at (27, -1) {};
		\node [style=none] (58) at (27, -2) {};
		\node [style=none] (59) at (27, -3) {};
		\node [style=Z dot] (60) at (22.5, 2.5) {};
		\node [style=Z dot] (61) at (23, 3) {};
		\node [style=Z dot] (62) at (23.75, 2) {};
		\node [style=Z dot] (63) at (24.25, 2.5) {};
		\node [style=Z dot] (64) at (22.75, -2.5) {};
		\node [style=Z dot] (65) at (23.25, -3) {};
		\node [style=Z dot] (66) at (23.75, -2) {};
		\node [style=Z dot] (67) at (24.25, -2.5) {};
		\node [style=Z dot] (68) at (22.5, -1) {};
		\node [style=Z dot] (69) at (23.25, -1) {};
		\node [style=Z dot] (70) at (23.25, -0.25) {};
		\node [style=Z dot] (71) at (22.5, -0.25) {};
		\node [style=X dot] (73) at (25.75, 1) {};
		\node [style=X dot] (74) at (23.5, 1) {};
		\node [style=sLabel] (76) at (0, 1) {(Step 1.2.1)};
		\node [style=sLabel] (77) at (7, 1) {(Step 1.2.2)};
		\node [style=sLabel] (78) at (21, 1) {(Step 1.3.2)};
		\node [style=Z dot] (79) at (16, 3) {};
		\node [style=Z dot] (80) at (16, -1) {};
		\node [style=Z dot] (81) at (16, -3) {};
		\node [style=X dot] (82) at (18, 2) {};
		\node [style=X dot] (83) at (16.75, 1) {};
		\node [style=X dot] (84) at (18, 0) {};
		\node [style=X dot] (85) at (18, -2) {};
		\node [style=none] (86) at (19, 3) {};
		\node [style=none] (87) at (19, 2) {};
		\node [style=none] (88) at (19, 1) {};
		\node [style=none] (89) at (19, 0) {};
		\node [style=none] (90) at (19, -1) {};
		\node [style=none] (91) at (19, -2) {};
		\node [style=none] (92) at (19, -3) {};
		\node [style=none] (93) at (14, 0) {$\FaultEq$};
		\node [style=sLabel] (94) at (14, 1) {(Step 1.3.1)};
		\node [style=X dot] (95) at (18, 1) {};
		\node [style=none] (110) at (-7, 0) {$\FaultEq$};
		\node [style=Z dot] (111) at (-5, 1.5) {};
		\node [style=Z dot] (112) at (-5, 0) {};
		\node [style=Z dot] (113) at (-5, -1.5) {};
		\node [style=X dot] (114) at (-3, 2) {};
		\node [style=X dot] (115) at (-3, 1) {};
		\node [style=X dot] (116) at (-3, 0) {};
		\node [style=X dot] (117) at (-3, -2) {};
		\node [style=none] (118) at (-2, 3) {};
		\node [style=none] (119) at (-2, 2) {};
		\node [style=none] (120) at (-2, 1) {};
		\node [style=none] (121) at (-2, 0) {};
		\node [style=none] (122) at (-2, -1) {};
		\node [style=none] (123) at (-2, -2) {};
		\node [style=none] (124) at (-2, -3) {};
		\node [style=sLabel] (125) at (-7, 1) {(Step 1.1)};
		\node [style=X dot] (126) at (-3, 3) {};
		\node [style=X dot] (127) at (-3, -1) {};
		\node [style=X dot] (128) at (-3, -3) {};
		\node [style=Z dot] (129) at (2, 1.5) {};
		\node [style=Z dot] (130) at (2, 0) {};
		\node [style=Z dot] (131) at (2, -1.5) {};
		\node [style=X dot] (132) at (4, 2) {};
		\node [style=X dot] (133) at (4, 1) {};
		\node [style=X dot] (134) at (4, 0) {};
		\node [style=X dot] (135) at (4, -2) {};
		\node [style=none] (136) at (5, 3) {};
		\node [style=none] (137) at (5, 2) {};
		\node [style=none] (138) at (5, 1) {};
		\node [style=none] (139) at (5, 0) {};
		\node [style=none] (140) at (5, -1) {};
		\node [style=none] (141) at (5, -2) {};
		\node [style=none] (142) at (5, -3) {};
		\node [style=Z dot] (143) at (9, 1.5) {};
		\node [style=Z dot] (144) at (9, 0) {};
		\node [style=Z dot] (145) at (9, -1.5) {};
		\node [style=X dot] (146) at (11, 2) {};
		\node [style=X dot] (147) at (11, 1) {};
		\node [style=X dot] (148) at (11, 0) {};
		\node [style=X dot] (149) at (11, -2) {};
		\node [style=none] (150) at (12, 3) {};
		\node [style=none] (151) at (12, 2) {};
		\node [style=none] (152) at (12, 1) {};
		\node [style=none] (153) at (12, 0) {};
		\node [style=none] (154) at (12, -1) {};
		\node [style=none] (155) at (12, -2) {};
		\node [style=none] (156) at (12, -3) {};
	\end{pgfonlayer}
	\begin{pgfonlayer}{edgelayer}
		\draw [style=fault-free] (0) to (3);
		\draw [style=fault-free] (0) to (4);
		\draw [style=fault-free] (0) to (5);
		\draw [style=fault-free] (0) to (6);
		\draw [style=fault-free] (1) to (4);
		\draw [style=fault-free] (1) to (5);
		\draw [style=fault-free] (1) to (7);
		\draw [style=fault-free] (1) to (8);
		\draw [style=fault-free] (2) to (5);
		\draw [style=fault-free] (2) to (6);
		\draw [style=fault-free] (2) to (8);
		\draw [style=fault-free] (2) to (9);
		\draw (16.center) to (9);
		\draw (15.center) to (8);
		\draw (14.center) to (7);
		\draw (13.center) to (6);
		\draw (12.center) to (5);
		\draw (11.center) to (4);
		\draw (10.center) to (3);
		\draw (58.center) to (52);
		\draw (56.center) to (51);
		\draw (54.center) to (50);
		\draw (61) to (60);
		\draw (60) to (62);
		\draw (62) to (63);
		\draw (63) to (61);
		\draw (63) to (50);
		\draw (60) to (51);
		\draw (61) to (53.center);
		\draw (65) to (64);
		\draw (64) to (66);
		\draw (66) to (67);
		\draw (67) to (65);
		\draw (65) to (59.center);
		\draw (67) to (52);
		\draw (66) to (51);
		\draw (71) to (68);
		\draw (68) to (69);
		\draw (69) to (70);
		\draw (70) to (71);
		\draw (69) to (57.center);
		\draw (68) to (52);
		\draw (71) to (50);
		\draw (73) to (74);
		\draw (73) to (55.center);
		\draw (74) to (62);
		\draw (79) to (82);
		\draw (79) to (83);
		\draw (79) to (84);
		\draw (80) to (82);
		\draw (80) to (83);
		\draw (80) to (85);
		\draw (81) to (84);
		\draw (81) to (85);
		\draw (91.center) to (85);
		\draw (89.center) to (84);
		\draw (87.center) to (82);
		\draw (79) to (86.center);
		\draw (90.center) to (80);
		\draw (92.center) to (81);
		\draw (83) to (95);
		\draw (95) to (88.center);
		\draw (95) to (81);
		\draw (74) to (70);
		\draw (73) to (64);
		\draw [style=fault-free] (111) to (114);
		\draw [style=fault-free] (111) to (115);
		\draw [style=fault-free] (111) to (116);
		\draw [style=fault-free] (112) to (114);
		\draw [style=fault-free] (112) to (115);
		\draw [style=fault-free] (112) to (117);
		\draw [style=fault-free] (113) to (115);
		\draw [style=fault-free] (113) to (116);
		\draw [style=fault-free] (113) to (117);
		\draw (123.center) to (117);
		\draw (121.center) to (116);
		\draw (120.center) to (115);
		\draw (119.center) to (114);
		\draw [style=fault-free] (126) to (111);
		\draw [style=fault-free] (112) to (127);
		\draw [style=fault-free] (113) to (128);
		\draw (127) to (122.center);
		\draw (126) to (118.center);
		\draw (128) to (124.center);
		\draw [style=fault-free] (129) to (132);
		\draw [style=fault-free] (129) to (133);
		\draw [style=fault-free] (129) to (134);
		\draw [style=fault-free] (130) to (132);
		\draw [style=fault-free] (130) to (133);
		\draw [style=fault-free] (130) to (135);
		\draw [style=fault-free] (131) to (133);
		\draw [style=fault-free] (131) to (134);
		\draw [style=fault-free] (131) to (135);
		\draw (141.center) to (135);
		\draw (139.center) to (134);
		\draw (138.center) to (133);
		\draw (137.center) to (132);
		\draw (129) to (136.center);
		\draw (131) to (142.center);
		\draw (130) to (140.center);
		\draw (143) to (146);
		\draw (143) to (147);
		\draw (143) to (148);
		\draw (144) to (146);
		\draw (144) to (147);
		\draw (144) to (149);
		\draw (145) to (147);
		\draw (145) to (148);
		\draw (145) to (149);
		\draw (155.center) to (149);
		\draw (153.center) to (148);
		\draw (152.center) to (147);
		\draw (151.center) to (146);
		\draw (143) to (150.center);
		\draw (145) to (156.center);
		\draw (144) to (154.center);
	\end{pgfonlayer}
\end{tikzpicture}

%% file: figures/semi-idealized-id.tikz
\begin{tikzpicture}
	\begin{pgfonlayer}{nodelayer}
		\node [style=X dot] (0) at (-2.75, 0) {};
		\node [style=none] (7) at (-4.25, 0) {};
		\node [style=none] (14) at (0, 0) {$\FaultEq$};
		\node [style=none] (29) at (-1.5, 0) {};
		\node [style=none] (31) at (1.5, 0) {};
		\node [style=none] (38) at (3.5, 0) {};
	\end{pgfonlayer}
	\begin{pgfonlayer}{edgelayer}
		\draw [style=fault-free] (7.center) to (0);
		\draw (29.center) to (0);
		\draw (38.center) to (31.center);
	\end{pgfonlayer}
\end{tikzpicture}

%% file: figures/johns-rule.tikz
\begin{tikzpicture}
	\begin{pgfonlayer}{nodelayer}
		\node [style=X dot] (1) at (-3, 0) {};
		\node [style=none] (2) at (-2.5, 1.25) {};
		\node [style=none] (3) at (-1.75, 0.5) {};
		\node [style=none] (4) at (-1.75, -0.5) {};
		\node [style=none] (5) at (-2.5, -1.25) {};
		\node [style=none, rotate=-40] (6) at (-2.25, 0.75) {$\ldots$};
		\node [style=none, rotate=40] (8) at (-2.25, -0.75) {$\ldots$};
		\node [style=Z dot] (9) at (-5, 0) {};
		\node [style=none] (10) at (-5.5, 1.25) {};
		\node [style=none] (11) at (-6.25, 0.5) {};
		\node [style=none] (12) at (-6.25, -0.5) {};
		\node [style=none] (13) at (-5.5, -1.25) {};
		\node [style=none, rotate=-40] (14) at (-5.75, -0.75) {$\ldots$};
		\node [style=none, rotate=40] (15) at (-5.75, 0.75) {$\ldots$};
		\node [style=none] (16) at (0, 0) {$\FaultEq$};
		\node [style=X dot] (17) at (5, 0) {};
		\node [style=none] (18) at (5.5, 1.25) {};
		\node [style=none] (19) at (6.25, 0.5) {};
		\node [style=none] (20) at (6.25, -0.5) {};
		\node [style=none] (21) at (5.5, -1.25) {};
		\node [style=none, rotate=-40] (22) at (5.75, 0.75) {$\ldots$};
		\node [style=none, rotate=40] (23) at (5.75, -0.75) {$\ldots$};
		\node [style=Z dot] (24) at (3, 0) {};
		\node [style=none] (25) at (2.5, 1.25) {};
		\node [style=none] (26) at (1.75, 0.5) {};
		\node [style=none] (27) at (1.75, -0.5) {};
		\node [style=none] (28) at (2.5, -1.25) {};
		\node [style=none, rotate=-40] (29) at (2.25, -0.75) {$\ldots$};
		\node [style=none, rotate=40] (30) at (2.25, 0.75) {$\ldots$};
		\node [style=sLabel] (31) at (0, -0.5) {CSS};
		\node [style=sLabel] (32) at (-6, 1.125) {$a$};
		\node [style=sLabel] (33) at (-1.875, 1.125) {$c$};
		\node [style=sLabel] (34) at (-6, -1) {$b$};
		\node [style=sLabel] (35) at (-2, -1) {$d$};
		\node [style=sLabel] (36) at (2, 1.125) {$a$};
		\node [style=sLabel] (37) at (6.125, 1.125) {$c$};
		\node [style=sLabel] (38) at (2, -1) {$b$};
		\node [style=sLabel] (39) at (6, -1) {$d$};
	\end{pgfonlayer}
	\begin{pgfonlayer}{edgelayer}
		\draw (2.center) to (1);
		\draw (1) to (3.center);
		\draw [style=fault-free] (4.center) to (1);
		\draw [style=fault-free] (1) to (5.center);
		\draw (10.center) to (9);
		\draw (9) to (11.center);
		\draw [style=fault-free] (12.center) to (9);
		\draw [style=fault-free] (9) to (13.center);
		\draw [style=fault-free] (9) to (1);
		\draw (18.center) to (17);
		\draw (17) to (19.center);
		\draw [style=fault-free] (20.center) to (17);
		\draw [style=fault-free] (17) to (21.center);
		\draw (25.center) to (24);
		\draw (24) to (26.center);
		\draw [style=fault-free] (27.center) to (24);
		\draw [style=fault-free] (24) to (28.center);
		\draw (24) to (17);
	\end{pgfonlayer}
\end{tikzpicture}

%% file: figures/colour-change-int-proof2.tikz
\begin{tikzpicture}
	\begin{pgfonlayer}{nodelayer}
		\node [style=X phase dot] (54) at (-12.25, 0) {$\pi$};
		\node [style=X dot] (137) at (-11.25, 0) {};
		\node [style=none] (138) at (-10.75, 1.25) {};
		\node [style=none] (139) at (-10, 0.5) {};
		\node [style=none] (140) at (-10, -0.5) {};
		\node [style=none] (141) at (-10.75, -1.25) {};
		\node [style=none, rotate=-40] (142) at (-10.5, 0.75) {$\ldots$};
		\node [style=none, rotate=40] (143) at (-10.5, -0.75) {$\ldots$};
		\node [style=Z dot] (144) at (-13.25, 0) {};
		\node [style=none] (145) at (-13.75, 1.25) {};
		\node [style=none] (146) at (-14.5, 0.5) {};
		\node [style=none] (147) at (-14.5, -0.5) {};
		\node [style=none] (148) at (-13.75, -1.25) {};
		\node [style=none, rotate=-40] (149) at (-14, -0.75) {$\ldots$};
		\node [style=none, rotate=40] (150) at (-14, 0.75) {$\ldots$};
		\node [style=none] (151) at (-8.5, 0) {$=$};
		\node [style=X dot] (152) at (-3.75, 0) {};
		\node [style=none] (153) at (-3, 2.25) {};
		\node [style=none] (154) at (-2.5, 0.5) {};
		\node [style=none] (155) at (-2.5, -0.5) {};
		\node [style=none] (156) at (-3.25, -1.25) {};
		\node [style=none, rotate=-40] (157) at (-3, 0.75) {$\ldots$};
		\node [style=none, rotate=40] (158) at (-3, -0.75) {$\ldots$};
		\node [style=Z dot] (159) at (-5.75, 0) {};
		\node [style=none] (160) at (-6.25, 1.25) {};
		\node [style=none] (161) at (-7, 0.5) {};
		\node [style=none] (162) at (-7, -0.5) {};
		\node [style=none] (163) at (-6.25, -1.25) {};
		\node [style=none, rotate=-40] (164) at (-6.5, -0.75) {$\ldots$};
		\node [style=none, rotate=40] (165) at (-6.5, 0.75) {$\ldots$};
		\node [style=Z phase dot] (166) at (4.25, 0) {$\pi$};
		\node [style=X dot] (167) at (5.25, 0) {};
		\node [style=none] (168) at (5.75, 1.25) {};
		\node [style=none] (169) at (6.5, 0.5) {};
		\node [style=none] (170) at (6.5, -0.5) {};
		\node [style=none] (171) at (5.75, -1.25) {};
		\node [style=none, rotate=-40] (172) at (6, 0.75) {$\ldots$};
		\node [style=none, rotate=40] (173) at (6, -0.75) {$\ldots$};
		\node [style=Z dot] (174) at (3.25, 0) {};
		\node [style=none] (175) at (2.75, 1.25) {};
		\node [style=none] (176) at (2, 0.5) {};
		\node [style=none] (177) at (2, -0.5) {};
		\node [style=none] (178) at (2.75, -1.25) {};
		\node [style=none, rotate=-40] (179) at (2.5, -0.75) {$\ldots$};
		\node [style=none, rotate=40] (180) at (2.5, 0.75) {$\ldots$};
		\node [style=none] (181) at (8, 0) {$=$};
		\node [style=X dot] (182) at (12.75, 0) {};
		\node [style=none] (183) at (13.25, 1.25) {};
		\node [style=none] (184) at (14, 0.5) {};
		\node [style=none] (185) at (14, -0.5) {};
		\node [style=none] (186) at (13.25, -1.25) {};
		\node [style=none, rotate=-40] (187) at (13.5, 0.75) {$\ldots$};
		\node [style=none, rotate=40] (188) at (13.5, -0.75) {$\ldots$};
		\node [style=Z dot] (189) at (10.75, 0) {};
		\node [style=none] (190) at (10, 2.25) {};
		\node [style=none] (191) at (9, 0.75) {};
		\node [style=none] (192) at (9.5, -0.5) {};
		\node [style=none] (193) at (10.25, -1.25) {};
		\node [style=none, rotate=-40] (194) at (10, -0.75) {$\ldots$};
		\node [style=none, rotate=40] (195) at (10, 0.75) {$\ldots$};
		\node [style=Z phase dot] (196) at (10.25, 1.5) {$\pi$};
		\node [style=X phase dot] (197) at (-3.25, 1.5) {$\pi$};
	\end{pgfonlayer}
	\begin{pgfonlayer}{edgelayer}
		\draw (138.center) to (137);
		\draw (137) to (139.center);
		\draw [style=fault-free] (140.center) to (137);
		\draw [style=fault-free] (137) to (141.center);
		\draw (145.center) to (144);
		\draw (144) to (146.center);
		\draw [style=fault-free] (147.center) to (144);
		\draw [style=fault-free] (144) to (148.center);
		\draw (144) to (137);
		\draw (153.center) to (152);
		\draw (152) to (154.center);
		\draw [style=fault-free] (155.center) to (152);
		\draw [style=fault-free] (152) to (156.center);
		\draw (160.center) to (159);
		\draw (159) to (161.center);
		\draw [style=fault-free] (162.center) to (159);
		\draw [style=fault-free] (159) to (163.center);
		\draw (159) to (152);
		\draw (168.center) to (167);
		\draw (167) to (169.center);
		\draw [style=fault-free] (170.center) to (167);
		\draw [style=fault-free] (167) to (171.center);
		\draw (175.center) to (174);
		\draw (174) to (176.center);
		\draw [style=fault-free] (177.center) to (174);
		\draw [style=fault-free] (174) to (178.center);
		\draw (174) to (167);
		\draw (183.center) to (182);
		\draw (182) to (184.center);
		\draw [style=fault-free] (185.center) to (182);
		\draw [style=fault-free] (182) to (186.center);
		\draw (190.center) to (189);
		\draw (189) to (191.center);
		\draw [style=fault-free] (192.center) to (189);
		\draw [style=fault-free] (189) to (193.center);
		\draw (189) to (182);
	\end{pgfonlayer}
\end{tikzpicture}

%% file: figures/safe-hook-split.tikz
\begin{tikzpicture}
	\begin{pgfonlayer}{nodelayer}
		\node [style=Z dot] (0) at (-9, 0) {};
		\node [style=none] (1) at (-10, 1) {};
		\node [style=none] (4) at (-10, -1) {};
		\node [style=Z dot] (7) at (-7, 0) {};
		\node [style=none] (8) at (-6, 1) {};
		\node [style=none] (11) at (-6, -1) {};
		\node [style=none, rotate=90] (12) at (-6, 0) {$\ldots$};
		\node [style=none, rotate=90] (13) at (-10, 0) {$\ldots$};
		\node [style=none] (14) at (-5.75, -1.125) {};
		\node [style=none] (15) at (-5.75, 1.125) {};
		\node [style=sLabel, align=center] (18) at (-3.75, 0) {Strictly safe \\ hook error};
		\node [style=none] (19) at (-11.5, 0) {$\FaultEq$};
		\node [style=none] (21) at (-13, 1) {};
		\node [style=none] (22) at (-13, -1) {};
		\node [style=Z dot] (23) at (-14, 0) {};
		\node [style=none] (24) at (-15, 1) {};
		\node [style=none] (25) at (-15, -1) {};
		\node [style=none, rotate=90] (26) at (-15, 0) {$\ldots$};
		\node [style=none, rotate=90] (27) at (-13, 0) {$\ldots$};
		\node [style=X dot] (28) at (10, 0) {};
		\node [style=none] (29) at (9, 1) {};
		\node [style=none] (30) at (9, -1) {};
		\node [style=X dot] (31) at (12, 0) {};
		\node [style=none] (32) at (13, 1) {};
		\node [style=none] (33) at (13, -1) {};
		\node [style=none, rotate=90] (34) at (13, 0) {$\ldots$};
		\node [style=none, rotate=90] (35) at (9, 0) {$\ldots$};
		\node [style=none] (36) at (13.25, -1.125) {};
		\node [style=none] (37) at (13.25, 1.125) {};
		\node [style=sLabel, align=center] (38) at (15.25, 0) {Strictly safe \\ hook error};
		\node [style=none] (39) at (7.5, 0) {$\FaultEq$};
		\node [style=none] (40) at (6, 1) {};
		\node [style=none] (41) at (6, -1) {};
		\node [style=X dot] (42) at (5, 0) {};
		\node [style=none] (43) at (4, 1) {};
		\node [style=none] (44) at (4, -1) {};
		\node [style=none, rotate=90] (45) at (4, 0) {$\ldots$};
		\node [style=none, rotate=90] (46) at (6, 0) {$\ldots$};
	\end{pgfonlayer}
	\begin{pgfonlayer}{edgelayer}
		\draw (1.center) to (0);
		\draw (0) to (4.center);
		\draw (8.center) to (7);
		\draw (7) to (11.center);
		\draw (7) to (0);
		\draw [style=braceedge] (15.center) to (14.center);
		\draw (24.center) to (23);
		\draw (23) to (25.center);
		\draw (21.center) to (23);
		\draw (23) to (22.center);
		\draw (29.center) to (28);
		\draw (28) to (30.center);
		\draw (32.center) to (31);
		\draw (31) to (33.center);
		\draw (31) to (28);
		\draw [style=braceedge] (37.center) to (36.center);
		\draw (43.center) to (42);
		\draw (42) to (44.center);
		\draw (40.center) to (42);
		\draw (42) to (41.center);
	\end{pgfonlayer}
\end{tikzpicture}

%% file: figures/steane-zero-naive-se.tikz
\begin{tikzpicture}
	\begin{pgfonlayer}{nodelayer}
		\node [style=X dot] (26) at (-12, 2) {};
		\node [style=X dot] (27) at (-12, 0) {};
		\node [style=X dot] (28) at (-12.5, -2) {};
		\node [style=none] (29) at (-10.25, 3) {};
		\node [style=none] (30) at (-10.25, 2) {};
		\node [style=none] (31) at (-10.25, 1) {};
		\node [style=none] (32) at (-10.25, 0) {};
		\node [style=none] (33) at (-10.25, -1) {};
		\node [style=none] (34) at (-10.25, -2) {};
		\node [style=none] (35) at (-10.25, -3) {};
		\node [style=Z dot] (36) at (-15.25, 2.5) {};
		\node [style=Z dot] (37) at (-14.75, 3) {};
		\node [style=Z dot] (38) at (-14, 2) {};
		\node [style=Z dot] (39) at (-13.5, 2.5) {};
		\node [style=Z dot] (40) at (-15, -2.5) {};
		\node [style=Z dot] (41) at (-14.5, -3) {};
		\node [style=Z dot] (42) at (-14, -2) {};
		\node [style=Z dot] (43) at (-13.5, -2.5) {};
		\node [style=Z dot] (44) at (-15.25, -1) {};
		\node [style=Z dot] (45) at (-14.5, -1) {};
		\node [style=Z dot] (46) at (-14.5, -0.25) {};
		\node [style=Z dot] (47) at (-15.25, -0.25) {};
		\node [style=X dot] (50) at (-14.25, 1) {};
		\node [style=X dot] (51) at (-12, 1) {};
		\node [style=none] (52) at (-8.75, 0) {$\FaultEq$};
		\node [style=X dot] (100) at (-4.25, 2) {};
		\node [style=X dot] (101) at (-3, 0) {};
		\node [style=X dot] (102) at (-4.25, -2) {};
		\node [style=none] (103) at (-1.25, 3) {};
		\node [style=none] (104) at (-1.25, 2) {};
		\node [style=none] (105) at (-1.25, 1) {};
		\node [style=none] (106) at (-1.25, 0) {};
		\node [style=none] (107) at (-1.25, -1) {};
		\node [style=none] (108) at (-1.25, -2) {};
		\node [style=none] (109) at (-1.25, -3) {};
		\node [style=Z dot] (110) at (-7.25, 3.5) {};
		\node [style=Z dot] (111) at (-6.75, 3) {};
		\node [style=Z dot] (112) at (-6, 4) {};
		\node [style=Z dot] (113) at (-5.5, 3.5) {};
		\node [style=Z dot] (114) at (-7.5, -3.5) {};
		\node [style=Z dot] (115) at (-7, -3) {};
		\node [style=Z dot] (116) at (-6.25, -4) {};
		\node [style=Z dot] (117) at (-5.75, -3.5) {};
		\node [style=Z dot] (118) at (-6.75, -1.5) {};
		\node [style=Z dot] (119) at (-6, -1) {};
		\node [style=Z dot] (120) at (-6.5, -0.5) {};
		\node [style=Z dot] (121) at (-7.25, -0.5) {};
		\node [style=X dot] (122) at (-7.125, 1.775) {};
		\node [style=X dot] (125) at (-7.125, 1) {};
		\node [style=none] (126) at (0.25, 0) {$\FaultEq$};
		\node [style=none] (127) at (-2.25, 5) {};
		\node [style=none] (128) at (-4.75, 4.25) {};
		\node [style=none] (129) at (-3.5, 3.5) {};
		\node [style=none] (130) at (-3.5, 2.5) {};
		\node [style=none] (131) at (-3.5, 1.5) {};
		\node [style=none] (132) at (-2.25, 0.5) {};
		\node [style=none] (133) at (-2.25, -0.5) {};
		\node [style=none] (134) at (-3.5, -1.5) {};
		\node [style=none] (135) at (-3.5, -2.5) {};
		\node [style=none] (138) at (-4.75, -4.5) {};
		\node [style=none] (139) at (-3.5, -4) {};
		\node [style=none] (140) at (-2.25, -3.5) {};
		\node [style=none] (141) at (-3.5, -0.5) {};
		\node [style=none] (142) at (-5.25, -0.5) {};
		\node [style=none] (143) at (-3.5, 0.5) {};
		\node [style=none] (144) at (-5.25, 1.75) {};
		\node [style=none] (145) at (-4.75, 0.5) {};
		\node [style=none] (146) at (-4.75, 2.25) {};
		\node [style=none] (147) at (-7.5, 5) {};
		\node [style=none] (148) at (-7.5, 2.75) {};
		\node [style=none] (149) at (-5.25, 2.75) {};
		\node [style=none] (150) at (-5.25, 5) {};
		\node [style=none] (151) at (-7.75, -2.75) {};
		\node [style=none] (152) at (-7.75, -4.75) {};
		\node [style=none] (153) at (-5.5, -4.75) {};
		\node [style=none] (154) at (-5.5, -2.75) {};
		\node [style=none] (155) at (-7.5, 0) {};
		\node [style=none] (156) at (-7.5, -1.75) {};
		\node [style=none] (157) at (-5.5, -1.75) {};
		\node [style=none] (158) at (-5.5, 0) {};
		\node [style=none] (163) at (-4.5, 2.75) {};
		\node [style=none] (164) at (-4.5, 1.25) {};
		\node [style=none] (165) at (-3.75, 1.25) {};
		\node [style=none] (166) at (-3.75, 2.75) {};
		\node [style=none] (167) at (-3.25, 0.75) {};
		\node [style=none] (168) at (-3.25, -0.75) {};
		\node [style=none] (169) at (-2.5, -0.75) {};
		\node [style=none] (170) at (-2.5, 0.75) {};
		\node [style=none] (171) at (-4.5, -1.25) {};
		\node [style=none] (172) at (-4.5, -2.75) {};
		\node [style=none] (173) at (-3.75, -2.75) {};
		\node [style=none] (174) at (-3.75, -1.25) {};
		\node [style=none] (178) at (14.5, 3.25) {};
		\node [style=none] (179) at (14.5, 2.25) {};
		\node [style=none] (180) at (14.5, 1) {};
		\node [style=none] (181) at (14.5, -0.25) {};
		\node [style=none] (182) at (14.5, -1.25) {};
		\node [style=none] (183) at (14.5, -2.25) {};
		\node [style=none] (184) at (14.5, -3.25) {};
		\node [style=Z dot] (202) at (12.75, 4.75) {};
		\node [style=Z dot] (203) at (5.25, 4.25) {};
		\node [style=Z dot] (204) at (9, 3.75) {};
		\node [style=X dot] (205) at (9, 2.75) {};
		\node [style=X dot] (206) at (9, 1.75) {};
		\node [style=X dot] (207) at (12.75, 0.25) {};
		\node [style=X dot] (208) at (12.75, -0.75) {};
		\node [style=X dot] (209) at (9, -1.75) {};
		\node [style=X dot] (210) at (9, -2.75) {};
		\node [style=Z dot] (211) at (5.25, -4.75) {};
		\node [style=Z dot] (212) at (9, -4.25) {};
		\node [style=Z dot] (213) at (12.75, -3.75) {};
		\node [style=Z dot] (214) at (9, -0.75) {};
		\node [style=Z dot] (215) at (4.75, -1) {};
		\node [style=Z dot] (216) at (9, 0) {};
		\node [style=X dot] (217) at (4.75, 2.5) {};
		\node [style=X dot] (218) at (5.25, 0.25) {};
		\node [style=X dot] (219) at (5.25, 1.75) {};
		\node [style=none] (220) at (2, 5) {};
		\node [style=none] (221) at (2, 3) {};
		\node [style=none] (222) at (4, 3) {};
		\node [style=none] (223) at (4, 5) {};
		\node [style=Z dot] (248) at (6, 4.25) {};
		\node [style=Z dot] (249) at (13.5, 4.75) {};
		\node [style=Z dot] (250) at (9.75, 3.75) {};
		\node [style=X dot] (251) at (6, 0.25) {};
		\node [style=Z dot] (252) at (9.75, -4.25) {};
		\node [style=Z dot] (253) at (13.5, -3.75) {};
		\node [style=Z dot] (254) at (6, -4.75) {};
		\node [style=X dot] (255) at (9.75, -2.75) {};
		\node [style=X dot] (256) at (9.75, -1.75) {};
		\node [style=X dot] (257) at (9.75, 1.75) {};
		\node [style=X dot] (258) at (9.75, 2.75) {};
		\node [style=X dot] (259) at (6, 1.75) {};
		\node [style=Z dot] (260) at (9.75, -0.75) {};
		\node [style=Z dot] (261) at (9.75, 0) {};
		\node [style=X dot] (262) at (13.5, 0.25) {};
		\node [style=X dot] (263) at (13.5, -0.75) {};
		\node [style=X dot] (264) at (5.75, 2.5) {};
		\node [style=Z dot] (265) at (5.75, -1) {};
		\node [style=sLabel] (272) at (3, 4.25) {Z-CAT};
		\node [style=none] (273) at (4, 4.75) {};
		\node [style=none] (274) at (4, 4.25) {};
		\node [style=none] (275) at (4, 3.75) {};
		\node [style=none] (276) at (4, 3.25) {};
		\node [style=none] (277) at (2, -3) {};
		\node [style=none] (278) at (2, -5) {};
		\node [style=none] (279) at (4, -5) {};
		\node [style=none] (280) at (4, -3) {};
		\node [style=none] (282) at (4, -3.25) {};
		\node [style=none] (283) at (4, -3.75) {};
		\node [style=none] (284) at (4, -4.25) {};
		\node [style=none] (285) at (4, -4.75) {};
		\node [style=none] (286) at (2, -0.25) {};
		\node [style=none] (287) at (2, -2.25) {};
		\node [style=none] (288) at (4, -2.25) {};
		\node [style=none] (289) at (4, -0.25) {};
		\node [style=none] (291) at (4, -0.5) {};
		\node [style=none] (292) at (4, -1) {};
		\node [style=none] (293) at (4, -1.5) {};
		\node [style=none] (294) at (4, -2) {};
		\node [style=none] (295) at (2, 2.75) {};
		\node [style=none] (298) at (4, 2.75) {};
		\node [style=none] (300) at (4, 2.5) {};
		\node [style=none] (301) at (4, 1.75) {};
		\node [style=none] (302) at (4, 1) {};
		\node [style=none] (303) at (4, 0.25) {};
		\node [style=none] (304) at (6.5, -1.5) {};
		\node [style=none] (305) at (6.5, -3) {};
		\node [style=none] (306) at (8.5, -3) {};
		\node [style=none] (307) at (8.5, -1.5) {};
		\node [style=none] (310) at (8.5, -1.75) {};
		\node [style=none] (311) at (8.5, -2.25) {};
		\node [style=none] (312) at (8.5, -2.75) {};
		\node [style=none] (313) at (10.25, 0.5) {};
		\node [style=none] (314) at (10.25, -1) {};
		\node [style=none] (315) at (12.25, -1) {};
		\node [style=none] (316) at (12.25, 0.5) {};
		\node [style=none] (318) at (12.25, 0.25) {};
		\node [style=none] (319) at (12.25, -0.25) {};
		\node [style=none] (320) at (12.25, -0.75) {};
		\node [style=none] (321) at (6.5, 3) {};
		\node [style=none] (322) at (6.5, 1.5) {};
		\node [style=none] (323) at (8.5, 1.5) {};
		\node [style=none] (324) at (8.5, 3) {};
		\node [style=none] (326) at (8.5, 2.75) {};
		\node [style=none] (327) at (8.5, 2.25) {};
		\node [style=none] (328) at (8.5, 1.75) {};
		\node [style=none] (329) at (6.5, -1.25) {};
		\node [style=sLabel] (330) at (3, 3.625) {Circuit};
		\node [style=sLabel] (331) at (3, -0.875) {Z-CAT};
		\node [style=sLabel] (332) at (3, -1.625) {Circuit};
		\node [style=sLabel] (333) at (3, -3.625) {Z-CAT};
		\node [style=sLabel] (334) at (3, -4.375) {Circuit};
		\node [style=sLabel] (337) at (7.5, 2.625) {X-CAT};
		\node [style=sLabel] (338) at (7.5, 1.875) {Circuit};
		\node [style=sLabel] (339) at (11.25, 0.125) {X-CAT};
		\node [style=sLabel] (340) at (11.25, -0.625) {Circuit};
		\node [style=sLabel] (341) at (7.5, -1.875) {X-CAT};
		\node [style=sLabel] (342) at (7.5, -2.625) {Circuit};
		\node [style=none] (343) at (-5.5, -0.25) {};
		\node [style=none] (344) at (-7.5, 2.275) {};
		\node [style=none] (345) at (-7.5, 0.525) {};
		\node [style=none] (346) at (-6.5, 0.525) {};
		\node [style=none] (347) at (-6.5, 2.275) {};
		\node [style=none] (354) at (2, 2.75) {};
		\node [style=none] (355) at (2, 0) {};
		\node [style=none] (356) at (4, 0) {};
		\node [style=none] (357) at (4, 2.75) {};
		\node [style=sLabel] (358) at (3, 1.875) {X-CAT};
		\node [style=sLabel] (359) at (3, 1.125) {Circuit};
	\end{pgfonlayer}
	\begin{pgfonlayer}{edgelayer}
		\draw (34.center) to (28);
		\draw (32.center) to (27);
		\draw (30.center) to (26);
		\draw (37) to (36);
		\draw (36) to (38);
		\draw (38) to (39);
		\draw (39) to (37);
		\draw (39) to (26);
		\draw (36) to (27);
		\draw (37) to (29.center);
		\draw (41) to (40);
		\draw (40) to (42);
		\draw (42) to (43);
		\draw (43) to (41);
		\draw (41) to (35.center);
		\draw (43) to (28);
		\draw (42) to (27);
		\draw (47) to (44);
		\draw (44) to (45);
		\draw (45) to (46);
		\draw (46) to (47);
		\draw (45) to (33.center);
		\draw (44) to (28);
		\draw (47) to (26);
		\draw (50) to (38);
		\draw (51) to (40);
		\draw (108.center) to (102);
		\draw (106.center) to (101);
		\draw (104.center) to (100);
		\draw (111) to (110);
		\draw (110) to (112);
		\draw (112) to (113);
		\draw (113) to (111);
		\draw (111) to (103.center);
		\draw (115) to (114);
		\draw (114) to (116);
		\draw (116) to (117);
		\draw (117) to (115);
		\draw (115) to (109.center);
		\draw (121) to (118);
		\draw (118) to (119);
		\draw (119) to (120);
		\draw (120) to (121);
		\draw (119) to (107.center);
		\draw [in=-60, out=180] (138.center) to (114);
		\draw [in=180, out=60] (110) to (127.center);
		\draw (127.center) to (132.center);
		\draw [in=30, out=-180] (132.center) to (101);
		\draw (146.center) to (128.center);
		\draw [in=15, out=-180] (128.center) to (112);
		\draw (113) to (129.center);
		\draw (129.center) to (130.center);
		\draw [in=30, out=-180] (130.center) to (100);
		\draw (143.center) to (131.center);
		\draw [in=-30, out=-180] (131.center) to (100);
		\draw (122) to (144.center);
		\draw (144.center) to (142.center);
		\draw (142.center) to (120);
		\draw [in=-45, out=-180, looseness=0.50] (145.center) to (125);
		\draw (138.center) to (145.center);
		\draw [in=-180, out=0] (118) to (141.center);
		\draw (141.center) to (134.center);
		\draw [in=30, out=180] (134.center) to (102);
		\draw [in=-180, out=-30] (101) to (133.center);
		\draw (133.center) to (140.center);
		\draw (140.center) to (117);
		\draw [in=180, out=-30] (102) to (135.center);
		\draw (135.center) to (139.center);
		\draw (139.center) to (116);
		\draw [style=dashed-line] (149.center)
			 to (150.center)
			 to (147.center)
			 to (148.center)
			 to cycle;
		\draw [style=dashed-line] (153.center)
			 to (154.center)
			 to (151.center)
			 to (152.center)
			 to cycle;
		\draw [style=dashed-line] (157.center)
			 to (158.center)
			 to (155.center)
			 to (156.center)
			 to cycle;
		\draw [style=dashed-line] (165.center)
			 to (166.center)
			 to (163.center)
			 to (164.center)
			 to cycle;
		\draw [style=dashed-line] (169.center)
			 to (170.center)
			 to (167.center)
			 to (168.center)
			 to cycle;
		\draw [style=dashed-line] (173.center)
			 to (174.center)
			 to (171.center)
			 to (172.center)
			 to cycle;
		\draw (202) to (207);
		\draw (219) to (203);
		\draw (204) to (205);
		\draw (216) to (206);
		\draw (217) to (215);
		\draw (211) to (218);
		\draw (214) to (209);
		\draw (208) to (213);
		\draw (210) to (212);
		\draw (222.center)
			 to (223.center)
			 to (220.center)
			 to (221.center)
			 to cycle;
		\draw (249) to (202);
		\draw (248) to (203);
		\draw (250) to (204);
		\draw (251) to (218);
		\draw (252) to (212);
		\draw (253) to (213);
		\draw (254) to (211);
		\draw (255) to (210);
		\draw (256) to (209);
		\draw (257) to (206);
		\draw (258) to (205);
		\draw (259) to (219);
		\draw (261) to (216);
		\draw (260) to (214);
		\draw (262) to (207);
		\draw (263) to (208);
		\draw (264) to (217);
		\draw (265) to (215);
		\draw (273.center) to (202);
		\draw (274.center) to (203);
		\draw (275.center) to (204);
		\draw (276.center) to (178.center);
		\draw (279.center)
			 to (280.center)
			 to (277.center)
			 to (278.center)
			 to cycle;
		\draw (282.center) to (184.center);
		\draw (283.center) to (213);
		\draw (284.center) to (212);
		\draw (285.center) to (211);
		\draw (288.center) to (289.center);
		\draw (289.center) to (286.center);
		\draw (286.center) to (287.center);
		\draw (287.center) to (288.center);
		\draw [in=180, out=0] (291.center) to (216);
		\draw (292.center) to (215);
		\draw [in=-180, out=0, looseness=0.50] (294.center) to (214);
		\draw [in=-180, out=0] (300.center) to (217);
		\draw [in=180, out=0] (301.center) to (219);
		\draw [in=180, out=0] (302.center) to (180.center);
		\draw [in=-180, out=0] (303.center) to (218);
		\draw (306.center) to (307.center);
		\draw (307.center) to (304.center);
		\draw (304.center) to (305.center);
		\draw (305.center) to (306.center);
		\draw (310.center) to (209);
		\draw (311.center) to (183.center);
		\draw (312.center) to (210);
		\draw (315.center) to (316.center);
		\draw (316.center) to (313.center);
		\draw (313.center) to (314.center);
		\draw (314.center) to (315.center);
		\draw (322.center)
			 to (323.center)
			 to (324.center)
			 to (321.center)
			 to cycle;
		\draw (318.center) to (207);
		\draw (319.center) to (181.center);
		\draw (320.center) to (208);
		\draw (326.center) to (205);
		\draw (327.center) to (179.center);
		\draw (328.center) to (206);
		\draw (329.center) to (182.center);
		\draw [in=180, out=0] (293.center) to (329.center);
		\draw [in=180, out=60, looseness=0.75] (121) to (343.center);
		\draw [in=-180, out=0, looseness=1.50] (343.center) to (143.center);
		\draw (50) to (51);
		\draw (50) to (46);
		\draw (51) to (31.center);
		\draw [in=-180, out=45, looseness=0.50] (122) to (146.center);
		\draw [in=180, out=0] (125) to (105.center);
		\draw [style=dashed-line] (346.center)
			 to (347.center)
			 to (344.center)
			 to (345.center)
			 to cycle;
		\draw (355.center)
			 to (356.center)
			 to (357.center)
			 to (354.center)
			 to cycle;
		\draw (122) to (125);
	\end{pgfonlayer}
\end{tikzpicture}

%% file: figures/cat-part-1.tikz
\begin{tikzpicture}
	\begin{pgfonlayer}{nodelayer}
		\node [style=none] (0) at (-6.75, 0.5) {};
		\node [style=none] (1) at (-6.75, 3) {};
		\node [style=none] (2) at (-2.75, 3) {};
		\node [style=none] (3) at (-2.75, 0.5) {};
		\node [style=none] (4) at (-3.75, 0.5) {};
		\node [style=none] (5) at (-5.75, 0.5) {};
		\node [style=none] (6) at (-5.75, 2) {};
		\node [style=none] (7) at (-3.75, 2) {};
		\node [style=Z dot] (8) at (-4.75, 1) {};
		\node [style=none] (9) at (-3.75, 1) {};
		\node [style=none] (10) at (-5.75, 1) {};
		\node [style=X dot] (11) at (-4.75, 0) {};
		\node [style=X dot] (12) at (-5.5, 0) {};
		\node [style=Z dot] (13) at (-2.25, 0) {};
		\node [style=none] (14) at (-1.75, 0.75) {};
		\node [style=none] (15) at (-2.75, 0.75) {};
		\node [style=none] (16) at (-1.75, 2.75) {};
		\node [style=none] (17) at (-2.75, 2.75) {};
		\node [style=none, rotate=90] (18) at (-2.25, 1.75) {$\ldots$};
		\node [style=sLabel] (19) at (-4.75, 2.5) {CAT-PART};
		\node [style=none] (37) at (-6.75, -1.25) {};
		\node [style=Z dot] (40) at (-1.5, 0) {};
		\node [style=X dot] (41) at (-2.25, -1.25) {};
		\node [style=none] (42) at (-1.5, -1.25) {};
		\node [style=none] (43) at (0, 0) {$\FaultEq$};
		\node [style=none] (44) at (1.5, 0.5) {};
		\node [style=none] (45) at (1.5, 3) {};
		\node [style=none] (46) at (5.5, 3) {};
		\node [style=none] (47) at (5.5, 0.5) {};
		\node [style=none] (48) at (4.25, 0.5) {};
		\node [style=none] (49) at (2.75, 0.5) {};
		\node [style=none] (50) at (2.75, 1.75) {};
		\node [style=none] (51) at (4.25, 1.75) {};
		\node [style=Z dot] (52) at (3.5, 1.25) {};
		\node [style=none] (53) at (4.25, 1.25) {};
		\node [style=none] (54) at (2.75, 1.25) {};
		\node [style=none] (58) at (6.25, 0.75) {};
		\node [style=none] (59) at (5.5, 0.75) {};
		\node [style=none] (60) at (6.25, 2.75) {};
		\node [style=none] (61) at (5.5, 2.75) {};
		\node [style=none, rotate=90] (62) at (6, 1.75) {$\ldots$};
		\node [style=sLabel] (63) at (3.5, 2.375) {CAT-PART};
		\node [style=none] (71) at (1.5, -1.25) {};
		\node [style=none] (76) at (3.5, 0.5) {};
		\node [style=none] (77) at (4, 0) {};
		\node [style=none] (78) at (5.5, -0.5) {};
		\node [style=none] (79) at (5, 0) {};
		\node [style=X dot] (80) at (5.5, -1.25) {};
		\node [style=none] (81) at (6.25, -1.25) {};
		\node [style=none] (82) at (7.75, 0) {$\FaultEq$};
		\node [style=none] (83) at (9.25, 0.5) {};
		\node [style=none] (84) at (9.25, 3) {};
		\node [style=none] (85) at (13.25, 3) {};
		\node [style=none] (86) at (13.25, 0.5) {};
		\node [style=none] (87) at (12.25, 0.5) {};
		\node [style=none] (88) at (10.25, 0.5) {};
		\node [style=none] (89) at (10.25, 2) {};
		\node [style=none] (90) at (12.25, 2) {};
		\node [style=Z dot] (91) at (11.25, 1) {};
		\node [style=none] (92) at (12.25, 1) {};
		\node [style=none] (93) at (10.25, 1) {};
		\node [style=none] (94) at (14, 0.75) {};
		\node [style=none] (95) at (13.25, 0.75) {};
		\node [style=none] (96) at (14, 2.75) {};
		\node [style=none] (97) at (13.25, 2.75) {};
		\node [style=none, rotate=90] (98) at (13.75, 1.75) {$\ldots$};
		\node [style=sLabel] (99) at (11.25, 2.5) {CAT-PART};
		\node [style=none] (107) at (9.25, -1.25) {};
		\node [style=X dot] (114) at (11.25, -1.25) {};
		\node [style=none] (115) at (14, -1.25) {};
		\node [style=none] (116) at (-13.25, -0.25) {};
		\node [style=none] (117) at (-13.25, 3) {};
		\node [style=none] (118) at (-11.25, 3) {};
		\node [style=none] (119) at (-11.25, -0.25) {};
		\node [style=Z dot] (129) at (-10.5, 0) {};
		\node [style=none] (130) at (-10.25, 0.75) {};
		\node [style=none] (131) at (-11.25, 0.75) {};
		\node [style=none] (132) at (-10.25, 2.75) {};
		\node [style=none] (133) at (-11.25, 2.75) {};
		\node [style=none, rotate=90] (134) at (-10.75, 1.75) {$\ldots$};
		\node [style=sLabel] (135) at (-12.25, 1.375) {CAT};
		\node [style=none] (142) at (-13.25, -1.25) {};
		\node [style=Z dot] (145) at (-9.75, 0) {};
		\node [style=X dot] (146) at (-10.5, -1.25) {};
		\node [style=none] (147) at (-9.75, -1.25) {};
		\node [style=none] (148) at (-8.25, 0) {$\eqqcolon$};
		\node [style=none] (149) at (-11.25, 0) {};
		\node [style=sLabel] (150) at (-2.75, -0.375) {qubit $q$};
		\node [style=none] (151) at (-5.875, 0.375) {};
		\node [style=none] (152) at (-5.875, -0.375) {};
		\node [style=none] (153) at (-1.125, -0.625) {};
		\node [style=none] (154) at (-1.125, 0.375) {};
		\node [style=sLabel] (159) at (-5.75, -0.95) {qubit $b$};
		\node [style=none] (160) at (-3.75, -0.375) {};
		\node [style=none] (161) at (-3.75, -0.625) {};
		\node [style=none] (162) at (-1.25, -1.625) {};
		\node [style=none] (163) at (-1.25, -0.875) {};
		\node [style=none] (164) at (-7, -0.625) {};
		\node [style=none] (165) at (-7, -1.625) {};
		\node [style=none] (166) at (-4.625, -0.875) {};
		\node [style=none] (167) at (-4.625, -0.625) {};
		\node [style=none] (168) at (-5.75, 1.75) {};
		\node [style=none] (169) at (-5.75, 0.625) {};
		\node [style=none] (170) at (-3.75, 0.625) {};
		\node [style=none] (171) at (-3.75, 1.75) {};
		\node [style=none] (172) at (-4.375, 0.625) {};
		\node [style=none] (173) at (-4.875, 0.625) {};
		\node [style=sLabel] (174) at (-4.75, 1.425) {qubit $a$};
		\node [style=none] (184) at (10.25, 1.75) {};
		\node [style=none] (186) at (12.25, 0.625) {};
		\node [style=none] (187) at (12.25, 1.75) {};
		\node [style=none] (189) at (10.25, 0.625) {};
		\node [style=sLabel] (190) at (11.25, 1.425) {qubit $a$};
		\node [style=sLabel] (194) at (13, -0.95) {qubit $b$};
		\node [style=none] (195) at (9, -1.625) {};
		\node [style=none] (196) at (9, -0.875) {};
		\node [style=none] (197) at (14.25, -0.625) {};
		\node [style=none] (198) at (14.25, -1.625) {};
		\node [style=none] (199) at (11.875, -0.875) {};
		\node [style=none] (200) at (11.875, -0.625) {};
	\end{pgfonlayer}
	\begin{pgfonlayer}{edgelayer}
		\draw (2.center) to (3.center);
		\draw (3.center) to (4.center);
		\draw (7.center) to (6.center);
		\draw (4.center) to (7.center);
		\draw (6.center) to (5.center);
		\draw (5.center) to (0.center);
		\draw (0.center) to (1.center);
		\draw (1.center) to (2.center);
		\draw (10.center) to (8);
		\draw (8) to (9.center);
		\draw (8) to (11);
		\draw (12) to (11);
		\draw (13) to (11);
		\draw (14.center) to (15.center);
		\draw (17.center) to (16.center);
		\draw (42.center) to (41);
		\draw (41) to (13);
		\draw (13) to (40);
		\draw (46.center) to (47.center);
		\draw (47.center) to (48.center);
		\draw (51.center) to (50.center);
		\draw (48.center) to (51.center);
		\draw (50.center) to (49.center);
		\draw (49.center) to (44.center);
		\draw (44.center) to (45.center);
		\draw (45.center) to (46.center);
		\draw (54.center) to (52);
		\draw (52) to (53.center);
		\draw (58.center) to (59.center);
		\draw (61.center) to (60.center);
		\draw [bend right=315, looseness=1.50] (77.center) to (76.center);
		\draw (76.center) to (52);
		\draw [bend right=315, looseness=1.50] (79.center) to (78.center);
		\draw (80) to (78.center);
		\draw (81.center) to (80);
		\draw (79.center) to (77.center);
		\draw (85.center) to (86.center);
		\draw (86.center) to (87.center);
		\draw (90.center) to (89.center);
		\draw (87.center) to (90.center);
		\draw (89.center) to (88.center);
		\draw (88.center) to (83.center);
		\draw (83.center) to (84.center);
		\draw (84.center) to (85.center);
		\draw (93.center) to (91);
		\draw (91) to (92.center);
		\draw (94.center) to (95.center);
		\draw (97.center) to (96.center);
		\draw (115.center) to (114);
		\draw (91) to (114);
		\draw (114) to (107.center);
		\draw (80) to (71.center);
		\draw (41) to (37.center);
		\draw (118.center) to (119.center);
		\draw (116.center) to (117.center);
		\draw (117.center) to (118.center);
		\draw (130.center) to (131.center);
		\draw (133.center) to (132.center);
		\draw (147.center) to (146);
		\draw (146) to (129);
		\draw (129) to (145);
		\draw (146) to (142.center);
		\draw (116.center) to (119.center);
		\draw (149.center) to (129);
		\draw [fill=gray, opacity=0.2] (154.center)
			 to (153.center)
			 to (161.center)
			 to (160.center)
			 to (152.center)
			 to (151.center)
			 to cycle;
		\draw [fill=gray, opacity=0.2] (165.center)
			 to (164.center)
			 to (167.center)
			 to (166.center)
			 to (163.center)
			 to (162.center)
			 to cycle;
		\draw [fill=gray, opacity=0.2] (171.center)
			 to (170.center)
			 to (173.center)
			 to (172.center)
			 to (169.center)
			 to (168.center)
			 to cycle;
		\draw [fill=gray, opacity=0.2] (187.center)
			 to (186.center)
			 to (189.center)
			 to (184.center)
			 to cycle;
		\draw [fill=gray, opacity=0.2] (198.center)
			 to (197.center)
			 to (200.center)
			 to (199.center)
			 to (196.center)
			 to (195.center)
			 to cycle;
	\end{pgfonlayer}
\end{tikzpicture}

%% file: figures/cat-part-2-fe.tikz
\begin{tikzpicture}
	\begin{pgfonlayer}{nodelayer}
		\node [style=none] (0) at (-13.5, 1) {};
		\node [style=none] (1) at (-13.5, 3) {};
		\node [style=none] (2) at (-2.5, 3) {};
		\node [style=none] (3) at (-2.5, 0.5) {};
		\node [style=none] (4) at (-11, 0.5) {};
		\node [style=none] (5) at (-12.5, 1) {};
		\node [style=none] (6) at (-12.5, 2.075) {};
		\node [style=none] (7) at (-11, 2.075) {};
		\node [style=Z dot] (8) at (-11.75, 1.5) {};
		\node [style=none] (9) at (-11, 1.5) {};
		\node [style=none] (10) at (-12.5, 1.5) {};
		\node [style=X dot] (11) at (-11.75, -0.25) {};
		\node [style=X dot] (12) at (-12.75, -0.25) {};
		\node [style=none] (13) at (-9.75, -0.25) {};
		\node [style=none] (14) at (-1.5, 0.75) {};
		\node [style=none] (15) at (-2.5, 0.75) {};
		\node [style=none] (16) at (-1.5, 2.75) {};
		\node [style=none] (17) at (-2.5, 2.75) {};
		\node [style=none, rotate=90] (18) at (-2, 1.75) {$\ldots$};
		\node [style=sLabel] (19) at (-8.5, 1.75) {CAT-PART};
		\node [style=Z dot] (20) at (-10.5, -0.25) {};
		\node [style=none] (23) at (-10.5, 0.5) {};
		\node [style=none] (24) at (-7.75, -0.25) {};
		\node [style=none] (25) at (-8.75, -0.25) {$\ldots$};
		\node [style=none] (26) at (-5, 0.5) {};
		\node [style=none] (27) at (-5, 2.075) {};
		\node [style=none] (28) at (-3, 2.075) {};
		\node [style=none] (29) at (-3, 0.5) {};
		\node [style=Z dot] (30) at (-4, 1) {};
		\node [style=none] (32) at (-3, 1) {};
		\node [style=none] (33) at (-5, 1) {};
		\node [style=Z dot] (35) at (-6, -0.25) {};
		\node [style=Z dot] (36) at (-2.5, -0.25) {};
		\node [style=Z dot] (37) at (-7, -0.25) {};
		\node [style=none] (38) at (-7, 0.5) {};
		\node [style=X dot] (39) at (-4, -1.25) {};
		\node [style=X dot] (40) at (-6, -1.25) {};
		\node [style=X dot] (41) at (-6.75, -1.25) {};
		\node [style=X dot] (42) at (-3.25, -1.25) {};
		\node [style=Z dot] (56) at (-1.75, -0.25) {};
		\node [style=none] (115) at (-0.25, -0.25) {$\FaultEq$};
		\node [style=X dot] (134) at (-2.5, -2.5) {};
		\node [style=none] (135) at (-1.75, -2.5) {};
		\node [style=none] (142) at (-13.5, -2.5) {};
		\node [style=none] (145) at (1, 1) {};
		\node [style=none] (146) at (1, 3) {};
		\node [style=none] (147) at (11.25, 3) {};
		\node [style=none] (148) at (11.25, 0.5) {};
		\node [style=none] (149) at (3.5, 0.5) {};
		\node [style=none] (150) at (2, 1) {};
		\node [style=none] (151) at (2, 2) {};
		\node [style=none] (152) at (3.5, 2) {};
		\node [style=Z dot] (153) at (2.75, 1.5) {};
		\node [style=none] (154) at (3.5, 1.5) {};
		\node [style=none] (155) at (2, 1.5) {};
		\node [style=X dot] (156) at (2.75, -0.25) {};
		\node [style=X dot] (157) at (1.75, -0.25) {};
		\node [style=none] (158) at (4.75, -0.25) {};
		\node [style=none] (159) at (12.25, 0.75) {};
		\node [style=none] (160) at (11.25, 0.75) {};
		\node [style=none] (161) at (12.25, 2.75) {};
		\node [style=none] (162) at (11.25, 2.75) {};
		\node [style=none, rotate=90] (163) at (11.75, 1.75) {$\ldots$};
		\node [style=sLabel] (164) at (6, 1.75) {CAT-PART};
		\node [style=Z dot] (165) at (4, -0.25) {};
		\node [style=none] (166) at (4, 0.5) {};
		\node [style=none] (167) at (6.75, -0.25) {};
		\node [style=none] (168) at (5.75, -0.25) {$\ldots$};
		\node [style=none] (169) at (8.25, 0.5) {};
		\node [style=none] (170) at (8.25, 2) {};
		\node [style=none] (171) at (10.25, 2) {};
		\node [style=none] (172) at (10.25, 0.5) {};
		\node [style=Z dot] (173) at (9.25, 1) {};
		\node [style=none] (174) at (10.25, 1) {};
		\node [style=none] (175) at (8.25, 1) {};
		\node [style=Z dot] (176) at (8.25, -0.25) {};
		\node [style=Z dot] (178) at (7.5, -0.25) {};
		\node [style=none] (179) at (7.5, 0.5) {};
		\node [style=X dot] (185) at (8.25, -2.25) {};
		\node [style=none] (186) at (12, -2.25) {};
		\node [style=none] (193) at (1, -2.25) {};
		\node [style=X dot] (196) at (9.25, -0.25) {};
		\node [style=X dot] (197) at (10, -0.25) {};
		\node [style=none] (198) at (-19.5, -0.5) {};
		\node [style=none] (199) at (-19.5, 3) {};
		\node [style=none] (200) at (-17.5, 3) {};
		\node [style=none] (201) at (-17.5, -0.5) {};
		\node [style=Z dot] (202) at (-16.75, -0.25) {};
		\node [style=none] (203) at (-16.5, 0.75) {};
		\node [style=none] (204) at (-17.5, 0.75) {};
		\node [style=none] (205) at (-16.5, 2.75) {};
		\node [style=none] (206) at (-17.5, 2.75) {};
		\node [style=none, rotate=90] (207) at (-17, 1.75) {$\ldots$};
		\node [style=sLabel] (208) at (-18.5, 1.375) {CAT};
		\node [style=none] (215) at (-19.5, -2.25) {};
		\node [style=Z dot] (218) at (-16, -0.25) {};
		\node [style=X dot] (219) at (-16.75, -2.25) {};
		\node [style=none] (220) at (-16, -2.25) {};
		\node [style=none] (221) at (-14.75, -0.25) {$\eqqcolon$};
		\node [style=none] (222) at (-17.5, -0.25) {};
		\node [style=none] (226) at (-2.875, -0.875) {};
		\node [style=none] (227) at (-2.875, -2) {};
		\node [style=none] (228) at (-7.125, -2) {};
		\node [style=none] (229) at (-7.125, -0.875) {};
		\node [style=sLabel, inner sep=1pt] (233) at (-5, -1.75) {qubit $f$};
		\node [style=none] (234) at (-1.375, 0.125) {};
		\node [style=none] (235) at (-1.375, -0.625) {};
		\node [style=none] (236) at (-13.125, -1) {};
		\node [style=none] (237) at (-13.125, 0.125) {};
		\node [style=sLabel, inner sep=1pt] (238) at (-11.875, -0.75) {qubit $q$};
		\node [style=none] (239) at (-10.75, -1) {};
		\node [style=none] (240) at (-10.75, -0.625) {};
		\node [style=none] (245) at (-1.5, -2.75) {};
		\node [style=none] (246) at (-11.25, -1.775) {};
		\node [style=none] (247) at (-13.75, -1.775) {};
		\node [style=none] (248) at (-13.75, -2.75) {};
		\node [style=sLabel, inner sep=1pt] (249) at (-12.5, -2.125) {qubit $b$};
		\node [style=none] (250) at (-11.25, -2.25) {};
		\node [style=none] (251) at (-1.5, -2.25) {};
		\node [style=none] (256) at (-3, 0.625) {};
		\node [style=none] (257) at (-3, 1.825) {};
		\node [style=none] (258) at (-5, 1.825) {};
		\node [style=none] (259) at (-5, 0.625) {};
		\node [style=sLabel, inner sep=1pt] (260) at (-4, 1.5) {qubit $a$};
		\node [style=none] (288) at (10.375, 0.125) {};
		\node [style=none] (289) at (10.375, -0.625) {};
		\node [style=none] (290) at (1.375, -1) {};
		\node [style=none] (291) at (1.375, 0.125) {};
		\node [style=sLabel, inner sep=1pt] (292) at (2.625, -0.75) {qubit $q$};
		\node [style=none] (293) at (3.75, -1) {};
		\node [style=none] (294) at (3.75, -0.625) {};
		\node [style=none] (295) at (12.25, -2.5) {};
		\node [style=none] (296) at (3.25, -1.525) {};
		\node [style=none] (297) at (0.75, -1.525) {};
		\node [style=none] (298) at (0.75, -2.5) {};
		\node [style=sLabel, inner sep=1pt] (299) at (2, -1.875) {qubit $b$};
		\node [style=none] (300) at (3.25, -2) {};
		\node [style=none] (301) at (12.25, -2) {};
		\node [style=none] (302) at (10.25, 0.625) {};
		\node [style=none] (303) at (10.25, 1.825) {};
		\node [style=none] (304) at (8.25, 1.825) {};
		\node [style=none] (305) at (8.25, 0.625) {};
		\node [style=sLabel, inner sep=1pt] (306) at (9.25, 1.5) {qubit $a$};
	\end{pgfonlayer}
	\begin{pgfonlayer}{edgelayer}
		\draw (2.center) to (3.center);
		\draw (7.center) to (6.center);
		\draw (4.center) to (7.center);
		\draw (6.center) to (5.center);
		\draw (5.center) to (0.center);
		\draw (0.center) to (1.center);
		\draw (1.center) to (2.center);
		\draw (10.center) to (8);
		\draw (8) to (9.center);
		\draw (8) to (11);
		\draw (12) to (11);
		\draw (13.center) to (11);
		\draw (14.center) to (15.center);
		\draw (17.center) to (16.center);
		\draw (23.center) to (20);
		\draw (3.center) to (29.center);
		\draw (29.center) to (28.center);
		\draw (28.center) to (27.center);
		\draw (27.center) to (26.center);
		\draw (26.center) to (4.center);
		\draw (33.center) to (30);
		\draw (35) to (24.center);
		\draw (36) to (35);
		\draw (38.center) to (37);
		\draw (30) to (39);
		\draw (40) to (35);
		\draw (56) to (36);
		\draw (32.center) to (30);
		\draw (40) to (41);
		\draw (40) to (39);
		\draw (39) to (42);
		\draw (135.center) to (134);
		\draw (142.center) to (134);
		\draw (134) to (36);
		\draw (147.center) to (148.center);
		\draw (152.center) to (151.center);
		\draw (149.center) to (152.center);
		\draw (151.center) to (150.center);
		\draw (150.center) to (145.center);
		\draw (145.center) to (146.center);
		\draw (146.center) to (147.center);
		\draw (155.center) to (153);
		\draw (153) to (154.center);
		\draw (153) to (156);
		\draw (157) to (156);
		\draw (158.center) to (156);
		\draw (159.center) to (160.center);
		\draw (162.center) to (161.center);
		\draw (166.center) to (165);
		\draw (148.center) to (172.center);
		\draw (172.center) to (171.center);
		\draw (171.center) to (170.center);
		\draw (170.center) to (169.center);
		\draw (169.center) to (149.center);
		\draw (175.center) to (173);
		\draw (176) to (167.center);
		\draw (179.center) to (178);
		\draw (174.center) to (173);
		\draw (186.center) to (185);
		\draw (193.center) to (185);
		\draw (176) to (185);
		\draw (196) to (173);
		\draw (197) to (196);
		\draw (196) to (176);
		\draw (200.center) to (201.center);
		\draw (198.center) to (199.center);
		\draw (199.center) to (200.center);
		\draw (203.center) to (204.center);
		\draw (206.center) to (205.center);
		\draw (220.center) to (219);
		\draw (219) to (202);
		\draw (202) to (218);
		\draw (219) to (215.center);
		\draw (198.center) to (201.center);
		\draw (222.center) to (202);
		\draw [fill=gray, opacity=0.2] (229.center)
			 to (226.center)
			 to (227.center)
			 to (228.center)
			 to cycle;
		\draw [fill=gray, opacity=0.2] (240.center)
			 to (235.center)
			 to (234.center)
			 to [in=0, out=180] (237.center)
			 to (236.center)
			 to (239.center);
		\draw [fill=gray, opacity=0.2] (247.center)
			 to (248.center)
			 to (245.center)
			 to (251.center)
			 to (250.center)
			 to (246.center)
			 to cycle;
		\draw [fill=gray, opacity=0.2] (259.center)
			 to (256.center)
			 to (257.center)
			 to (258.center)
			 to cycle;
		\draw [fill=gray, opacity=0.2] (294.center)
			 to (289.center)
			 to (288.center)
			 to [in=0, out=180] (291.center)
			 to (290.center)
			 to (293.center);
		\draw [fill=gray, opacity=0.2] (297.center)
			 to (298.center)
			 to (295.center)
			 to (301.center)
			 to (300.center)
			 to (296.center)
			 to cycle;
		\draw [fill=gray, opacity=0.2] (305.center)
			 to (302.center)
			 to (303.center)
			 to (304.center)
			 to cycle;
	\end{pgfonlayer}
\end{tikzpicture}

%% file: figures/cat-part-counter-example-2.tikz
\begin{tikzpicture}
	\begin{pgfonlayer}{nodelayer}
		\node [style=none] (0) at (0, -0.75) {};
		\node [style=none] (1) at (0, 1.25) {};
		\node [style=none] (2) at (11.75, 1.25) {};
		\node [style=none] (3) at (11.75, -1.25) {};
		\node [style=none] (4) at (2.5, -1.25) {};
		\node [style=none] (5) at (1, -0.75) {};
		\node [style=none] (6) at (1, 0.25) {};
		\node [style=none] (7) at (2.5, 0.25) {};
		\node [style=Z dot] (8) at (1.75, -0.25) {};
		\node [style=none] (9) at (2.5, -0.25) {};
		\node [style=none] (10) at (1, -0.25) {};
		\node [style=X dot] (11) at (1.75, -2) {};
		\node [style=X dot] (12) at (0.5, -2) {};
		\node [style=none] (13) at (3.75, -2) {};
		\node [style=none] (14) at (12.75, -1) {};
		\node [style=none] (15) at (11.75, -1) {};
		\node [style=none] (16) at (12.75, 1) {};
		\node [style=none] (17) at (11.75, 1) {};
		\node [style=none, rotate=90] (18) at (12.25, 0) {$\ldots$};
		\node [style=sLabel] (19) at (6, 0.5) {CAT-PART};
		\node [style=Z dot] (21) at (3, -2) {};
		\node [style=none] (22) at (3, -1.25) {};
		\node [style=none] (25) at (7.5, -2) {$\ldots$};
		\node [style=none] (26) at (5.25, -1.25) {};
		\node [style=none] (27) at (5.25, -0.25) {};
		\node [style=none] (28) at (6.75, -0.25) {};
		\node [style=none] (29) at (6.75, -1.25) {};
		\node [style=Z dot] (30) at (6, -0.75) {};
		\node [style=none] (32) at (6.75, -0.75) {};
		\node [style=none] (33) at (5.25, -0.75) {};
		\node [style=Z dot] (35) at (10, -2) {};
		\node [style=Z dot] (36) at (11.75, -2) {};
		\node [style=X dot] (39) at (6, -3) {};
		\node [style=X dot] (40) at (10, -3) {};
		\node [style=X dot] (41) at (5, -3) {};
		\node [style=X dot] (42) at (11, -3) {};
		\node [style=none] (48) at (0, -4) {};
		\node [style=X dot] (51) at (11.75, -4) {};
		\node [style=Z dot] (55) at (12.75, -2) {};
		\node [style=none] (56) at (12.75, -4) {};
		\node [style=none] (57) at (4.75, -2) {$\ldots$};
		\node [style=none] (58) at (15.75, -0.5) {};
		\node [style=none] (59) at (15.75, 1.25) {};
		\node [style=none] (60) at (26.75, 1.25) {};
		\node [style=none] (61) at (26.75, -1.25) {};
		\node [style=none] (62) at (18.25, -1.25) {};
		\node [style=none] (63) at (16.75, -0.5) {};
		\node [style=none] (64) at (16.75, 0.25) {};
		\node [style=none] (65) at (18.25, 0.25) {};
		\node [style=Z dot] (66) at (17.5, -0.25) {};
		\node [style=none] (67) at (18.25, -0.25) {};
		\node [style=none] (68) at (16.75, -0.25) {};
		\node [style=X dot] (69) at (17.5, -2) {};
		\node [style=X dot] (70) at (16.25, -2) {};
		\node [style=none] (71) at (19.5, -2) {};
		\node [style=none] (72) at (27.75, -1) {};
		\node [style=none] (73) at (26.75, -1) {};
		\node [style=none] (74) at (27.75, 1) {};
		\node [style=none] (75) at (26.75, 1) {};
		\node [style=none, rotate=90] (76) at (27.25, 0) {$\ldots$};
		\node [style=Z dot] (79) at (18.75, -2) {};
		\node [style=none] (80) at (18.75, -1.25) {};
		\node [style=none] (82) at (24, -2) {};
		\node [style=none] (83) at (23, -2) {$\ldots$};
		\node [style=none] (84) at (21, -1.25) {};
		\node [style=none] (85) at (21, -0.25) {};
		\node [style=none] (86) at (22.5, -0.25) {};
		\node [style=none] (87) at (22.5, -1.25) {};
		\node [style=Z dot] (88) at (21.75, -0.75) {};
		\node [style=none] (89) at (22.5, -0.75) {};
		\node [style=none] (90) at (21, -0.75) {};
		\node [style=Z dot] (92) at (25.75, -2) {};
		\node [style=Z dot] (93) at (24.75, -2) {};
		\node [style=X dot] (95) at (26.75, -2) {};
		\node [style=none] (104) at (15.75, -4) {};
		\node [style=X dot] (107) at (25.75, -4) {};
		\node [style=none] (113) at (20.5, -2) {$\ldots$};
		\node [style=X dot] (114) at (27.75, -2) {};
		\node [style=none] (115) at (26.75, -4) {};
		\node [style=none] (116) at (14.25, -1) {$\FaultEq$};
		\node [style=none] (117) at (24, -1.25) {};
		\node [style=none] (118) at (24, -0.25) {};
		\node [style=none] (119) at (25.5, -0.25) {};
		\node [style=none] (120) at (25.5, -1.25) {};
		\node [style=X dot] (121) at (24.75, -0.75) {};
		\node [style=none] (122) at (25.5, -0.75) {};
		\node [style=none] (123) at (24, -0.75) {};
		\node [style=none] (207) at (8.25, -2) {};
		\node [style=Z dot] (208) at (9, -2) {};
		\node [style=none] (209) at (8.25, -1.25) {};
		\node [style=none] (210) at (8.25, -0.25) {};
		\node [style=none] (211) at (9.75, -0.25) {};
		\node [style=none] (212) at (9.75, -1.25) {};
		\node [style=X dot] (213) at (9, -0.75) {};
		\node [style=none] (214) at (9.75, -0.75) {};
		\node [style=none] (215) at (8.25, -0.75) {};
		\node [style=sLabel] (216) at (21.75, 0.5) {CAT-PART};
		\node [style=none] (217) at (22.5, -1.5) {};
	\end{pgfonlayer}
	\begin{pgfonlayer}{edgelayer}
		\draw (2.center) to (3.center);
		\draw (7.center) to (6.center);
		\draw (4.center) to (7.center);
		\draw (6.center) to (5.center);
		\draw (5.center) to (0.center);
		\draw (0.center) to (1.center);
		\draw (1.center) to (2.center);
		\draw (10.center) to (8);
		\draw (8) to (9.center);
		\draw (8) to (11);
		\draw (12) to (11);
		\draw (13.center) to (11);
		\draw (14.center) to (15.center);
		\draw (17.center) to (16.center);
		\draw (22.center) to (21);
		\draw (29.center) to (28.center);
		\draw (28.center) to (27.center);
		\draw (27.center) to (26.center);
		\draw (26.center) to (4.center);
		\draw (36) to (35);
		\draw (30) to (39);
		\draw (40) to (35);
		\draw (40) to (39);
		\draw (41) to (39);
		\draw (42) to (40);
		\draw (51) to (48.center);
		\draw (36) to (51);
		\draw (55) to (36);
		\draw (56.center) to (51);
		\draw (32.center) to (33.center);
		\draw (60.center) to (61.center);
		\draw (65.center) to (64.center);
		\draw (62.center) to (65.center);
		\draw (64.center) to (63.center);
		\draw (63.center) to (58.center);
		\draw (58.center) to (59.center);
		\draw (59.center) to (60.center);
		\draw (68.center) to (66);
		\draw (66) to (67.center);
		\draw (66) to (69);
		\draw (70) to (69);
		\draw (71.center) to (69);
		\draw (72.center) to (73.center);
		\draw (75.center) to (74.center);
		\draw (80.center) to (79);
		\draw (87.center) to (86.center);
		\draw (86.center) to (85.center);
		\draw (85.center) to (84.center);
		\draw (84.center) to (62.center);
		\draw (107) to (104.center);
		\draw (92) to (107);
		\draw (89.center) to (90.center);
		\draw (95) to (82.center);
		\draw (114) to (95);
		\draw (115.center) to (107);
		\draw (120.center) to (119.center);
		\draw (119.center) to (118.center);
		\draw (118.center) to (117.center);
		\draw (122.center) to (123.center);
		\draw (121) to (93);
		\draw (61.center) to (120.center);
		\draw (117.center) to (87.center);
		\draw (212.center) to (211.center);
		\draw (211.center) to (210.center);
		\draw (210.center) to (209.center);
		\draw (214.center) to (215.center);
		\draw (213) to (208);
		\draw (207.center) to (35);
		\draw (29.center) to (209.center);
		\draw (212.center) to (3.center);
		\draw [in=-90, out=-180, looseness=1.25] (217.center) to (88);
		\draw [in=105, out=0, looseness=0.50] (217.center) to (95);
	\end{pgfonlayer}
\end{tikzpicture}

%% file: figures/well_ordered_extraction.tikz
\begin{tikzpicture}
	\begin{pgfonlayer}{nodelayer}
		\node [style=Z bold phase dot] (0) at (-7.5, 13) {9};
		\node [style=Z bold phase dot] (1) at (-5.25, 11.5) {8};
		\node [style=Z bold phase dot] (2) at (-9.75, 11.5) {3};
		\node [style=Z bold phase dot] (3) at (-3.5, 9) {1};
		\node [style=Z bold phase dot] (4) at (-11.5, 9) {11};
		\node [style=Z bold phase dot] (5) at (-5.25, 6.5) {7};
		\node [style=Z bold phase dot] (6) at (-9.75, 6.5) {10};
		\node [style=Z bold phase dot] (7) at (-7.5, 5) {2};
		\node [style=Z bold phase dot] (8) at (-7.5, 9) {0};
		\node [style=Z bold phase dot] (9) at (-8.75, 10) {4};
		\node [style=Z bold phase dot] (10) at (-5.75, 9.25) {5};
		\node [style=Z bold phase dot] (11) at (-7.75, 7.25) {6};
		\node [style=Z bold phase dot] (12) at (3, 13) {9};
		\node [style=Z bold phase dot] (13) at (5.25, 11.5) {8};
		\node [style=Z bold phase dot] (14) at (0.75, 11.5) {3};
		\node [style=Z bold phase dot] (15) at (7, 9) {1};
		\node [style=Z bold phase dot] (16) at (-1, 9) {11};
		\node [style=Z bold phase dot] (17) at (5.25, 6.5) {7};
		\node [style=Z bold phase dot] (18) at (0.75, 6.5) {10};
		\node [style=Z bold phase dot] (19) at (3, 5) {2};
		\node [style=Z bold phase dot] (20) at (3, 9) {0};
		\node [style=Z bold phase dot] (21) at (1.75, 10) {4};
		\node [style=Z bold phase dot] (22) at (4.75, 9.25) {5};
		\node [style=Z bold phase dot] (23) at (2.75, 7.25) {6};
		\node [style=Z bold phase dot] (24) at (13.5, 13) {9};
		\node [style=Z bold phase dot] (25) at (15.75, 11.5) {8};
		\node [style=Z bold phase dot] (26) at (11.25, 11.5) {3};
		\node [style=Z bold phase dot] (27) at (17.5, 9) {1};
		\node [style=Z bold phase dot] (28) at (9.5, 9) {11};
		\node [style=Z bold phase dot] (29) at (15.75, 6.5) {7};
		\node [style=Z bold phase dot] (30) at (11.25, 6.5) {10};
		\node [style=Z bold phase dot] (31) at (13.5, 5) {2};
		\node [style=Z bold phase dot] (32) at (13.5, 9) {0};
		\node [style=Z bold phase dot] (33) at (12.25, 10) {4};
		\node [style=Z bold phase dot] (34) at (15.25, 9.25) {5};
		\node [style=Z bold phase dot] (35) at (13.25, 7.25) {6};
		\node [style=cnot ctrl] (37) at (-11.75, -7.75) {};
		\node [style=not] (38) at (-11.75, -8.75) {};
		\node [style=none] (39) at (8.75, -16.75) {};
		\node [style=meter] (40) at (9.5, -16.75) {};
		\node [style=none, label={left:\scriptsize $\ket{+}$}] (42) at (-12.25, -7.75) {};
		\node [style=none, label={left:\scriptsize $\ket{0}$}] (43) at (-12.25, -8.75) {};
		\node [style=none, label={left:\scriptsize $\ket{0}$}] (44) at (-11.25, -9.75) {};
		\node [style=none, label={left:\scriptsize $\ket{0}$}] (45) at (-8.75, -10.75) {};
		\node [style=none, label={left:\scriptsize $\ket{0}$}] (46) at (-7.75, -11.75) {};
		\node [style=none, label={left:\scriptsize $\ket{0}$}] (47) at (-6.75, -12.75) {};
		\node [style=none, label={left:\scriptsize $\ket{0}$}] (48) at (-2.75, -13.75) {};
		\node [style=none, label={left:\scriptsize $\ket{0}$}] (49) at (4.25, -14.75) {};
		\node [style=none, label={left:\scriptsize $\ket{0}$}] (50) at (-4.75, -15.75) {};
		\node [style=none, label={left:\scriptsize $\ket{0}$}] (51) at (-0.75, -16.75) {};
		\node [style=none, label={left:\scriptsize $\ket{0}$}] (52) at (6.25, -16.75) {};
		\node [style=none] (53) at (12.5, -7.75) {};
		\node [style=none] (54) at (12.5, -8.75) {};
		\node [style=none] (55) at (12.5, -9.75) {};
		\node [style=none] (56) at (12.5, -10.75) {};
		\node [style=none] (57) at (12.5, -11.75) {};
		\node [style=none] (58) at (12.5, -12.75) {};
		\node [style=none] (59) at (12.5, -13.75) {};
		\node [style=none] (60) at (12.5, -14.75) {};
		\node [style=cnot ctrl] (61) at (-10.75, -7.75) {};
		\node [style=not] (62) at (-10.75, -9.75) {};
		\node [style=cnot ctrl] (63) at (-8.25, -9.75) {};
		\node [style=not] (64) at (-8.25, -10.75) {};
		\node [style=cnot ctrl] (65) at (-7.25, -8.75) {};
		\node [style=not] (66) at (-7.25, -11.75) {};
		\node [style=cnot ctrl] (67) at (-6.25, -7.75) {};
		\node [style=not] (68) at (-6.25, -12.75) {};
		\node [style=cnot ctrl] (69) at (-4.25, -7.75) {};
		\node [style=not] (70) at (-4.25, -15.75) {};
		\node [style=cnot ctrl] (71) at (-2.25, -7.75) {};
		\node [style=not] (72) at (-2.25, -13.75) {};
		\node [style=cnot ctrl] (73) at (-0.25, -7.75) {};
		\node [style=not] (74) at (-0.25, -16.75) {};
		\node [style=cnot ctrl] (75) at (1.25, -9.75) {};
		\node [style=not] (76) at (1.25, -16.75) {};
		\node [style=cnot ctrl] (77) at (4.75, -9.75) {};
		\node [style=not] (78) at (4.75, -14.75) {};
		\node [style=cnot ctrl] (79) at (6.75, -9.75) {};
		\node [style=not] (80) at (6.75, -16.75) {};
		\node [style=cnot ctrl] (81) at (8.25, -8.75) {};
		\node [style=not] (82) at (8.25, -16.75) {};
		\node [style=cnot ctrl] (83) at (11.25, -8.75) {};
		\node [style=not] (84) at (11.25, -15.75) {};
		\node [style=none] (85) at (1.75, -16.75) {};
		\node [style=meter] (86) at (2.5, -16.75) {};
		\node [style=none] (88) at (11.75, -15.75) {};
		\node [style=meter] (89) at (12.5, -15.75) {};
		\node [style=Z bold phase dot] (91) at (35.5, 7) {9};
		\node [style=Z bold phase dot] (92) at (37.5, 7) {8};
		\node [style=Z bold phase dot] (93) at (33, 7) {3};
		\node [style=Z bold phase dot] (94) at (40, 9) {1};
		\node [style=Z bold phase dot] (95) at (30.5, 11) {11};
		\node [style=Z bold phase dot] (96) at (42.75, 9) {7};
		\node [style=Z bold phase dot] (97) at (28.5, 11) {10};
		\node [style=Z bold phase dot] (98) at (26.5, 11) {2};
		\node [style=Z bold phase dot] (99) at (20, 9) {0};
		\node [style=Z bold phase dot] (100) at (23.5, 7) {4};
		\node [style=Z bold phase dot] (101) at (23.5, 9) {5};
		\node [style=Z bold phase dot] (102) at (23.5, 11) {6};
		\node [style=none] (103) at (21.5, 12) {};
		\node [style=none] (104) at (21.5, 5.75) {};
		\node [style=none] (105) at (25.5, 12) {};
		\node [style=none] (106) at (25.5, 5.75) {};
		\node [style=none] (107) at (27.5, 12) {};
		\node [style=none] (108) at (27.5, 5.75) {};
		\node [style=none] (109) at (29.5, 12) {};
		\node [style=none] (110) at (29.5, 5.75) {};
		\node [style=none] (111) at (31.5, 12) {};
		\node [style=none] (112) at (31.5, 5.75) {};
		\node [style=none] (113) at (34.5, 12) {};
		\node [style=none] (114) at (34.5, 5.75) {};
		\node [style=none] (115) at (36.5, 12) {};
		\node [style=none] (116) at (36.5, 5.75) {};
		\node [style=none] (117) at (38.5, 12) {};
		\node [style=none] (118) at (38.5, 5.75) {};
		\node [style=none] (119) at (41.5, 12) {};
		\node [style=none] (120) at (41.5, 5.75) {};
		\node [style=Z bold phase dot] (121) at (3.25, -3.25) {9};
		\node [style=Z bold phase dot] (122) at (5.25, -3.25) {8};
		\node [style=Z bold phase dot] (123) at (0.75, -3.25) {3};
		\node [style=Z bold phase dot] (124) at (7.75, -1.25) {1};
		\node [style=Z bold phase dot] (125) at (-1.75, 0.75) {11};
		\node [style=Z bold phase dot] (126) at (10.5, -1.25) {7};
		\node [style=Z bold phase dot] (127) at (-3.75, 0.75) {10};
		\node [style=Z bold phase dot] (128) at (-5.75, 0.75) {2};
		\node [style=Z bold phase dot] (129) at (-12.25, -1.25) {0};
		\node [style=Z bold phase dot] (130) at (-8.75, -3.25) {4};
		\node [style=Z bold phase dot] (131) at (-8.75, -1.25) {5};
		\node [style=Z bold phase dot] (132) at (-8.75, 0.75) {6};
		\node [style=none] (133) at (-7.25, -5) {};
		\node [style=none] (134) at (-7.25, -2.75) {};
		\node [style=none] (135) at (-7.25, -0.75) {};
		\node [style=none] (136) at (-2.25, -0.75) {};
		\node [style=none] (137) at (4.75, -4.75) {};
		\node [style=none] (138) at (12, -3.25) {};
		\node [style=none] (139) at (12, -1.25) {};
		\node [style=none] (140) at (-5.75, -4.25) {};
		\node [style=none] (141) at (10.5, -4.25) {};
		\node [style=none] (142) at (12, 0.75) {};
		\node [style=none] (143) at (-1.75, -5.25) {};
		\node [style=none] (144) at (0.75, -5.25) {};
		\node [style=none] (145) at (5.25, -5.75) {};
		\node [style=none] (146) at (7.75, -5.75) {};
		\node [style=none] (147) at (-5.5, 13) {};
		\node [style=none] (148) at (1, 13) {};
		\node [style=none] (149) at (-2.25, 14) {(2)};
		\node [style=none] (150) at (5, 13) {};
		\node [style=none] (151) at (11.5, 13) {};
		\node [style=none] (152) at (8.25, 14) {(3)};
		\node [style=none] (153) at (16.25, 13) {};
		\node [style=none] (154) at (22.75, 13) {};
		\node [style=none, align=center] (155) at (19.5, 14) {(4)};
		\node [style=none] (157) at (-6, 4.5) {};
		\node [style=none] (158) at (-6, 1.75) {};
		\node [style=none] (159) at (31.5, 4.5) {};
		\node [style=none] (160) at (31.5, 3.5) {};
		\node [style=none] (161) at (4, 3.5) {};
		\node [style=none] (162) at (4, 1.75) {};
		\node [style=none, align=center] (164) at (-1, 2.75) {(5)};
		\node [style=Z bold phase dot] (165) at (34.75, -3.25) {9};
		\node [style=Z bold phase dot] (166) at (36.75, -3.25) {8};
		\node [style=Z bold phase dot] (167) at (32.25, -3.25) {3};
		\node [style=Z bold phase dot] (168) at (39.25, -1.25) {1};
		\node [style=Z bold phase dot] (169) at (29.75, 0.75) {11};
		\node [style=Z bold phase dot] (170) at (42, -1.25) {7};
		\node [style=Z bold phase dot] (171) at (28, 0.75) {10};
		\node [style=Z bold phase dot] (172) at (25.75, 0.75) {2};
		\node [style=Z bold phase dot] (173) at (18, 0.75) {0};
		\node [style=Z bold phase dot] (174) at (22.5, -3.25) {4};
		\node [style=Z bold phase dot] (175) at (23.5, -1.25) {5};
		\node [style=Z bold phase dot] (176) at (24.5, 0.75) {6};
		\node [style=none] (177) at (24.875, -5) {};
		\node [style=none] (178) at (25.5, -6.75) {};
		\node [style=none] (179) at (26.25, -8.25) {};
		\node [style=none] (180) at (29.5, -9.75) {};
		\node [style=none] (181) at (36.25, -11.25) {};
		\node [style=none] (182) at (43.5, -3.25) {};
		\node [style=none] (183) at (43.5, -1.25) {};
		\node [style=X bold phase dot] (184) at (25.75, -12.75) {};
		\node [style=X bold phase dot] (185) at (42, -12.75) {};
		\node [style=none] (186) at (43.5, 0.75) {};
		\node [style=X bold phase dot] (187) at (29.75, -14.25) {};
		\node [style=X bold phase dot] (188) at (32.25, -14.25) {};
		\node [style=X bold phase dot] (189) at (36.75, -14.25) {};
		\node [style=X bold phase dot] (190) at (39.25, -14.25) {};
		\node [style=X bold phase dot] (191) at (19.5, -1.25) {};
		\node [style=Z bold phase dot] (192) at (19.5, 0.75) {};
		\node [style=Z bold phase dot] (193) at (21, 0.75) {};
		\node [style=X bold phase dot] (194) at (21, -3.25) {};
		\node [style=X bold phase dot] (195) at (18, -1.25) {};
		\node [style=X bold phase dot] (196) at (18, -3.25) {};
		\node [style=X bold phase dot] (197) at (22.5, -5) {};
		\node [style=X bold phase dot] (198) at (21, -5) {};
		\node [style=X bold phase dot] (199) at (21.75, -6.75) {};
		\node [style=X bold phase dot] (200) at (23.5, -6.75) {};
		\node [style=X bold phase dot] (201) at (24.5, -8.25) {};
		\node [style=X bold phase dot] (202) at (23, -8.25) {};
		\node [style=X bold phase dot] (203) at (24.25, -12.75) {};
		\node [style=X bold phase dot] (204) at (43.5, -12.75) {};
		\node [style=X bold phase dot] (205) at (28, -9.75) {};
		\node [style=X bold phase dot] (206) at (26.5, -9.75) {};
		\node [style=X bold phase dot] (207) at (34.75, -11.25) {};
		\node [style=X bold phase dot] (208) at (33.25, -11.25) {};
		\node [style=X bold phase dot] (209) at (28.25, -14.25) {};
		\node [style=X bold phase dot] (210) at (33.75, -14.25) {};
		\node [style=X bold phase dot] (211) at (35.25, -14.25) {};
		\node [style=X bold phase dot] (212) at (40.75, -14.25) {};
		\node [style=none] (213) at (11, -5.5) {};
		\node [style=none] (214) at (19.25, -5.5) {};
		\node [style=none] (215) at (15, -4.75) {(6)};
		\node [style=none] (216) at (23, -11.25) {};
		\node [style=none] (217) at (14.5, -11.25) {};
		\node [style=none] (218) at (18.75, -10.5) {(7)};
		\node [style=none] (219) at (-11.75, 13) {(1)};
	\end{pgfonlayer}
	\begin{pgfonlayer}{edgelayer}
		\draw [style=blue edge] (0) to (1);
		\draw [style=blue edge] (0) to (2);
		\draw [style=dashed-line] (2) to (4);
		\draw [style=dashed-line] (1) to (3);
		\draw [style=blue edge] (4) to (6);
		\draw [style=blue edge] (6) to (7);
		\draw [style=blue edge] (3) to (5);
		\draw [style=dashed-line] (7) to (5);
		\draw [style=blue edge] (9) to (2);
		\draw [style=blue edge] (9) to (8);
		\draw [style=blue edge] (8) to (11);
		\draw [style=blue edge] (8) to (10);
		\draw [style=blue edge] (10) to (3);
		\draw [style=blue edge] (11) to (7);
		\draw [style=graph arrow] (12) to (13);
		\draw [style=red dashed arrow] (13) to (15);
		\draw [style=graph arrow] (15) to (17);
		\draw [style=graph arrow] (21) to (14);
		\draw [style=graph arrow] (20) to (23);
		\draw [style=graph arrow] (20) to (22);
		\draw [style=graph arrow] (22) to (15);
		\draw [style=graph arrow] (23) to (19);
		\draw [style=red dashed arrow] (16) to (14);
		\draw [style=red dashed arrow] (17) to (19);
		\draw [style=graph arrow] (14) to (12);
		\draw [style=graph arrow] (20) to (21);
		\draw [style=graph arrow] (19) to (18);
		\draw [style=graph arrow] (18) to (16);
		\draw [style=graph arrow] (24) to (25);
		\draw [style=red dashed arrow] (25) to (27);
		\draw [style=graph arrow] (27) to (29);
		\draw [style=graph arrow] (33) to (26);
		\draw [style=graph arrow] (32) to (35);
		\draw [style=graph arrow] (32) to (34);
		\draw [style=graph arrow] (34) to (27);
		\draw [style=graph arrow] (35) to (31);
		\draw [style=red dashed arrow] (28) to (26);
		\draw [style=graph arrow] (26) to (24);
		\draw [style=graph arrow] (32) to (33);
		\draw [style=graph arrow] (31) to (30);
		\draw [style=graph arrow] (30) to (28);
		\draw (37) to (38);
		\draw (40) to (39.center);
		\draw (37) to (42.center);
		\draw (38) to (43.center);
		\draw (61) to (62);
		\draw (63) to (64);
		\draw (65) to (66);
		\draw (67) to (68);
		\draw (69) to (70);
		\draw (37) to (61);
		\draw (61) to (67);
		\draw (67) to (69);
		\draw (38) to (65);
		\draw (44.center) to (62);
		\draw (62) to (63);
		\draw (45.center) to (64);
		\draw (46.center) to (66);
		\draw (47.center) to (68);
		\draw (50.center) to (70);
		\draw (71) to (72);
		\draw (73) to (74);
		\draw (75) to (76);
		\draw (77) to (78);
		\draw (79) to (80);
		\draw (81) to (82);
		\draw (83) to (84);
		\draw (86) to (85.center);
		\draw (89) to (88.center);
		\draw (51.center) to (74);
		\draw (74) to (76);
		\draw (52.center) to (80);
		\draw (80) to (82);
		\draw (82) to (39.center);
		\draw (48.center) to (72);
		\draw (49.center) to (78);
		\draw (78) to (60.center);
		\draw (59.center) to (72);
		\draw (58.center) to (68);
		\draw (57.center) to (66);
		\draw (63) to (75);
		\draw (75) to (77);
		\draw (77) to (79);
		\draw (79) to (55.center);
		\draw (64) to (56.center);
		\draw (65) to (81);
		\draw (81) to (83);
		\draw (83) to (54.center);
		\draw (69) to (71);
		\draw (71) to (73);
		\draw (73) to (53.center);
		\draw (70) to (84);
		\draw (84) to (88.center);
		\draw (76) to (85.center);
		\draw [style=graph arrow] (91) to (92);
		\draw [style=red dashed arrow] (92) to (94);
		\draw [style=graph arrow] (94) to (96);
		\draw [style=graph arrow] (100) to (93);
		\draw [style=graph arrow] (99) to (102);
		\draw [style=graph arrow] (99) to (101);
		\draw [style=graph arrow] (101) to (94);
		\draw [style=graph arrow] (102) to (98);
		\draw [style=red dashed arrow] (95) to (93);
		\draw [style=graph arrow] (93) to (91);
		\draw [style=graph arrow] (99) to (100);
		\draw [style=graph arrow] (98) to (97);
		\draw [style=graph arrow] (97) to (95);
		\draw [style=lightdots] (104.center) to (103.center);
		\draw [style=lightdots] (106.center) to (105.center);
		\draw [style=lightdots] (108.center) to (107.center);
		\draw [style=lightdots] (110.center) to (109.center);
		\draw [style=lightdots] (112.center) to (111.center);
		\draw [style=lightdots] (114.center) to (113.center);
		\draw [style=lightdots] (116.center) to (115.center);
		\draw [style=lightdots] (118.center) to (117.center);
		\draw [style=lightdots] (120.center) to (119.center);
		\draw [style=graph line] (121) to (122);
		\draw [style=graph line] (124) to (126);
		\draw [style=graph line] (130) to (123);
		\draw [style=graph line] (129) to (132);
		\draw [style=graph line] (129) to (131);
		\draw [style=graph line] (131) to (124);
		\draw [style=graph line] (132) to (128);
		\draw [style=graph line] (123) to (121);
		\draw [style=graph line] (129) to (130);
		\draw [style=graph line] (128) to (127);
		\draw [style=graph line] (127) to (125);
		\draw [style=graph line] (130) to (133.center);
		\draw [style=graph line] (131) to (134.center);
		\draw [style=graph line] (132) to (135.center);
		\draw [style=graph line] (127) to (136.center);
		\draw [style=graph line] (121) to (137.center);
		\draw [style=graph line] (122) to (138.center);
		\draw [style=graph line] (139.center) to (126);
		\draw [style=graph line] (141.center) to (126);
		\draw [style=graph line] (141.center) to (140.center);
		\draw [style=graph line] (140.center) to (128);
		\draw [style=graph line] (142.center) to (125);
		\draw [style=graph line] (144.center) to (123);
		\draw [style=graph line] (144.center) to (143.center);
		\draw [style=graph line] (143.center) to (125);
		\draw [style=graph line] (146.center) to (124);
		\draw [style=graph line] (146.center) to (145.center);
		\draw [style=graph line] (145.center) to (122);
		\draw [style=arrow] (147.center) to (148.center);
		\draw [style=arrow] (150.center) to (151.center);
		\draw [style=arrow] (153.center) to (154.center);
		\draw [style=arrow] (157.center) to (158.center);
		\draw [style=arrow] (161.center) to (162.center);
		\draw (159.center) to (160.center);
		\draw (160.center) to (161.center);
		\draw [style=graph line] (165) to (166);
		\draw [style=graph line] (168) to (170);
		\draw [style=graph line] (174) to (167);
		\draw [style=graph line] (175) to (168);
		\draw [style=graph line] (176) to (172);
		\draw [style=graph line] (167) to (165);
		\draw [style=graph line] (172) to (171);
		\draw [style=graph line] (171) to (169);
		\draw [style=graph line] (166) to (182.center);
		\draw [style=graph line] (183.center) to (170);
		\draw [style=graph line] (185) to (170);
		\draw [style=graph line] (185) to (184);
		\draw [style=graph line] (184) to (172);
		\draw [style=graph line] (186.center) to (169);
		\draw [style=graph line] (188) to (167);
		\draw [style=graph line] (188) to (187);
		\draw [style=graph line] (187) to (169);
		\draw [style=graph line] (190) to (168);
		\draw [style=graph line] (190) to (189);
		\draw [style=graph line] (189) to (166);
		\draw [style=graph line] (192) to (191);
		\draw [style=graph line] (173) to (192);
		\draw [style=graph line] (192) to (193);
		\draw [style=graph line] (193) to (176);
		\draw [style=graph line] (196) to (194);
		\draw [style=graph line] (194) to (174);
		\draw [style=graph line] (195) to (191);
		\draw [style=graph line] (191) to (175);
		\draw [style=graph line] (193) to (194);
		\draw [style=graph line] (198) to (197);
		\draw [style=graph line] (197) to (177.center);
		\draw [style=graph line] (174) to (197);
		\draw [style=graph line] (178.center) to (200);
		\draw [style=graph line] (200) to (175);
		\draw [style=graph line] (200) to (199);
		\draw [style=graph line] (179.center) to (201);
		\draw [style=graph line] (201) to (176);
		\draw [style=graph line] (202) to (201);
		\draw [style=graph line] (203) to (184);
		\draw [style=graph line] (204) to (185);
		\draw [style=graph line] (206) to (205);
		\draw [style=graph line] (171) to (205);
		\draw [style=graph line] (205) to (180.center);
		\draw [style=graph line] (207) to (165);
		\draw [style=graph line] (181.center) to (207);
		\draw [style=graph line] (208) to (207);
		\draw [style=graph line] (209) to (187);
		\draw [style=graph line] (210) to (188);
		\draw [style=graph line] (211) to (189);
		\draw [style=graph line] (212) to (190);
		\draw [style=arrow] (213.center) to (214.center);
		\draw [style=arrow] (216.center) to (217.center);
	\end{pgfonlayer}
\end{tikzpicture}

%% file: figures/building-G-and-D.tikz
\begin{tikzpicture}
	\begin{pgfonlayer}{nodelayer}
		\node [style=none] (189) at (0, 0) {};
		\node [style=none] (190) at (2.5, 0) {};
		\node [style=none, align=center] (191) at (1.25, 1.25) {Add \\ roots};
		\node [style=Z bold phase dot] (265) at (-5.5, 6) {0};
		\node [style=Z bold phase dot] (266) at (-3.5, 5) {2};
		\node [style=Z bold phase dot] (267) at (-4.5, 4) {3};
		\node [style=Z bold phase dot] (268) at (-6.5, 5) {1};
		\node [style=Z bold phase dot] (269) at (-5, -2) {11};
		\node [style=Z bold phase dot] (270) at (-5, -0.5) {9};
		\node [style=Z bold phase dot] (271) at (-6.5, -0.5) {8};
		\node [style=Z bold phase dot] (272) at (-6.5, -2) {10};
		\node [style=Z bold phase dot] (273) at (-5.5, -6) {16};
		\node [style=Z bold phase dot] (274) at (-3.75, -5) {15};
		\node [style=Z bold phase dot] (275) at (-4.75, -4) {13};
		\node [style=Z bold phase dot] (276) at (-6.5, -5) {14};
		\node [style=X bold phase dot] (277) at (-2.5, 4) {4};
		\node [style=X bold phase dot] (278) at (-2.5, 0) {7};
		\node [style=X bold phase dot] (279) at (-2.5, -4) {12};
		\node [style=X bold phase dot] (280) at (-2.5, 2) {6};
		\node [style=X bold phase dot] (281) at (-4.5, 2) {5};
		\node [style=none] (282) at (-1.5, 6) {};
		\node [style=none] (283) at (-1.5, 4) {};
		\node [style=none] (284) at (-1.5, 2) {};
		\node [style=none] (285) at (-1.5, 0) {};
		\node [style=none] (286) at (-1.5, -2) {};
		\node [style=none] (287) at (-1.5, -6) {};
		\node [style=none] (288) at (-1.5, -4) {};
		\node [style=Z bold phase dot] (289) at (5, 6) {0};
		\node [style=Z bold phase dot] (290) at (6.5, 5) {2};
		\node [style=Z bold phase dot] (291) at (5.5, 4) {3};
		\node [style=Z bold phase dot] (292) at (4, 5) {1};
		\node [style=Z bold phase dot] (293) at (6.5, -2) {11};
		\node [style=Z bold phase dot] (294) at (6.5, -0.5) {9};
		\node [style=Z bold phase dot] (295) at (5, -0.5) {8};
		\node [style=Z bold phase dot] (296) at (5, -2) {10};
		\node [style=Z bold phase dot] (297) at (5, -6) {16};
		\node [style=Z bold phase dot] (298) at (7, -5) {15};
		\node [style=Z bold phase dot] (299) at (6, -4) {13};
		\node [style=Z bold phase dot] (300) at (4, -5) {14};
		\node [style=X bold phase dot] (301) at (9, 4) {4};
		\node [style=X bold phase dot] (302) at (9, 0) {7};
		\node [style=X bold phase dot] (303) at (9, -4) {12};
		\node [style=X bold phase dot] (304) at (9, 2) {6};
		\node [style=X bold phase dot] (305) at (7.5, 2) {5};
		\node [style=none] (306) at (10, 6) {};
		\node [style=none] (307) at (10, 4) {};
		\node [style=none] (308) at (10, 2) {};
		\node [style=none] (309) at (10, 0) {};
		\node [style=none] (310) at (10, -2) {};
		\node [style=none] (311) at (10, -6) {};
		\node [style=none] (312) at (10, -4) {};
		\node [style=Z bold phase dot] (313) at (4, 3.75) {$r_0$};
		\node [style=Z bold phase dot] (314) at (4, -1.25) {$r_4$};
		\node [style=Z bold phase dot] (315) at (4, -3.25) {$r_6$};
		\node [style=X bold phase dot] (316) at (7.5, 4) {$r_1$};
		\node [style=X bold phase dot] (317) at (7.5, 0) {$r_3$};
		\node [style=X bold phase dot] (318) at (7.5, -4) {$r_5$};
		\node [style=X bold phase dot] (319) at (6, 2) {$r_2$};
		\node [style=Z bold phase dot] (337) at (16.75, 6) {0};
		\node [style=Z bold phase dot] (338) at (18.25, 5) {2};
		\node [style=Z bold phase dot] (339) at (17.25, 4) {3};
		\node [style=Z bold phase dot] (340) at (15.75, 5) {1};
		\node [style=Z bold phase dot] (341) at (17, -2) {11};
		\node [style=Z bold phase dot] (342) at (17, -0.5) {9};
		\node [style=Z bold phase dot] (343) at (15.5, -0.5) {8};
		\node [style=Z bold phase dot] (344) at (15.5, -2) {10};
		\node [style=Z bold phase dot] (345) at (16.75, -6) {16};
		\node [style=Z bold phase dot] (346) at (18.75, -5) {15};
		\node [style=Z bold phase dot] (347) at (17.75, -4) {13};
		\node [style=Z bold phase dot] (348) at (15.75, -5) {14};
		\node [style=X bold phase dot] (349) at (20.75, 4) {4};
		\node [style=X bold phase dot] (350) at (20.75, 0) {7};
		\node [style=X bold phase dot] (351) at (20.75, -4) {12};
		\node [style=X bold phase dot] (352) at (20.75, 2) {6};
		\node [style=X bold phase dot] (353) at (19.25, 2) {5};
		\node [style=none] (354) at (21.75, 6) {};
		\node [style=none] (355) at (21.75, 4) {};
		\node [style=none] (356) at (21.75, 2) {};
		\node [style=none] (357) at (21.75, 0) {};
		\node [style=none] (358) at (21.75, -2) {};
		\node [style=none] (359) at (21.75, -6) {};
		\node [style=none] (360) at (21.75, -4) {};
		\node [style=Z bold phase dot] (361) at (15.75, 3.75) {$r_0$};
		\node [style=Z bold phase dot] (362) at (14.5, -1.25) {$r_4$};
		\node [style=Z bold phase dot] (363) at (15.75, -3.25) {$r_6$};
		\node [style=X bold phase dot] (364) at (19.25, 4) {$r_1$};
		\node [style=X bold phase dot] (365) at (19.25, 0) {$r_3$};
		\node [style=X bold phase dot] (366) at (19.25, -4) {$r_5$};
		\node [style=X bold phase dot] (367) at (17.75, 2) {$r_2$};
		\node [style=none] (368) at (11.25, 0) {};
		\node [style=none] (369) at (13.25, 0) {};
		\node [style=none] (370) at (12.25, 0.75) {$(Z_3, X_3)$};
		\node [style=Z bold phase dot] (371) at (10, -10.75) {0};
		\node [style=Z bold phase dot] (372) at (8, -9) {2};
		\node [style=Z bold phase dot] (373) at (6, -9) {3};
		\node [style=Z bold phase dot] (374) at (6, -10.75) {1};
		\node [style=Z bold phase dot] (375) at (10, -17.75) {11};
		\node [style=Z bold phase dot] (376) at (8, -19.5) {9};
		\node [style=Z bold phase dot] (377) at (6, -19.5) {8};
		\node [style=Z bold phase dot] (378) at (6, -17.75) {10};
		\node [style=Z bold phase dot] (379) at (10, -23) {16};
		\node [style=Z bold phase dot] (380) at (8, -24.75) {15};
		\node [style=Z bold phase dot] (381) at (6, -24.75) {13};
		\node [style=Z bold phase dot] (382) at (6, -23) {14};
		\node [style=Z bold phase dot] (383) at (4, -10.75) {$r_0$};
		\node [style=Z bold phase dot] (384) at (4, -17.75) {$r_4$};
		\node [style=Z bold phase dot] (385) at (4, -23) {$r_6$};
		\node [style=X bold phase dot] (386) at (6, -12.5) {4};
		\node [style=X bold phase dot] (387) at (6, -16) {7};
		\node [style=X bold phase dot] (388) at (6, -21.25) {12};
		\node [style=X bold phase dot] (389) at (8, -14.25) {6};
		\node [style=X bold phase dot] (390) at (6, -14.25) {5};
		\node [style=X bold phase dot] (391) at (4, -12.5) {$r_1$};
		\node [style=X bold phase dot] (392) at (4, -16) {$r_3$};
		\node [style=X bold phase dot] (393) at (4, -21.25) {$r_5$};
		\node [style=X bold phase dot] (394) at (4, -14.25) {$r_2$};
		\node [style=Z bold phase dot] (395) at (21, -10.75) {0};
		\node [style=Z bold phase dot] (396) at (19, -9) {2};
		\node [style=Z bold phase dot] (397) at (17, -9) {3};
		\node [style=Z bold phase dot] (398) at (17, -10.75) {1};
		\node [style=Z bold phase dot] (399) at (21, -17.75) {11};
		\node [style=Z bold phase dot] (400) at (19, -19.5) {9};
		\node [style=Z bold phase dot] (401) at (17, -19.5) {8};
		\node [style=Z bold phase dot] (402) at (17, -17.75) {10};
		\node [style=Z bold phase dot] (403) at (21, -23) {16};
		\node [style=Z bold phase dot] (404) at (19, -24.75) {15};
		\node [style=Z bold phase dot] (405) at (17, -24.75) {13};
		\node [style=Z bold phase dot] (406) at (17, -23) {14};
		\node [style=Z bold phase dot] (407) at (15, -10.75) {$r_0$};
		\node [style=Z bold phase dot] (408) at (15, -17.75) {$r_4$};
		\node [style=Z bold phase dot] (409) at (15, -23) {$r_6$};
		\node [style=X bold phase dot] (410) at (17, -12.5) {4};
		\node [style=X bold phase dot] (411) at (19, -16) {7};
		\node [style=X bold phase dot] (412) at (17, -21.25) {12};
		\node [style=X bold phase dot] (413) at (19, -14.25) {6};
		\node [style=X bold phase dot] (414) at (17, -14.25) {5};
		\node [style=X bold phase dot] (415) at (15, -12.5) {$r_1$};
		\node [style=X bold phase dot] (416) at (17, -16) {$r_3$};
		\node [style=X bold phase dot] (417) at (15, -21.25) {$r_5$};
		\node [style=X bold phase dot] (418) at (15, -14.25) {$r_2$};
		\node [style=Z bold phase dot] (419) at (32.75, -10.75) {0};
		\node [style=Z bold phase dot] (420) at (30.75, -9) {2};
		\node [style=Z bold phase dot] (421) at (28.75, -9) {3};
		\node [style=Z bold phase dot] (422) at (30.75, -10.75) {1};
		\node [style=Z bold phase dot] (423) at (32.75, -17.75) {11};
		\node [style=Z bold phase dot] (424) at (30.75, -19.5) {9};
		\node [style=Z bold phase dot] (425) at (28.75, -19.5) {8};
		\node [style=Z bold phase dot] (426) at (28.75, -17.75) {10};
		\node [style=Z bold phase dot] (427) at (32.75, -23) {16};
		\node [style=Z bold phase dot] (428) at (30.75, -24.75) {15};
		\node [style=Z bold phase dot] (429) at (28.75, -24.75) {13};
		\node [style=Z bold phase dot] (430) at (28.75, -23) {14};
		\node [style=Z bold phase dot] (431) at (26.75, -10.75) {$r_0$};
		\node [style=Z bold phase dot] (432) at (26.75, -17.75) {$r_4$};
		\node [style=Z bold phase dot] (433) at (26.75, -23) {$r_6$};
		\node [style=X bold phase dot] (434) at (28.75, -12.5) {4};
		\node [style=X bold phase dot] (435) at (30.75, -16) {7};
		\node [style=X bold phase dot] (436) at (28.75, -21.25) {12};
		\node [style=X bold phase dot] (437) at (30.75, -14.25) {6};
		\node [style=X bold phase dot] (438) at (28.75, -14.25) {5};
		\node [style=X bold phase dot] (439) at (26.75, -12.5) {$r_1$};
		\node [style=X bold phase dot] (440) at (28.75, -16) {$r_3$};
		\node [style=X bold phase dot] (441) at (26.75, -21.25) {$r_5$};
		\node [style=X bold phase dot] (442) at (26.75, -14.25) {$r_2$};
		\node [style=Z bold phase dot] (443) at (49.75, -10.75) {0};
		\node [style=Z bold phase dot] (444) at (47.75, -9) {2};
		\node [style=Z bold phase dot] (445) at (45.75, -9) {3};
		\node [style=Z bold phase dot] (446) at (43.75, -10.75) {1};
		\node [style=Z bold phase dot] (447) at (45.75, -17.75) {11};
		\node [style=Z bold phase dot] (448) at (43.75, -19.5) {9};
		\node [style=Z bold phase dot] (449) at (41.75, -19.5) {8};
		\node [style=Z bold phase dot] (450) at (41.75, -17.75) {10};
		\node [style=Z bold phase dot] (451) at (49.75, -23) {16};
		\node [style=Z bold phase dot] (452) at (43.75, -24.75) {15};
		\node [style=Z bold phase dot] (453) at (41.75, -24.75) {13};
		\node [style=Z bold phase dot] (454) at (47.75, -23) {14};
		\node [style=Z bold phase dot] (455) at (39.75, -10.75) {$r_0$};
		\node [style=Z bold phase dot] (456) at (39.75, -17.75) {$r_4$};
		\node [style=Z bold phase dot] (457) at (39.75, -23) {$r_6$};
		\node [style=X bold phase dot] (458) at (47.75, -12.5) {4};
		\node [style=X bold phase dot] (459) at (43.75, -16) {7};
		\node [style=X bold phase dot] (460) at (43.75, -21.25) {12};
		\node [style=X bold phase dot] (461) at (47.75, -14.25) {6};
		\node [style=X bold phase dot] (462) at (45.75, -14.25) {5};
		\node [style=X bold phase dot] (463) at (41.75, -12.5) {$r_1$};
		\node [style=X bold phase dot] (464) at (41.75, -16) {$r_3$};
		\node [style=X bold phase dot] (465) at (41.75, -21.25) {$r_5$};
		\node [style=X bold phase dot] (466) at (43.75, -14.25) {$r_2$};
		\node [style=none] (467) at (36.25, -16.25) {\LARGE$\ldots$};
		\node [style=none] (468) at (11.25, -17) {};
		\node [style=none] (469) at (13.25, -17) {};
		\node [style=none] (471) at (23, -17) {};
		\node [style=none] (472) at (25, -17) {};
		\node [style=none] (474) at (34.25, -17) {};
		\node [style=none] (475) at (38.75, -17) {};
		\node [style=Z bold phase dot] (476) at (28, 6) {0};
		\node [style=Z bold phase dot] (477) at (29.75, 5) {2};
		\node [style=Z bold phase dot] (478) at (28.75, 4) {3};
		\node [style=Z bold phase dot] (479) at (26.75, 5) {1};
		\node [style=Z bold phase dot] (480) at (28.5, -2) {11};
		\node [style=Z bold phase dot] (481) at (28.5, -0.5) {9};
		\node [style=Z bold phase dot] (482) at (27.25, -0.5) {8};
		\node [style=Z bold phase dot] (483) at (27.25, -2) {10};
		\node [style=Z bold phase dot] (484) at (28, -6) {16};
		\node [style=Z bold phase dot] (485) at (30, -5) {15};
		\node [style=Z bold phase dot] (486) at (29, -4) {13};
		\node [style=Z bold phase dot] (487) at (27, -5) {14};
		\node [style=X bold phase dot] (488) at (32, 4) {4};
		\node [style=X bold phase dot] (489) at (32, 0) {7};
		\node [style=X bold phase dot] (490) at (32, -4) {12};
		\node [style=X bold phase dot] (491) at (32, 2) {6};
		\node [style=X bold phase dot] (492) at (30.5, 2) {5};
		\node [style=none] (493) at (33.25, 6) {};
		\node [style=none] (494) at (33.25, 4) {};
		\node [style=none] (495) at (33.25, 2) {};
		\node [style=none] (496) at (33.25, 0) {};
		\node [style=none] (497) at (33.25, -2) {};
		\node [style=none] (498) at (33.25, -6) {};
		\node [style=none] (499) at (33.25, -4) {};
		\node [style=Z bold phase dot] (500) at (27, 3.75) {$r_0$};
		\node [style=Z bold phase dot] (501) at (26.25, -1.25) {$r_4$};
		\node [style=Z bold phase dot] (502) at (27, -3.25) {$r_6$};
		\node [style=X bold phase dot] (503) at (30.5, 4) {$r_1$};
		\node [style=X bold phase dot] (504) at (30.5, 0) {$r_3$};
		\node [style=X bold phase dot] (505) at (30.5, -4) {$r_5$};
		\node [style=X bold phase dot] (506) at (28.75, 2) {$r_2$};
		\node [style=none] (507) at (23, 0) {};
		\node [style=none] (508) at (25, 0) {};
		\node [style=none] (509) at (24, 0.75) {$(Z_1, X_3)$};
		\node [style=Z bold phase dot] (510) at (-5.5, -11.25) {$Z_1$};
		\node [style=Z bold phase dot] (511) at (-5.5, -19.25) {$Z_2$};
		\node [style=Z bold phase dot] (512) at (-5.5, -23.25) {$Z_3$};
		\node [style=X bold phase dot] (513) at (-2.25, -13.25) {$X_{\!1}$};
		\node [style=X bold phase dot] (514) at (-2.25, -15.25) {$X_{\!2}$};
		\node [style=X bold phase dot] (515) at (-2.25, -17.25) {$X_{\!3}$};
		\node [style=X bold phase dot] (516) at (-2.25, -21.25) {$X_{\!4}$};
		\node [style=none] (517) at (-0.75, -11.25) {};
		\node [style=none] (518) at (-0.75, -13.25) {};
		\node [style=none] (519) at (-0.75, -15.25) {};
		\node [style=none] (520) at (-0.75, -17.25) {};
		\node [style=none] (521) at (-0.75, -19.25) {};
		\node [style=none] (522) at (-0.75, -21.25) {};
		\node [style=none] (523) at (-0.75, -23.25) {};
		\node [style=none] (524) at (-3.25, -9.25) {Target state: $M_{\text{best}}$};
		\node [style=none] (525) at (-6.75, -10) {};
		\node [style=none] (526) at (-6.75, -24.5) {};
		\node [style=none] (527) at (0.25, -24.5) {};
		\node [style=none] (528) at (0.25, -10) {};
		\node [style=Z bold phase dot] (529) at (42, 6) {0};
		\node [style=Z bold phase dot] (530) at (44.25, 5) {2};
		\node [style=Z bold phase dot] (531) at (43.25, 4) {3};
		\node [style=Z bold phase dot] (532) at (41, 5) {1};
		\node [style=Z bold phase dot] (533) at (42.5, -2) {11};
		\node [style=Z bold phase dot] (534) at (42.5, -0.5) {9};
		\node [style=Z bold phase dot] (535) at (41, -0.5) {8};
		\node [style=Z bold phase dot] (536) at (41, -2) {10};
		\node [style=Z bold phase dot] (537) at (42.25, -6) {16};
		\node [style=Z bold phase dot] (538) at (44.25, -5) {15};
		\node [style=Z bold phase dot] (539) at (43.25, -4) {13};
		\node [style=Z bold phase dot] (540) at (41.25, -5) {14};
		\node [style=X bold phase dot] (541) at (46.25, 4) {4};
		\node [style=X bold phase dot] (542) at (46.25, 0) {7};
		\node [style=X bold phase dot] (543) at (46.25, -4) {12};
		\node [style=X bold phase dot] (544) at (46.25, 2) {6};
		\node [style=X bold phase dot] (545) at (42.5, 2) {5};
		\node [style=none] (546) at (47.75, 6) {};
		\node [style=none] (547) at (47.75, 4) {};
		\node [style=none] (548) at (47.75, 2) {};
		\node [style=none] (549) at (47.75, 0) {};
		\node [style=none] (550) at (47.75, -2) {};
		\node [style=none] (551) at (47.75, -6) {};
		\node [style=none] (552) at (47.75, -4) {};
		\node [style=Z bold phase dot] (553) at (41.25, 3.75) {$r_0$};
		\node [style=Z bold phase dot] (554) at (40, -1.25) {$r_4$};
		\node [style=Z bold phase dot] (555) at (41.25, -3.25) {$r_6$};
		\node [style=X bold phase dot] (556) at (45, 4) {$r_1$};
		\node [style=X bold phase dot] (557) at (44.25, 0) {$r_3$};
		\node [style=X bold phase dot] (558) at (45, -4) {$r_5$};
		\node [style=X bold phase dot] (559) at (41.25, 2) {$r_2$};
		\node [style=none] (560) at (36.5, 0.75) {\LARGE$\ldots$};
		\node [style=none] (561) at (34.25, 0) {};
		\node [style=none] (562) at (38.75, 0) {};
		\node [style=none] (563) at (12.25, -0.75) {$(13, r_3)$};
		\node [style=none] (564) at (24, -0.75) {$(1, 7)$};
	\end{pgfonlayer}
	\begin{pgfonlayer}{edgelayer}
		\draw [style=arrow, in=180, out=0] (189.center) to (190.center);
		\draw [style=graph line] (271) to (270);
		\draw [style=graph line] (269) to (270);
		\draw [style=graph line] (286.center) to (269);
		\draw [style=graph line] (269) to (272);
		\draw [style=graph line] (279) to (288.center);
		\draw [style=graph line] (287.center) to (273);
		\draw [style=graph line] (273) to (274);
		\draw [style=graph line] (274) to (275);
		\draw [style=graph line] (276) to (273);
		\draw [style=graph line] (278) to (285.center);
		\draw [style=graph line] (284.center) to (280);
		\draw [style=graph line] (280) to (281);
		\draw [style=graph line] (277) to (283.center);
		\draw [style=graph line] (282.center) to (265);
		\draw [style=graph line] (265) to (268);
		\draw [style=graph line] (267) to (266);
		\draw [style=graph line] (266) to (265);
		\draw [style=graph line] (268) to (267);
		\draw [style=graph line] (276) to (275);
		\draw [style=graph line] (272) to (271);
		\draw [style=graph line] (295) to (294);
		\draw [style=graph line] (293) to (294);
		\draw [style=graph line] (310.center) to (293);
		\draw [style=graph line] (293) to (296);
		\draw [style=graph line] (303) to (312.center);
		\draw [style=graph line] (311.center) to (297);
		\draw [style=graph line] (297) to (298);
		\draw [style=graph line] (298) to (299);
		\draw [style=graph line] (300) to (297);
		\draw [style=graph line] (302) to (309.center);
		\draw [style=graph line] (308.center) to (304);
		\draw [style=graph line] (304) to (305);
		\draw [style=graph line] (301) to (307.center);
		\draw [style=graph line] (306.center) to (289);
		\draw [style=graph line] (289) to (292);
		\draw [style=graph line] (291) to (290);
		\draw [style=graph line] (290) to (289);
		\draw [style=graph line] (313) to (291);
		\draw [style=graph line] (313) to (292);
		\draw [style=graph line] (319) to (305);
		\draw [style=graph line] (317) to (302);
		\draw [style=graph line] (314) to (296);
		\draw [style=graph line] (314) to (295);
		\draw [style=graph line] (318) to (303);
		\draw [style=graph line] (301) to (316);
		\draw [style=graph line] (315) to (300);
		\draw [style=graph line] (315) to (299);
		\draw [style=graph line] (343) to (342);
		\draw [style=graph line] (341) to (342);
		\draw [style=graph line] (358.center) to (341);
		\draw [style=graph line] (341) to (344);
		\draw [style=graph line] (351) to (360.center);
		\draw [style=graph line] (359.center) to (345);
		\draw [style=graph line] (345) to (346);
		\draw [style=graph line] (346) to (347);
		\draw [style=graph line] (348) to (345);
		\draw [style=graph line] (350) to (357.center);
		\draw [style=graph line] (356.center) to (352);
		\draw [style=graph line] (352) to (353);
		\draw [style=graph line] (349) to (355.center);
		\draw [style=graph line] (354.center) to (337);
		\draw [style=graph line] (337) to (340);
		\draw [style=graph line] (339) to (338);
		\draw [style=graph line] (338) to (337);
		\draw [style=graph line] (361) to (339);
		\draw [style=graph line] (361) to (340);
		\draw [style=graph line] (367) to (353);
		\draw [style=graph line] (365) to (350);
		\draw [style=graph line] (362) to (344);
		\draw [style=graph line] (362) to (343);
		\draw [style=graph line] (366) to (351);
		\draw [style=graph line] (349) to (364);
		\draw [style=graph line] (363) to (348);
		\draw [style=graph line] (363) to (347);
		\draw [style=arrow, in=180, out=0] (368.center) to (369.center);
		\draw [style=graph arrow] (374) to (371);
		\draw [style=red dashed arrow] (372) to (371);
		\draw [style=graph arrow] (373) to (372);
		\draw [style=graph arrow] (378) to (375);
		\draw [style=red dashed arrow] (376) to (375);
		\draw [style=graph arrow] (377) to (376);
		\draw [style=graph arrow] (382) to (379);
		\draw [style=red dashed arrow] (380) to (379);
		\draw [style=graph arrow] (381) to (380);
		\draw [style=graph arrow] (383) to (373);
		\draw [style=graph arrow] (383) to (374);
		\draw [style=graph arrow] (385) to (382);
		\draw [style=graph arrow] (385) to (381);
		\draw [style=graph arrow] (384) to (377);
		\draw [style=graph arrow] (384) to (378);
		\draw [style=graph arrow] (393) to (388);
		\draw [style=graph arrow] (392) to (387);
		\draw [style=graph arrow] (394) to (390);
		\draw [style=graph arrow] (390) to (389);
		\draw [style=graph arrow] (391) to (386);
		\draw [style=graph arrow] (398) to (395);
		\draw [style=red dashed arrow] (396) to (395);
		\draw [style=graph arrow] (397) to (396);
		\draw [style=graph arrow] (402) to (399);
		\draw [style=red dashed arrow] (400) to (399);
		\draw [style=graph arrow] (401) to (400);
		\draw [style=graph arrow] (406) to (403);
		\draw [style=red dashed arrow] (404) to (403);
		\draw [style=graph arrow] (405) to (404);
		\draw [style=graph arrow] (407) to (397);
		\draw [style=graph arrow] (407) to (398);
		\draw [style=graph arrow] (409) to (406);
		\draw [style=graph arrow] (409) to (405);
		\draw [style=graph arrow] (408) to (401);
		\draw [style=graph arrow] (408) to (402);
		\draw [style=graph arrow] (417) to (412);
		\draw [style=graph arrow] (416) to (411);
		\draw [style=graph arrow] (418) to (414);
		\draw [style=graph arrow] (414) to (413);
		\draw [style=graph arrow] (415) to (410);
		\draw [style=graph arrow] (422) to (419);
		\draw [style=red dashed arrow] (420) to (419);
		\draw [style=graph arrow] (421) to (420);
		\draw [style=graph arrow] (426) to (423);
		\draw [style=red dashed arrow] (424) to (423);
		\draw [style=graph arrow] (425) to (424);
		\draw [style=graph arrow] (430) to (427);
		\draw [style=red dashed arrow] (428) to (427);
		\draw [style=graph arrow] (429) to (428);
		\draw [style=graph arrow] (431) to (421);
		\draw [style=graph arrow] (431) to (422);
		\draw [style=graph arrow] (433) to (430);
		\draw [style=graph arrow] (433) to (429);
		\draw [style=graph arrow] (432) to (425);
		\draw [style=graph arrow] (432) to (426);
		\draw [style=graph arrow] (441) to (436);
		\draw [style=graph arrow] (440) to (435);
		\draw [style=graph arrow] (442) to (438);
		\draw [style=graph arrow] (438) to (437);
		\draw [style=graph arrow] (439) to (434);
		\draw [style=graph arrow] (446) to (443);
		\draw [style=red dashed arrow] (444) to (443);
		\draw [style=graph arrow] (445) to (444);
		\draw [style=graph arrow] (450) to (447);
		\draw [style=red dashed arrow] (448) to (447);
		\draw [style=graph arrow] (449) to (448);
		\draw [style=graph arrow] (454) to (451);
		\draw [style=red dashed arrow] (452) to (451);
		\draw [style=graph arrow] (453) to (452);
		\draw [style=graph arrow] (455) to (445);
		\draw [style=graph arrow] (455) to (446);
		\draw [style=graph arrow] (457) to (454);
		\draw [style=graph arrow] (457) to (453);
		\draw [style=graph arrow] (456) to (449);
		\draw [style=graph arrow] (456) to (450);
		\draw [style=graph arrow] (465) to (460);
		\draw [style=graph arrow] (464) to (459);
		\draw [style=graph arrow] (466) to (462);
		\draw [style=graph arrow] (462) to (461);
		\draw [style=graph arrow] (463) to (458);
		\draw [style={cnot_edge_dir}] (445) to (458);
		\draw [style={cnot_edge_dir}] (463) to (444);
		\draw [style={cnot_edge_dir}] (457) to (464);
		\draw [style={cnot_edge_dir}] (465) to (452);
		\draw [style={cnot_edge_dir}] (453) to (460);
		\draw [style={cnot_edge_dir}] (462) to (454);
		\draw [style={cnot_edge_dir}] (457) to (461);
		\draw [style={cnot_edge_dir}] (455) to (459);
		\draw [style={cnot_edge_dir}] (464) to (446);
		\draw [style={cnot_edge_dir}] (455) to (462);
		\draw [style={cnot_edge_dir}] (466) to (445);
		\draw [style={cnot_edge_dir}] (449) to (466);
		\draw [style={cnot_edge_dir}] (456) to (463);
		\draw [style=arrow, in=180, out=0] (468.center) to (469.center);
		\draw [style=arrow, in=180, out=0] (471.center) to (472.center);
		\draw [style=arrow, in=180, out=0] (474.center) to (475.center);
		\draw [style=graph line] (482) to (481);
		\draw [style=graph line] (480) to (481);
		\draw [style=graph line] (497.center) to (480);
		\draw [style=graph line] (480) to (483);
		\draw [style=graph line] (490) to (499.center);
		\draw [style=graph line] (498.center) to (484);
		\draw [style=graph line] (484) to (485);
		\draw [style=graph line] (485) to (486);
		\draw [style=graph line] (487) to (484);
		\draw [style=graph line] (489) to (496.center);
		\draw [style=graph line] (495.center) to (491);
		\draw [style=graph line] (491) to (492);
		\draw [style=graph line] (488) to (494.center);
		\draw [style=graph line] (493.center) to (476);
		\draw [style=graph line] (476) to (479);
		\draw [style=graph line] (478) to (477);
		\draw [style=graph line] (477) to (476);
		\draw [style=graph line] (500) to (478);
		\draw [style=graph line] (500) to (479);
		\draw [style=graph line] (506) to (492);
		\draw [style=graph line] (504) to (489);
		\draw [style=graph line] (501) to (483);
		\draw [style=graph line] (501) to (482);
		\draw [style=graph line] (505) to (490);
		\draw [style=graph line] (488) to (503);
		\draw [style=graph line] (502) to (487);
		\draw [style=graph line] (502) to (486);
		\draw [style=arrow, in=180, out=0] (507.center) to (508.center);
		\draw [style=graph line, in=150, out=-30] (510) to (513);
		\draw [style=graph line] (510) to (514);
		\draw [style=graph line] (510) to (515);
		\draw [style=graph line] (511) to (513);
		\draw [style=graph line] (511) to (514);
		\draw [style=graph line] (511) to (516);
		\draw [style=graph line] (512) to (514);
		\draw [style=graph line] (512) to (515);
		\draw [style=graph line] (512) to (516);
		\draw [style=graph line] (522.center) to (516);
		\draw [style=graph line] (520.center) to (515);
		\draw [style=graph line] (519.center) to (514);
		\draw [style=graph line] (518.center) to (513);
		\draw [style=graph line] (510) to (517.center);
		\draw [style=graph line, in=180, out=0] (512) to (523.center);
		\draw [style=graph line] (511) to (521.center);
		\draw (525.center) to (526.center);
		\draw (526.center) to (527.center);
		\draw (527.center) to (528.center);
		\draw (528.center) to (525.center);
		\draw [style=graph line] (535) to (534);
		\draw [style=graph line] (533) to (534);
		\draw [style=graph line] (550.center) to (533);
		\draw [style=graph line] (533) to (536);
		\draw [style=graph line] (543) to (552.center);
		\draw [style=graph line] (551.center) to (537);
		\draw [style=graph line] (537) to (538);
		\draw [style=graph line] (538) to (539);
		\draw [style=graph line] (540) to (537);
		\draw [style=graph line] (542) to (549.center);
		\draw [style=graph line] (548.center) to (544);
		\draw [style=graph line] (544) to (545);
		\draw [style=graph line] (541) to (547.center);
		\draw [style=graph line] (546.center) to (529);
		\draw [style=graph line] (529) to (532);
		\draw [style=graph line] (531) to (530);
		\draw [style=graph line] (530) to (529);
		\draw [style=graph line] (532) to (542);
		\draw [style=graph line] (544) to (540);
		\draw [style=graph line] (553) to (531);
		\draw [style=graph line] (553) to (532);
		\draw [style=graph line] (559) to (545);
		\draw [style=graph line] (557) to (542);
		\draw [style=graph line] (557) to (539);
		\draw [style=graph line] (554) to (536);
		\draw [style=graph line] (554) to (535);
		\draw [style=graph line] (558) to (543);
		\draw [style=graph line] (556) to (535);
		\draw [style=graph line] (530) to (541);
		\draw [style=graph line] (541) to (556);
		\draw [style=graph line] (555) to (540);
		\draw [style=graph line] (555) to (539);
		\draw [style=graph line] (545) to (531);
		\draw [style=graph line] (559) to (534);
		\draw [style=arrow, in=180, out=0] (561.center) to (562.center);
		\draw [style=graph line] (486) to (504);
		\draw [style=graph line] (347) to (365);
		\draw [style=graph line] (479) to (489);
		\draw [style={cnot_edge_dir}] (409) to (416);
		\draw [style={cnot_edge_dir}] (431) to (435);
		\draw [style={cnot_edge_dir}] (433) to (440);
		\draw [style={cnot_edge_dir}] (440) to (422);
		\draw [style=graph line] (543) to (538);
		\draw [style=graph line] (558) to (536);
		\draw [style={cnot_edge_dir}] (456) to (465);
	\end{pgfonlayer}
\end{tikzpicture}

%% file: figures/full-extraction.tikz
\begin{tikzpicture}
	\begin{pgfonlayer}{nodelayer}
		\node [style=X bold phase dot] (0) at (-8, 5) {4};
		\node [style=X bold phase dot] (1) at (-12.875, 1) {7};
		\node [style=X bold phase dot] (2) at (-12.5, -5) {12};
		\node [style=none] (3) at (-4.5, 7) {};
		\node [style=none] (4) at (-4.5, 5) {};
		\node [style=none] (5) at (-4.5, 3) {};
		\node [style=none] (6) at (-4.5, 1) {};
		\node [style=none] (7) at (-4.5, -1) {};
		\node [style=none] (8) at (-4.5, -5) {};
		\node [style=none] (9) at (-4.5, -7) {};
		\node [style=Z bold phase dot] (10) at (-12.875, 7) {1};
		\node [style=Z bold phase dot] (11) at (-6, 7) {0};
		\node [style=Z bold phase dot] (12) at (-10, 9) {3};
		\node [style=Z bold phase dot] (13) at (-8, 9) {2};
		\node [style=Z bold phase dot] (14) at (-8, -7) {14};
		\node [style=Z bold phase dot] (15) at (-6, -7) {16};
		\node [style=Z bold phase dot] (16) at (-14.375, -9) {13};
		\node [style=Z bold phase dot] (17) at (-12.5, -9) {15};
		\node [style=Z bold phase dot] (18) at (-15, -1) {10};
		\node [style=Z bold phase dot] (19) at (-10, -1) {11};
		\node [style=Z bold phase dot] (20) at (-12.125, -3) {9};
		\node [style=Z bold phase dot] (21) at (-15.625, -3) {8};
		\node [style=X bold phase dot] (22) at (-8, 3) {6};
		\node [style=X bold phase dot] (23) at (-10, 3) {5};
		\node [style=Z bold phase dot] (24) at (-17.5, 7) {$r_0$};
		\node [style=Z bold phase dot] (25) at (-17.5, -1) {$r_4$};
		\node [style=Z bold phase dot] (26) at (-17.5, -7) {$r_6$};
		\node [style=X bold phase dot] (27) at (-15.625, 5) {$r_1$};
		\node [style=X bold phase dot] (28) at (-14.375, 1) {$r_3$};
		\node [style=X bold phase dot] (29) at (-15, -5) {$r_5$};
		\node [style=X bold phase dot] (30) at (-12.125, 3) {$r_2$};
		\node [style=X bold phase dot] (55) at (12.5, 5) {4};
		\node [style=X bold phase dot] (56) at (7.625, 1) {7};
		\node [style=X bold phase dot] (57) at (8, -5) {12};
		\node [style=none] (58) at (16, 7) {};
		\node [style=none] (59) at (16, 5) {};
		\node [style=none] (60) at (16, 3) {};
		\node [style=none] (61) at (16, 1) {};
		\node [style=none] (62) at (16, -1) {};
		\node [style=none] (63) at (16, -5) {};
		\node [style=none] (64) at (16, -7) {};
		\node [style=Z bold phase dot] (65) at (7.625, 7) {1};
		\node [style=Z bold phase dot] (66) at (14.5, 7) {0};
		\node [style=Z bold phase dot] (67) at (10.5, 9) {3};
		\node [style=Z bold phase dot] (68) at (12.5, 9) {2};
		\node [style=Z bold phase dot] (69) at (12.5, -7) {14};
		\node [style=Z bold phase dot] (70) at (14.5, -7) {16};
		\node [style=Z bold phase dot] (71) at (6.125, -9) {13};
		\node [style=Z bold phase dot] (72) at (8, -9) {15};
		\node [style=Z bold phase dot] (73) at (5.5, -1) {10};
		\node [style=Z bold phase dot] (74) at (10.5, -1) {11};
		\node [style=Z bold phase dot] (75) at (8.375, -3) {9};
		\node [style=Z bold phase dot] (76) at (4.875, -3) {8};
		\node [style=X bold phase dot] (77) at (12.5, 3) {6};
		\node [style=X bold phase dot] (78) at (10.5, 3) {5};
		\node [style=Z bold phase dot] (79) at (3, 7) {$r_0$};
		\node [style=Z bold phase dot] (80) at (3, -1) {$r_4$};
		\node [style=Z bold phase dot] (81) at (3, -7) {$r_6$};
		\node [style=X bold phase dot] (82) at (4.875, 5) {$r_1$};
		\node [style=X bold phase dot] (83) at (6.125, 1) {$r_3$};
		\node [style=X bold phase dot] (84) at (5.5, -5) {$r_5$};
		\node [style=X bold phase dot] (85) at (8.375, 3) {$r_2$};
		\node [style=X bold phase dot] (86) at (3, -9) {};
		\node [style=X bold phase dot] (87) at (1.5, -9) {};
		\node [style=X bold phase dot] (88) at (4, -5) {};
		\node [style=X bold phase dot] (89) at (4, 1) {};
		\node [style=X bold phase dot] (90) at (3.25, 5) {};
		\node [style=X bold phase dot] (91) at (3, 9) {};
		\node [style=X bold phase dot] (92) at (1.5, 9) {};
		\node [style=X bold phase dot] (93) at (3, -3) {};
		\node [style=X bold phase dot] (94) at (1.5, -3) {};
		\node [style=Z bold phase dot] (95) at (1.5, -1) {};
		\node [style=X bold phase dot] (96) at (6.5, 3) {};
		\node [style=Z bold phase dot] (97) at (1.5, 7) {};
		\node [style=Z bold phase dot] (98) at (1.5, -7) {};
		\node [style=X bold phase dot] (99) at (14.5, 9) {};
		\node [style=X bold phase dot] (100) at (16, 9) {};
		\node [style=X bold phase dot] (101) at (10.5, -3) {};
		\node [style=X bold phase dot] (102) at (11.75, -3) {};
		\node [style=X bold phase dot] (103) at (14.5, -9) {};
		\node [style=X bold phase dot] (104) at (16, -9) {};
		\node [style=none] (311) at (-3, 0) {};
		\node [style=none] (312) at (0, 0) {};
		\node [style=not] (314) at (33.5, 5) {};
		\node [style=not] (315) at (28.625, 1) {};
		\node [style=not] (316) at (29, -5) {};
		\node [style=none] (317) at (37, 7) {};
		\node [style=none] (318) at (37, 5) {};
		\node [style=none] (319) at (37, 3) {};
		\node [style=none] (320) at (37, 1) {};
		\node [style=none] (321) at (37, -1) {};
		\node [style=none] (322) at (37, -5) {};
		\node [style=none] (323) at (37, -7) {};
		\node [style=cnot ctrl] (324) at (28.625, 7) {};
		\node [style=cnot ctrl] (325) at (35.5, 7) {};
		\node [style=cnot ctrl] (326) at (31.5, 9) {};
		\node [style=cnot ctrl] (327) at (33.5, 9) {};
		\node [style=cnot ctrl] (328) at (33.5, -7) {};
		\node [style=cnot ctrl] (329) at (35.5, -7) {};
		\node [style=cnot ctrl] (330) at (27.125, -9) {};
		\node [style=cnot ctrl] (331) at (29, -9) {};
		\node [style=cnot ctrl] (332) at (26.5, -1) {};
		\node [style=cnot ctrl] (333) at (31.5, -1) {};
		\node [style=cnot ctrl] (334) at (29.375, -3) {};
		\node [style=cnot ctrl] (335) at (25.875, -3) {};
		\node [style=not] (336) at (33.5, 3) {};
		\node [style=not] (337) at (31.5, 3) {};
		\node [style=cnot ctrl] (338) at (24, 7) {};
		\node [style=cnot ctrl] (339) at (24, -1) {};
		\node [style=cnot ctrl] (340) at (24, -7) {};
		\node [style=not] (341) at (25.875, 5) {};
		\node [style=not] (342) at (27.125, 1) {};
		\node [style=not] (343) at (26.5, -5) {};
		\node [style=not] (344) at (29.375, 3) {};
		\node [style=not] (345) at (24, -9) {};
		\node [style=none] (346) at (22.5, -9) {};
		\node [style=none] (347) at (25, -5) {};
		\node [style=none] (348) at (25, 1) {};
		\node [style=none] (349) at (24.25, 5) {};
		\node [style=not] (350) at (24, 9) {};
		\node [style=none] (351) at (22.5, 9) {};
		\node [style=not] (352) at (24, -3) {};
		\node [style=none] (353) at (22.5, -3) {};
		\node [style=none] (354) at (22.5, -1) {};
		\node [style=none] (355) at (27.5, 3) {};
		\node [style=none] (356) at (22.5, 7) {};
		\node [style=none] (357) at (22.5, -7) {};
		\node [style=not] (358) at (35.5, 9) {};
		\node [style=meter] (359) at (37, 9) {};
		\node [style=not] (360) at (31.5, -3) {};
		\node [style=meter] (361) at (32.75, -3) {};
		\node [style=not] (362) at (35.5, -9) {};
		\node [style=meter] (363) at (37, -9) {};
		\node [style=none] (364) at (17.5, 0) {};
		\node [style=none] (365) at (20, 0) {};
		\node [style=none] (367) at (22, -9) {$\ket{0}$};
		\node [style=none] (368) at (24.5, -5) {$\ket{0}$};
		\node [style=none] (369) at (24.5, 1) {$\ket{0}$};
		\node [style=none] (370) at (23.75, 5) {$\ket{0}$};
		\node [style=none] (371) at (22, 9) {$\ket{0}$};
		\node [style=none] (372) at (22, -3) {$\ket{0}$};
		\node [style=none] (373) at (27, 3) {$\ket{0}$};
		\node [style=none] (374) at (22, -1) {$\ket{+}$};
		\node [style=none] (375) at (22, 7) {$\ket{+}$};
		\node [style=none] (376) at (22, -7) {$\ket{+}$};
		\node [style=Z bold phase dot] (408) at (-29.25, 7) {0};
		\node [style=Z bold phase dot] (409) at (-27, 6) {2};
		\node [style=Z bold phase dot] (410) at (-28, 5) {3};
		\node [style=Z bold phase dot] (411) at (-30.25, 6) {1};
		\node [style=Z bold phase dot] (412) at (-28.75, -1) {11};
		\node [style=Z bold phase dot] (413) at (-28.75, 0.5) {9};
		\node [style=Z bold phase dot] (414) at (-30.25, 0.5) {8};
		\node [style=Z bold phase dot] (415) at (-30.25, -1) {10};
		\node [style=Z bold phase dot] (416) at (-29, -7) {16};
		\node [style=Z bold phase dot] (417) at (-27, -6) {15};
		\node [style=Z bold phase dot] (418) at (-28, -5) {13};
		\node [style=Z bold phase dot] (419) at (-30, -6) {14};
		\node [style=X bold phase dot] (420) at (-25, 5) {4};
		\node [style=X bold phase dot] (421) at (-25, 1) {7};
		\node [style=X bold phase dot] (422) at (-25, -5) {12};
		\node [style=X bold phase dot] (423) at (-25, 3) {6};
		\node [style=X bold phase dot] (424) at (-28.75, 3) {5};
		\node [style=none] (425) at (-23.5, 7) {};
		\node [style=none] (426) at (-23.5, 5) {};
		\node [style=none] (427) at (-23.5, 3) {};
		\node [style=none] (428) at (-23.5, 1) {};
		\node [style=none] (429) at (-23.5, -1) {};
		\node [style=none] (430) at (-23.5, -7) {};
		\node [style=none] (431) at (-23.5, -5) {};
		\node [style=Z bold phase dot] (432) at (-30, 4.75) {$r_0$};
		\node [style=Z bold phase dot] (433) at (-31.25, -0.25) {$r_4$};
		\node [style=Z bold phase dot] (434) at (-30, -4.25) {$r_6$};
		\node [style=X bold phase dot] (435) at (-26.25, 5) {$r_1$};
		\node [style=X bold phase dot] (436) at (-27, 1) {$r_3$};
		\node [style=X bold phase dot] (437) at (-26.25, -5) {$r_5$};
		\node [style=X bold phase dot] (438) at (-30, 3) {$r_2$};
		\node [style=none] (439) at (-22, 0) {};
		\node [style=none] (440) at (-19, 0) {};
		\node [style=none] (441) at (-27, -10.5) {(a)};
		\node [style=none] (442) at (-12, -10.5) {(b)};
		\node [style=none] (443) at (8, -10.5) {(c)};
		\node [style=none] (444) at (29, -10.5) {(d)};
	\end{pgfonlayer}
	\begin{pgfonlayer}{edgelayer}
		\draw [graph line] (8.center) to (2);
		\draw [graph line] (6.center) to (1);
		\draw [graph line] (4.center) to (0);
		\draw [graph line] (12) to (13);
		\draw [graph line] (13) to (11);
		\draw [graph line] (13) to (0);
		\draw [graph line] (10) to (1);
		\draw [graph line] (11) to (3.center);
		\draw [graph line] (15) to (14);
		\draw [graph line] (16) to (17);
		\draw [graph line] (17) to (15);
		\draw [graph line] (15) to (9.center);
		\draw [graph line] (17) to (2);
		\draw [graph line] (18) to (19);
		\draw [graph line] (19) to (20);
		\draw [graph line] (20) to (21);
		\draw [graph line] (19) to (7.center);
		\draw [graph line] (22) to (23);
		\draw [graph line] (22) to (5.center);
		\draw [graph line] (23) to (12);
		\draw [graph line] (22) to (14);
		\draw [style=graph line] (24) to (12);
		\draw [style=graph line] (11) to (10);
		\draw [style=graph line] (10) to (24);
		\draw [style=graph line] (25) to (21);
		\draw [style=graph line] (25) to (18);
		\draw [style=graph line] (26) to (14);
		\draw [style=graph line] (26) to (16);
		\draw [style=graph line] (0) to (27);
		\draw [style=graph line] (27) to (21);
		\draw [style=graph line] (28) to (1);
		\draw [style=graph line] (28) to (16);
		\draw [style=graph line] (29) to (2);
		\draw [style=graph line] (29) to (18);
		\draw [style=graph line] (30) to (23);
		\draw [style=graph line] (20) to (30);
		\draw [graph line] (63.center) to (57);
		\draw [graph line] (61.center) to (56);
		\draw [graph line] (59.center) to (55);
		\draw [graph line] (67) to (68);
		\draw [graph line] (68) to (55);
		\draw [graph line] (65) to (56);
		\draw [graph line] (66) to (58.center);
		\draw [graph line] (70) to (69);
		\draw [graph line] (71) to (72);
		\draw [graph line] (70) to (64.center);
		\draw [graph line] (72) to (57);
		\draw [graph line] (73) to (74);
		\draw [graph line] (75) to (76);
		\draw [graph line] (74) to (62.center);
		\draw [graph line] (77) to (78);
		\draw [graph line] (77) to (60.center);
		\draw [graph line] (78) to (67);
		\draw [graph line] (77) to (69);
		\draw [style=graph line] (66) to (65);
		\draw [style=graph line] (65) to (79);
		\draw [style=graph line] (80) to (73);
		\draw [style=graph line] (81) to (69);
		\draw [style=graph line] (55) to (82);
		\draw [style=graph line] (82) to (76);
		\draw [style=graph line] (83) to (56);
		\draw [style=graph line] (83) to (71);
		\draw [style=graph line] (84) to (57);
		\draw [style=graph line] (84) to (73);
		\draw [style=graph line] (85) to (78);
		\draw [style=graph line] (75) to (85);
		\draw [style=graph line] (71) to (86);
		\draw [style=graph line] (86) to (81);
		\draw [style=graph line] (87) to (86);
		\draw [style=graph line] (88) to (84);
		\draw [style=graph line] (89) to (83);
		\draw [style=graph line] (90) to (82);
		\draw [style=graph line] (91) to (67);
		\draw [style=graph line] (91) to (79);
		\draw [style=graph line] (92) to (91);
		\draw [style=graph line] (93) to (76);
		\draw [style=graph line] (93) to (80);
		\draw [style=graph line] (94) to (93);
		\draw [style=graph line] (95) to (80);
		\draw [style=graph line] (96) to (85);
		\draw [style=graph line] (97) to (79);
		\draw [style=graph line] (98) to (81);
		\draw [style=graph line] (99) to (66);
		\draw [style=graph line] (99) to (68);
		\draw [style=graph line] (100) to (99);
		\draw [style=graph line] (101) to (75);
		\draw [style=graph line] (101) to (74);
		\draw [style=graph line] (102) to (101);
		\draw [style=graph line] (72) to (103);
		\draw [style=graph line] (103) to (70);
		\draw [style=graph line] (104) to (103);
		\draw [style=arrow, in=180, out=0] (311.center) to (312.center);
		\draw [graph line] (322.center) to (316);
		\draw [graph line] (320.center) to (315);
		\draw [graph line] (318.center) to (314);
		\draw [graph line] (326) to (327);
		\draw [graph line] (327) to (314);
		\draw [graph line] (324) to (315);
		\draw [graph line] (325) to (317.center);
		\draw [graph line] (329) to (328);
		\draw [graph line] (330) to (331);
		\draw [graph line] (329) to (323.center);
		\draw [graph line] (331) to (316);
		\draw [graph line] (332) to (333);
		\draw [graph line] (334) to (335);
		\draw [graph line] (333) to (321.center);
		\draw [graph line] (336) to (337);
		\draw [graph line] (336) to (319.center);
		\draw [graph line] (337) to (326);
		\draw [graph line] (336) to (328);
		\draw [style=graph line] (325) to (324);
		\draw [style=graph line] (324) to (338);
		\draw [style=graph line] (339) to (332);
		\draw [style=graph line] (340) to (328);
		\draw [style=graph line] (314) to (341);
		\draw [style=graph line] (341) to (335);
		\draw [style=graph line] (342) to (315);
		\draw [style=graph line] (342) to (330);
		\draw [style=graph line] (343) to (316);
		\draw [style=graph line] (343) to (332);
		\draw [style=graph line] (344) to (337);
		\draw [style=graph line] (334) to (344);
		\draw [style=graph line] (330) to (345);
		\draw [style=graph line] (345) to (340);
		\draw [style=graph line] (346.center) to (345);
		\draw [style=graph line] (347.center) to (343);
		\draw [style=graph line] (348.center) to (342);
		\draw [style=graph line] (349.center) to (341);
		\draw [style=graph line] (350) to (326);
		\draw [style=graph line] (350) to (338);
		\draw [style=graph line] (351.center) to (350);
		\draw [style=graph line] (352) to (335);
		\draw [style=graph line] (352) to (339);
		\draw [style=graph line] (353.center) to (352);
		\draw [style=graph line] (354.center) to (339);
		\draw [style=graph line] (355.center) to (344);
		\draw [style=graph line] (356.center) to (338);
		\draw [style=graph line] (357.center) to (340);
		\draw [style=graph line] (358) to (325);
		\draw [style=graph line] (358) to (327);
		\draw [style=graph line] (359) to (358);
		\draw [style=graph line] (360) to (334);
		\draw [style=graph line] (360) to (333);
		\draw [style=graph line] (361) to (360);
		\draw [style=graph line] (331) to (362);
		\draw [style=graph line] (362) to (329);
		\draw [style=graph line] (363) to (362);
		\draw [style=arrow, in=180, out=0] (364.center) to (365.center);
		\draw [style=graph line] (414) to (413);
		\draw [style=graph line] (412) to (413);
		\draw [style=graph line] (429.center) to (412);
		\draw [style=graph line] (412) to (415);
		\draw [style=graph line] (422) to (431.center);
		\draw [style=graph line] (430.center) to (416);
		\draw [style=graph line] (416) to (417);
		\draw [style=graph line] (417) to (418);
		\draw [style=graph line] (419) to (416);
		\draw [style=graph line] (421) to (428.center);
		\draw [style=graph line] (427.center) to (423);
		\draw [style=graph line] (423) to (424);
		\draw [style=graph line] (420) to (426.center);
		\draw [style=graph line] (425.center) to (408);
		\draw [style=graph line] (408) to (411);
		\draw [style=graph line] (410) to (409);
		\draw [style=graph line] (409) to (408);
		\draw [style=graph line] (411) to (421);
		\draw [style=graph line] (423) to (419);
		\draw [style=graph line] (432) to (410);
		\draw [style=graph line] (432) to (411);
		\draw [style=graph line] (438) to (424);
		\draw [style=graph line] (436) to (421);
		\draw [style=graph line] (436) to (418);
		\draw [style=graph line] (433) to (415);
		\draw [style=graph line] (433) to (414);
		\draw [style=graph line] (435) to (414);
		\draw [style=graph line] (409) to (420);
		\draw [style=graph line] (420) to (435);
		\draw [style=graph line] (434) to (419);
		\draw [style=graph line] (434) to (418);
		\draw [style=graph line] (424) to (410);
		\draw [style=graph line] (438) to (413);
		\draw [style=arrow, in=180, out=0] (439.center) to (440.center);
		\draw [style=graph line] (422) to (417);
		\draw [style=graph line] (437) to (415);
		\draw [style=graph line] (422) to (437);
	\end{pgfonlayer}
\end{tikzpicture}

%% file: sim_table.tex
\begin{table*}[!h]
\centering
\fontsize{6}{6}\selectfont
\begin{tabular*}{\textwidth}{@{\extracolsep{\fill}}l l c c c c c c c}
\toprule
\makecell[l]{QEC Code \\ \& State} & Method & \makecell{CNOT \\ Count} & \makecell{Flag \\ Count} & \makecell{Qubit Reuse \\ Opt.\@ Target} & \makecell{Sim.\@ \\ Qubits} & Depth & LER & AR \\
\midrule
\multirow{4}{*}{\makecell[l]{$\code{7, 1, 3}$ Steane $\ket{\overline{0}}$}} & \multirow{2}{*}{Flag at Origin} & \multirow{2}{*}{15} & \multirow{2}{*}{3} & sim.\@ qubits & 8 & 15 & \multirow{2}{*}{\textbf{\boldmath $[2.1, \,\, 2.2]\! \times\! 10^{-5}$}} & \multirow{2}{*}{$[0.9870, \,\, 0.9871]$} \\
 & & & & depth & 10 & 8 & & \\
\cmidrule{2-9}
 & \multirow{2}{*}{SpiderCSS (Earliest Start)} & \multirow{2}{*}{15} & \multirow{2}{*}{3} & sim.\@ qubits & 8 & \textbf{12} & \multirow{2}{*}{$[2.3, \,\, 2.5]\! \times\! 10^{-5}$} & \multirow{2}{*}{$[0.9870, \,\, 0.9871]$} \\
 & & & & depth & 10 & \textbf{5} & & \\
\midrule
\multirow{4}{*}{\makecell[l]{$\code{9, 1, 3}$ rot.\@ surface $\ket{\overline{0}}$}} & \multirow{2}{*}{Flag at Origin} & \multirow{2}{*}{26} & \multirow{2}{*}{9} & sim.\@ qubits & 11 & 20 & \multirow{2}{*}{$[17.7, \,\, 19.5]\! \times\! 10^{-6}$} & \multirow{2}{*}{$[0.9622, \,\, 0.9623]$} \\
 & & & & depth & 18 & 6 & & \\
\cmidrule{2-9}
 & \multirow{2}{*}{SpiderCSS (Crit.\@ Path)} & \multirow{2}{*}{\textbf{12}} & \multirow{2}{*}{\textbf{2}} & sim.\@ qubits & \textbf{9} & \textbf{5} & \multirow{2}{*}{\textbf{\boldmath $[9.3, \,\, 10.5]\! \times\! 10^{-6}$}} & \multirow{2}{*}{\textbf{\boldmath $[0.9913, \,\, 0.9913]$}} \\
 & & & & depth & \textbf{11} & \textbf{4} & & \\
\midrule
\multirow{4}{*}{\makecell[l]{$\code{17, 1, 5}$ colour code $\ket{\overline{0}}$}} & \multirow{2}{*}{Flag at Origin} & \multirow{2}{*}{74} & \multirow{2}{*}{21} & sim.\@ qubits & 22 & 47 & \multirow{2}{*}{$[4.9, \,\, 8.2]\! \times\! 10^{-7}$} & \multirow{2}{*}{$[0.9009, \,\, 0.9010]$} \\
 & & & & depth & 38 & 17 & & \\
\cmidrule{2-9}
 & \multirow{2}{*}{SpiderCSS (Seq.\@ Dist.)} & \multirow{2}{*}{74} & \multirow{2}{*}{21} & sim.\@ qubits & 22 & 47 & \multirow{2}{*}{\textbf{\boldmath $[4.7, \,\, 8.0]\! \times\! 10^{-7}$}} & \multirow{2}{*}{\textbf{\boldmath $[0.9107, \,\, 0.9108]$}} \\
 & & & & depth & \textbf{36} & \textbf{13} & & \\
\midrule
\multirow{4}{*}{\makecell[l]{$\code{25, 1, 5}$ rot.\@ surface $\ket{\overline{0}}$}} & \multirow{2}{*}{Flag at Origin} & \multirow{2}{*}{92} & \multirow{2}{*}{28} & sim.\@ qubits & 33 & 64 & \multirow{2}{*}{$[5.0, \,\, 8.5]\! \times\! 10^{-7}$} & \multirow{2}{*}{$[0.8735, \,\, 0.8736]$} \\
 & & & & depth & 53 & 20 & & \\
\cmidrule{2-9}
 & \multirow{2}{*}{SpiderCSS (Crit.\@ Path)} & \multirow{2}{*}{\textbf{78}} & \multirow{2}{*}{\textbf{21}} & sim.\@ qubits & \textbf{29} & \textbf{38} & \multirow{2}{*}{\textbf{\boldmath $[3.3, \,\, 6.1]\! \times\! 10^{-7}$}} & \multirow{2}{*}{\textbf{\boldmath $[0.9101, \,\, 0.9102]$}} \\
 & & & & depth & \textbf{46} & \textbf{8} & & \\
\midrule
\multirow{4}{*}{\makecell[l]{$\code{49, 1, 5}$ triorthogonal $\ket{\overline{+}}$}} & \multirow{2}{*}{Flag at Origin} & \multirow{2}{*}{361} & \multirow{2}{*}{105} & sim.\@ qubits & 89 & \textbf{195} & \multirow{2}{*}{$[10.0, \,\, 10.5]\! \times\! 10^{-5}$} & \multirow{2}{*}{$[0.4968, \,\, 0.4970]$} \\
 & & & & depth & 154 & 56 & & \\
\cmidrule{2-9}
 & \multirow{2}{*}{SpiderCSS (Earliest Start)} & \multirow{2}{*}{361} & \multirow{2}{*}{105} & sim.\@ qubits & \textbf{80} & 232 & \multirow{2}{*}{\textbf{\boldmath $[7.8, \,\, 8.3]\! \times\! 10^{-5}$}} & \multirow{2}{*}{\textbf{\boldmath $[0.5281, \,\, 0.5283]$}} \\
 & & & & depth & \textbf{128} & \textbf{29} & & \\
\midrule
\multirow{4}{*}{\makecell[l]{$\code{20, 2, 6}$ self-dual $\ket{\overline{00}}$}} & \multirow{2}{*}{Flag at Origin} & \multirow{2}{*}{145} & \multirow{2}{*}{47} & sim.\@ qubits & 40 & 101 & \multirow{2}{*}{$[3.8, \,\, 11.3]\! \times\! 10^{-8}$} & \multirow{2}{*}{$[0.7895, \,\, 0.7896]$} \\
 & & & & depth & 67 & 51 & & \\
\cmidrule{2-9}
 & \multirow{2}{*}{SpiderCSS (Seq.\@ Dist.)} & \multirow{2}{*}{145} & \multirow{2}{*}{47} & sim.\@ qubits & \textbf{39} & \textbf{91} & \multirow{2}{*}{\textbf{\boldmath $[0.5, \,\, 4.4]\! \times\! 10^{-8}$}} & \multirow{2}{*}{\textbf{\boldmath $[0.8089, \,\, 0.8089]$}} \\
 & & & & depth & \textbf{60} & \textbf{16} & & \\
\midrule
\multirow{4}{*}{\makecell[l]{$\code{23, 1, 7}$ Golay $\ket{\overline{0}}$}} & \multirow{2}{*}{Flag at Origin} & \multirow{2}{*}{237} & \multirow{2}{*}{80} & sim.\@ qubits & \textbf{53} & \textbf{142} & \multirow{2}{*}{$[3.5, \,\, 5.6]\! \times\! 10^{-7}$} & \multirow{2}{*}{$[0.6441, \,\, 0.6442]$} \\
 & & & & depth & 103 & 29 & & \\
\cmidrule{2-9}
 & \multirow{2}{*}{SpiderCSS (Active Spider)} & \multirow{2}{*}{237} & \multirow{2}{*}{80} & sim.\@ qubits & 54 & 143 & \multirow{2}{*}{\textbf{\boldmath $[1.9, \,\, 3.4]\! \times\! 10^{-7}$}} & \multirow{2}{*}{\textbf{\boldmath $[0.6771, \,\, 0.6772]$}} \\
 & & & & depth & \textbf{83} & \textbf{22} & & \\
\midrule
\multirow{4}{*}{\makecell[l]{$\code{31, 1, 7}$ colour code $\ket{\overline{0}}$}} & \multirow{2}{*}{Flag at Origin} & \multirow{2}{*}{211} & \multirow{2}{*}{69} & sim.\@ qubits & 53 & \textbf{129} & \multirow{2}{*}{$[14.0, \,\, 27.3]\! \times\! 10^{-8}$} & \multirow{2}{*}{$[0.6954, \,\, 0.6956]$} \\
 & & & & depth & 100 & 55 & & \\
\cmidrule{2-9}
 & \multirow{2}{*}{SpiderCSS (Earliest Start)} & \multirow{2}{*}{211} & \multirow{2}{*}{69} & sim.\@ qubits & \textbf{45} & 136 & \multirow{2}{*}{\textbf{\boldmath $[5.5, \,\, 14.5]\! \times\! 10^{-8}$}} & \multirow{2}{*}{\textbf{\boldmath $[0.7189, \,\, 0.7190]$}} \\
 & & & & depth & \textbf{84} & \textbf{19} & & \\
\midrule
\multirow{4}{*}{\makecell[l]{$\code{49, 1, 7}$ rot.\@ surface $\ket{\overline{0}}$}} & \multirow{2}{*}{Flag at Origin} & \multirow{2}{*}{262} & \multirow{2}{*}{85} & sim.\@ qubits & 79 & 174 & \multirow{2}{*}{$[11.4, \,\, 24.5]\! \times\! 10^{-8}$} & \multirow{2}{*}{$[0.6215, \,\, 0.6216]$} \\
 & & & & depth & 134 & 43 & & \\
\cmidrule{2-9}
 & \multirow{2}{*}{SpiderCSS (Earliest Start)} & \multirow{2}{*}{\textbf{226}} & \multirow{2}{*}{\textbf{69}} & sim.\@ qubits & \textbf{65} & \textbf{129} & \multirow{2}{*}{\textbf{\boldmath $[4.7, \,\, 13.2]\! \times\! 10^{-8}$}} & \multirow{2}{*}{\textbf{\boldmath $[0.7121, \,\, 0.7122]$}} \\
 & & & & depth & \textbf{101} & \textbf{18} & & \\
\midrule
\multirow{4}{*}{\makecell[l]{$\code{95, 1, 7}$ triorthogonal $\ket{\overline{+}}$}} & \multirow{2}{*}{Flag at Origin} & \multirow{2}{*}{1175} & \multirow{2}{*}{380} & sim.\@ qubits & 243 & 804 & \multirow{2}{*}{\textbf{\boldmath $[1.6, \,\, 2.0]\! \times\! 10^{-4}$}} & \multirow{2}{*}{$[0.0259, \,\, 0.0260]$} \\
 & & & & depth & 475 & 386 & & \\
\cmidrule{2-9}
 & \multirow{2}{*}{SpiderCSS (Seq.\@ Dist.)} & \multirow{2}{*}{\textbf{1173}} & \multirow{2}{*}{\textbf{379}} & sim.\@ qubits & \textbf{184} & \textbf{752} & \multirow{2}{*}{$[3.0, \,\, 3.4]\! \times\! 10^{-4}$} & \multirow{2}{*}{\textbf{\boldmath $[0.0372, \,\, 0.0373]$}} \\
 & & & & depth & \textbf{332} & \textbf{44} & & \\
\midrule
\multirow{4}{*}{\makecell[l]{$\code{49, 1, 9}$ colour code $\ket{\overline{0}}$}} & \multirow{2}{*}{Flag at Origin} & \multirow{2}{*}{408} & \multirow{2}{*}{136} & sim.\@ qubits & 89 & \textbf{256} & \multirow{2}{*}{$[3.1, \,\, 13.1]\! \times\! 10^{-8}$} & \multirow{2}{*}{$[0.4413, \,\, 0.4414]$} \\
 & & & & depth & 185 & 120 & & \\
\cmidrule{2-9}
 & \multirow{2}{*}{SpiderCSS (Crit.\@ Path)} & \multirow{2}{*}{\textbf{402}} & \multirow{2}{*}{\textbf{133}} & sim.\@ qubits & \textbf{81} & 265 & \multirow{2}{*}{\textbf{\boldmath $[0.5, \,\, 6.3]\! \times\! 10^{-8}$}} & \multirow{2}{*}{\textbf{\boldmath $[0.4633, \,\, 0.4634]$}} \\
 & & & & depth & \textbf{144} & \textbf{24} & & \\
\midrule
\multirow{4}{*}{\makecell[l]{$\code{81, 1, 9}$ rot.\@ surface $\ket{\overline{0}}$}} & \multirow{2}{*}{Flag at Origin} & \multirow{2}{*}{614} & \multirow{2}{*}{206} & sim.\@ qubits & 135 & 375 & \multirow{2}{*}{$-$} & \multirow{2}{*}{$[0.2454, \,\, 0.2459]$} \\
 & & & & depth & 287 & 126 & & \\
\cmidrule{2-9}
 & \multirow{2}{*}{SpiderCSS (Active Spider)} & \multirow{2}{*}{\textbf{486}} & \multirow{2}{*}{\textbf{155}} & sim.\@ qubits & \textbf{116} & \textbf{308} & \multirow{2}{*}{$-$} & \multirow{2}{*}{\textbf{\boldmath $[0.3816, \,\, 0.3822]$}} \\
 & & & & depth & \textbf{198} & \textbf{25} & & \\
\midrule
\multirow{4}{*}{\makecell[l]{$\code{47, 1, 11}$ self-dual $\ket{\overline{0}}$}} & \multirow{2}{*}{Flag at Origin} & \multirow{2}{*}{1033} & \multirow{2}{*}{388} & sim.\@ qubits & \textbf{163} & 659 & \multirow{2}{*}{$[27.7, \,\, 46.0]\! \times\! 10^{-7}$} & \multirow{2}{*}{$[0.0672, \,\, 0.0672]$} \\
 & & & & depth & 435 & 289 & & \\
\cmidrule{2-9}
 & \multirow{2}{*}{SpiderCSS (Crit.\@ Path)} & \multirow{2}{*}{\textbf{971}} & \multirow{2}{*}{\textbf{355}} & sim.\@ qubits & 185 & \textbf{635} & \multirow{2}{*}{\textbf{\boldmath $[1.6, \,\, 6.9]\! \times\! 10^{-7}$}} & \multirow{2}{*}{\textbf{\boldmath $[0.0838, \,\, 0.0839]$}} \\
 & & & & depth & \textbf{303} & \textbf{38} & & \\
\midrule
\multirow{4}{*}{\makecell[l]{$\code{71, 1, 11}$ colour code $\ket{\overline{0}}$}} & \multirow{2}{*}{Flag at Origin} & \multirow{2}{*}{829} & \multirow{2}{*}{268} & sim.\@ qubits & 163 & 546 & \multirow{2}{*}{$-$} & \multirow{2}{*}{$[0.1344, \,\, 0.1348]$} \\
 & & & & depth & 339 & 279 & & \\
\cmidrule{2-9}
 & \multirow{2}{*}{SpiderCSS (Active Spider)} & \multirow{2}{*}{\textbf{727}} & \multirow{2}{*}{\textbf{251}} & sim.\@ qubits & \textbf{135} & \textbf{475} & \multirow{2}{*}{$-$} & \multirow{2}{*}{\textbf{\boldmath $[0.2060, \,\, 0.2065]$}} \\
 & & & & depth & \textbf{258} & \textbf{28} & & \\
\midrule
\multirow{2}{*}{\makecell[l]{$\code{144, 12, 12}$ gross $\ket{\overline0}^{\otimes 12}$}} & \multirow{2}{*}{SpiderCSS (Active Spider)} & \multirow{2}{*}{5018} & \multirow{2}{*}{1824} & sim.\@ qubits & 800 & 2993 & \multirow{2}{*}{$-$} & \multirow{2}{*}{$[0.0504, \,\, 0.0512]^{*}$} \\
 & & & & depth & 1191 & 97 & & \\
\midrule
\multirow{2}{*}{\makecell[l]{$\code{82, 2, 13}$ 2bga $\ket{\overline{00}}$}} & \multirow{2}{*}{SpiderCSS (Active Spider)} & \multirow{2}{*}{2586} & \multirow{2}{*}{970} & sim.\@ qubits & 488 & 1587 & \multirow{2}{*}{$-$} & \multirow{2}{*}{$[0.3453, \,\, 0.3472]^{*}$} \\
 & & & & depth & 640 & 76 & & \\
\midrule
\multirow{2}{*}{\makecell[l]{$\code{92, 2, 14}$ 2bga $\ket{\overline{00}}$}} & \multirow{2}{*}{SpiderCSS (Active Spider)} & \multirow{2}{*}{3647} & \multirow{2}{*}{1409} & sim.\@ qubits & 647 & 2164 & \multirow{2}{*}{$-$} & \multirow{2}{*}{$[0.1346, \,\, 0.1359]^{*}$} \\
 & & & & depth & 857 & 96 & & \\
\midrule
\multirow{2}{*}{\makecell[l]{$\code{79, 1, 15}$ self-dual $\ket{\overline{0}}$}} & \multirow{2}{*}{SpiderCSS (Active Spider)} & \multirow{2}{*}{3257} & \multirow{2}{*}{1270} & sim.\@ qubits & 576 & 2032 & \multirow{2}{*}{$-$} & \multirow{2}{*}{$[0.2005, \,\, 0.2020]^{*}$} \\
 & & & & depth & 709 & 98 & & \\
\midrule
\bottomrule
\end{tabular*}
\caption{
	Resource overhead, logical error rate, and acceptance rate for different CSS state preparations. Results in bold denote the best result for that particular state.
	Columns from left to right: QEC code and state, Method (SpiderCSS or FaO, i.e.\@ Flag at Origin~\cite{forlivesi2025flag}), number of CNOT gates in the circuit, number of flag measurements, optimization target of qubit reuse strategy (simultaneous qubits, or circuit depth), maximum simultaneous number of qubits necessary, circuit depth, and finally logical error rate and acceptance rates using Wilson confidence intervals of 95\%.
	The logical error rates of some codes were not estimated (marked $-$) as the lookup table was too large to store in memory.
	For the 4 largest codes, values marked with $^*$ indicate simulations performed with a physical error rate of $p=0.0001$ instead of the usual $p=0.001$.
}
\label{tab:sim_results}
\end{table*}